\documentclass[11pt]{article}
\usepackage{jheppub} 

\usepackage{amsfonts,amsmath,amssymb,mathrsfs}
\usepackage{mathtools}
\usepackage{amsthm}
\usepackage{xcolor}
\usepackage{booktabs}
\usepackage{braket}
\usepackage{comment}
\usepackage{breqn}
\usepackage{booktabs}
\usepackage{float}  
\usepackage{longtable}
\usepackage{array}

\newcolumntype{C}[1]{>{\centering\arraybackslash}m{#1}}

\newcommand\norm[1]{\left\lVert#1\right\rVert}
\newcommand{\mH}[0]{\mathcal{H}}
\newcommand{\mL}[0]{\mathcal{L}}
\newcommand{\mO}[0]{\mathcal{O}}
\newcommand{\bd}[0]{\bar{d}}
\newcommand{\lrp}[1]{\left(#1\right)}

\DeclareMathOperator*{\argmax}{arg\,max}
\DeclareMathOperator{\rank}{rank}

\DeclareMathOperator{\Tr}{Tr}

\newtheorem{theorem}{Theorem}[section]
\newtheorem{corollary}{Corollary}[section]

\newtheorem{definition}{Definition}[section]

\newtheorem{lemma}{Lemma}[section]

\allowdisplaybreaks

\newenvironment{eqns}
{\begin{equation}
\begin{aligned}
}
{ 
\end{aligned} 
\end{equation}
}

\title{State-dependent geometries from magic-enriched quantum codes}
\author[a,b]{ChunJun Cao}
\author[a,b]{Gong Cheng}
\author[a,b]{Krishnanand Karthikeyan}
\author[a,b]{Cathy Li}
\author[c]{John Preskill}

\affiliation[a]{Department of Physics, Virginia Tech, Blacksburg, VA 24061, USA}
\affiliation[b]{Virginia Tech Center for Quantum Information Science and Engineering, Blacksburg, VA 24061, USA}
\affiliation[c]{Institute for Quantum Information and Matter,
California Institute of Technology,
Pasadena, CA 91125, USA}

\abstract{Quantum error-correcting codes provide a powerful framework for emergent spacetime, yet existing holographic code models describe only quantum fields on a fixed background: in subsystem erasure-correcting codes, the entropic area term is state independent and cannot capture gravitational backreaction. We argue that this limitation is intrinsic to exact subsystem complementary recovery and that incorporating backreaction instead requires approximate quantum error correction. We introduce a Ryu-Takayanagi-like entropy decomposition for approximate subsystem erasure-correcting codes, defining bulk matter entropy via optimal recovery and a complementary proto-area entropy as the difference between boundary entropy and recoverable bulk entropy. For a broad class of skewed quantum codes obtained by small nonlocal perturbations of exact codes, the proto-area increases monotonically with bulk entropy, closely aligning with the behavior of quantum extremal surfaces. We identify the origin of this response as a form of tripartite non-local magic in the Choi state of the encoding map, which vanishes in stabilizer codes and controls the leading matter–geometry coupling in approximate subsystem erasure-correcting codes.}

\begin{document}

\maketitle

\section{Introduction}
Recent developments in quantum gravity suggest that key aspects of spacetime dynamics are governed by universal quantum information principles. In AdS/CFT, entanglement entropy is related to bulk geometry through the Ryu–Takayanagi (RT) formulas \cite{Ryu_2006,Ryu:2006bv,Faulkner:2013ana,Hubeny_2007,Engelhardt_2015}, while more general arguments based on entanglement equilibrium and the entanglement first law indicate that Einstein’s equations themselves may emerge from entropic constraints \cite{Faulkner_2014,Faulkner_2017,Czech:2016tqr,Swingle:2014uza,Lashkari:2014kda,Lashkari:2015hha,Lashkari:2016idm}, even in systems that are not asymptotically AdS \cite{Jacobson_1995,Jacobson_2016,Cao_2017,Cao_2018}. These insights motivate the search for models of emergent gravity that rely only on information-theoretic structure, rather than specific microscopic dynamics or conformal symmetry.

Quantum error-correcting codes (QECCs) provide a natural framework for this program. In holography, the low-energy sector of quantum gravity behaves as a code subspace, with bulk degrees of freedom redundantly encoded in boundary variables \cite{Almheiri_2015}. This perspective has led to a wide class of tensor-network and QECC toy models that successfully reproduce kinematic features of holography, including RT-like entropy formulas and entanglement wedge reconstruction \cite{Pastawski:2015qua,Hayden_2016,Cao_2021,Steinberg_2023,Harlow:2016vwg,Harris_2018,Dolev_2022}. This perspective also generalizes to other geometries \cite{Hayden_2016,Harlow:2016vwg,Cao_2018,Cao:2021fyk}, providing a crucial ``emergence map'' for separating matter from geometry in complex quantum systems. 

However, existing QECC models do not capture the emergence of gravity. This deficiency is particularly pronounced in exact subsystem erasure-correcting codes (including some stabilizer codes and many other holographic tensor networks), where the encoded state can always be locally decoded into a product of a logical bulk state and a fixed entangled resource \cite{Harlow:2016vwg}. Consequently, the entropic area term is independent of the logical state. This structure faithfully describes quantum field theory on curved spacetime but fails to capture a defining feature of gravity: changes in matter must backreact on geometry. Recent no-go theorems formalize this limitation, showing that stabilizer codes and their local-unitary deformations necessarily admit only trivial, state-independent area operators \cite{Cao:2023mzo}.

This observation indicates that gravity cannot emerge from \textit{exact} subsystem erasure correction codes. In contrast, \textit{approximate} erasure correction relaxes the rigid separation between logical information and the code's entangled resource, allowing correlations between bulk matter and geometry. Such correlations are unavoidable in a proper gravitational theory, where matter excitations source metric perturbations and therefore become entangled with the gravitational degrees of freedom \cite{Harlow:2016vwg,Cao:2023mzo,Pollack_2022}. Empirically, approximate QECCs have indeed been more successful in recovering features analogous to gravity, sometimes with additional gauge constraints \cite{Qi_2022,Dong:2023kyr,Hayden_2016,Cao_2021,Dolev_2022}. However, a full explanation of the underlying mechanisms that give rise to these features is still lacking, which may involve the complex interplay of multiple properties of approximate codes that are absent in exact subsystem erasure correction codes \footnote{The scope of our work is limited to subsystem erasure correction codes and we do not consider the more general subalgebra erasure correction codes. Note that it is possible for exact subalgebra codes to support non-trivial area operators \cite{Harlow:2016vwg}.}. 

In this work, we develop a framework for emergent geometry in codes with approximate subsystem complementary recovery. We introduce an RT-like entropy decomposition that remains well defined beyond exact recovery. The matter entropy is defined as the entropy of the optimally recoverable bulk state, obtained by maximizing coherent information over recovery channels. This choice is motivated by the interpretation of bulk matter as the degrees of freedom that can be reconstructed with maximal fidelity from a boundary subregion, even when recovery is imperfect.

The complementary contribution, which we call the \textit{proto-area entropy}, is defined as the difference between the boundary entropy and the recoverable bulk entropy. This quantity captures the residual entanglement that cannot be attributed to bulk matter and is therefore naturally interpreted as geometric. The resulting decomposition reduces to the standard and Faulkner–Lewkowycz-Maldacena (FLM) formula \cite{Faulkner:2013ana} in exact codes, while allowing the geometric term to become state dependent when recovery is approximate.

We study a broad class of \textit{skewed quantum codes} \cite{Cao_2021,Cao:2021wrb}, obtained by perturbing exact subsystem erasure-correcting codes with small nonlocal unitary deformations. For these codes, we show that the proto-area entropy typically increases monotonically with bulk entropy for mixed bulk states. When the bulk state is pure, the proto-area entropy increases with the entanglement between the entanglement wedge and its bulk complement, which plays the role of bulk entropy in the quantum extremal surface (QES) formula \cite{Engelhardt_2015,Akers:2021fut}. These features closely align with the response of quantum extremal surfaces to bulk excitations. The stated behavior is generic in the sense that it holds for almost all small nonlocal perturbations of an exact code, with violations occurring only for finely tuned or symmetry-restricted deformations.

Crucially, we identify the quantum resource that controls this matter–geometry coupling. We show that the strength of the proto-area response is governed by \textit{tripartite non-local magic} in the Choi state of the encoding map. This state involves three tensor factors: the logical reference system, the recoverable bulk degrees of freedom, and the geometric entanglement. In a manner directly analogous to the quantum extremal surface construction, correlations between the bulk matter and geometry depend on the logical state. This form of magic is irreducibly non-local in the sense that it cannot be generated or removed by any two-subsystem unitary operations. Using stabilizer R\'enyi entropies as a diagnostic, we show that this non-local magic vanishes in stabilizer codes and quantitatively controls the leading dependence of the proto-area on bulk entanglement in approximate complementary recovery codes.

Our results clarify why exact QECC models fail to reproduce gravitational backreaction and identify non-local magic as the essential ingredient enabling geometry to respond to matter. More broadly, they establish a concrete information-theoretic mechanism for emergent gravity in generic quantum systems, linking spacetime dynamics to entanglement, approximate quantum error correction, and non-local non-Clifford encoding.

We organize this paper as follows. In Sec.~\ref{sec:2}, we motivate the necessary structural changes needed by gravity when deforming exact erasure correction codes. In Sec.~\ref{sec:3}, we extend the RT-like entropic formula for QECCs to approximate erasure correction codes and prove general properties linking the change of area-like entropies to that of ``bulk/matter entropies'' in Sec.~\ref{sec:4}. Finally, we show in Sec.~\ref{sec:5} that the strength of this interlink is determined by a form of perturbative non-local tripartite magic, thus establishing a precise connection between non-local magic and gravity-like conditions in approximate erasure correction codes.

\section{Quantum codes and emergent gravity}
\label{sec:2}
\begin{figure}
    \centering
    \includegraphics[width=0.8\linewidth]{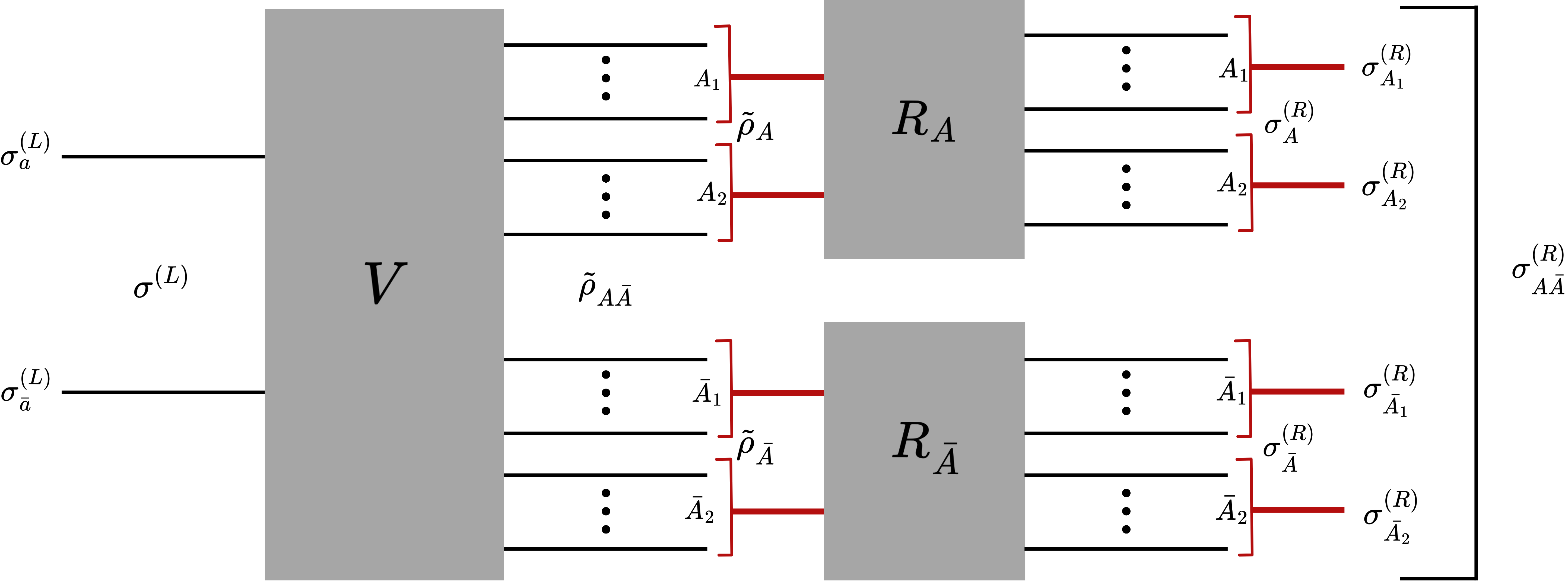}
    \caption{ Schematic of the encode–recover process. A logical input $\sigma^{(L)}_{a\bar a}\equiv \sigma^{(L)}$ is supplied to the code, where $\sigma_{a}^{(L)}=\Tr_{\bar a}(\sigma^{(L)}_{a\bar a})$ and $\sigma_{\bar a}^{(L)}=\Tr_{a}(\sigma^{(L)}_{a\bar a})$ denote the logical marginals on $a$ and $\bar a$, respectively. The isometry $V$ encodes $\sigma^{(L)}_{a\bar a}$ into boundary degrees of freedom, producing the collective encoded state $\tilde\rho_{A\bar A}$. We group boundary output legs as $A \equiv A_1\cup A_2$ and $\bar A \equiv \bar A_1\cup\bar A_2$. Intermediate encoded marginals are obtained by partial traces, $\tilde\rho_{A}=\Tr_{\bar A}(\tilde\rho_{A\bar A})$ and $\tilde\rho_{\bar A}=\Tr_{A}(\tilde\rho_{A\bar A})$. Local recovery unitaries $R_A$ and $R_{\bar A}$ act independently on $A$ and $\bar A$, producing the recovered components $\sigma^{(R)}_{A_1},\,\sigma^{(R)}_{A_2},\,\sigma^{(R)}_{\bar A_1},\,\sigma^{(R)}_{\bar A_2}$, which together form the final recovered boundary state $\sigma^{(R)}_{A\bar A}$. }
    \label{fig:comp_unencode}
\end{figure}

\subsection{Codes and complementary recovery}
We begin by reviewing the aspects of quantum gravity that existing QECCs models do capture. A quantum code is specified by a code subspace $\mathcal{C}$ within the physical Hilbert space $\mH_P$. It is often convenient to introduce a logical Hilbert space $\mH_L$, which is isomorphic to the code subspace $\mathcal{C}$. The encoding is implemented by an isometric map
\begin{equation}
        V: \mH_L \rightarrow \mH_P,
\end{equation}
whose image is the code subspace. 

Suppose $\mH_P$ is factorizable and consider a bipartition of the physical system into subsystems $\mH_P = \mH_A\otimes \mH_{\bar A}$. The code is said to exhibit \textit{complementary subsystem erasure correction}\footnote{In this work, we often refer to such codes as exact subsystem erasure-correcting codes.} \cite{Harlow:2016vwg} if the logical space $\mH_L$ admits a factorization
\begin{eqns}
    \mH_L=\mathcal L_{a}\otimes \mathcal L_{\bar a},
\end{eqns}
such that  all the logical operators $O_a$ acting on the encoded information in $\mathcal L_a$ has a representation in the physical Hilbert space as $\tilde{O}_a={O}_A\otimes I_{\bar A}$, and  all logical operators $O_{\bar a}$ acting on the encoded information in the complement $\mathcal L_{\bar a}$ has the representation $\tilde{O}_{\bar a}=I_A\otimes {O}_{\bar A}$. For convenience, we will drop all identity operators and denote all operators with support only on $A$ as $O_A$.

A code with subsystem complementary recovery~\cite{Harlow:2016vwg} possesses the property that any codeword can be decoded by local unitaries supported on complementary regions $A$ and $\bar A$. 
Explicitly, let $|\tilde{\psi}\rangle\in \mathcal{C}$ be any codeword. There exist local decoding unitaries $R_A$ and $R_{\bar A}$ independent of the codeword such that 
\begin{equation}
    R_AR_{\bar A}{\ket{\tilde{\psi}}}=\ket{\psi}_{A_1\bar{A}_1}\ket{\chi}_{A_2\bar{A}_2}.
    \label{eqn:unitary_rec}
\end{equation}
Here we take $\mH_{A_1}$ ($\mH_{\bar{A}_1}$) to be isomorphic to $\mathcal L_a$ ($\mathcal L_{\bar{a}}$). In other words, the recovery unitary extracts the encoded logical information ${\sigma}_a=\Tr_{\bar{a}}[\ket{{\psi}}\bra{ \psi}]$  and ${\sigma}_{\bar a}=\Tr_{{a}}[\ket{{\psi}}\bra{ \psi}]$  and places it in the physical subsystems $\mH_{A_1}\cup \mH_{\bar{A}_1}$. A shared entangled state $\ket{\chi}$ between the remaining auxiliary subsystems $A_2$ and $\bar A_2$ is then left over which contains the entanglement that cannot be removed by local unitaries $R_A\otimes R_{\bar{A}}$. Operationally, subsystem complementary recovery means that the information on $A$ (or $\bar{A}$) can be recovered by unencoding a subspace $A_1$ (or $\bar{A_1}$) of that region with auxiliary system $A_2 \bar{A_2}$. 
When we combine the parts recoverable from $A$ and that from $\bar{A}$, they make up the entire encoded state. Many stabilizer codes and more general symplectic codes are known to have this property \cite{Pastawski:2015qua,Pollack_2022,Cao:2025iec}. The encoding, recovery, and the notation of states at different stages are shown in Figure \ref{fig:comp_unencode}.

\subsection{RT and FLM formulas in exact erasure correction codes}
By tracing out the subsystem on $\bar{A}$, a code like the above satisfies a Ryu-Takayanagi (RT)-like formula, or more precisely, a Faulkner-Lewkowycz-Maldacena (FLM)-like formula such that the ``boundary entropy'' of a subregion $A$ is
\begin{eqns}
   S({\rho}_A)=S(\rho_{A_1})+S(\chi)= S({\sigma}_a^{(L)})+\Tr[{\mathcal A}{\sigma}_a^{(L)}],
\label{eq:QEC_entorpy}
\end{eqns}
where $\rho_A=\Tr_{\bar{A}}[|\tilde{\psi}\rangle\langle\tilde{\psi}|_{A\bar{A}}]$, $\rho_{A_1}=\Tr_{\bar A_1}[|\psi\rangle\langle\psi|_{A_1\bar{A}_1}]$, and $\chi = \Tr_{\bar{A}_2}[|\chi\rangle\langle\chi|_{A_2\bar{A}_2}]$. The second equality follows because $\rho_{A_1}$ (respectively $\rho_{\bar{A}_1}$) has the same matrix elements as ${\sigma}_a^{(L)}$ (respectively $\sigma_{\bar a}^{(L)}$). An area operator ${\mathcal{A}}= S(\chi)I_a$ can be defined to rewrite $S(\chi)$ as $\Tr[{\mathcal A}{\sigma}_a^{(L)}]$ 
to be directly analogous to the FLM relation in AdS/CFT \cite{Harlow:2016vwg},
\begin{eqns}
    S_{\rm CFT}(A)=S(\rho_{\rm EW(A)})+\frac{\langle\mathcal{A}\rangle}{4G}
\end{eqns}
where $S(\rho_A)\leftrightarrow S_{\rm CFT}(A)$ is the boundary entropy, $ S( \sigma_a^{(L)})\leftrightarrow S(\sigma_{\rm EW(A)})$ is the entropy associated with the effective field theory degrees of freedom restricted to the bulk entanglement wedge (EW) of $A$, and $\langle\mathcal{A}\rangle/4G\leftrightarrow \Tr[\sigma_a^{(L)} {\mathcal A}]=S(\chi)$ is the entropy tied to the area $\langle\mathcal{A}\rangle$ of a bulk minimal/extremal surface.
Post recovery, therefore, it is natural to link $A_1$ and $\bar{A}_1$ as containing the logical and hence the bulk matter field information while $A_2$ and $\bar{A}_2$ containing the information needed to emerge the background geometry through its link to the minimal surface area.

A canonical example is the HaPPY code \cite{Pastawski:2015qua}, illustrated in Figure~\ref{fig:TN-HaPPY}, where logical qubits reside in the bulk and physical qubits live on the boundary. The entangled state $\chi$ corresponds to the EPR pairs along the minimum cut ``geodesic'' through the tensor network.

Much of the above results can also be generalized to operator algebraic quantum error correction codes when the code subspace do not admit a nice tensor factorization \cite{Harlow:2016vwg,Donnelly_2017,Dolev_2022,Cao_2021}. However, for the sake of simplicity, we focus our presentation on subsystem codes and their variants even though some results below apply more generally.
\begin{figure}
    \centering
    \includegraphics[width=0.5\linewidth]{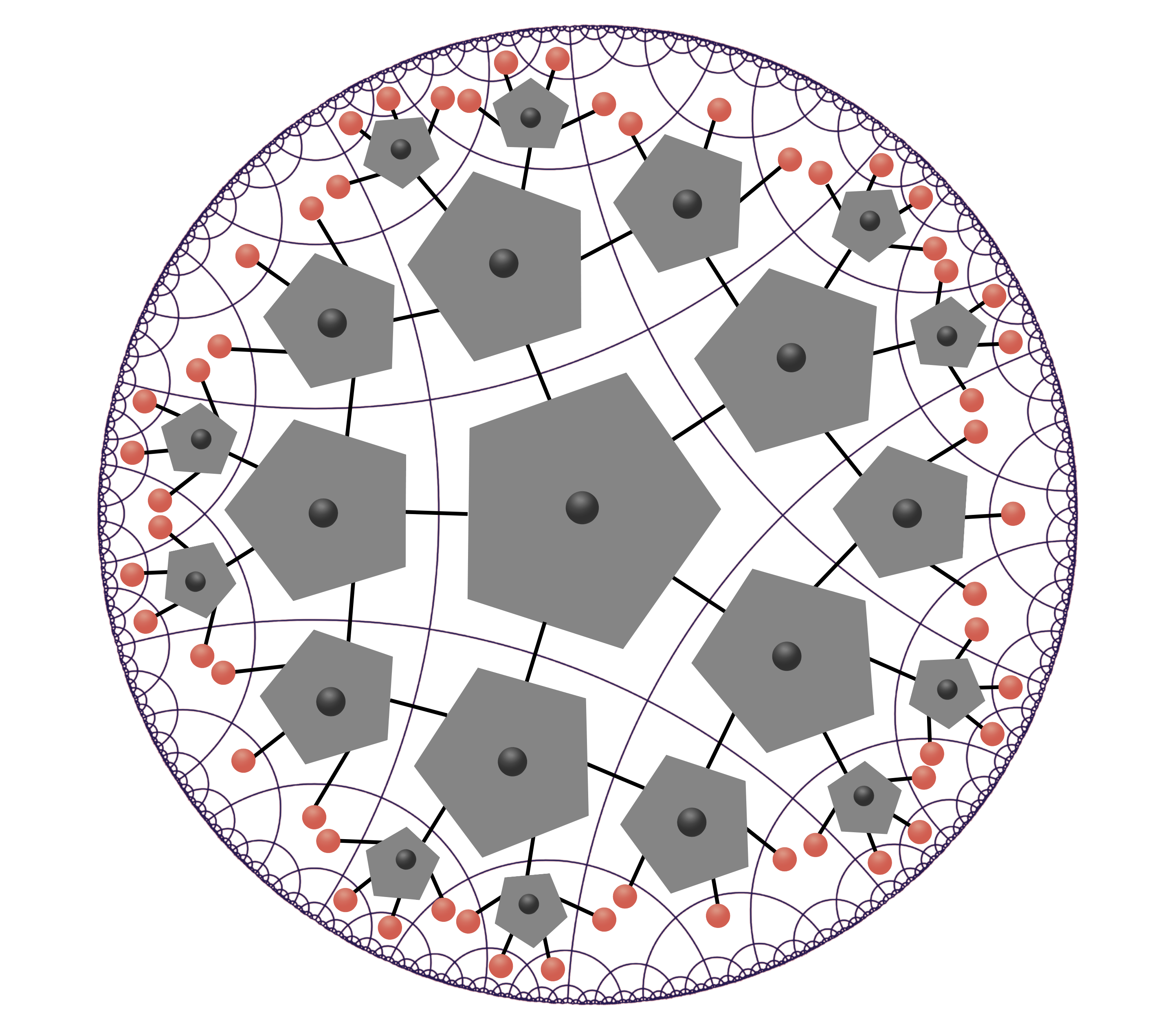}
    \caption{HaPPY code network shown as a hyperbolic tiling of perfect pentagon tensors. Each pentagon corresponds to a perfect tensor ($[[5,1,3]]$ code), with the black dot inside indicating the logical input leg. These logical legs point inward and should be understood as inputs. The red dots on the boundary represent the physical qubits where the encoded state appears.}
    \label{fig:TN-HaPPY}
\end{figure}

Going beyond AdS/CFT, Ref. \cite{Cao_2018} shows that near-flat geometries can also emerge from the entanglement data $S(\chi)$ for codewords $|\tilde\psi\rangle$ in a factorizable physical Hilbert space with suitable entanglement patterns. 
When such background geometries are well-defined \cite{Cao_2020}, Ref.~\cite{Cao_2018} argues that the matter entropy can in general be identified with the logical entropy while the remaining piece $S(\chi)$ is the geometric entropy, which is equal to the area of an extremal surface, i.e., hyperplane, in the background geometry with spatial cutoff. In this context, the emergent geometry in AdS/CFT can then be thought of as a special case in which the entanglement pattern from $S(\chi)$ across different partitions $A, \bar{A}$ gives rise to a hyperbolic rather than a flat spatial geometry.

In the same proposal \cite{Cao_2018}, it is shown that if the linearized Einstein's equations on an emergent flat background holds, then an entropic equation must be satisfied when the entanglement patterns of the state $|\tilde{\psi}\rangle$ are perturbed. That is, consider $|\tilde \psi\rangle\rightarrow |\tilde \psi\rangle+\delta |\tilde\phi\rangle$, then the respective entropy perturbations $\delta S(\chi), \delta S(\sigma_a)$ must satisfy an entropic equation that is equivalent to the linearized Hamiltonian constraint such that non-trivial changes of $\delta S(\rho_a)$ must lead to non-trivial changes in $\delta S(\chi)$ under certain assumptions.  In AdS/CFT, a similar entropic relation is satisfied by a combination of the entanglement first law and the bulk-boundary dictionary \cite{Faulkner_2014,Czech:2016tqr}. More generally, Jacobson \cite{Jacobson_2016} shows in a different setup where a similar entropic relation called entanglement equilibrium will yield the full non-linear Einstein's equations with additional assumptions. Indeed, these conditions make intuitive sense ---  generic changes in the bulk matter configuration and hence its entropy should be reflected on the underlying geometry or geometric entropy $S(\chi)$ in General Relativity. Therefore, identifying QECCs where similar conditions can be reproduced are natural next steps for emerging gravity in and beyond AdS using tensor network models.

\subsection{Beyond Exact QECC: the need for matter-geometry correlation}

However, it is easy to see that a code that satisfies exact subsystem complementary recovery in equation \eqref{eqn:unitary_rec} does not reproduce such conditions because the recovery is completely independent of the codeword and always factorizes the state into $|\chi\rangle$ and $|\psi\rangle$. This implies that $S(\chi)$ is independent of $|\psi\rangle$ and therefore its entropies. The same problem persists in operator algebra quantum error correction codes that have trivial area operators. In particular, \cite{Cao:2023mzo} established a no-go theorem showing that for all stabilizer codes --- and for any local unitary deformation thereof --- the area operators are trivial, whereas a non-trivial area operator is believed to be required for gravitational backreaction \cite{Harlow:2016vwg,Cao_2021,Cao:2021fyk}. Because the geometric entropy is always independent of the logical state, such toy models are analogs of quantum field theories on curved spacetime, which are indeed expected to satisfy the FLM formula. This contrasts with genuine holographic systems, in which the area term $\mathcal{A}$ depends on the bulk state.

For the rest of this work, we will often refer to the portion of the entropy $S(\chi)$ that can give rise to emergent geometry as the geometric entropy $S_{\rm geom}$ and $S(\sigma_a^{(L)})$ as the matter entropy $S_{\rm matter}$. In instances where the codewords are low energy states of a quantum field theory, $S(\sigma_a^{(L)})$ corresponds to the vacuum subtracted entropy \cite{Cao_2018,Casini_2008,boussobound,boussobound2}. This naming is motivated by the observed connections above between entropy and emergent gravity. However, it should be clear from context that they are simply labels of convenience because a quantum code does not generally admit a geometric or an effective field theory description. 

The goal of this paper is to construct and understand QECCs where $\delta S_{\rm matter}$ can trigger nonzero $\delta S_{\rm geom}$. As required by perturbative quantum gravity (see Appendix \ref{app:semiclassical}) and non-trivial state dependence \cite{Pollack_2022,Cao:2021fyk,Cao:2023mzo}, 
we need to consider codes whose best recovered  ``matter'' and ``geometric'' sectors are coupled, e.g.

\begin{equation}
    R_A R_{\bar A} |\tilde{\psi}\rangle  = \sum_{i} c_i|\psi_i\rangle_{A_1\bar{A}_1}|\chi_i\rangle_{A_2\bar{A}_2}
    \label{eqn:statedep}
\end{equation}
where formally the $|\chi_i\rangle$ can support different entanglement structures unlike those in stabilizer codes\footnote{The state we write down here is not identical to the most general state in operator algebra QECCs with non-trivial area operator where each ``$\alpha$ block'' \cite{Harlow:2016vwg,Akers_2019} can be built out of different direct sum of Hilbert spaces for $A_1, A_2$ and their complements. However, one can convert them into the above form by supplementing ancillary degrees of freedom and restricting to a suitable subspace of the total Hilbert space \cite{Pollack_2022}.}.

\subsection{Non-local magic enables gravity}
To produce such codes, a different type of quantum resource called non-stabilizerness or \emph{magic} is required. Magic is a notion of quantumness distinct from entanglement. The concept of magic comes from fault-tolerant quantum computation and error correction schemes based on the stabilizer formalism \cite{Gottesman:1998hu}. Broadly, magic is closely connected with Wigner negativity and the hardness of classical simulations, such as stabilizer simulations \cite{Aaronson_2004}, tensor network \cite{Cao:2024nrx}, and Monte Carlo \cite{magic_MC} methods. 

In quantum gravity, it was shown that magic is essential for the emergence of gravitational backreaction \cite{Cao:2024nrx}, for the construction of non-trivial area operators in QECCs \cite{Cao:2023mzo}, and is abundant in CFTs \cite{White:2020zoz,Oliviero_2022,Hoshino:2025jko}. 

Importantly, the type of magic needed for emergent gravity also has to be non-local, i.e. it cannot be removed by local unitaries acting on separate subsystems. Ref. \cite{Cao:2024nrx,Cao:2023mzo} define the (bipartite) non-local magic of a bipartite pure state to be 
\begin{equation}
    \min_{U_A\otimes U_B}\mathcal{M}(U_A\otimes U_B|\psi\rangle_{AB})
\end{equation}
where $\mathcal M$ is any magic measure and $U_A, U_B$ are unitaries. For example, although Haar random states and Haar random tensor networks \cite{Hayden_2016} are highly magical, they have near vanishing non-local magic \cite{Cao:2024nrx}. In contrast, CFTs have not only large total magic, but also a large amount of non-local magic that increases with its central charge \cite{Cao:2024nrx}.

\begin{figure}
    \centering
    \includegraphics[width=0.5\linewidth]{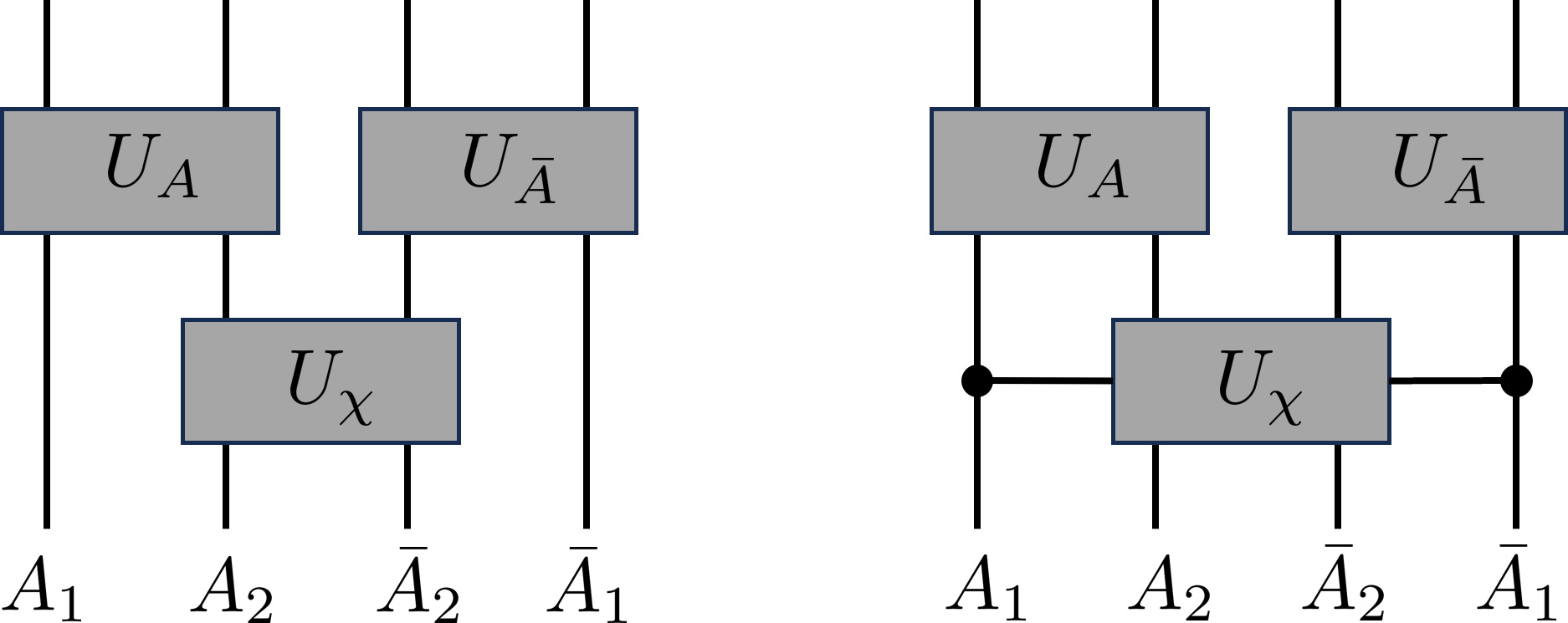}
    \caption{Left: an exact subsystem erasure-correcting code can be written such that an entangling state is first generated by $U_\chi$ on $A_2,\bar{A}_2$ followed by local encoding $U_A\otimes U_{\bar{A}}$, which can not produce matter-geometry correlation. Right: A controlled-$\chi$ unitary is needed to generate different amounts of entanglement based on the logical information on $A_1, \bar{A}_1$, so as to produce states of the form in \eqref{eqn:statedep}. For example, when $U_\chi = CX$, then it is a multi-controlled gate, which is non-Clifford. }
    \label{fig:CCX_circuit}
\end{figure}

A precise definition of non-local magic for quantum codes is not yet known in literature, but we can intuitively see why it is needed from relation \eqref{eqn:statedep}. To construct such a state,  the encoding unitary needs to generate different amounts of entanglement in $\chi$ based on the logical state $\psi$ (Figure~\ref{fig:CCX_circuit}). Such states are produced by quantum circuits with controlled-entangling unitaries, for example, Toffoli gates. Such unitaries are non-Clifford, thus inject magic, and cannot be synthesized by mere local unitaries on individual qubits $U_A\otimes U_B\otimes U_C$, or any depth-1 bipartite unitaries like $I_{A}\otimes U_{BC}$ \cite{toffoli}. Therefore, this suggests a form of tripartite non-local magic. We will make this notion precise in Sec.~\ref{sec:5} and comment on the details why local or bipartite forms of magic do not provide a state-dependent proto-area in Appendix~\ref{app:magic}.

\section{Entropy formula in an approximate erasure correction code}\label{sec:3}
In previous works \cite{Harlow:2016vwg,Cao_2018}, the matter and geometric entropies (or area operators) are only well-defined when the QECC satisfies exact erasure correction, i.e., the recovery channel can fully extract all the encoded information by acting only on subregions $A$ and $\bar{A}$ and hence there is a clean factorization of quantum states describing matter and geometry degrees of freedom like in equation \eqref{eqn:unitary_rec} \footnote{More generally, the structure of a subalgebra is required for these definitions, which is also absent in approximate codes. }. However, this is no longer true for an approximate erasure correction code where equation \eqref{eqn:statedep} holds.
For this, we will need a general definition of $S_{\rm geom}$ and $S_{\rm matter}$ that holds for approximate QECCs. In the following, we show that it is physically well-motivated to define them respectively as the entropy of the best-recoverable state under the erasure channel and the residual entanglement needed to satisfy a RT-like formula.

Consider the encoding map $V$, which is an isometry but does not yield a code with the exact subsystem erasure correction property. Denote the encoded state as 
\begin{equation}
    \ket{\tilde \psi}=V\ket{\psi},
\end{equation}
for $\ket{\psi}\in \mH_L$. More generally, we consider the encoding of a logical state $\sigma^{(L)} \in \mathcal{L}(\mH_L)$ through $\tilde{\rho}_{A \bar{A}} = V \sigma^{(L)} V^\dagger$ and the recovery $R_A R_{\bar{A}}$ thereof by
\begin{equation}
    \sigma^{(R)}_{A_1\bar{A}_1} 
    = \Tr_{A_2\bar A_2} (R_A R_{\bar A} \tilde\rho_{A\bar{A}} R_A^\dagger R_{\bar A}^\dagger) \equiv \mathcal{N}_R(\sigma^{(L)}),
    \label{eqn:recov_state}
\end{equation}
where we treat the encoding-recovery as a quantum channel $\mathcal{N}_R: \mathcal{L}(\mH_L)\rightarrow \mathcal{L}(\mH_{A_1\bar{A}_1})$  such that $\mathcal{N}_R(\sigma^{(L)})=\sigma^{(R)}\approx \sigma^{(L)}$. To clarify the notations: $\sigma^{(L/R)}_M$ denotes the logical state. The superscript ($L$) and ($R$) indicates that the state is pre-encoding and post-recovery respectively, and the subscript $M$ denotes the subsystem the state is supported on. The subscript is often omitted if the state is over the entire physical or logical Hilbert space as opposed to being restricted to a subsystem.

When the encoding $V$ produces an approximate QECC, the best possible recovery $R^*$ is the one that maximizes the standard coherent information $I_c(\mathcal{N})$ over all possible channels\footnote{$R^*$ need not be unique.}, i.e.,
\begin{equation}\label{eq:Rstar}
R^* := \argmax_R I_c(\mathcal{N}_R)
\end{equation}
where the coherent information of the channel $\mathcal{N}_R$ is defined as
\begin{equation}
    I_c(\mathcal{N}_R) = S(\Tr_r[\mathcal{N}_R\otimes I_r(|\Phi\rangle\langle\Phi|)]) - S(\mathcal{N}_R\otimes I_r(|\Phi\rangle\langle\Phi|)) = S(\sigma_{A_1\bar{A}_1}^{(R)})-S(\sigma_{A_1\bar{A}_1r}^{(R)}).
    \label{ref-coherentinfo}
\end{equation}
Here,
\begin{equation}
    \ket{\Phi}=\frac{1}{\sqrt{d_L}}\sum_{i=1}^{d_L} \ket{i}_L\ket{i}_r
\end{equation}
is a maximally entangled state between $\mathcal H_L$ and a reference system $r$ that is isomorphic to $\mathcal H_L$. $d_L$ is dimension of $\mH_L$.

For an exact subsystem erasure-correcting code, the maximal coherent information reaches $\log d_L$,  corresponding to perfect recovery of the logical subspace.  For a general encoding isometry, the maximal value is smaller, reflecting imperfect recovery. When the residual reconstruction error is small --- or, equivalently, when $I_c(\mathcal N)$ is close to maximal --- the code is an approximate QECC. Now that we have defined the approximated QECC as a quantum channel $\mathcal{N}_R$, we need to define the analog matter and geometry entropies as the entropies associated with the channel outputs in order to verify that the approximate recovery is possible.

\begin{definition}
    The analog matter entropies of $a$ and respectively $\bar{a}$ in a code with encoding map $V$ and logical Hilbert space $\mathcal{H}_L=\mathcal L_a\otimes \mathcal L_{\bar{a}}$ that satisfies approximate subsystem complementary recovery on $A$ and $\bar{A}$ is 
    \begin{align}
           S(\sigma_{A_1}^{(R^*)}) &:= S(\Tr_{\bar{A}_1}[\sigma_{A_1\bar{A}_1}^{(R^*)} ])\\
           S(\sigma_{\bar{A}_1}^{(R^*)})&:=S(\Tr_{A_1}[\sigma_{A_1\bar{A}_1}^{(R^*)}])
    \end{align}
    with the understanding that $\mathcal L_a\cong \mH_{A_1}$ and $\mathcal L_{\bar{a}}\cong \mH_{\bar{A}_1}$.
\end{definition}

Intuitively, one can treat these quantities as the entropies of the matter fields in disjoint subregions on a spacetime background.
It is easy to check that this definition reduces to the standard definition of bulk or matter entropy by \cite{Harlow:2016vwg} in the usual RT/FLM formula when $V$ is an exact erasure correction code. 
Here we formally write the matter entropy as the entropy of a state $\sigma_{A_1}^{(R^*)}$ that is best recoverable from the ``boundary''. It is not to be confused with the entropy defined directly over the logical information \cite{Akers:2021fut}, namely, what $\sigma_a^{(L)}$ contains. Suppose $\mH_L= \mathcal L_a\otimes \mathcal L_{\bar{a}}$ and $V$ defines the isomorphism between the code subspace and the logical Hilbert space such that $\mathcal{C}=\operatorname{Im}(V)$, then the above entropy are distinct from $S(\sigma_a^{(L)})=S(\Tr_{\mathcal L_{\bar{a}}}|\psi\rangle\langle\psi]),~ S(\sigma_{\bar a}^{(L)})=S(\Tr_{\mathcal L_{a}}|\psi\rangle\langle\psi])$. Although they are equal when the recovery is perfect, $S(\sigma_a^{(L)})\ne S(\sigma_{A_1}^{(R^*)}), ~S(\sigma_{\bar{a}}^{(L)})\ne S(\sigma_{\bar A_1}^{(R^*)})$ in general.

Similarly, the geometric part of the entropy associated with a subsystem $A$ needs to be modified for an approximate QECC. There is no formal definition of the area or related area operator when the code and related subalgebra becomes approximate. Here we define a notion of area assuming that an RT-like formula continues to hold when the code is approximate.

\begin{definition}
The geometric entropy, which we call the proto-area entropy, is defined as \begin{equation}
    S_{\rm PA}(V,\sigma^{(L)}, A):=S(\rho_A) - S(\sigma_{A_1}^{(R^*)})
\end{equation}
\end{definition}

One notes that the proto-area entropy depends on a triple of quantities --- the encoding map $V$, the logical state $\sigma^{(L)}$, and the bipartition of the physical degrees of freedom $A, \bar A$. For simplicity, we will sometimes drop the arguments it depends on, but it will be clear from context.

It is clear that $S(\rho_A)$ is analogous to the boundary entropy. In the limit of exact subsystem complementary recovery, the difference $S(\rho_A)-S(\sigma_a^{(L)})$ recovers the area term in \cite{Harlow:2016vwg}. 
The same conclusion holds even when the code satisfies only exact subalgebra complementary recovery and has trivial area operator. The recovery unitary $R$ that maximizes the coherent information  $I_c(\mathcal{N}_R)$ defines a channel $\mathcal{N}_R$  that correctly reconstructs the algebraic state $\sigma_Q^{(R^*)}$ of \cite{Harlow:2016vwg}  over some von Neumann algebra $Q$. In this setting, the area term is given by $\langle\mathcal{A}\rangle=S(\rho_A)-S(\sigma_Q^{(R^*)})$, 
which precisely matches the definition of PA entropy.

Heuristically, we can also arrive at the same definition by implicitly treating $A_1\bar A_1$ as the IR or low energy Hilbert space in which the matter field degrees of freedom live and $A_2\bar A_2$ as the UV or high energy subspace that capture the quantum gravity degrees of freedom which build up the background geometry.

To define a purely geometry quantity $S_{\rm geom}$ that is analogous to the one in the RT formula, it is natural to subtract the correlation $I(A_1: A_2)$ between the two sectors from $S(\sigma_{A_2}^{(R^*)})$, where $I(A_1:A_2)=S(\sigma_{A_1}^{(R^*)})+S(\sigma_{A_2}^{(R^*)})-S(\sigma_{A_1A_2}^{(R^*)})$.
It then follows that the analog quantity for area should be
\begin{equation}
    S(\sigma_{A_2}^{(R^*)})-I(A_1:A_2) = S(\sigma_{A_1A_2}^{(R^*)}) -S(\sigma_{A_1}^{(R^*)}) = S(\rho_A)-S(\sigma_{A_1}^{(R^*)}) = S_{\rm PA}
\end{equation}

Before we proceed, it is helpful to build up some intuition for these entropic quantities by comparing them with terms in the generalized second law and the quantum extremal surface (QES) formula \cite{Engelhardt_2015}

\begin{equation}
    S_A = \mathrm{Ext}_{\Gamma_A} \left[ \frac{A(\Gamma_A)}{4 G_N} + S_{\rm bulk}\right],
    \label{eq: QES formula}
\end{equation}
where $\Gamma_A$ extremizes the sum of its surface area and the ``matter'' entropy contribution $S_{\rm bulk}$.

In AdS/CFT, the QES formula decomposes boundary entropy into bulk entropy within the entanglement wedge and an area term evaluated on a state-dependent extremal surface. 
In our framework, the entropy of a physical subsystem $A$ plays the role of generalized entropy while the optimally recoverable bulk state plays the role of the bulk entropy. The proto-area captures the geometrical contribution to the boundary entropy, which will consist of two parts: a fixed background area term which is always present, and an additional state-dependent correction in the case where bulk reconstruction is imperfect. This state dependence captures gravitational effects that allow the extremal surface area to vary, in close analogy to QES formalism \footnote{We emphasize that this is merely an analogy to build up a more concrete mental image --- the connection between the proto-area and the actual area term in the QES formula in holography is still  far from precise, as we will discuss in the next section.}.

Note that generally $S_{PA}(V, \sigma^{(L)},A)$ need not coincide with $S_{PA}(V, \sigma^{(L)},\bar{A})$ and is expected when complementary recovery is broken by the no-man's land when discussing QES. Such effects have also been observed in specific models \cite{Cao_2021,Steinberg_2023,Cao:2021fyk}.

\section{General properties of the Proto-area entropy}
\label{sec:4}

We will now prove that deviations from exact erasure correction codes will yield monotonic dependence of proto-area on bulk entanglement or bulk entropy depending on whether the encoded state is pure or mixed. 
Specifically, the averaged $S_{PA}$ over random local unitary encoding $R_A, R_{\bar{A}}$ is a monotonically increasing function of the ``bulk entropy'', which is equal to the entanglement entropy of the logical state when it is pure and the thermal entropy when it is mixed. This dependence is qualitatively consistent with what one expects extremal surface areas to change in the presence of gravity.

We begin with a general subsystem exact erasure-correcting code, where the encoded state is given by
\begin{eqns}
    \ket{\tilde\psi^{(0)}}=V^{(0)}\ket{\psi},
\end{eqns}
for $\ket{\psi} \in \mH_L$. The subsystem erasure-correcting property ensures that, for any bipartition into subsystems $A$ and $\bar A$, this encoded state admits the decomposition
\begin{eqns}\label{eq:exact-code}
    \ket{\tilde \psi^{(0)}}= R_A^{(0)\dagger}R_{\bar A}^{(0)\dagger}\ket{\psi}_{A_1\bar A_1}\ket{\chi}_{A_2\bar A_2},
\end{eqns}
for some recovery unitaries $R_A^{(0)}$ and $R_{\bar A}^{(0)}$. We now consider a skewed code \cite{Cao_2021} obtained by perturbing the encoding isometry $V^{(0)}$ to $V^{(\epsilon)}$.

By the Stinespring dilation theorem, the isometry $V^{(0)}$ can be built using a unitary encoding circuit $U^{(0)}$ with the addition of ancillae initialized at some fixed state, e.g. $|0\rangle$. Without loss of generality, assume that the circuit has a structure $U^{(0)}=\prod_{k=1}^K U_k$, where $k$ labels the layer of encoding circuit. The deformation is implemented by inserting a unitary $e^{i\epsilon W_k}$ close to the identity at each layer of the circuit.
Since any local unitary deformation can be propagated through the circuit, all such operators can be equivalently pushed to the output layer, yielding
\begin{eqns}
    U^{(\epsilon)}=&\prod_{k=1}^Ke^{i\epsilon W_k}U_k\\
    =&\prod_{k=1}^K \left(\prod_{i=1}^{k-1}U_{i}\ e^{i\epsilon W_{k}}(\prod_{i=1}^{k-1}U_{i})^{\dagger}\right) U^{(0)}\\
    :=&e^{i\epsilon W} U^{(0)}.
\end{eqns}
where $W$ is a Hermitian operator that captures the effective deformation operator acting on the boundary  with norm $\sqrt{\Tr[W^{\dagger}W]}\leq d_\chi$ and  $d_\chi$ is the Schmidt rank of $|\chi\rangle$. 
Accordingly, the encoded state of the skewed code takes the form
\begin{eqns}\label{eq:skew}
    \ket{\tilde\psi}=e^{i\epsilon W} V^{(0)}\ket{\psi}. 
\end{eqns}

To recover the logical information from the skewed code, the original recovery unitaries $R_A^{(0)}$ and $R_{\bar A}^{(0)}$ of the exact subsystem erasure-correcting code are no longer sufficient. We denote the optimal recovery unitaries of the deformed code as $R_A^{(\epsilon)}$ and $R_{\bar A}^{(\epsilon)}$. 

After applying recoveries the boundary state becomes
\begin{eqns}\label{eq:recoverbd}
    R_A^{(\epsilon)}R_{\bar A}^{(\epsilon)}\ket{\tilde \psi}=&R_A^{(\epsilon)}R_{\bar A}^{(\epsilon)}e^{i\epsilon W}\ket{\tilde \psi^{(0)}}\\
    =&R_A^{(\epsilon)}R_{\bar A}^{(\epsilon)}e^{i\epsilon W}R_A^{(0)\dagger}R_{\bar A}^{(0)\dagger}\ket{\psi}_{A_1\bar A_1}\ket{\chi}_{A_2\bar A_2}\\
    :=&e^{i\epsilon W_R}\ket{\psi}_{A_1\bar A_1}\ket{\chi}_{A_2\bar A_2},
\end{eqns}
where we have combined the perturbation with the recovery into a single operator $e^{i\epsilon W_R}$.

Taking the outer product of Eq.~\eqref{eq:recoverbd} gives
\begin{equation}
\sigma^{(R^{(\epsilon)})}_{A\bar A}
=
e^{i\epsilon W_R}
\Bigl(|\psi\rangle\langle\psi|_{A_1\bar A_1}\otimes |\chi\rangle\langle\chi|_{A_2\bar A_2}\Bigr)
e^{-i\epsilon W_R}.
\end{equation}

The boundary state on $A$ is obtained by tracing out $\bar A$,
\begin{equation}
\sigma^{(R^{(\epsilon)})}_{A_1A_2}
:=\Tr_{\bar A}\!\left(\sigma^{(R^{(\epsilon)})}_{A\bar A}\right).
\label{eq:sigma_bd}
\end{equation}
The recovered bulk state (supported on $A_1$) is obtained by further tracing out the auxiliary subsystem $A_2$,
\begin{equation}
\sigma^{(R^{(\epsilon)})}_{A_1}
:=\Tr_{A_2\bar A}\!\left(\sigma^{(R^{(\epsilon)})}_{A\bar A}\right)
=\Tr_{A_2}\!\left(\sigma^{(R^{(\epsilon)})}_{A}\right).
\label{eq:sigma_bk}
\end{equation}

For simplicity, in the remainder of this section we restrict our analysis to the case where the entanglement spectrum of $\ket{\chi}$ across $A_2\bar A_2$ is flat \footnote{It is helpful to think of the undeformed code as a stabilizer code, but the results apply to slightly more general codes that have flat spectra. The entanglement spectrum of $|\chi\rangle$ in stabilizer codes not only has to be flat, but also has to have Schmidt rank $q^\ell$ for codes over qudits with local dimension $q$.}, and defer the discussion of the more general case to Sec.~\ref{sec:generalize}.

First we show that the proto-area (PA) entropy is related to the relative entropy difference between boundary and bulk state, as in the following theorem.  
\begin{theorem}\label{thm:relative}
   Suppose the state $|\chi\rangle$ of the undeformed code has a flat entanglement spectrum across $A_2\bar A_2$ for a bipartition $A$ and $\bar A$ of the physical degrees of freedom. Respectively, let $\sigma^{(R^{(\epsilon)})}_{A_1A_2}$ and $\sigma^{(R^{(\epsilon)})}_{A_1}$ denote the recovered boundary state on $A_1A_2$ and the recovered bulk state on $A_1$ for an subsystem erasure-correcting code whose encoding isometry $V^{(0)}$ is perturbed by a small unitary $e^{i\epsilon W}$, so that $V^{(\epsilon)}=e^{i\epsilon W}V^{(0)}$. Let $\sigma^{(R^{(0)})}_{A_1A_2}$ and $\sigma^{(R^{(0)})}_{A_1}$ denote the corresponding recovered states of the exact (undeformed) code, which are the states defined in \eqref{eq:sigma_bd} and \eqref{eq:sigma_bk} by taking $\epsilon\rightarrow0$.

Then the proto-area entropy satisfies
\begin{eqns}
    S_{PA}(V^{(\epsilon)}, \sigma^{(L)}, A)=S(\chi)-S_{corr}(V^{(\epsilon)}, \sigma^{(L)}, A),
\end{eqns}
where 
\begin{equation}
        S_{\rm corr}(V^{(\epsilon)}, \sigma^{(L)}, A)= D\left(\sigma_{A_1A_2}^{({R^*}^{(\epsilon)})}||\sigma_{A_1A_2}^{({R^*}^{(0)})}\right)-D\left(\sigma_{A_1}^{({R^*}^{(\epsilon)})}||\sigma_{A_1}^{({R^*}^{(0)})}\right)
\end{equation}
$\chi=\Tr_{\bar{A}_2}[|\chi\rangle\langle\chi|]$, and  $D(\rho||\xi)$ denotes the relative entropy between  states $\rho$ and $\xi$.
\end{theorem}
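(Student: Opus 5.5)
The plan is to expand both relative entropies with the identity $D(\rho\|\xi)=-S(\rho)-\Tr[\rho\log\xi]$ and exploit the product structure of the undeformed reference states. The cross terms involving the logical marginal should cancel between the two relative entropies. The flat spectrum of $\chi$ should turn the remaining cross term into the constant $S(\chi)$.

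\textbf{Step 1: identify the reference states.} For $\epsilon=0$ the optimal recovery is the exact one, $R^{(0)}_A R^{(0)}_{\bar A}$. By \eqref{eq:exact-code}, applied to a general (possibly mixed) logical input $\sigma^{(L)}$ by linearity, the recovered global state is $\sigma^{(L)}_{A_1\bar A_1}\otimes|\chi\rangle\langle\chi|_{A_2\bar A_2}$. Tracing out $\bar A$ gives
\begin{equation}
\sigma^{(R^{*(0)})}_{A_1A_2}=\sigma^{(L)}_a\otimes\chi_{A_2},\qquad \sigma^{(R^{*(0)})}_{A_1}=\sigma^{(L)}_a .
\end{equation}
Since the reference state is a tensor product, its logarithm splits as
\begin{equation}
\log\bigl(\sigma_a^{(L)}\otimes\chi\bigr)=\log\sigma_a^{(L)}\otimes I_{A_2}+I_{A_1}\otimes\log\chi .
\end{equation}
The flat-spectrum hypothesis gives $\log\chi=-\log d_\chi\,\Pi_\chi=-S(\chi)\,\Pi_\chi$, where $\Pi_\chi$ is the projector onto the support of $\chi$.

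\textbf{Step 2: expand and cancel.} Write $\rho_{A_1A_2}:=\sigma^{(R^{*(\epsilon)})}_{A_1A_2}$ and $\rho_{A_1}$ for its marginal. Expanding both relative entropies gives
\begin{equation}
D(\rho_{A_1A_2}\|\sigma_a\otimes\chi)-D(\rho_{A_1}\|\sigma_a)=-S(\rho_{A_1A_2})+S(\rho_{A_1})-\Tr[\rho_{A_1A_2}\log\sigma_a\otimes I]+\Tr[\rho_{A_1}\log\sigma_a]+S(\chi)\Tr[\rho_{A_2}\Pi_\chi].
\end{equation}
The two $\log\sigma_a$ terms cancel because $\Tr_{A_2}\rho_{A_1A_2}=\rho_{A_1}$. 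The last term equals $S(\chi)$ once $\Tr[\rho_{A_2}\Pi_\chi]=1$.

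\textbf{Step 3: conclude.} The optimal recovery $R^{*(\epsilon)}_A$ is a unitary on $A$, so $S(\rho_{A_1A_2})=S(\rho_A)$. Hence $S_{\rm corr}=S(\chi)-S(\rho_A)+S(\sigma^{(R^*)}_{A_1})$. Rearranging and using the definition of $S_{PA}$ gives the claimed formula.

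\textbf{Main obstacle: support conditions.} The relative entropies, and the term $\Tr[\rho_{A_2}\Pi_\chi]$, must be finite and equal to what Step 2 assumes. I would handle this as follows.
\begin{itemize}
\item For $\chi$: choose $A_2$ to be exactly the Schmidt support of $|\chi\rangle$ on the $A$ side, as is implicit in \eqref{eqn:unitary_rec}. Then $\chi=I/d_\chi$ is full rank and $\Pi_\chi=I$.
\item For $\sigma_a^{(L)}$: if it is not full rank, either restrict to full-rank logical marginals, or regularize $\sigma_a\to(1-\delta)\sigma_a+\delta I/d_a$ and let $\delta\to 0$. The $\log\sigma_a$ contributions cancel identically for every $\delta$, so the difference $S_{\rm corr}$ is well defined as a limit even if each relative entropy separately diverges.
\end{itemize}
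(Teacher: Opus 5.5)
Your proof is correct and is essentially the paper's argument. Both rest on three ingredients: the factorization $\sigma^{(0)}_{A_1A_2}=\sigma^{(L)}_a\otimes\chi_{A_2}$; the splitting of the logarithm over the tensor product, so that the $\log\sigma^{(L)}_a$ terms cancel under $\Tr_{A_2}$; and flatness of $\chi$, which makes the remaining $\log\chi$ cross term a constant. The paper only repackages this as a vanishing identity in $\delta\sigma^{(\epsilon)}$ that it adds to $S_{PA}$.

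Your handling of the $\chi$ term is actually the more precise one. What is needed is $\ln\chi\propto I_{A_2}$ together with $\Tr\,\delta\chi=0$, which is your $\Tr[\rho_{A_2}\Pi_\chi]=1$ with $\chi$ full rank on $A_2$. The paper instead asserts $\delta\chi=0$, which fails in general because $W_R$ acts nontrivially on $A_2\bar A_2$.
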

\begin{proof}
   See Appendix \ref{app:relative}.
\end{proof}

Now we rewrite the equation above as
\begin{eqns}
D(\sigma_{A_1A_2}^{({R^*}^{(\epsilon)})}||\sigma_{A_1A_2}^{({R^*}^{(0)})})=S_{PA}^{(0)}(A)-S_{PA}^{(\epsilon)}(A)+D(\sigma_{A_1}^{({R^*}^{(\epsilon)})}||\sigma_{A_1}^{({R^*}^{(0)})})
\end{eqns}
Note that this equation is reminiscent of the JLMS relation in AdS/CFT, which states that the relative entropy of boundary state variations equals that of the corresponding bulk state variations at leading order in $\mathcal{O}(G_{N}^{-1})$. The relation was later generalized to the quantum JLMS (qJLMS) relation by incorporating quantum corrections \cite{Dong2017EntropyEE}:
\begin{eqns}
    D(\rho_{boundary}||\xi_{boundary})= \left\langle \frac{\mathcal A^{X_\xi}}{4G_N} - \frac{\mathcal A^{X_\rho}}{4G_N}
+ K^{X_\xi}_{\mathrm{bulk},\xi} - K^{X_\rho}_{\mathrm{bulk},\rho} \right\rangle_{\rho},
\end{eqns}
where $\rho_{boundary}$ and $\xi_{boundary}$ are two distinct boundary states. $K_{\text{bulk},\xi}^{X_{\xi}} $ and $K_{\text{bulk},\rho}^{X_{\rho}}$ are the bulk modular Hamiltonians, with the partition determined by the   coordinate $X_{\rho/\xi}$  of the quantum extremal surface. 
Despite the structural similarity, an essential distinction is that in the qJLMS relation, the variation  originates from changes of the state in the code subspace, whereas in our case, the variation is induced by a skewing of the code subspace itself.

In general, $S_{PA}$ is a complicated functional of the bulk state. Since our primary interest is to understand how the area term depends on bulk entanglement between the ``entanglement wedges'' $a$ and $\bar{a}$ in the bulk and not details the state within each subsystem,  we will average over the logical states $U_aU_{\bar a}\ket{\psi}_L$ over the Haar ensemble that leave the bulk entanglement invariant, and we define the averaged proto-area entropy $\langle S_{PA}\rangle$ by taking the ensemble average over such local unitaries. In other words, the optimized recovery now only depends on the bulk entanglement structure between the wedges, as the details of the state is integrated out by the averaging \footnote{Note the similarity of entanglement-dependent definitions here and in \cite{Akers:2021fut}.}.

To examine how $\langle S_{PA}\rangle$ varies as a function of bulk entanglement, we consider two representative scenarios: In the first case, the bulk qubits are entangled with external reference systems, which is equivalent to having the bulk state as mixed. In the second case, the bulk degrees of freedom form a pure state shared between the two bulk subregions $a$ and $\bar a$.

\subsection{Mixed bulk state}

In this section, we first consider a special case where all the bulk qubits are approximately recoverable from $A$, and are entangled with some external reference system. In other words, the encoded state is mixed (Figure.~\ref{fig:mixed_circuit}).  
Then we show that the proto-area entropy has the property detailed in the following theorem. 
\begin{figure}[t]
    \centering
    \includegraphics[width=1\linewidth]{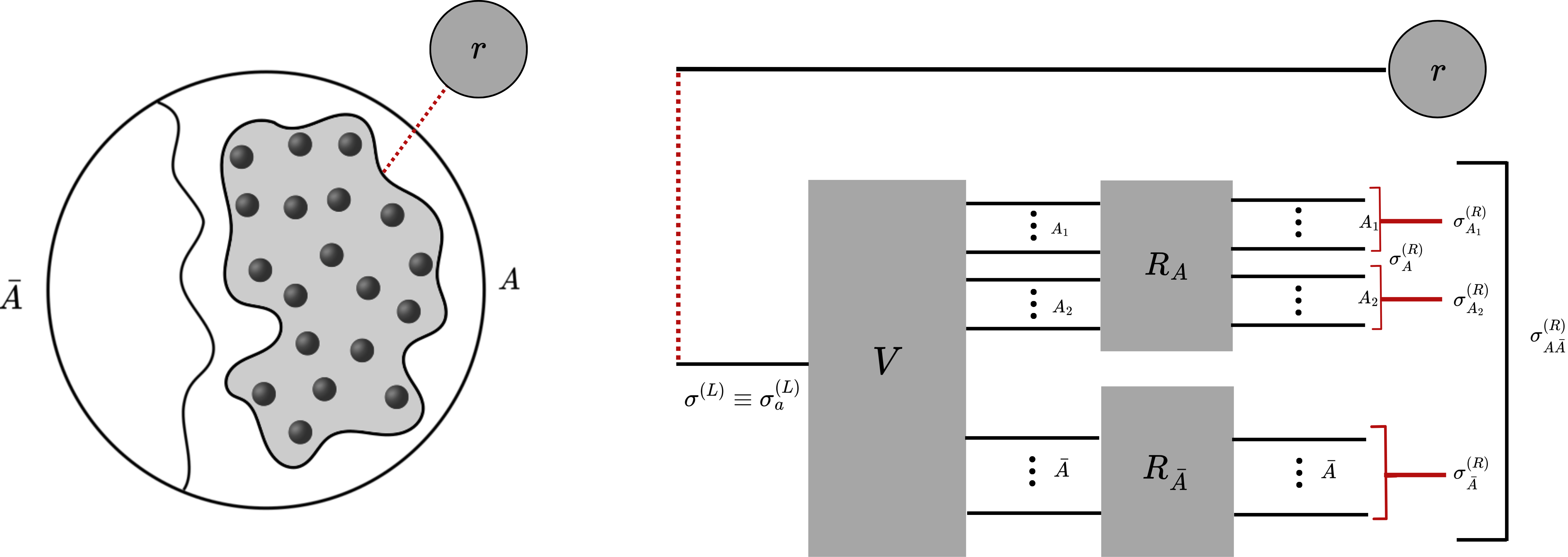}
    \caption{
Left: AdS picture for a \emph{mixed} bulk state supported in the entanglement wedge $\mathrm{EW}(A)$.
The shaded region denotes bulk degrees of freedom in $\mathrm{EW}(A)$, and the red dotted line to the external system $r$
indicates that these bulk degrees of freedom are mixed because they are entangled with $r$. The complementary wedge $\mathrm{EW}(\bar A)$ is taken to contain no relevant bulk degrees of
freedom in this setup. We assume the bulk information in $\mathrm{EW}(A)$ is approximately recoverable from the boundary region $A$. Right: Circuit representation of the same setting. The logical state $\sigma^{(L)}$ is a mixed state on $\mathcal{H}_L$, encoded by the isometry $V$ into $\mathcal{H}_A \otimes \mathcal{H}_{\bar A}$. Independent recovery maps $R_A$ and $R_{\bar A}$ act on the boundary regions, producing the recovered reduced states $\sigma^{(R)}_{A_1}$. We focus on the recovery of bulk information from $A$, treating the resulting map as an effective channel from $\mathcal{H}_L$ to $\mathcal{H}_{A_1}$, with the joint state $\sigma^{(R)}_{A\bar A}$ encoding the correlations with the complementary region.}
\label{fig:mixed_circuit}
\end{figure}
\begin{theorem}\label{thm:monotonic-mixed}
    Assume that $|\chi\rangle$ has flat entanglement spectrum in the undeformed code following a bipartition into $A$ and $\bar{A}$. Consider encoding a mixed state $\sigma$ into the bulk which can be purified into Bell-like states $(\sum_i \sqrt{\lambda_i}|i\rangle|i\rangle)^{\otimes \ell}$ for some local basis $\{|i\rangle\}$ with the addition of a reference $r$. In leading order of $\epsilon$, the bulk-unitary-averaged correction to the proto-area entropy, $\langle S_{corr}\rangle$, is non-negative and decreases monotonically with the amount of bulk entropy. 
\end{theorem}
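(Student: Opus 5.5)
We start from Theorem~\ref{thm:relative}, which reduces the problem to the correction term
\[
S_{\rm corr}=D\bigl(\sigma_{A_1A_2}^{(\epsilon)}\|\sigma_{A_1A_2}^{(0)}\bigr)-D\bigl(\sigma_{A_1}^{(\epsilon)}\|\sigma_{A_1}^{(0)}\bigr).
\]
We expand it to leading order in $\epsilon$. Since $\sigma^{(\epsilon)}=e^{i\epsilon W_R}\sigma^{(0)}e^{-i\epsilon W_R}$, each reduced state has the form $\sigma^{(0)}_X+\epsilon\,\delta_X+O(\epsilon^2)$, where $\delta_X=\Tr_{\bar X}\bigl(i[W_R,\sigma^{(0)}]\bigr)$. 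The first-order terms of both relative entropies vanish. Each relative entropy is therefore given at order $\epsilon^2$ by the Fisher-information (Kubo--Mori) quadratic form $\tfrac{\epsilon^2}{2}\langle \delta_X,\Omega^{-1}_{\sigma_X^{(0)}}\delta_X\rangle$. Because the bulk state is mixed and fully recoverable from $A$, we have $\sigma^{(0)}_{A_1A_2}=\sigma\otimes\chi$ and $\sigma^{(0)}_{A_1}=\sigma$. With $\chi$ flat of rank $d_\chi$, the inverse Kubo--Mori metric factorizes as $\Omega^{-1}_{\sigma}\otimes d_\chi I$ on the support.

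The optimal recovery $R^{*(\epsilon)}$ differs from $R^{(0)}$ only by local unitaries of order $\epsilon$. To leading order, its effect is to remove from $W_R$ the components that act locally on $A$ or on $\bar A$, since those can be undone. We therefore first project $W_R$ onto its non-local part. We then expand it in an operator basis $W_R=\sum_{\mu\nu\alpha}w_{\mu\nu\alpha}\,P_\mu^{A_1A_2}\otimes Q_\nu^{\bar A_1\bar A_2}\otimes\ldots$, together with the reference $r$ that purifies $\sigma$.

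The main step is the average over bulk unitaries $U_a$ acting on $A_1$. Writing $\sigma=U\Lambda U^\dagger$ with $\Lambda=\mathrm{diag}(\lambda_i)^{\otimes\ell}$, we use second-moment Weingarten formulas: both quadratic forms are degree-2 in $U$ and $U^\dagger$ after the commutators are expanded. After averaging, $\langle S_{\rm corr}\rangle$ becomes $\epsilon^2$ times a non-negative combination, with coefficients fixed by $W$, of two kinds of terms. The first kind is purity-type invariants $\Tr\sigma^2$. The second kind is sums $\sum_{ij}(\lambda_i-\lambda_j)\log(\lambda_i/\lambda_j)$, which come from the Kubo--Mori kernel.

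Non-negativity should follow because the $A_1A_2$ term dominates the $A_1$ term. This is monotonicity of relative entropy under $\Tr_{A_2}$, but the rewriting shows the nontrivial content: the extra $A_2$ directions contribute a strictly positive $d_\chi$-weighted piece. Monotonicity in the bulk entropy will be established via the product form. Each factor in the spectrum depends on a single Schmidt parameter $\lambda$ (per Bell-like pair, with $\lambda_i$ for the purification). We show that the averaged expression is a decreasing function of the pair entropy $h(\lambda)$ by differentiating in $\lambda$ over $\lambda\in[\tfrac12,1]$ and checking the sign.

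The main obstacle is the second kind of term. The $A_1$ Fisher term involves the kernel $\log\lambda_i-\log\lambda_j$, and whether it grows faster than the $A_1A_2$ term must be checked directly. We plan to handle this in three steps. First, show that the $A_2$-mixing pieces of the $A_1A_2$ term scale like $\sum_i\lambda_i\cdot d_\chi$, which is independent of the spectrum apart from the purity. Second, show that the spectrum-dependent parts of the two relative entropies cancel up to a term proportional to $1-\Tr\sigma^2$ or to $\log$-purity. Third, prove the derivative sign for that residual function. If this cancellation does not occur exactly, we will fall back on the tensor-power structure: the averaged quantity is additive over the $\ell$ factors at order $\epsilon^2$, which reduces the claim to a one-parameter single-qubit-pair inequality that can be verified explicitly.
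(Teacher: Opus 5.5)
The first part of your plan matches the paper: the reduction to Kubo--Mori quadratic forms, the factorization of the boundary kernel for flat $\chi$, and the second-moment Weingarten average. Your data-processing argument also gives $\langle S_{\rm corr}\rangle\ge0$ more cheaply than the paper does, and even exactly, for each bulk unitary.

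The gap is in the spectral structure you expect after averaging, and your whole monotonicity plan rests on it. Set $K=d_\chi\Tr_{\bar A}(W_R\chi_{A_2\bar A})$, $J=K+K^\dagger$, $D=i(K-K^\dagger)$; these are the paper's $J,D$. Then $\delta^{(1)}\sigma_{A_1A_2}=\frac{1}{2d_\chi}\bigl(i[J,\sigma\otimes I]+\{D,\sigma\otimes I\}\bigr)$. The kernel $(\ln\lambda_i-\ln\lambda_j)/(\lambda_i-\lambda_j)$ acts on the two parts differently:
\begin{itemize}
\item the commutator part becomes $(\lambda_i-\lambda_j)\ln(\lambda_i/\lambda_j)$, which is your second kind and the paper's $f_1$;
\item the anticommutator part becomes $\frac{(\lambda_i+\lambda_j)^2}{\lambda_i-\lambda_j}\ln\frac{\lambda_i}{\lambda_j}$, which is the paper's $f_2$;
\item the $i=j$ terms give only a spectrum-independent constant, and no $\Tr\sigma^2$ ever appears.
\end{itemize}
Moreover $\Tr_{A_2}D=0$, so the bulk relative entropy subtracts only from the $J$ sector. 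Because $(\sigma+sI)^{-1}=d_\chi^{-1}\Gamma(s/d_\chi)$, the bulk and boundary $J$-pieces multiply the same $f_1$. The competition you flag as the main obstacle is therefore between coefficients, not between spectral functions. There is no cancellation down to a purity residual, so step two of your plan cannot work. The sector you mis-describe is also exactly the one that survives optimal recovery, since Lemma~\ref{lemma:optimization:mixed} sets $c_1=0$.

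Relatedly, data processing controls only the total. Monotonicity needs each coefficient to be non-negative separately. The paper gets this from Pauli expansions, $c_1\propto\sum_{a,b\neq0}p_{ab}^2$ and $c_2\propto\sum_{a\neq0,b}q_{ab}^2$; you only assert it.

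Your fallback also fails. After a Haar average over all of $\mathrm{U}(d)$ on $A_1$, the result is a symmetric function of the full product spectrum. $f_1=-\frac1d\sum_i\ln\lambda_i-S(\lambda)$ happens to be additive over the Bell-like pairs, but $f_2$ is not. For two pairs with $p_1=p_2=0.9$ one gets $f_2\approx0.887$, not twice the single-pair value $0.373$. So there is no reduction to a one-pair inequality.

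What is needed, and what the paper does, is this:
\begin{itemize}
\item vary one $p_k$ with the others fixed;
\item split the spectrum as $\{p\,\omega_\beta\}\cup\{(1-p)\,\omega_\beta\}$;
\item sign the derivative block by block.
\end{itemize}
For $f_2$ this follows from $\frac{(x+y)^2}{x-y}\ln\frac xy=\int_0^1\frac{(x+y)^2}{sx+(1-s)y}\,ds$, which is homogeneous of degree one. In each four-term block the two same-family terms are linear in $p$, and the two cross terms have the convex form $(ap+b)^2/(cp+d)$. The block is symmetric under $p\leftrightarrow1-p$, hence nondecreasing on $[\frac12,1)$.

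Finally, projecting $W_R$ onto its non-local part does not implement the optimal recovery. Local shifts move $J$ by an arbitrary $2O_A$, which also cancels non-local strings such as $Z_{A_1}X_{A_2}X_{\bar A}$ on a Bell-pair $\chi$. This is harmless only because $c_1f_1+c_2f_2+c_3$ holds for arbitrary $W_R$, so you can simply drop that step.
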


\begin{proof}
By averaging the proto-area entropy over the bulk local unitaries, we derive the following equation for the proto-area entropy correction up to $\mO(\epsilon^2)$ (The details of derivation are in Appendix~\ref{app:mixed}.),
\begin{eqns}\label{eq:Scorr}
    \langle S_{corr}\rangle =\frac{\epsilon^2}{2}\left(c_1f_1(\lambda)+c_2f_2(\lambda)\right) + c_3,
\end{eqns} 
where $c_1$, $c_2$, $c_3$ are non-negative coefficients only depending on $W_R$ in Eq.~\eqref{eq:recoverbd}. Both $f_1(\lambda)$ and $f_2(\lambda)$ are function of the eigenvalues $\lambda_i$ of the input logical mixed state.  They are monotonically decreasing functions of the matter entropy $S_{\rm matter}(\lambda)$. 
 
\end{proof}

\begin{corollary}
    In leading order perturbation, $\langle S_{PA}\rangle$ increases monotonically with bulk entropy.
\end{corollary}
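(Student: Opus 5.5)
The corollary follows by combining Theorem~\ref{thm:relative} with Theorem~\ref{thm:monotonic-mixed}, after averaging both sides over the bulk local unitaries $U_aU_{\bar a}$.

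First, apply Theorem~\ref{thm:relative} to each member of the Haar ensemble $U_aU_{\bar a}\sigma^{(L)}U_a^\dagger U_{\bar a}^\dagger$. This gives, realization by realization,
\begin{equation}
S_{PA}(V^{(\epsilon)},\sigma^{(L)},A)=S(\chi)-S_{\rm corr}(V^{(\epsilon)},\sigma^{(L)},A).
\end{equation}
The term $S(\chi)$ is the entanglement entropy of the fixed resource state of the undeformed code. It depends neither on the logical state nor on the bulk unitaries. Averaging is linear, so
\begin{equation}
\langle S_{PA}\rangle=S(\chi)-\langle S_{\rm corr}\rangle .
\end{equation}
This step requires checking that the hypotheses of Theorem~\ref{thm:relative} hold uniformly across the ensemble. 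The flat spectrum of $|\chi\rangle$ is a property of $V^{(0)}$ and the bipartition only. The optimal recoveries ${R^*}^{(\epsilon)}$ may vary from realization to realization, but Theorem~\ref{thm:relative} holds pointwise for whatever recovery is optimal, so averaging afterwards is legitimate.

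Second, substitute the leading-order expression \eqref{eq:Scorr} from the proof of Theorem~\ref{thm:monotonic-mixed}:
\begin{equation}
\langle S_{PA}\rangle=S(\chi)-c_3-\frac{\epsilon^2}{2}\bigl(c_1f_1(\lambda)+c_2f_2(\lambda)\bigr)+\mathcal{O}(\epsilon^3).
\end{equation}
The coefficients $c_1,c_2,c_3$ depend only on $W_R$ and not on the spectrum $\lambda$, so the first two terms form a state-independent background area. The only dependence on bulk entropy comes from $-\frac{\epsilon^2}{2}(c_1f_1+c_2f_2)$. Because $c_1,c_2\ge 0$ and both $f_1,f_2$ decrease monotonically in $S_{\rm matter}(\lambda)$, this term is a nonnegative combination of increasing functions of $S_{\rm matter}$. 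At order $\epsilon^2$, $\langle S_{PA}\rangle$ is therefore non-decreasing in bulk entropy, and strictly increasing whenever $c_1$ or $c_2$ is nonzero.

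The main subtlety is an implicit assumption: that the $c_i$ are independent of $\lambda$, which requires the optimal recovery, and hence $W_R$, to be independent of $\lambda$ at the relevant order. I would justify this perturbatively. The recovery can be written as $R^{(\epsilon)}=e^{i\epsilon X}R^{(0)}$. The dependence of the optimal $X$ on $\lambda$ enters $\langle S_{\rm corr}\rangle$ only at higher order, because $I_c$ is stationary at the optimum; this is a standard envelope argument. Alternatively, this dependence is already absorbed into the derivation in Appendix~\ref{app:mixed}, which the corollary may assume. Finally, I would state that ``monotonic'' is meant in the non-strict sense when $c_1=c_2=0$, for example in fine-tuned deformations.
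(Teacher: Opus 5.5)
Your argument matches the paper's route. The corollary is read off from Theorem~\ref{thm:relative}, which gives $\langle S_{PA}\rangle=S(\chi)-\langle S_{\rm corr}\rangle$ with $S(\chi)$ state independent. One then combines this with the leading-order form of $\langle S_{\rm corr}\rangle$, the non-negativity of the $c_i$, and the monotonicity of $f_1,f_2$ from Theorem~\ref{thm:monotonic-mixed}.

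The one step to replace is your justification that the $c_i$ do not depend on $\lambda$. The envelope argument is not valid, because $R^*$ extremizes $I_c$, not $S_{\rm corr}$, so stationarity of $I_c$ says nothing about how $S_{\rm corr}$ responds to the recovery. Concretely, a shift $R_A^{(\epsilon)}\to e^{i\epsilon O_A}R_A^{(\epsilon)}$ changes the generator $W_R$ by the $\mathcal{O}(1)$ operator $O_A$. It therefore changes $c_1$, and hence $S_{\rm corr}$, at the same order $\epsilon^2$ at which $S_{\rm corr}$ lives. This is exactly what Lemma~\ref{lemma:optimization:mixed} uses to set $c_1=0$. If $R^*$ really depended on $\lambda$, the monotonicity would therefore not follow.

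The correct reason is simpler. The coherent information $I_c(\mathcal{N}_R)$ is defined with the maximally entangled input $|\Phi\rangle$, so $R^*$, and with it $W_R$ and every $c_i$, is independent of $\sigma^{(L)}$ by construction. Moreover, at the optimum $c_1=0$, so the bulk-entropy dependence is carried by $c_2f_2$ alone. Strict increase therefore requires $c_2\neq 0$, not merely "$c_1$ or $c_2$ nonzero."
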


\begin{figure}[t]
    \centering
    \includegraphics[width=0.7\linewidth]{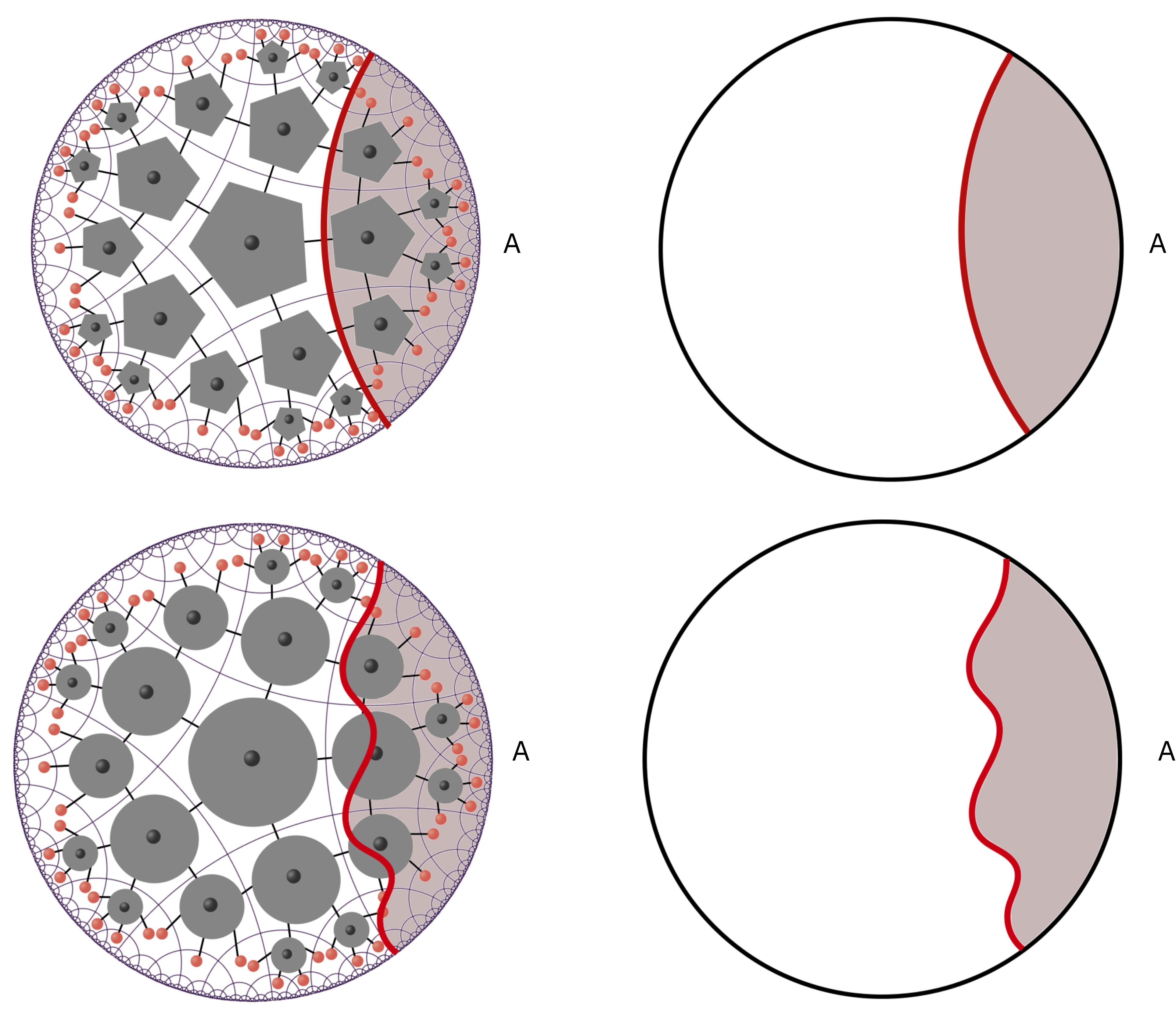}
    \caption{
    {Top: The proto-area surface in a holographic stabilizer codes is unchanged by the encoded logical information, similar to the RT surface in holography where the bulk background geometry is fixed. Bottom: a magic-enriched code where the encoding circuits are skewed away from the exact encoding maps now permits the area of the surface to become state-dependent, an effect observed in QES and systems with gravitational backreaction in holography.}}
    \label{fig:qes}
\end{figure}

Note that this increase in the ``area'' of a surface in the presence of bulk or matter entropy is analogous to how the area of the quantum extremal surface could shift in the presence of bulk entanglement \cite{Engelhardt_2015,Akers:2019wxj} (Figure~\ref{fig:qes}). A similar shift can also be observed when gravitational back-reaction is incorporated \cite{Porrati_2004,Ghosh_2016,Ryu_2006}. In Appendix~\ref{app:perfect_fluid}, we also explicitly reproduce one such scenario in $AdS_3$ where the classical extremal surface area will increase in response to an increase of bulk entropy. Such behaviors have also been observed in some tensor network models such as \cite{Hayden_2016,Cao_2021} which are approximate quantum codes where it has been argued to mimic gravitational features. However, we hasten to point out that this change in the quantum information theoretic quantity does not yet have a clear correspondence with either effect beyond the current qualitative similarity.  This is because it is unclear if the extremization in the usual quantum extremal surface can be related to our optimal recovery condition. A similar correspondence with backreaction also requires further analysis to clarify its connection with, e.g., the linearized Hamiltonian constraint, in specific families of the code. We note that the lack of a more precise correspondence with spacetime and gravity is difficult at this level of generality because not all quantum codes admit spacetime descriptions. We will leave a rigorous analysis of its physical meaning in specialized quantum codes to future work.

The optimal recovery is obtained by maximizing the coherent information between the reference system and the recovered bulk state over the choice of local recovery unitaries $R_A^{(\epsilon)}$ and $R_{\bar A}^{(\epsilon)}$.  We prove the following lemma regarding optimization:
\begin{lemma}\label{lemma:optimization:mixed}
    The optimal recovery, defined in Eq.~\eqref{eq:Rstar}, 
is achieved when $c_1(W_R)=0$ in Eq.~\eqref{eq:Scorr}. 
\end{lemma}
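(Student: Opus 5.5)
We have to guess what $c_1$ is. In the mixed case the reference $r$ purifies the bulk, and the coherent information of interest is $I_c = S(\sigma^{(R)}_{A_1}) - S(\sigma^{(R)}_{A_1 r})$. Write $W_R = R_A^{(\epsilon)} R_{\bar A}^{(\epsilon)}\, W\, R_A^{(0)\dagger} R_{\bar A}^{(0)\dagger}$ and expand to second order in $\epsilon$. The plan is to show that the coherent-information deficit $\log d_L - I_c$, averaged over the bulk unitaries $U_a$, is at leading order a non-negative quadratic form in $W_R$. We then identify $c_1$ with the piece of that form that the local recovery unitaries can change, and show the optimum sets it to zero.

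\textbf{Step 1: expand the deficit.} Because $|\chi\rangle$ has a flat spectrum and the unperturbed state $|\psi\rangle_{A_1 r}\otimes|\chi\rangle_{A_2\bar A_2}$ is pure on $A_1 r\,A_2\bar A_2$, the first-order terms in the entropies cancel. The reason is that $\Tr[\delta\rho\,\log\rho]$ vanishes for the commutator perturbations $i\epsilon[W_R,\cdot]$ within the relevant supports. At order $\epsilon^2$, $S(\sigma_{A_1 r})$ grows like $\epsilon^2$ times the weight of $W_R|\psi\rangle|\chi\rangle$ that leaks out of $|\psi\rangle_{A_1r}$. This weight is a variance-type quantity $\langle W_R^2\rangle - \langle W_R\rangle^{2}$, restricted to components of $W_R$ that act nontrivially on $A_1$ in a way correlated with $A_2\bar A_2$.

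\textbf{Step 2: decompose $W_R$.} Expand $W_R$ in an operator basis $\{P_\mu\otimes Q_\nu\}$ with $P_\mu$ on $A_1$ and $Q_\nu$ on $A_2\bar A_2$. After the Haar average over $U_a$, only sums of $|w_{\mu\nu}|^2$ survive, multiplied by functions of $\lambda$. The term $c_1 f_1(\lambda)$ should collect the components of the form $P_\mu\otimes(\text{operators local on } A \text{ or local on } \bar A)$ with $P_\mu$ traceless. The term $c_2 f_2(\lambda)$ should collect the components that are genuinely nonlocal across $A$ and $\bar A$, for example $P_\mu\otimes Q_{A_2}Q_{\bar A_2}$ that does not act trivially on $|\chi\rangle$.

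\textbf{Step 3: show the recovery can cancel $c_1$.} Write $R_A^{(\epsilon)} = e^{i\epsilon K_A}R_A^{(0)}$ and $R_{\bar A}^{(\epsilon)} = e^{i\epsilon K_{\bar A}}R_{\bar A}^{(0)}$. Then, to first order,
\begin{equation}
W_R = \tilde W + K_A + K_{\bar A},
\end{equation}
where $\tilde W$ denotes $W$ conjugated by the original recovery unitaries. Since $K_A$ and $K_{\bar A}$ are arbitrary Hermitian operators on $A$ and on $\bar A$, they span exactly the local components counted in $c_1$. Modulo operators that stabilize $|\psi\rangle|\chi\rangle$, such as $Q_{A_2}\otimes 1$ acting on $|\chi\rangle$ compensated by $Q_{\bar A_2}$, these are the components collected in Step 2. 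Choosing $K_A = -\Pi_A(\tilde W)$ and $K_{\bar A} = -\Pi_{\bar A}(\tilde W)$, the projections onto those components, gives $c_1 = 0$ and leaves $c_2$ and $c_3$ unchanged. The deficit equals $\epsilon^2(c_1 f_1 + c_2 f_2)/2 \ge 0$ up to positive factors, $f_1$ is positive, and $c_2$ is independent of $K$, so the minimum deficit, which is the maximum of $I_c$, is attained exactly when $c_1 = 0$. Higher-order terms do not affect the leading-order optimizer.

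\textbf{Main obstacle.} The hardest step is making the orthogonal decomposition in Step 2 clean. Two things must be shown: that the Haar average produces no cross terms between the local part and the nonlocal part, and that $c_2$ is genuinely independent of $K$. Both require an inner product in which the local operator space and its complement are orthogonal on the support of $|\chi\rangle$. I would first try the state-weighted Hilbert–Schmidt inner product
\begin{equation}
\langle X, Y\rangle = \Tr\!\left[X^\dagger Y\,(\sigma\otimes\chi)\right],
\end{equation}
and use the flatness of $\chi$ to reduce it to the ordinary trace inner product on $A_2$.
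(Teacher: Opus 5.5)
Your skeleton matches the paper's. The recovery freedom acts as a first-order shift $W_R\to W_R+K_A+K_{\bar A}$, one coefficient is invariant, and the other can be cancelled. The gap is in the step that ties the optimization to the coefficient $c_1$ of Eq.~\eqref{eq:Scorr}.

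\textbf{The deficit is not $S_{\mathrm{corr}}$.} You assert that the coherent-information deficit is $\epsilon^2(c_1f_1+c_2f_2)/2$. That expression is $\langle S_{\mathrm{corr}}\rangle$, a difference of boundary and bulk relative entropies for the actual logical state, not $\log d-I_c$. The coherent information that defines $R^*$ in Eq.~\eqref{eq:Rstar} is evaluated on the maximally entangled input (Eq.~\eqref{ref-coherentinfo}). There $\sigma^{(0)}_{A_1}=I/d$, so no $\lambda$ appears. At that uniform spectrum $f_1=-\frac1d\sum_i\ln\lambda_i-S(\lambda)=0$, so your formula gives $c_1$ zero weight and would not select $c_1=0$ at all.

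What is missing is a direct computation of $S(\sigma^{(\epsilon)}_{A_1r})$ near the pure state $|\Phi\rangle\langle\Phi|$; $S(\sigma^{(\epsilon)}_{A_1})$ stays at $\ln d$ to this order. Because the unperturbed state is pure, this expansion is non-analytic. Its leading term is of order $\epsilon^2\log(1/\epsilon)$ times the squared norm of the component of $W_R|\Phi\rangle|\chi\rangle$ in $\Phi^\perp\otimes\chi^\perp$, not $\epsilon^2$ times a leaked weight. The paper regulates with a full-rank $\Delta$ and finds $I_c^{(\Delta)}=\ln d-2\epsilon^2\kappa\,[c_1(J)+c_2(D)]$, with $\kappa>0$ growing like $\log(1/\Delta)$. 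Only this identity shows that the coherent information penalizes the same $c_1$ that appears in $S_{\mathrm{corr}}$.

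\textbf{Your guessed split is also not the right one.} In the paper:
\begin{itemize}
\item $c_1$ is the part of $J=d_\chi\Tr_{\bar A}\{W_R,\chi\}$ that is traceless on both $A_1$ and $A_2$;
\item $c_2$ is the $A_1$-traceless part of $D=i d_\chi\Tr_{\bar A}[W_R,\chi]$.
\end{itemize}
Equivalently, they are the Hermitian and anti-Hermitian parts of the effective operator $d_\chi\Tr_{\bar A}(W_R\chi)$ on $A_1A_2$. This is not a local/nonlocal decomposition of $W_R$. A genuinely nonlocal term $P_{A_1}\otimes Q_{A_2}\otimes Q'_{\bar A}$ contributes $P\otimes\{Q,(Q')^{T}\}$ to $J$, which is removable, and $iP\otimes[Q,(Q')^{T}]$ to $D$, which is not. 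So projecting out ``local components'' in the Hilbert--Schmidt sense does not zero the paper's $c_1$, and the nonlocal weight is not its $c_2$.

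The correct mechanism is a short computation. Flatness gives $[O_A,\chi_{A_2}]=0$; with cyclicity of the partial trace over $\bar A$ this yields $J\to J+2O_A+2I_{A_1}\otimes O'_{A_2}$ and $D\to D$. Hence $O_A=-J/2$ sets $c_1=0$ with $c_2$ fixed. Your proposed state-weighted inner product, carried through, would reproduce exactly this $J$/$D$ split rather than the local/nonlocal one you describe.
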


The proof is given in the appendix~\eqref{app:optimization:mixed}. There we show (i)~by an appropriate choice of the local recovery, one can always set $c_1(W_R)$ to zero, achieving the optimum, and (ii)~$c_2(W_R)$ is invariant under such variations of the local recovery, and hence can be regarded as function of the encoding unitary, $c_2(V)$.
Therefore, in the case where the bulk is in a mixed state, the PA entropy takes on a universal behavior where it only depends on the perturbation via the parameter $c_2(V)$. We will show later that it is precisely connected to non-local magic when the unperturbed system is a stabilizer code.

\subsection{Pure bulk state}

The bulk can also take on more general quantum states. Next, we consider the case where the logical state is any pure state $|\psi\rangle$ which may be entangled between the bulk subregions $a$ and $\bar a$. Any such state can be written in the form 
\begin{equation}
    \label{eq:bulk_state_pure_case}\ket{\psi}_L=\sum_{i=1}^d \sqrt{\lambda_i}\ket{i}_a\ket{i}_{\bar a}
\end{equation}
for some Schmidt basis and  coefficients $\sqrt{\lambda_i}$ (Figure~\ref{fig:pure_circuit}).
We discover a similar dependence of the PA entropy on the entanglement spectrum $\{\lambda_i\}$. The corresponding encoded boundary state takes the form of Eq.~\eqref{eq:recoverbd}, partitioned into the subsystems $A_1$, $\bar A_1$ (supporting the recovered state) and  $A_2$, $\bar A_2$ (representing the geometric part). 
\begin{figure}[t]
    \centering
    \includegraphics[width=1\linewidth]{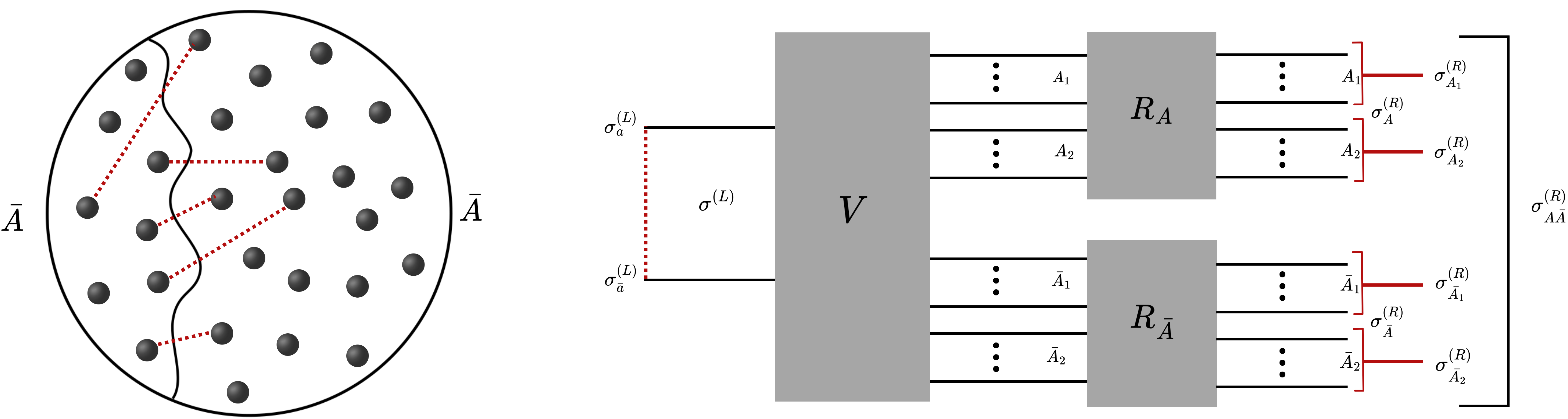}
 \caption{
Left: AdS picture, where the bulk degrees of freedom are in a pure but generally entangled state. Internal entanglement within $a, \bar{a}$ are permitted, but they do not contribute to the entropy dependence. Therefore they are not shown in the figure.
Right: Circuit representation of the same setup. The logical state $\sigma^{(L)}$ is a pure state on $\mathcal{H}_a \otimes \mathcal{H}_{\bar a}$, with the red dotted lines indicating entanglement between $a$ and $\bar a$. This state is encoded by the isometry $V$ into $\mathcal{H}_A \otimes \mathcal{H}_{\bar A}$, followed by independent recovery maps $R_A$ and $R_{\bar A}$. The outputs define the recovered reduced states $\sigma_{A_1}^{(R)}$, $\sigma_{\bar A_1}^{(R)}$, and the joint state $\sigma_{A\bar A}^{(R)}$.
}
\label{fig:pure_circuit}
\end{figure}
Similar to the previous setting, we study typical input logical states by averaging over logical local unitaries over $a$ and $\bar{a}$ (or $A_1$ and $\bar{A}_1$ via an isomorphism). Let $W$ denote the skewing matrix defined in Eq.~\eqref{eq:skew}. We then establish the following result for the skewed erasure-correcting code:
\begin{theorem}\label{thm:monotonic-pure}
        Assume that $\ket{\chi}$ has flat spectrum in the undeformed code. For a random choice of the skewing matrix $W$, drawn from the Gaussian Unitary Ensemble (GUE), the averaged PA entropy of subregion $A$ is typically a monotonically increasing function of the bulk entanglement in leading order of $\epsilon$. The probability that the PA entropy exhibits this monotonic behavior scales as $\mathcal{O}(1-e^{-d^2})$, where $d$ is the dimension of bulk Hilbert space $\mathcal{L}_a$. 
\end{theorem}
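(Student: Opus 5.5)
The plan is to mirror the mixed-state argument behind Theorem~\ref{thm:monotonic-mixed}, now with a pure bulk state, and then add a probabilistic statement over $W$.

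\textbf{Reduction to a relative-entropy correction.} Start from Theorem~\ref{thm:relative}, which gives $S_{PA}=S(\chi)-S_{\rm corr}$. It therefore suffices to show that the unitary-averaged $\langle S_{\rm corr}\rangle$ is typically a monotonically decreasing function of the bulk entanglement $S(\lambda)=-\sum_i\lambda_i\log\lambda_i$ at order $\epsilon^2$.

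\textbf{Second-order expansion.} Write $\sigma^{(R^{(\epsilon)})}=e^{i\epsilon W_R}\bigl(|\psi\rangle\langle\psi|\otimes|\chi\rangle\langle\chi|\bigr)e^{-i\epsilon W_R}$. With $|\psi\rangle=\sum_i\sqrt{\lambda_i}|i\rangle_{A_1}|i\rangle_{\bar A_1}$, expand both relative entropies to second order using the standard formula
\[
D(\rho+\delta\rho\,\|\,\rho)=\tfrac12\langle\delta\rho,\mathcal{F}_\rho^{-1}\delta\rho\rangle+O(\delta\rho^3),
\]
where $\mathcal{F}_\rho$ is the Bogoliubov--Kubo--Mori metric, i.e. the kernel $(\log p_a-\log p_b)/(p_a-p_b)$ in the eigenbasis of $\rho$. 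The first-order terms vanish because $\delta\rho$ is traceless and the expansion is taken around the reference state.

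The unperturbed marginals are simple:
\[
\sigma_{A_1A_2}^{(0)}=\mathrm{diag}(\lambda)\otimes I_{d_\chi}/d_\chi,\qquad \sigma_{A_1}^{(0)}=\mathrm{diag}(\lambda).
\]
The flatness of the spectrum of $\chi$ is essential here: the $A_2$ factor of the metric becomes trivial, so the kernel depends only on pairs $(\lambda_i,\lambda_j)$. The first-order perturbation is $\delta\sigma=i\epsilon[W_R,\cdot\,]$ followed by a partial trace.

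Next, average over $U_a\otimes U_{\bar a}$ using second-moment Weingarten integrals. This turns $\langle S_{\rm corr}\rangle$ into a finite sum $\sum_k c_k(W_R)\,f_k(\lambda)$. Each $f_k$ is a symmetric function built from sums such as $\sum_{ij}\sqrt{\lambda_i\lambda_j}$, $\sum_{ij}(\lambda_i-\lambda_j)(\log\lambda_i-\log\lambda_j)$ and $\sum_i\lambda_i^2$. Each coefficient $c_k$ is a Hilbert--Schmidt norm of a block of $W_R$: decompose $W_R$ into its components acting trivially versus nontrivially on $A_2\bar A_2$, and into local (on $A$ or on $\bar A$) versus cross terms.

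\textbf{Optimization and monotonicity.} Repeat the analogue of Lemma~\ref{lemma:optimization:mixed}: the optimal local recovery $R^{(\epsilon)}$ can absorb the components of $W_R$ that are local on $A$ or on $\bar A$. This sets the corresponding $c_k$ to zero, and the surviving coefficients become invariants of $W$ alone.

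For monotonicity, check that each surviving $f_k$ is Schur-concave in $\lambda$. Because the averaged quantity depends only on the spectrum, the natural notion of "increasing with bulk entanglement" is majorization order, and Schur-concavity gives exactly that order. To verify it I would use the Schur--Ostrowski criterion $(\lambda_i-\lambda_j)(\partial_i-\partial_j)f\le 0$, sign-checking term by term.

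\textbf{Main obstacle: sign competition and the probability bound.} Unlike the mixed case, I expect the pure case to contain two contributions of opposite monotonicity. One comes from entanglement across $A_1\bar A_1$ feeding into $A_2$; the other from the $\bar A$-side term that is subtracted in $D(\sigma_{A_1}\|\cdot)$. Monotonicity then requires an inequality between the coefficients, for example $c_{\rm good}(W)\ge \kappa\, c_{\rm bad}(W)$ for a dimension-dependent constant $\kappa<1$.

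For $W$ drawn from the GUE, each coefficient is a quadratic form in $\Theta(d^2)$ independent Gaussian entries, i.e. a $\chi^2$-type variable. Its expected values follow from GUE second moments: I expect $\mathbb{E}\,c_{\rm good}$ to exceed $\mathbb{E}\,c_{\rm bad}$ by a constant factor, because the good block has more independent entries. Concentration then follows from Laurent--Massart or Hanson--Wright, giving
\[
\Pr\bigl[c_{\rm good}<\kappa\, c_{\rm bad}\bigr]\le e^{-\Omega(d^2)},
\]
which is the claimed $1-O(e^{-d^2})$. The hardest step is identifying the exact blocks and establishing the strict gap between their expectations; I would first compute the two smallest cases, $d=2$ and $d_\chi=2$, explicitly to fix the signs.
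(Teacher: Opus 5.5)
Most of your plan follows the paper's proof. The shared steps are the reduction via Theorem~\ref{thm:relative}, the second-order resolvent (BKM) expansion around $\mathrm{diag}(\lambda)\otimes I/d_\chi$, and independent Haar averages on $A_1$ and $\bar A_1$. The coefficients are likewise sums of squared Pauli components of the anticommutator/commutator operators $J,D$. The optimal recovery removes the $A$- or $\bar A$-local part (giving $k_1=0$), and a block-counting argument handles GUE-distributed $W$.

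The genuine gap is the monotonicity step. You need the spectral functions multiplying nonnegative coefficients in $\langle S_{\rm corr}\rangle$ to be Schur-convex; the Schur--Ostrowski inequality you wrote is the Schur-concave one. Schur-convexity holds for the paper's $f_1$ and $f_2$, which are sums over $(i,j)$ of jointly convex functions. It fails for the function multiplying the dominant coefficients $k_3,k_4$:
\[
f_3(\lambda)=\frac{1}{d\bar d}\sum_{i,j}g\!\left(\frac{\lambda_i}{\lambda_j}\right),\qquad g(t)=\frac{t+1}{2(t-1)}\ln t .
\]
This function is scale invariant and not Schur-convex. For $d=4$, take $\lambda=(0.8,0.198,0.001,0.001)$ and $\lambda'=(0.5,0.498,0.001,0.001)$. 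Then $\lambda'\prec\lambda$ and $S(\lambda')>S(\lambda)$, yet $\sum_{i,j}g(\lambda_i/\lambda_j)\approx 32.40$ for $\lambda$ and $\approx 32.95$ for $\lambda'$. So the leading term of $\langle S_{\rm corr}\rangle$ grows with entanglement along this move. No monotonicity statement in majorization or entropy order is available, and your term-by-term Schur--Ostrowski check will fail exactly on the leading term.

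The missing idea is to restrict the class of bulk states along which ``increasing with bulk entanglement'' is defined. The paper takes products of Bell-like pairs $\bigotimes_k(\cos\theta_k|00\rangle+\sin\theta_k|11\rangle)$ and proves monotonicity in each $\theta_k\in(0,\pi/4]$ separately. Varying $p_k=\cos^2\theta_k$ only rescales cross-ratios of eigenvalues by $t=p_k/(1-p_k)$ or $1/t$. Hence $\partial_{p_k}f_3$ is a sum of terms $h(rt)-h(r/t)$, with $r$ a ratio of the fixed factors and $h(u)=u\,g'(u)$ increasing.

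On the probability part, your chi-square concentration route is a more careful version of the paper's $F$-distribution estimate, but your block sizes are off. The bad block $B_D$ (components of $D$ acting trivially on $A_1$) has $(\bar d^2-1)(d_\chi^2-1)$ entries, independent of $d$. The good block $E_D$ has $(d^2-1)(\bar d^2-1)(d_\chi^2-1)$ entries. The expected ratio $\langle k_5/k_3\rangle$ is therefore $1/d^2$ rather than a constant, which only strengthens your tail bound.
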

\begin{proof}
Let $\dim \mathcal{L}_a= d, \dim \mathcal{L}_{\bar{a}}=\bar{d}$. As detailed in Appendix~\ref{app:monotonic-pure}, we obtain the following expression up to $\mO(\epsilon^2)$ for the correction to the PA entropy of subregion $A$:
\begin{eqns}\label{eq:Scorrpure}
    \langle S_{corr}\rangle=&\frac{\epsilon^2}{2}\biggr[k_1f_1(\lambda)+k_2f_2(\lambda)+k_3\left(f_3(\lambda)-\frac{1}{d^2}(1+\frac{d}{\bd})f_2(\lambda)+\frac{1}{d^2}f_1(\lambda)\right)\\
    &+k_4\left(f_3(\lambda)-\frac{1}{d\bar d}f_1(\lambda)\right)+k_5\left(f_2(\lambda)-f_1(\lambda)\right)+k_6\biggr],
\end{eqns}
where the parameters $k_1,\cdots, k_6$ are independent, non-negative functions of the matrix $W_R$ (defined in Eq.~\eqref{eq:recoverbd}).
All three functions $f_1(\lambda)$, $f_2(\lambda)$ and $f_3(\lambda)$ are monotonically decreasing functions of the bulk entanglement, while the combination $f_2(\lambda)-f_1(\lambda)$ is monotonically increasing.\\ 
In Appendix~\ref{app:monotonic-pure} we show that $S_{corr}(A)$ decreases monotonically with bulk entanglement whenever $k_5<k_3$.  For a typical random choice of $W$ from the GUE, the expected ratio between these two coefficients satisfies
\begin{eqns}
    \Big\langle{\frac{k_5}{k_3}}\Big\rangle_{\rm GUE}=\frac{1}{d^2},
\end{eqns}
and the probability of violating the inequality $k_5< k_3$ happens at probability 
\begin{eqns}
    P(k_5\geq k_3)\sim \mathcal{O}(e^{-d^2}).
\end{eqns}
\end{proof}
Similar to the case with mixed bulk state, the optimal recovery is obtained by maximizing the coherent information between reference system and the recovered bulk state. We prove the following lemma regarding optimization in Appendix~\ref{app:monotonic-pure}. 

\begin{lemma}\label{lemma:optimization-pure}
    The optimal recovery, defined in Eq.~\eqref{eq:Rstar}, is achieved when $k_1(W_R)=0$ in Eq.~\eqref{eq:Scorrpure}. 
\end{lemma}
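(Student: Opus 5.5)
To prove Lemma~\ref{lemma:optimization-pure}, I would follow the strategy used for Lemma~\ref{lemma:optimization:mixed}. The idea is to split the dependence of $\langle S_{corr}\rangle$ on the recovery into a part that local recovery can remove and a part it cannot.

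\textbf{Step 1: parametrize nearby recoveries.} Since the deformation is $\mathcal{O}(\epsilon)$ and the undeformed recovery $R^{(0)}_A R^{(0)}_{\bar A}$ is exact, I restrict to recoveries close to it. I write
\[
R^{(\epsilon)}_A R^{(\epsilon)}_{\bar A} = e^{i\epsilon X_A}\, e^{i\epsilon X_{\bar A}}\, R^{(0)}_A R^{(0)}_{\bar A},
\]
with Hermitian $X_A$ on $A$ and $X_{\bar A}$ on $\bar A$. To first order this gives
\[
W_R = \widetilde W + X_A\otimes I + I\otimes X_{\bar A},
\qquad
\widetilde W = R^{(0)}_A R^{(0)}_{\bar A}\, W\, R^{(0)\dagger}_A R^{(0)\dagger}_{\bar A}.
\]

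\textbf{Step 2: locate the recovery-dependent piece.} I expand $I_c(\mathcal N_R)$ to $\mathcal{O}(\epsilon^2)$ for the purified logical input on $A_1\bar A_1 r$, averaged over bulk local unitaries. Its deficit from $\log d_L$ should be a positive semidefinite quadratic form in the components of $W_R$. These are the components that map the code factor $|\chi\rangle_{A_2\bar A_2}$ out of itself while acting nontrivially on $A_1\bar A_1$ (leakage of logical information into $A_2\bar A_2$). I then decompose $W_R$ in an operator basis adapted to $A_1, A_2, \bar A_1, \bar A_2$:
\begin{itemize}
\item[(a)] terms supported on $A_1A_2$ alone or on $\bar A_1\bar A_2$ alone;
\item[(b)] terms that act on logical and geometric factors across the cut simultaneously, i.e.\ genuinely on both $A$ and $\bar A$.
\end{itemize}
Reading off the expansion in Appendix~\ref{app:monotonic-pure}, I would identify $k_1$ as the squared norm of the (a)-type components that are not stabilizing $|\chi\rangle$. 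I would identify $k_2,\dots,k_6$ as built from (b)-type components and from pieces acting trivially on the logical factor.

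\textbf{Step 3: show $k_1$ can be set to zero and that this is optimal.} Because the (a)-type components are local, choosing $X_A$ and $X_{\bar A}$ equal to minus their projections sets $k_1=0$. The (b)-type components are invariant under such shifts, so $k_2,\dots,k_6$ are unchanged; they are functions of $V$ alone. The loss of coherent information then decomposes as $k_1\cdot g_1(\lambda)$ plus a term independent of the local recovery. I expect $g_1(\lambda)>0$ to be proportional to $f_1$, which is positive. Under this decomposition, the maximizer has $k_1(W_R)=0$.

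\textbf{Main obstacle.} The hard step is the orthogonality claim in Step~3. One must show that the cross terms between the local (a)-part and the nonlocal (b)-part vanish after Haar averaging over $U_aU_{\bar a}$. Otherwise shifting $X_A$ could alter $k_3,k_4,k_5$ and move the optimum away from $k_1=0$. I would handle this using the flat spectrum of $\chi$: it makes $\chi_{A_2}\propto I$, so traces of the form $\Tr[\chi\, O_{A}O'_{A\bar A}]$ reduce to Hilbert--Schmidt inner products. I would then choose the basis in Step~2 to be Hilbert--Schmidt orthogonal. Finally, I would check that $\epsilon^2$ corrections to the recovery do not enter at this order; the first-order stationarity argument covers this.
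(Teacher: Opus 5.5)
Your overall architecture matches the paper's. You write the recovery freedom as $W_R\to W_R+X_A\otimes I+I\otimes X_{\bar A}$, show that the coherent-information deficit is minimized by zeroing every shift-adjustable coordinate, and check that $k_2,\dots,k_5$ are built only from shift-invariant ones. The gap is in the coordinates you use. Splitting $W_R$ by its support is the wrong decomposition, because only the action of $W_R$ on $|\psi\rangle|\chi\rangle$ matters, and for flat $\chi$ one has $(I\otimes Q)|\chi\rangle=(Q^T\otimes I)|\chi\rangle$. This is why the paper works with $J=d_\chi\Tr_{\bar A_2}(\{W_R,\chi\})$ and $D=id_\chi\Tr_{\bar A_2}([W_R,\chi])$ of Eq.~\eqref{eq:PQ}. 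The first-order change of the recovered state is $\tfrac{i\epsilon}{2}(J-iD)|\psi\rangle|\chi\rangle$, with $J,D$ Hermitian on $A_1\bar A_1A_2$. A shift by $X_A$ changes $J$ by $2X_A\otimes I_{\bar A_1}$, a shift by $X_{\bar A}$ changes only the $A_1$-trivial sector of $J$, and $D$ is invariant. Now $k_1\propto C_J=\sum_{l,n>0}p_{l0n}^2(J)$ is not the norm of your (a)-type components. A term $O_{A_1}\otimes Q_{\bar A_2}$, which is type (b) in your classification, contributes $2\,O\otimes Q^T$ to $J$ and nothing to $D$, so it feeds $k_1$. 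Cancelling only the (a)-type projections of $W_R$ therefore leaves $k_1\neq0$ in general. The shift that works is $X_A=-\tfrac{1}{2\bar d}\Tr_{\bar A_1}(J)$, which pulls such nonlocal pieces into $R_A$. Your invariance conclusion for $k_2,\dots,k_5$ survives, since they depend only on $D$ and on $E_J$. The cross-term obstacle you anticipate also disappears, but only in these coordinates, where both expansions are diagonal in Hilbert--Schmidt-orthogonal Pauli blocks of $J$ and $D$.

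Second, your optimality step conflates two functionals. $R^*$ maximizes $I_c$ evaluated on the maximally entangled input of Eq.~\eqref{ref-coherentinfo}, so its deficit cannot depend on the bulk spectrum $\lambda$. The form ``$k_1 g_1(\lambda)$ plus a recovery-independent term, with $g_1\propto f_1$'' is borrowed from $S_{\rm corr}$ in Eq.~\eqref{eq:Scorrpure}, not from $I_c$. The paper computes $I_c$ on its own. $S(\sigma^{(\epsilon)}_{A_1\bar A_1})=\ln(d\bar d)$ is unchanged. $S(\sigma^{(\epsilon)}_{A_1\bar A_1r\bar r})$ has a pure unperturbed state and so requires the regulator $\Delta$ before any expansion. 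It gives Eq.~\eqref{eq:BEHIBE}: a nonnegative combination, with $\lambda$-independent weights $\propto\log(1/\Delta)$, of several blocks. These include $C_J$ but also $B_J$, which is adjustable by $X_{\bar A}$ yet absent from $S_{\rm corr}$. The argument you need is that every shift-adjustable block enters with positive weight and these blocks are independent coordinates. The maximizer therefore zeroes all of them, in particular $C_J$, whence $k_1=0$.
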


The increase in the PA entropy can be similarly compared to the increase of QES area in holography \cite{Akers:2019wxj} where the extremal surface can increase in the presence of bulk entanglement to reduce the amount of generalized entropy contribution from bulk entanglement.

\subsection{Perturbing general erasure correction code}\label{sec:generalize}
We further extend our results to general codes  with approximate subsystem complementary recovery. First we show that the skewing condition specified in Eq.~\eqref{eq:recoverbd} is equivalent to the approximate Knill-Laflamme (KL) condition.

\begin{theorem}\label{thm:KLcondition}
Let $\mH_P=\mH_A\otimes \mH_{\bar A}$ be the physical Hilbert space, and let
$\mathcal C\subset \mH_P$ be a code subspace isomorphic to a logical Hilbert space
$\mL=\mL_a\otimes \mL_{\bar a}$ of dimension $d_L:=\dim\mL$.
Fix an orthonormal basis $\{\ket{i_L}\}_{i=1}^{d_L}$ of $\mL$ and an isometry
$V:\mL\to \mH_P$.  Denote the corresponding codewords by
\[
\ket{\tilde i}:=V\ket{i_L}\in \mathcal C.
\]
Then the following two statements are equivalent.

\begin{enumerate}
\item \textbf{Approximate Knill--Laflamme (aKL) conditions.}
There exist completely positive linear maps
\[
\mathcal E_{\bar A}: \mathcal B(\mH_{\bar A})\to \mathcal B(\mL_{\bar a}),\qquad
\mathcal E_{A}: \mathcal B(\mH_{A})\to \mathcal B(\mL_{a}),
\]
and families of sesquilinear error functionals
$\{Y_{ij}(\,\cdot\,)\}_{i,j=1}^{d_L}$ and $\{\bar Y_{ij}(\,\cdot\,)\}_{i,j=1}^{d_L}$ such that for all
$i,j$ and all $X_{\bar A}\in\mathcal B(\mH_{\bar A})$, $X_{A}\in\mathcal B(\mH_A)$,
\begin{eqns}
\bra{\tilde i}(I_A\otimes X_{\bar A})\ket{\tilde j}
&=
\bra{i_L}\!\left(I_a\otimes \mathcal E_{\bar A}(X_{\bar A})\right)\!\ket{j_L}
\;+\;\epsilon\,Y_{ij}(X_{\bar A}), \\
\bra{\tilde i}(X_A\otimes I_{\bar A})\ket{\tilde j}
&=
\bra{i_L}\!\left(\mathcal E_{A}(X_A)\otimes I_{\bar a}\right)\!\ket{j_L}
\;+\;\epsilon\,\bar Y_{ij}(X_A). \label{eq:appKL}
\end{eqns}

\item \textbf{Approximate recovery up to fixed ancillas.}
Introduce auxiliary systems $E$ and $\bar E$ initialized in a fixed product state
$\ket{0}_{E\bar E}\in \mH_E\otimes\mH_{\bar E}$, and consider the enlarged physical space
\[
\mH_{P'}:=\mH_{AE}\otimes \mH_{\bar A\bar E},
\qquad
\mH_{AE}:=\mH_A\otimes\mH_E,\quad
\mH_{\bar A\bar E}:=\mH_{\bar A}\otimes\mH_{\bar E}.
\]
Assume that $\mH_{AE}$ and $\mH_{\bar A\bar E}$ admit the following decompositions
\[
\mH_{AE}\simeq (\mH_{A_1}\otimes \mH_{A_2})\oplus \mH_{A_3},\qquad 
\mH_{\bar A\bar E}\simeq (\mH_{\bar A_1}\otimes \mH_{\bar A_2})\oplus \mH_{\bar A_3},
\]
together with identifications $\mH_{A_1}\simeq \mL_a$ and $\mH_{\bar A_1}\simeq \mL_{\bar a}$.
Then there exist unitaries $R_{AE}$ on $\mH_{AE}$ and $R_{\bar A\bar E}$ on $\mH_{\bar A\bar E}$,
a fixed state $\ket{\chi}\in \mH_{A_2}\otimes \mH_{\bar A_2}$ independent of $i$, and a Hermitian operator
$W$ on $\mH_{P'}$ such that for every $i$,
\begin{equation}\label{eq:apprecovery}
\ket{\tilde i}\otimes \ket{0}_{E\bar E}
=
e^{\,i\epsilon' W}\,
(R_{AE}^\dagger\otimes R_{\bar A\bar E}^\dagger)
\Bigl(\ket{i}_{A_1\bar A_1}\otimes \ket{\chi}_{A_2\bar A_2}\Bigr).
\end{equation}
Here $\ket{i}_{A_1\bar A_1}$ is the image of $\ket{i_L}\in\mL_a\otimes\mL_{\bar a}$ under the fixed identifications
$\mL_a\simeq \mH_{A_1}$ and $\mL_{\bar a}\simeq \mH_{\bar A_1}$.
\end{enumerate}

\noindent\textbf{Parameter relations.}
For the implication \textup{(1)$\Rightarrow$(2)}, given $\epsilon$, $Y_{ij}$ and $\bar Y_{ij}$,
one can choose $W$ and $\epsilon'$ such that
\begin{equation}\label{eq:param-12}
\epsilon'\,\|W\|_2\;\le\; \pi\sqrt{\epsilon\,d_L}\!\left(\sqrt{d_{A}\,\|Y\|}+\sqrt{d_{\bar A}\,\|\bar Y\|}\right),
\end{equation}
where
\[
\|Y\|:=\sup_{\|X\|\le1}\max_{i,j}|Y_{ij}(X)|,\qquad
\|\bar Y\|:=\sup_{\|X\|\le1}\max_{i,j}|\bar Y_{ij}(X)|.
\]
Conversely, for \textup{(2)$\Rightarrow$(1)}, given $W$ and $\epsilon'$, the error functionals can be chosen so that
\begin{equation}\label{eq:param-21}
\epsilon\,\max\{\|Y\|,\|\bar Y\|\}\;\le\; 2\epsilon'\,\|W\|_2.
\end{equation}
\end{theorem}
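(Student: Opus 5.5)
The easier direction is (2)$\Rightarrow$(1), so I will treat it first. Write $V_0 := (R_{AE}^\dagger\otimes R_{\bar A\bar E}^\dagger)(\,\cdot\,\otimes\ket{\chi})$. This is an exact subsystem complementary encoding with trivial area operator, as in \eqref{eq:exact-code}. For $X_{\bar A}$ acting on $\bar A$ (extended by the identity on $\bar E$), exact complementary recovery gives
\[
\bra{i}V_0^\dagger (I\otimes X_{\bar A})V_0\ket{j}=\bra{i_L}\bigl(I_a\otimes\mathcal E_{\bar A}(X_{\bar A})\bigr)\ket{j_L}.
\]
Here $\mathcal E_{\bar A}(X):=\Tr_{\bar A_2}\bigl[(I_{\bar A_1}\otimes\chi_{\bar A_2})\,R_{\bar A\bar E}XR_{\bar A\bar E}^\dagger\bigr]$, restricted to the $\mH_{\bar A_1}\otimes\mH_{\bar A_2}$ block. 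This map is manifestly completely positive. Define the error functional $\epsilon Y_{ij}(X)$ as the difference between $\bra{\tilde i,0}(I\otimes X)\ket{\tilde j,0}$ and this exact term. Write the perturbed codewords as $e^{i\epsilon'W}V_0\ket{i}$. Then
\[
|\epsilon Y_{ij}(X)|\le 2\|(e^{i\epsilon' W}-I)V_0\ket{i}\|\,\|X\|\le 2\|e^{i\epsilon'W}-I\|\,\|X\|\le 2\epsilon'\|W\|\,\|X\|\le 2\epsilon'\|W\|_2\|X\|.
\]
The analogous bound holds for $\bar Y$, which gives \eqref{eq:param-21}. The sesquilinearity and boundedness of $Y_{ij}$ are immediate from this construction.

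For (1)$\Rightarrow$(2), I would build an explicit exact encoding near the given one and then define $W$ as the logarithm of the unitary that carries one into the other. The steps are as follows.

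First, I apply the aKL conditions to $X_{\bar A}$. They say that the reduced state on $\bar A$ of $V\ket{\psi}$ is, up to $O(\epsilon\|Y\|)$, the image under $\mathcal E_{\bar A}^*$ of the logical marginal on $\bar a$. So $A$ approximately decouples from information on $\bar a$, and symmetrically with the roles swapped. Using the approximate decoupling / information–disturbance theorem, i.e.\ a Bény–Oreshkov type bound, I would conclude the following. There are an isometric recovery on $AE$ and one on $\bar A\bar E$, obtained as Stinespring dilations with ancillas $E,\bar E$ that supply the room for the block $\mH_{A_3}$. These recoveries map $\ket{\tilde i}\ket{0}$ to within trace distance $\delta\sim\sqrt{\epsilon d_L d_A\|Y\|}+\sqrt{\epsilon d_L d_{\bar A}\|\bar Y\|}$ of $\ket{i}_{A_1\bar A_1}\ket{\chi}_{A_2\bar A_2}$, with $\ket\chi$ independent of $i$.

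The factor $d_L$ arises from converting matrix-element bounds over the basis $\{i,j\}$ into an operator-norm bound on the Choi state. The factors $d_A$ and $d_{\bar A}$ arise from converting $\sup_{\|X\|\le 1}$ bounds into trace-norm bounds. The square root comes from Fuchs–van de Graaf / Uhlmann. The independence of $\ket\chi$ from $i$ is obtained by applying Uhlmann's theorem to the purification of the Choi state, which is approximately $\Phi_{r a}\otimes\Phi_{r\bar a}\otimes\ket{\chi}$.

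Next, I extend the recovery isometries to unitaries $R_{AE}$ and $R_{\bar A\bar E}$. The two isometries $\ket{i}\mapsto\ket{\tilde i,0}$ and $\ket{i}\mapsto R^\dagger(\ket{i}\ket\chi)$ have images whose Gram matrices differ by $O(\delta)$ in each entry. I then choose a unitary $U$ on $\mH_{P'}$ that maps one family onto the other and is the identity on the orthogonal complement of their span. Its eigenphases then lie in $(-\pi,\pi]$, with $|\theta|\le\pi\|U-I\|/2$. Setting $\epsilon'W:=-i\log U$ and bounding $\|U-I\|_2$ by the sum over the $d_L$ codewords gives \eqref{eq:param-12}; the $\pi$ in that bound is exactly the constant $|\theta|\le \frac{\pi}{2}|e^{i\theta}-1|$.

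The main obstacle is the decoupling step. There I must produce \emph{local} unitary recoveries on both sides simultaneously, with a common $\ket\chi$, from two one-sided aKL conditions, and with the stated $\sqrt{\epsilon}$ scaling and dimension prefactors. My first attempt would be to apply one-sided Uhlmann to get $R_{AE}$. Then, conditioned on that, I would apply it again on $\bar A\bar E$, accumulating errors by the triangle inequality. If this does not produce a product form, I would fall back to doing the Uhlmann step on the full Choi purification with the bipartition $(A E r_a)|(\bar A\bar E r_{\bar a})$.
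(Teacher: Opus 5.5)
Your converse (2)$\Rightarrow$(1) is correct and is essentially the paper's argument: exact KL for the unskewed code plus a perturbation bound. Bounding through $\|X\|$ and $\|e^{i\epsilon'W}-I\|\le\epsilon'\|W\|$ is, if anything, cleaner than the paper's version. (The prefactor should read $\|(e^{i\epsilon'W}-I)V_0\ket{i}\|+\|(e^{i\epsilon'W}-I)V_0\ket{j}\|$, which does not change your final bound.) For (1)$\Rightarrow$(2) the architecture also matches the paper: Choi state, Uhlmann, then $\epsilon'W=-i\log U$ with $|\theta|\le\frac{\pi}{2}|e^{i\theta}-1|$. Your $U$, acting as the identity on the complement of the span of \emph{both} families, is the right construction; the paper's $U$ is the identity on $T^\perp$ only, which is not unitary unless the two spans coincide.

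The gap is exactly the step you flag, and it is not closed. Your first step already asserts that the recoveries send $\ket{\tilde i}\ket{0}$ near $\ket{i}_{A_1\bar A_1}\ket{\chi}_{A_2\bar A_2}$ with $\ket{\chi}$ independent of $i$. You justify this by Uhlmann on a Choi purification ``which is approximately $\Phi\otimes\Phi\otimes\ket{\chi}$'', but that presupposes the conclusion. A one-sided Uhlmann step on $AE$ only yields $R_{AE}\ket{\phi}\ket{0}_E\approx\ket{\Phi}_{rA_1}\ket{\xi}_{A_2\bar r\bar A}$, where $\ket{\xi}$ is some purification of the Choi operator of $\mathcal E_{\bar A}$. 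Nothing so far forces $\ket{\xi}$ into the form $\ket{\Phi}_{\bar r\bar A_1}\ket{\chi}_{A_2\bar A_2}$. Your fallback does not work: Uhlmann across $(AE r_a)\,|\,(\bar A\bar E r_{\bar a})$ returns a unitary on $\bar A\bar E r_{\bar a}$, which acts on the reference and is not of the form $R_{AE}\otimes R_{\bar A\bar E}$.

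What is missing is to feed the \emph{other} aKL condition into a second Uhlmann step, applied to $R_{AE}\ket{\phi}\ket{0}_{E\bar E}$ itself. The $A$-supported condition gives $\phi_{r\bar r A}\approx\rho_{rA}\otimes I_{\bar r}/\bar d$. That is, $\bar r$ is decoupled from $rAE$, and this survives conjugation by $R_{AE}$. Combined with the first step, the marginal of $R_{AE}\ket{\phi}\ket{0}$ on $rAE\bar r$ is $O(\sqrt{\epsilon})$-close in Bures distance to $\Phi_{rA_1}\otimes\xi_{A_2}\otimes I_{\bar r}/\bar d$. Uhlmann with purifying system $\bar A\bar E$ then produces $R_{\bar A\bar E}$ with $R_{AE}R_{\bar A\bar E}\ket{\phi}\ket{0}\approx\ket{\Phi}_{rA_1}\ket{\Phi}_{\bar r\bar A_1}\ket{\chi}_{A_2\bar A_2}$. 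Here $\ket{\chi}$ purifies $\xi_{A_2}$, so it is automatically independent of $i$. Do the error bookkeeping with the Bures (purified) distance and its triangle inequality. Passing through the trace distance of the approximate residual $\xi_{A_2\bar r}$ costs an extra square root and gives only $O(\epsilon^{1/4})$, which would break the stated parameter relation.

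The paper closes the step differently. It runs the two one-sided Uhlmann steps independently, obtaining $R_{AE}\ket{\phi}\ket{0}\approx d_L^{-1/2}\sum_{ij}\ket{ij}_{r\bar r}\ket{i}_{A_1}\ket{\chi_j}_{A_2\bar A}$ and $R_{\bar A\bar E}\ket{\phi}\ket{0}\approx d_L^{-1/2}\sum_{ij}\ket{ij}_{r\bar r}\ket{j}_{\bar A_1}\ket{\chi'_i}_{\bar A_2 A}$. After applying the complementary recovery, both describe the same state, so the paper averages the coefficient mismatch over $(i,j)$ and fixes a good index $i_0$. It then defines $\ket{\chi}$ by projecting $R_{AE}\ket{\chi'_{i_0}}\ket{0}_E$ onto $\bra{i_0}_{A_1}$ and normalizing. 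This gives $R_{\bar A\bar E}\ket{\chi_j}\ket{0}_{\bar E}\approx\ket{j}_{\bar A_1}\ket{\chi}_{A_2\bar A_2}$ for every $j$, at the vector level, hence without losing a square root.
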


 See Appendix~\ref{app:KLcondition} for proof.
 Starting from Eq.~\eqref{eq:apprecovery} with general fixed state $\ket{\chi}$, we show  that results analogous to  Theorem~\ref{thm:monotonic-mixed} and Theorem~\ref{thm:monotonic-pure} still hold.

\begin{theorem}\label{thm:general_mixed}
    For general state $\ket{\chi}$ with entanglement spectrum $\{\mu_n\}$, and the input bulk state is a mixed state with spectrum $\{\lambda_i\}$, the local-unitary-averaged PA entropy is $\langle S_{PA}\rangle=S(\chi)-\langle S_{corr}\rangle$, with $\langle S_{corr}\rangle$ taking the following form: 
    \begin{eqns}
        \langle S_{corr}\rangle=\epsilon c_0+\frac{\epsilon^2}{2}\sum_{mn}\left(c_1^{mn}f_1^{mn}(\lambda)+c_2^{mn}f_2^{mn}(\lambda)+c_3^{mn}\right). 
    \end{eqns}
where $c_0$, $c_1^{mn}$ and $c_2^{mn}$, $c_3^{mn}$ are non-negative function of $W_R$ and $\chi$. $f_1^{mn}(\lambda)$ and $f_2^{mn}(\lambda)$ depend on $\nu_n$ and $\nu_m$. They are all monotonic decreasing function of the bulk entropy. 
\end{theorem}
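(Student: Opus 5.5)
The plan is to mirror the flat-spectrum computation behind Theorem~\ref{thm:monotonic-mixed}, now with a general Schmidt spectrum $\{\mu_n\}$ for $\ket{\chi}$. By Theorem~\ref{thm:KLcondition} the code is still of the skewed form \eqref{eq:apprecovery}. Recovery therefore gives $e^{i\epsilon W_R}(\ket{\psi}\ket{\chi})$, and we purify the mixed bulk state with the reference $r$ as $\sum_i\sqrt{\lambda_i}\ket{i}_{A_1}\ket{i}_r$. Theorem~\ref{thm:relative} was stated for flat spectra, so I first re-derive its content directly. I write $S_{PA}=S(\sigma_{A_1A_2})-S(\sigma_{A_1})$ and expand each entropy to $\mathcal O(\epsilon^2)$ about the unperturbed state $\rho_0=\sigma^{(L)}\otimes\chi$ on $A_1A_2$ and $\sigma^{(L)}$ on $A_1$.

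For the entropy expansion I use the standard second-order perturbation formula for $S(\rho_0+\delta\rho)$ in the eigenbasis of $\rho_0$. With eigenvalues $p_\alpha=\lambda_i\mu_n$,
\begin{equation*}
\delta S=-\Tr[\delta\rho\log\rho_0]-\frac12\sum_{\alpha\beta}\frac{\log p_\alpha-\log p_\beta}{p_\alpha-p_\beta}\,|\delta\rho_{\alpha\beta}|^2+\dots,
\end{equation*}
with the coefficient replaced by $1/p_\alpha$ when $p_\alpha=p_\beta$. Here $\delta\rho=i\epsilon\Tr_{\bar A}[W_R,\rho_0^{\rm full}]+\epsilon^2(\cdots)$.

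In the flat case the first-order term vanishes: $\log\rho_0$ factorizes as $\log\sigma\otimes I+I\otimes\log\chi$ with $\log\chi\propto I$, and the trace of the commutator against the $A_1$ part cancels against the bulk-entropy expansion. For general $\mu_n$, the term $-\epsilon\,i\Tr\bigl([W_R,\rho_0]\,I\otimes\log\chi\otimes I\bigr)$ survives. It is independent of $\lambda$ after the bulk-local-unitary Haar average, because $\langle U\sigma U^\dagger\rangle=I/d$ replaces the bulk factor. Its sign can be fixed by optimality: $R^*$ maximizes coherent information, and the choice of the phase of $W_R$ makes this linear term non-negative. This gives $\epsilon c_0$ with $c_0\ge0$.

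At second order I decompose $W_R$ in the joint Schmidt basis $\ket{m}_{A_2}\ket{n}_{\bar A_2}$ of $\chi$ and group the contributions by the pair $(m,n)$. For each pair the bulk average over $U_a$ is done with Weingarten calculus, exactly as in Appendix~\ref{app:mixed}. Its output is a sum of positive quadratic forms in the matrix elements of $W_R$, which I call $c_1^{mn},c_2^{mn},c_3^{mn}$, multiplying functions of the form
\begin{equation*}
f^{mn}(\lambda)=\sum_{ij}\lambda_i\,g\bigl(\lambda_i\mu_m,\lambda_j\mu_n\bigr),
\end{equation*}
where $g$ is the logarithmic-mean kernel above. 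Terms that reduce to the unperturbed bulk expansion cancel between $S(\sigma_{A_1A_2})$ and $S(\sigma_{A_1})$, leaving the stated form.

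The main obstacle is proving that each $f_{1,2}^{mn}$ decreases monotonically in bulk entropy; in the flat case this was simpler because $\mu_m=\mu_n$. Since the purification has the product form $(\sum_i\sqrt{\lambda_i}\ket{ii})^{\otimes\ell}$, I would first reduce to a one-parameter family. I would then show each $f^{mn}$ is Schur-convex in $\lambda$, using the joint convexity and monotonicity of the logarithmic mean $L(x,y)=(x-y)/(\log x-\log y)$ and the fact that the kernel $1/L(\lambda_i\mu_m,\lambda_j\mu_n)$ is homogeneous of degree $-1$. Schur-convexity implies the function decreases under majorization, hence in entropy along such families. If full Schur-convexity fails for mismatched $\mu_m\neq\mu_n$, I would fall back to showing monotonicity along the product-Bell family by differentiating in $\lambda$ directly.
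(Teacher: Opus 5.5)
Your skeleton matches the paper's. The divided-difference kernel $\frac{\log p_\alpha-\log p_\beta}{p_\alpha-p_\beta}$ with $p_{(i,m)}=\lambda_i\mu_m$ is the eigenbasis form of the block resolvent $\sum_m\Gamma_m(s)\otimes\Pi_m$. The linear term you isolate is exactly the paper's $\Tr(\delta\chi\ln\chi)$ piece of \eqref{eq:PA-step}, and you are right that it is $\lambda$-independent after averaging over $U$ on $A_1$.

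\textbf{The gap: your argument that $c_0\ge 0$ fails.} The sign of $\epsilon W$ is fixed by the code, not chosen. Maximizing coherent information also cannot touch this term. Since $\delta^{(1)}\sigma_{A_1}=i[N,\sigma]$ with $N=\Tr_{A_2\bar A}\bigl(W_R(I\otimes\chi_{A_2\bar A})\bigr)$ Hermitian, the first-order change of $S(\sigma_{A_1})$ vanishes for every $W_R$. So the $O(\epsilon)$ part of $S_{PA}$ is the first-order change of $S(\rho_A)$, which does not depend on the recovery at all. Explicitly, set $\bar W:=\frac1d\Tr_{A_1}W_R$ and $\ket{\chi'}:=(\ln\chi_{A_2}\otimes I)\ket{\chi}$. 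Then $c_0=-2\,\mathrm{Im}\bra{\chi'}\bar W\ket{\chi}$, which is minus the first-order change of $S(\chi_{A_2})$ under $e^{i\epsilon\bar W}$. This is odd in $W$, so it cannot be non-negative for all skewings unless it vanishes. The paper's proof does not sign $c_0$ either. It only uses that $c_0$ is a $\lambda$-independent constant, which is all the monotonicity conclusion needs, so drop the sign claim.

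\textbf{The second-order structure needs an argument.} Your assertion that the $U$-average returns positive quadratic forms times $f^{mn}$-type functions is exactly where the non-flat case stops being a relabelling of the flat one. Write $\delta^{(1)}\sigma_{A_1A_2}=\frac{1}{2d_\chi}\bigl(i[J,\sigma]+\{D,\sigma\}\bigr)$ with $J,D$ as in \eqref{eq:JD-nonflat}. For $\alpha=(i,m)$, $\beta=(j,n)$ the matrix element is $\frac{1}{2d_\chi}\bigl(iJ_{\alpha\beta}(\lambda_j-\lambda_i)+D_{\alpha\beta}(\lambda_i+\lambda_j)\bigr)$. So $|\delta\sigma_{\alpha\beta}|^2$ contains, besides the $J$ and $D$ squares, an interference term $\propto(\lambda_j^2-\lambda_i^2)\,\mathrm{Im}(J_{\alpha\beta}\overline{D_{\alpha\beta}})$. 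After the Haar average this becomes, up to an overall constant, $\sum_{m,n}\mathrm{Im}\bigl[\Tr(J_{mn}D_{nm})-\tfrac1d\Tr_{A_2}\bigl(\Tr_{A_1}J_{mn}\,\Tr_{A_1}D_{nm}\bigr)\bigr]\sum_{i\neq j}(\lambda_j^2-\lambda_i^2)K(\mu_m\lambda_i,\mu_n\lambda_j)$, where $K(x,y)=\int_0^\infty\frac{ds}{(s+x)(s+y)}$.

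In the flat case $K$ is block-independent, the $(m,n)$ sum collapses to imaginary parts of real traces, and the term drops out. For $\mu_m\neq\mu_n$ both factors are generically nonzero and the coefficient has no definite sign. For example, $W_R=X_{A_1}\otimes(\ket{01}\bra{11}+\ket{11}\bra{01})_{A_2\bar A_2}$ with $\mu_0\ne\mu_1$ gives $\mathrm{Im}\Tr(J_{01}D_{10})=-8\mu_1^2$. You must either show this term cancels or carry it as an extra sector. The paper's block-promotion rule does not address it, so you cannot settle this by citing the paper.

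\textbf{Your monotonicity route differs from the paper's.} The paper fixes one Bell pair and differentiates in $p_k$. It relies on a computer-algebra positivity check for $f_1^{mn}$ and a convexity-plus-($p\leftrightarrow 1-p$) argument for $f_2^{mn}$. Schur-convexity does go through and gives more: monotonicity under any majorization, not only along product-Bell families, with no computer check. But it does not follow from joint convexity of the logarithmic mean (which is jointly concave) plus homogeneity. Nor does the schematic $\sum_{ij}\lambda_i g$ describe what the average produces.

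The actual functions carry numerators inherited from $[J,\sigma]$ and $\{D,\sigma\}$: $f_{1,2}^{mn}=\frac{1}{2d}\sum_{ij}\int_0^\infty ds\,\phi_s(\lambda_i,\lambda_j)$ with $\phi_s(x,y)=\frac{(x\mp y)^2}{(s+\mu_m x)(s+\mu_n y)}$. At fixed $y$ this is a positive multiple of $(u+c)^2/u=u+2c+c^2/u$ in $u=s+\mu_m x>0$, hence convex in $x$, and likewise convex in $y$. For $F=\sum_{ij}\phi_s(\lambda_i,\lambda_j)$, separate convexity makes every term of $(\lambda_1-\lambda_2)(\partial_1F-\partial_2F)$ non-negative, which is the Schur--Ostrowski criterion. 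Mismatched $\mu_m\ne\mu_n$ plays no role, so no fallback is needed.
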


See Appendix~\ref{app:general_mixed} for details of the proof. This result implies that the averaged PA entropy is a monotonic increasing function of the bulk entropy when the input logical state is mixed. Similarly, for the case with pure logical state, we have the following generalization:

\begin{theorem}\label{thm:general_pure}
        For general $\ket{\chi}_{A_2\bar A_2}$ with entanglement spectrum $\{\mu_n\}$, and the input logical state is pure with entanglement spectrum $\{\lambda_i\}$ shared by the subregion $A_1$ and $\bar A_1$, the averaged PA entropy of subregion $A$ is $\langle S_{PA}\rangle=S(\chi)-\langle S_{corr}\rangle$, where $\langle S_{corr}\rangle$ takes the following form: 
\begin{eqns}
    \langle S_{corr}\rangle=&\epsilon k_0+ \frac{\epsilon^2}{2}\sum_{mn}\biggr[k_1^{mn}f^{mn}_1(\lambda)+k_2^{mn}f^{mn}_2(\lambda)+k_3^{mn}\left(f^{mn}_3(\lambda)-\frac{1}{d^2}(1+\frac{d}{\bd})f_2^{mn}(\lambda)\right.\\
    &\left.+\frac{1}{d^2}f_1^{mn}(\lambda)\right)+k_4^{mn}\left(f_3^{mn}(\lambda)-\frac{1}{d\bar d}f_1^{mn}(\lambda)\right)
    +k_5^{mn}\left(f_2^{mn}(\lambda)-f_1^{mn}(\lambda)\right)+k_6^{mn}\biggr] ,
\end{eqns}
        where $\dim \mathcal{L}_a= d$, $\dim \mathcal{L}_{\bar{a}}=\bar{d}$. The parameters $k_0, k_1^{mn},\cdots, k_6^{mn}$ are independent, non-negative functions of the matrix $W_R$ (defined in Eq.~\eqref{eq:recoverbd}).
The functions $f_1^{mn}(\lambda)$, $f_2^{mn}(\lambda)$ and $f_3^{mn}(\lambda)$ depend on $\mu_n$ and $\mu_m$, and are monotonically decreasing functions of the bulk entanglement, while the combinations $f_2^{mn}(\lambda)-f_1^{mn}(\lambda)$ are monotonically increasing. 
\end{theorem}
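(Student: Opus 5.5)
We follow the route of Theorem~\ref{thm:monotonic-pure}, now with a non-flat $\ket{\chi}$. The starting point is Eq.~\eqref{eq:apprecovery} from Theorem~\ref{thm:KLcondition}. After the optimal local recovery, the boundary state is $e^{i\epsilon W_R}\bigl(\ket{\psi}_{A_1\bar A_1}\ket{\chi}_{A_2\bar A_2}\bigr)$ with $\ket{\psi}=\sum_i\sqrt{\lambda_i}\ket{i}\ket{i}$ and $\ket{\chi}=\sum_n\sqrt{\mu_n}\ket{n}\ket{n}$.

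\textbf{Step 1: perturbative expansion.} Write $\sigma_{A_1A_2}^{(\epsilon)}=\sigma^{(0)}+\epsilon\,\delta_1+\epsilon^2\delta_2+\mathcal O(\epsilon^3)$ with $\delta_1=\Tr_{\bar A}\,i[W_R,\rho_0]$, and similarly for $\sigma_{A_1}^{(\epsilon)}$. Here $\rho_0=\ket{\psi}\bra{\psi}\otimes\ket{\chi}\bra{\chi}$. By Theorem~\ref{thm:relative}, $S_{PA}=S(\chi)-S_{corr}$, so $S_{corr}$ is the difference of two relative entropies of these states against their unperturbed versions. It is simplest to expand $S_{corr}$ directly as $S(\chi)+S(\sigma^{(\epsilon)}_{A_1})-S(\sigma^{(\epsilon)}_{A_1A_2})$.

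\textbf{Step 2: first and second order.} Unlike the flat case, $\sigma^{(0)}_{A_1A_2}=\rho_{A_1}\otimes\chi$ is not proportional to a projector on the $A_2$ factor. The first-order term $-\Tr[\delta_1\log\sigma^{(0)}]$ therefore need not vanish. It contributes $-\epsilon\,\Tr\bigl[\delta_1(I\otimes\log\chi)\bigr]$ plus a $\log\rho_{A_1}$ piece, and the latter cancels against the $A_1$ entropy term after averaging over bulk local unitaries. The surviving piece is independent of $\lambda$ and defines $k_0$. Non-negativity of $k_0$ should follow by choosing the sign or phase of $W_R$ within the optimization of Lemma~\ref{lemma:optimization-pure}, since recovery optimality cannot decrease coherent information. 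I expect to argue this, rather than prove it from scratch.

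For the second order, I will use the Daleckii--Krein formula
\[
\delta^2 S=-\sum_{\alpha\beta}\frac{\log p_\alpha-\log p_\beta}{p_\alpha-p_\beta}\,|\langle\alpha|\delta_1|\beta\rangle|^2-\Tr[\delta_2\log\sigma^{(0)}]
\]
in the product eigenbasis $\{\ket{i}\ket{n}\}$ with eigenvalues $p=\lambda_i\mu_n$. Decomposing $W_R$ into blocks $\bra{m}\cdot\ket{n}$ on the $A_2\bar A_2$ Schmidt basis separates the sum into $(m,n)$ sectors. Each sector reproduces the flat-case structure, but the $\lambda$-dependent kernels now carry $\mu_m,\mu_n$, e.g.\ $\frac{\log(\lambda_i\mu_m)-\log(\lambda_j\mu_n)}{\lambda_i\mu_m-\lambda_j\mu_n}$.

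\textbf{Step 3: Haar averaging and monotonicity.} Average over $U_a\otimes U_{\bar a}$ using second-moment Weingarten formulas, exactly as in Appendix~\ref{app:monotonic-pure}. This collapses the $W_R$ dependence into the same six invariant contractions per sector, which gives $k_1^{mn},\dots,k_6^{mn}\ge0$. These are squared norms of projected blocks of $W_R$, and hence non-negative. The $\lambda$-dependence collapses into three symmetric functions $f_\ell^{mn}(\lambda)$. The $d^{-2}(1+d/\bd)$ and $1/(d\bd)$ combinations come out of the Weingarten coefficients unchanged, because the averaging acts only on $A_1\bar A_1$.

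\textbf{Main obstacle.} The hardest part is the monotonicity claim: each $f_\ell^{mn}$ decreases with entanglement while $f_2^{mn}-f_1^{mn}$ increases. My plan is to show that each $f_\ell^{mn}$ is Schur-convex in $\lambda$, and $f_2^{mn}-f_1^{mn}$ Schur-concave. For this I would check the Schur--Ostrowski criterion $(\lambda_i-\lambda_j)(\partial_i-\partial_j)f\ge0$ termwise. The key inputs are that the logarithmic mean $L(x,y)=(x-y)/(\log x-\log y)$ is concave and increasing, and that rescaling by $\mu_m,\mu_n$ preserves these properties. In the flat case $\mu_m=\mu_n$, this reduces to the earlier result. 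When $\mu_m\neq\mu_n$, the kernel is asymmetric, and symmetrizing over $(m,n)$ will be essential; this is where I would concentrate the effort.
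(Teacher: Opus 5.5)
\textbf{Main gap: $f_3^{mn}$ is not Schur-convex.} Your perturbative expansion, the block resolution of $W_R$ along the Schmidt basis of $\ket{\chi}$, and the local Haar averaging follow the paper's route (Appendix~\ref{app:general_pure}). The genuine gap is in the step you singled out as the main obstacle.

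You plan to show that every $f_\ell^{mn}$ is Schur-convex in $\lambda$. This cannot work for $f_3^{mn}$, because the claim is false. Take a diagonal block $m=n$. There $f_3^{mm}=\frac{1}{2d\mu_m}\sum_{ij}g(\lambda_i/\lambda_j)$ with $g(t)=\frac{t+1}{2(t-1)}\ln t$, which is a multiple of the flat $f_3$, so no symmetrization over $(m,n)$ can help. Compare $\lambda\propto(2,2,\eta,\eta)$ with $\lambda'\propto(3,1,\eta,\eta)$. Then $\lambda'\succ\lambda$, yet as $\eta\to0$
\[
\sum_{ij}g(\lambda'_i/\lambda'_j)-\sum_{ij}g(\lambda_i/\lambda_j)\;\longrightarrow\;4\ln\tfrac32-2\approx-0.38<0 .
\]
So the more mixed spectrum, which also has the larger entanglement entropy, has the larger $f_3$. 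The reason your tool fails is that the kernel $(\lambda_i+\lambda_j)/L(\mu_m\lambda_i,\mu_n\lambda_j)$ is linear over concave, which is not convex.

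The paper proves monotonicity only along the product family \eqref{eq:Psi_theta_vec}, with one $\theta_k$ moved toward $\pi/4$. The mechanism there is convexity in the log-eigenvalues. Set $y_i=\ln\lambda_i$ and $\phi^{mn}(x):=g^{mn}(e^x)$. Then $f_3^{mn}=\frac{1}{2d\mu_m}\sum_{ij}\phi^{mn}(y_i-y_j)$, and $(\phi^{mn})'(x)=h^{mn}(e^x)$ is increasing by the paper's argument. Hence $f_3^{mn}$ is convex and permutation-symmetric in $y$. Moving $p_k$ toward $1/2$ contracts each pair $\ln(\omega_\beta p_k),\ln(\omega_\beta(1-p_k))$ toward its midpoint, up to a common shift that $f_3^{mn}$ does not see. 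That is a majorization step in $y$; majorization in $\lambda$ alone is not enough. For $f_3^{mn}$ you must therefore adopt this notion of increasing bulk entanglement, or log-majorization.

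\textbf{The other functions: your route works.} For $f_1^{mn}$, $f_2^{mn}$ and $f_2^{mn}-f_1^{mn}$ your Schur approach goes through. It is cleaner than the paper's, which relies on a computer-algebra positivity check for $f_1^{mn}$ and never proves the claim about $f_2^{mn}-f_1^{mn}$.
\begin{itemize}
\item $L$ is jointly concave, and $(u,t)\mapsto u^2/t$ is convex and nonincreasing in $t>0$. Hence each kernel $(\lambda_i\pm\lambda_j)^2/L(\mu_m\lambda_i,\mu_n\lambda_j)$ is jointly convex.
\item The sum over ordered pairs $(i,j)$ is already permutation-symmetric. So $f_1^{mn}$ and $f_2^{mn}$ are convex and symmetric, hence Schur-convex, with no symmetrization over $(m,n)$.
\item $f_2^{mn}-f_1^{mn}=\frac2d\int_0^\infty ds\,\bigl(\sum_i\frac{\lambda_i}{s+\mu_m\lambda_i}\bigr)\bigl(\sum_j\frac{\lambda_j}{s+\mu_n\lambda_j}\bigr)$ is a product of nonnegative Schur-concave functions, hence Schur-concave.
\end{itemize}

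\textbf{Your argument for $k_0\ge0$ fails.} $k_0$ is the Haar average of the first-order part of $\Tr(\delta\chi\ln\chi)$ in \eqref{eq:PA-step}. This is linear, hence odd, in $W_R$. It is also invariant under the recovery freedom $W_R\to W_R+O_A+O'_{\bar A}$, because $[\sigma^{(0)}_{A_1}\otimes\chi_{A_2},\,I\otimes\ln\chi_{A_2}]=0$ and $(\ln\chi_{A_2}\otimes I)\ket{\chi}=(I\otimes\ln\chi_{\bar A_2})\ket{\chi}$. So neither Lemma~\ref{lemma:optimization-pure} nor coherent information fixes its sign, and for generic $W$ with non-flat $\chi$ it can take either sign. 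The paper likewise only identifies $k_0$ with this term and proves no sign. What you actually need, and what your first-moment Haar average does give, is that $k_0$ is independent of $\lambda$.
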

Proof can be found in Appendix~\ref{app:general_pure}. For $W_R$ drawn from the Gaussian random ensemble, one can analyze the typical behavior of PA entropy similarly as in Theorem~\ref{thm:monotonic-pure}. In the large $d$ and $\bar d$ limit, the \emph{typical} PA entropy correction in the non-flat case is dominated by the block-summed $f^{mn}_3$ sector,
\begin{equation}
\big\langle S_{\mathrm{corr}}\big\rangle
=
\frac{\epsilon^2}{2}\sum_{m,n}\big(k^{mn}_3+k^{mn}_4\big)\,f^{mn}_3(\lambda)
+\mathcal{O}\!\left(\frac{1}{d^2}\right)
+\mathcal{O}\!\left(\frac{1}{\bar d^{2}}\right)
\end{equation}
and therefore decreases monotonically as a function of bulk entropy.

\section{Non-local Magic in Skewed Stabilizer codes}
\label{sec:5}

The above analysis implies that in the limit of large logical Hilbert space, the typical behavior of the PA entropy correction will depend only the entanglement of the logical information, with the strength of the coupling dominated by $c_2$ if the logical state is mixed,  and by $k_2$--$k_5$ if the state is pure.
Now we show that these couplings are given by the amount of non-local tripartite magic in the system. Importantly, the relevant magic here has to be non-local to reproduce features of gravity \cite{Cao:2023mzo,Cao:2024nrx} --- any local or bipartite non-Clifford deformations can be absorbed into the recovery unitaries and therefore cannot correlate the recovered bulk degrees of freedom with the geometric entanglement.

Let us now be more precise by defining the magic of a quantum code. 
\begin{definition}[Magic of code]Let $\ket{V}$ be the Choi state of a code associated with an encoding map (usually an isometry) $V:\mathcal{H}_L\rightarrow\mathcal{H}_P$ where
\begin{eqns}
    \ket{V}=\frac{1}{\sqrt d_L}\sum_{i=1}^{d_L} \ket{i}_{r}\ket{\tilde i}=\frac{1}{\sqrt d_L}\sum_{i=1}^{d_L} \ket{i}_{r}V\ket{i}_L,
\end{eqns}
and $d_L:=\dim \mH_L$.
The magic of the code is defined as
\begin{eqns}
    \mathcal{M}(V):=\mathcal{M}(\ket{V}), 
\end{eqns}
for some magic measure $\mathcal{M}$. 
\end{definition}

The magic measure we will use is the \textit{Stabilizer Renyi Entropy} (SRE) \cite{Leone_2022}  because of its computability. The SRE of a  state $\ket{\phi}$ is defined as,
    \begin{eqns}
    \mathcal{M}_{\alpha}(\ket{\phi}):=\frac{1}{1-\alpha}\log\left(2^{-n}\sum_{t=1}^{4^n}|\bra{\phi}P_t\ket{\phi}|^{2\alpha}\right),
\end{eqns}
where $n$ is the total number of qudits in $\ket{\phi}$. It is a magic monotone when $\alpha\geq 2$ \cite{PhysRevA.110.L040403}.

In the general context of perturbing away from stabilizer codes, let us now define the perturbative tripartite non-local magic. 

\begin{definition}[perturbative tripartite non-local magic]
Let $|\phi\rangle$ be a quantum state over $n$ subsystems $A_1, A_2,\dots, A_n$. The perturbative tripartite nonlocal magic is defined as 
    \begin{eqns}
    \mathcal{M}^{NL}(|\phi\rangle)=\min_{\forall \sigma, T_{A_{i}A_{j}}}  \mathcal{M}\left(\prod_{ij} e^{i\epsilon T_{A_{\sigma(i)} A_{\sigma(j)}}}|\phi\rangle\right),
    \label{eq: tripartite magic}
\end{eqns}
where the minimization is done at the leading order of $\epsilon$ over any Hermitian operator $T_{A_i A_j}$ restricted to subsystems $A_i, A_j, i\ne j$ and any permutation of the indices $\sigma(i)=i'$. 

\label{def:tripartitemagic}
\end{definition}

For the following, we will restrict ourselves to the  case where the encoding unitary is close to Clifford, so that the base code we perturb from is a stabilizer code. To inject magic, now consider a skewed stabilizer code with encoding map $V^{(\epsilon)}$ satisfying the subsystem erasure-correction as in Eq.~\eqref{eq:exact-code} with perturbation $e^{i\epsilon W}$ (c.f. \cite{Cao_2021} for skewed codes). The skewing generally injects magic into the code, some of which is non-local, as $W$ has support over the entire system. This provides the necessary condition for emerging gravitational features \cite{Cao:2024nrx}.  

For a skewed stabilizer code, the Choi state $\ket{V^{(\epsilon)}}$ is locally Clifford-equivalent to
\begin{eqns}
    \ket{V_R^{(\epsilon)}}:=R_A^{(0)}R_{\bar A}^{(0)}\ket{V^{(\epsilon)}}=
     \frac 1 {\sqrt{d_L}}\sum_{i=1}^{d_L} \ket{i}_r\otimes e^{i\epsilon W_R}\ket{i}_{A_1\bar A_1}\ket{\chi}_{A_2\bar A_2}, 
\end{eqns}
where $R_A^{(0)}$ and $R_{\bar A}^{(0)}$ are the Clifford recovery unitaries for the unperturbed ($\epsilon=0$) stabilizer code, and
\begin{eqns}
W_R := \bigl(R_A^{(0)}\otimes R_{\bar A}^{(0)}\bigr)\,W\,\bigl(R_A^{(0)}\otimes R_{\bar A}^{(0)}\bigr)^\dagger.
\end{eqns}

We then define the perturbative tripartite nonlocal magic of the code by
\begin{eqns}
\mathcal{M}^{NL}\!\left(V^{(\epsilon)}\right)
:=\mathcal{M}^{NL}\!\left(\ket{V_R^{(\epsilon)}}\right),
\end{eqns}

Taking the case where $\bar a$ (or equivalently $\bar A_1$) is empty, the parameter $c_2$ in the PA entropy calculation (see Theorem~\ref{thm:monotonic-mixed}) is exactly given by the perturbative tripartite non-local SRE $\mathcal{M}_{\alpha}^{NL}(V^{(\epsilon)})$, up to constant prefactor. 

\begin{theorem}\label{thm:nlmagicmixed}
     Let $V^{(\epsilon)}$ be  a skewed stabilizer code, and subregion $A$ approximately recovers the entire bulk state.  In leading order of $\epsilon$, the PA entropy correction $\langle S_{corr}(V^{(\epsilon)}, \sigma^{(L)}, A)\rangle$ is proportional to $\mathcal{M}_{\alpha}^{NL}(V^{(\epsilon)})$. More specifically, 
    \begin{eqns}
        \langle S_{corr}(V^{(\epsilon)},\sigma^{(L)},A)\rangle =\frac{\alpha-1}{2\alpha}\lrp{1-\frac{1}{d^2}}^{-1}\mathcal{M}_{\alpha}^{NL}(V^{(\epsilon)})f_2(\lambda)+const,
    \end{eqns}
where $f_2(\lambda)$ is a function of the spectrum of the input mixed logical state. 
\end{theorem}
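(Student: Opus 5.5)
The plan is to compute both sides to order $\epsilon^2$ in terms of the matrix elements of $W_R$, and then match them term by term.

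On the entropy side, I invoke Theorem~\ref{thm:monotonic-mixed} with $\bar a$ empty. There the averaged correction is
\[
\langle S_{corr}\rangle=\tfrac{\epsilon^2}{2}\bigl(c_1f_1+c_2f_2\bigr)+c_3 .
\]
By Lemma~\ref{lemma:optimization:mixed}, optimal recovery sets $c_1=0$, so
\[
\langle S_{corr}\rangle=\tfrac{\epsilon^2}{2}c_2(V)f_2(\lambda)+c_3 .
\]
It therefore suffices to identify $c_2(V)$ with a multiple of $\mathcal M^{NL}_\alpha(V^{(\epsilon)})$. To do this I will write $c_2$ explicitly from the appendix derivation. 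I expect it to be a Hilbert--Schmidt norm of the component of $W_R$ that acts jointly and nontrivially on all three factors $r\simeq A_1$, $A_1$ and $A_2$ of the Choi state, after the $A_1$-unitary average. With $d=\dim\mathcal L_a$, the average removes the identity part on $A_1$, producing the factor $(1-1/d^2)$. Concretely, I expand $W_R=\sum w_{PQ}\,P_{A_1}\otimes Q_{A_2\bar A_2}$ in a Pauli basis and conjugate through $|\chi\rangle$ using the stabilizer structure. Because $|\chi\rangle$ is a flat-spectrum stabilizer state, every operator on $\bar A_2$ can be pulled to $A_2$. This makes $c_2$ a weighted sum $\sum |w_{PQ}|^2$ over Pauli strings that are nontrivial on $A_1$ and do not act as a stabilizer phase on $\chi$.

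On the magic side, I expand the SRE of
\[
\prod e^{i\epsilon T}|V_R^{(\epsilon)}\rangle
\]
to second order around the stabilizer state $|V_R^{(0)}\rangle$. For a stabilizer state the Pauli expectations are $0$ or $\pm1$ in magnitude. A perturbation $e^{i\epsilon X}$ leaves the stabilizer expectations at $1-O(\epsilon^2)$ and populates the non-stabilizer Paulis at $O(\epsilon^2)$. Hence
\[
\mathcal M_\alpha=\frac{2\alpha}{\alpha-1}\,\epsilon^2\,\Delta(X)+O(\epsilon^3),
\]
where $\Delta(X)$ is the variance-like quantity $\langle X^2\rangle-\sum_{S\in\mathrm{stab}}\langle S X\rangle^2$. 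This is the norm of the part of $X|V\rangle$ orthogonal to the stabilizer-equivalent directions, namely the tangent space of Clifford orbits. The prefactor $\frac{\alpha-1}{2\alpha}$ in the statement is exactly the inverse of this expansion coefficient, which is a consistency check.

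The minimization over two-body $T_{A_iA_j}$ then acts at leading order as a projection. Since $\Delta$ is a quadratic form in the generator, minimizing over the additive two-body terms subtracts the components of $W_R$ supported on $rA_1$, $rA_2$ or $A_1A_2$ alone. These are exactly the deformations that can be absorbed into local recoveries or the reference, and they correspond to the $c_1$ terms killed by optimization. What remains is the genuinely three-body component, whose norm should equal $\frac{\epsilon^2}{2}$ times $(1-1/d^2)^{-1}c_2$ up to the constants.

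The main obstacle is this matching step. It requires showing that the quadratic form minimized over two-body deformations coincides with the Haar-averaged $c_2$, including the normalization $(1-1/d^2)$. A second part of the same step is checking that the permutation and ordering in Definition~\ref{def:tripartitemagic} do not matter at leading order, which holds because the exponents commute to $O(\epsilon^2)$. I would first verify the match in the Pauli basis for a single Pauli-string $W_R$ and then extend by orthogonality of Pauli strings under the Hilbert--Schmidt inner product.
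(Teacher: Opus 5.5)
Your skeleton matches the paper's. You reduce $\langle S_{corr}\rangle$ to $\tfrac{\epsilon^2}{2}c_2f_2+\mathrm{const}$ via Theorem~\ref{thm:monotonic-mixed} and Lemma~\ref{lemma:optimization:mixed}, expand the SRE of the Choi state to $O(\epsilon^2)$, minimize over two-body generators, and match. But the matching step you defer is the whole proof, and the set-up you sketch for it would give the wrong answer because your parties are wrong. You take the three factors to be $r$, $A_1$, $A_2$ and pull every $\bar A_2$ operator onto $A_2$. If $A_2\bar A_2$ is then one party, $-W_R$ itself is an admissible two-body generator and $\mathcal M^{NL}_\alpha=0$. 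If instead no generator may act jointly on $A_2$ and $\bar A_2$ (your list $rA_1$, $rA_2$, $A_1A_2$), the part of the perturbation that acts trivially on $A_1$ but non-stabilizer-like on $\chi$ can never be removed. The minimum is then proportional to $\Tr D^2$, which contains the weight of $D$ on Pauli strings trivial on $A_1$, and is therefore not proportional to $c_2$. The paper's Case~I uses the tripartition $A_1$ (with $r$ riding along via the Bell pair), $A_2$, $\bar A$. The geometric Bell pairs have to be split between two parties.

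Here is the mechanism. Pulling $Q_{\bar A_2}$ through $|\chi\rangle$ turns $Q_{A_2}\otimes Q_{\bar A_2}$ into $Q_{A_2}Q_{\bar A_2}^{T}$. This is Hermitian if $Q_{A_2}$ and $Q_{\bar A_2}^T$ commute and anti-Hermitian if they anticommute. That is exactly the split of $W_R$ into $J=d_\chi\Tr_{\bar A_2}\{W_R,\chi\}$ and $D=id_\chi\Tr_{\bar A_2}[W_R,\chi]$, and the unminimized SRE is $\frac{\alpha}{2(\alpha-1)}\frac{\epsilon^2}{dd_\chi}\bigl(\Tr J^2-\frac{1}{dd_\chi}(\Tr J)^2+\Tr D^2\bigr)$.

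Generators on $A_1A_2$ and $A_1\bar A$ act on the state like Hermitian operators on $A$, shift only $J$, and can set it to zero. This is the part your heuristic (``absorbable into recoveries, i.e.\ the $c_1$ terms'') captures. The step it misses is $T_{A_2\bar A}$, which is not a recovery because it straddles the $A|\bar A$ cut. It shifts $D\to D+I_{A_1}\otimes T^D$ and removes $\frac1d I_{A_1}\otimes\Tr_{A_1}D$, leaving $\Tr D^2-\frac1d\Tr_{A_2}\bigl((\Tr_{A_1}D)^2\bigr)=\frac{2d_\chi(d^2-1)}{d}\,c_2$. Hence $\mathcal M^{NL}_\alpha=\frac{\alpha}{\alpha-1}\epsilon^2(1-\frac1{d^2})c_2$; note the $d$-dependent factor runs opposite to what you wrote.

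Three smaller corrections follow from this. First, $c_2$ is not $\sum|w_{PQ}|^2$ over all strings nontrivial on $A_1$ that fail to stabilize $\chi$: the commuting ones feed $J$ and drop out. Second, the second-order SRE is just $\frac{2\alpha}{\alpha-1}\epsilon^2$ times the variance of $W_R$ in $|V_R^{(0)}\rangle$, and your $\sum_S\langle SX\rangle^2$ needs a factor $2^{-n}$. Third, there is no ``tangent space of Clifford orbits'' since the Clifford group is finite, so all of the reduction must come from the explicit two-body minimization.
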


See proof in Appendix~\ref{app:magic}.

For the case where neither $a$ nor $\bar a$ is empty, the perturbative tripartite non-local magic $\mathcal{M}_{\alpha}^{NL}(V^{(\epsilon)})$ is directly related to the parameters $k_2, \cdots, k_5$ that appear in the PA entropy calculation when the encoded state is pure (see Theorem~\ref{thm:monotonic-pure}).

\begin{theorem}
    Let $V^{(\epsilon)}$ be  a skewed stabilizer code, encoding a pure bulk state. To leading order in $\epsilon$, the perturbative tripartite non-local magic of the code is
     \begin{eqns}
         \mathcal{M}^{NL}_\alpha\bigl(V^{(\epsilon)}\bigr)=& \frac{\alpha \epsilon^2}{(\alpha-1)}\lrp{(1-\frac{1}{d^2})(1-\frac{1}{\bar d^2})(k_3+k_4)+(1-\frac{1}{d^2})k_2+(1-\frac{1}{\bar d^2})k_5}+\mO(\epsilon^3),
     \end{eqns}
     in terms of the stabilizer $\alpha$-Renyi entropy.
\end{theorem}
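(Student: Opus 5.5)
We expand the stabilizer Rényi entropy of the Choi state $\ket{V_R^{(\epsilon)}}$ to second order in $\epsilon$, then carry out the minimization over two-subsystem deformations in Definition~\ref{def:tripartitemagic} explicitly. Treat the Choi state as a five-party state on $r, A_1, \bar A_1, A_2, \bar A_2$. At $\epsilon=0$ it is the stabilizer state $\ket{\Phi}_{r A_1\bar A_1}\otimes\ket{\chi}_{A_2\bar A_2}$, because $\ket{\chi}$ is a flat-spectrum stabilizer state. Let $G$ be its stabilizer group, of size $2^n$. Then the Pauli expectations equal $\pm1$ on $G$ and $0$ off $G$.

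For the perturbed state $e^{i\epsilon W_R}\ket{V_R^{(0)}}$, the expectation of $P\in G$ is $1-\epsilon^2\,\delta_P+\mO(\epsilon^3)$, with
\[
\delta_P=\tfrac12\langle\{W_R,\{W_R,P\}\}\rangle P^{-1}\ \text{(a variance-type term)}.
\]
For $P\notin G$ the expectation is $\epsilon\langle i[W_R,P]\rangle+\mO(\epsilon^2)$.

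Inserting these into $\mathcal{M}_\alpha$ needs care. For $\alpha\ge2$, the off-stabilizer contributions scale as $\epsilon^{2\alpha}$ and are subleading. The leading term is therefore
\[
\mathcal{M}_\alpha\simeq\frac{1}{1-\alpha}\log\Bigl(1-2\alpha\epsilon^2\,2^{-n}\!\sum_{P\in G}\delta_P\Bigr)=\frac{2\alpha\epsilon^2}{\alpha-1}\,2^{-n}\sum_{P\in G}\delta_P .
\]
Using $\sum_{P\in G}P=2^n\Pi_G$, the projector onto the stabilizer state, this sum becomes
\[
\langle W_R^2\rangle-\langle W_R\rangle^2-\text{(the part of }W_R\text{ that preserves the stabilizer state)}.
\]
Equivalently, it is $\|(1-\Pi_G)W_R\ket{V_R^{(0)}}\|^2$ restricted to directions that do not map back into the stabilizer code of $\ket{\Phi}\ket{\chi}$. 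In other words, it is the norm of the component of $W_R$ acting nontrivially on the stabilizer state.

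Next we expand $W_R$ in a Pauli (or generalized Gell-Mann) basis adapted to the factors $A_1,\bar A_1,A_2,\bar A_2$. Since $r$ is maximally entangled with $A_1\bar A_1$, any operator on $A_1\bar A_1$ can be transferred to $r$. We then decompose $W_R$ into sectors labelled by which factors it acts on nontrivially, with traceless parts on each factor. These sectors should correspond exactly to the coefficients $k_1,\dots,k_6$ of Theorem~\ref{thm:monotonic-pure}, as computed in Appendix~\ref{app:monotonic-pure}. The factors $(1-1/d^2)$ and $(1-1/\bar d^2)$ are the weights of the traceless parts on $\mathcal L_a$ and $\mathcal L_{\bar a}$, which survive the stabilizer-state projection.

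The minimization over $e^{i\epsilon T_{A_iA_j}}$ acts at leading order by shifting $W_R\to W_R+\sum T_{A_iA_j}$. The quantity above is a positive quadratic form in $W_R$, so the minimum is the orthogonal projection of $W_R$ onto the complement of the span of two-body operators. The key step is to show the following: operators supported on a pair of the factors $(A_1,A_2)$, $(A_1,\bar A_1)$, $(A_2,\bar A_2)$, and so on, can remove precisely the sectors with coefficient $k_1$ and $k_6$ (local and bipartite pieces, consistent with Lemma~\ref{lemma:optimization-pure}), but not the genuinely three-body sectors behind $k_2,\dots,k_5$. This must be done using the stabilizer-state identities: operators on $A_2$ are equivalent to transposes on $\bar A_2$, and operators on $A_1\bar A_1$ are equivalent to operators on $r$. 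The reason for checking these identities is that an apparently tripartite term such as $r\text{–}A_1\text{–}A_2$ might reduce to a bipartite one.

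I expect the main obstacle to be this identification of the orthogonal complement. One has to verify that after projection the residual norm equals exactly $(1-\tfrac1{d^2})(1-\tfrac1{\bar d^2})(k_3+k_4)+(1-\tfrac1{d^2})k_2+(1-\tfrac1{\bar d^2})k_5$, with no cross terms. My first attempt will be to write each $k_j$ as a Hilbert--Schmidt norm of a specific partial-trace-subtracted block of $W_R$ using the appendix definitions. I will then check that these blocks are mutually orthogonal and orthogonal to every two-body operator modulo the stabilizer equivalences. Finally, combining the minimized quadratic form with the prefactor $\alpha/(\alpha-1)$ (after the factor $2$ cancels against the $\epsilon^2/2$ normalization used for the $k_j$) gives the stated formula, with an $\mO(\epsilon^3)$ remainder.
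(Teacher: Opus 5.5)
Your first half is the paper's route. To $O(\epsilon^2)$ the stabilizer R\'enyi entropy of the Choi state is $\frac{2\alpha\epsilon^2}{\alpha-1}$ times the variance of $W_R$ in $\ket{V_R^{(0)}}$, and the minimization projects this quadratic form along the admissible two-body shifts. One small fix: your $\delta_P$ must be built from nested commutators, $\langle P\rangle=s_P[1-\epsilon^2(\langle W_R^2\rangle-s_P\langle W_RPW_R\rangle)]$. With anticommutators the sum over $G$ gives $\langle W_R^2\rangle+\langle W_R\rangle^2$ instead of the variance.

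The real gap is the step you defer, and as you set it up it fails. You make $r$ a fifth party that purifies $A_1\bar A_1$ jointly, so $T_{rA_2}$ is admissible. Since $X_r\ket{\Phi}=X^T_{A_1\bar A_1}\ket{\Phi}$, the deformation $e^{i\epsilon T_{rA_2}}$ acts on the unperturbed Choi state as an \emph{arbitrary} Hermitian operator on $A_1\bar A_1A_2$. It therefore cancels the whole Hermitian part $J$ of the perturbation, including the fully tripartite block $E_J$. Your minimum would then be $\frac{\alpha\epsilon^2}{2(\alpha-1)}(B_D+C_D+E_D)$, with no $k_4$ term. The stated formula needs the convention of the paper's proof: the deformations are two-body on pairs drawn from $\{A_1,\bar A_1,A_2,\bar A_2\}$ and act trivially on the reference.

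Even with that fixed, sorting $W_R$ by the factors it touches is the wrong classification. Fold $\bar A_2$ into $A_2$ using $Y_{\bar A_2}\ket{\chi}=Y^T_{A_2}\ket{\chi}$. Then $W_R\ket{V_R^{(0)}}=\tilde W\ket{V_R^{(0)}}$ with $\tilde W=d_\chi\Tr_{\bar A_2}(W_R\chi)=\tfrac12(J-iD)$. The variance is $\tfrac14\sum_{P\neq I}\bigl(p_P(J)^2+p_P(D)^2\bigr)$ in the Pauli basis of $A_1\bar A_1A_2$, and this is where the absence of cross terms comes from.

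The support of $W_R$ does not fix the block. For example, $X_{A_1}Z_{A_2}Z_{\bar A_2}$ folds to the local $X_{A_1}$, while $X_{A_1}X_{A_2}Z_{\bar A_2}$ folds to $X_{A_1}\otimes(-iY_{A_2})$ and lands in $C_D$. What decides removability is Hermiticity after folding:
\begin{itemize}
\item Every admissible pair except $(A_2,\bar A_2)$ folds to a Hermitian operator, so it shifts only $J$, in any block with a trivial index.
\item $(A_2,\bar A_2)$ folds to a product of two operators on $A_2$. It is the only pair that moves $D$, and only in the $p_{00t}$ block.
\end{itemize}

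Hence the survivors are $E_J$, $E_D$, $C_D$ and $B_D$. The last two survive even though their Pauli support lies on only two factors. The removed blocks are the other $J$ blocks and $A_D$, and apart from $C_J\propto k_1$ none of them corresponds to any $k_j$.

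Finally, substitute $C_D=2(1-d^{-2})k_2$, $B_D=2(1-\bar d^{-2})k_5$, $E_D=2(1-d^{-2})(1-\bar d^{-2})k_3$ and $E_J=2(1-d^{-2})(1-\bar d^{-2})k_4$ into $\frac{\alpha\epsilon^2}{2(\alpha-1)}(E_J+B_D+C_D+E_D)$ to get the statement. The factors $(1-d^{-2})$ and $(1-\bar d^{-2})$, and the cancelled $2$, come purely from the normalizations in the definitions of the $k_j$, not from the stabilizer projection.
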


\begin{corollary}
To leading order in $\epsilon$, and in the limit of large logical Hilbert-space dimensions $d, \bar d\gg 1$, the typical PA-entropy correction satisfies
\begin{eqns}
    \langle S_{corr}\rangle=\frac{\alpha-1}{2\alpha}(1-\frac{1}{d^2})^{-1}(1-\frac{1}{\bar d^2})^{-1}\mathcal{M}_{\alpha}^{NL}(V^{(\epsilon)})f_3(\lambda)+\mathcal{O}(\frac{1}{\bar d^2})+\mathcal{O}(\frac{1}{d^2}).
\end{eqns}

\end{corollary}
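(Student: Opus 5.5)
The plan is to combine the preceding theorem, which gives $\mathcal{M}^{NL}_\alpha(V^{(\epsilon)})$ as a weighted sum of $k_2,\dots,k_5$, with the expression for $\langle S_{corr}\rangle$ in Theorem~\ref{thm:monotonic-pure}. We then keep only the terms that survive at large $d,\bar d$.

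\textbf{Step 1: optimal recovery.} By Lemma~\ref{lemma:optimization-pure}, the optimal recovery sets $k_1=0$. This removes the $k_1 f_1$ term from Eq.~\eqref{eq:Scorrpure}. The constant $k_6$ is absorbed into the $\lambda$-independent part, or, as in the statement, we track only the $\lambda$-dependent correction.

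\textbf{Step 2: typical sizes of the coefficients.} For $W$ drawn from the GUE, as in the proof of Theorem~\ref{thm:monotonic-pure}, we have $\langle k_5/k_3\rangle=1/d^2$, with deviations exponentially suppressed. I would carry out the analogous Gaussian-moment counting for $k_2$. This means writing each $k_j$ as a sum of squared matrix elements of $W_R$ in the Pauli basis on $A_1\bar A_1A_2$, counting which Pauli strings contribute to each coefficient, and concluding that $k_2/(k_3+k_4)=\mathcal{O}(1/\bar d^2)$ and $k_5/(k_3+k_4)=\mathcal{O}(1/d^2)$. The same counting shows that the $f_1,f_2$ admixtures inside the $k_3$ and $k_4$ brackets carry explicit $1/d^2$ and $1/(d\bar d)$ prefactors, so they are subleading. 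Since the $f_i$ are bounded functions of the spectrum, what remains is
\begin{equation}
\langle S_{corr}\rangle=\frac{\epsilon^2}{2}(k_3+k_4)f_3(\lambda)+\mathcal{O}\!\left(\frac{1}{d^2}\right)+\mathcal{O}\!\left(\frac{1}{\bar d^2}\right).
\end{equation}
This is consistent with the dominance of the $f_3$ sector noted after Theorem~\ref{thm:general_pure}.

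\textbf{Step 3: substitution.} Inverting the previous theorem gives
\begin{equation}
\epsilon^2(k_3+k_4)=\frac{\alpha-1}{\alpha}\Big(1-\frac{1}{d^2}\Big)^{-1}\Big(1-\frac{1}{\bar d^2}\Big)^{-1}\mathcal{M}^{NL}_\alpha-\frac{(1-d^{-2})k_2+(1-\bar d^{-2})k_5}{(1-d^{-2})(1-\bar d^{-2})}\,\epsilon^2 .
\end{equation}
The last term is again $\mathcal{O}(1/d^2,1/\bar d^2)$ relative to the leading term, by Step 2. Inserting this into the display above yields the claimed formula, including the factor $\frac{\alpha-1}{2\alpha}$.

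\textbf{Main obstacle.} The difficult step is Step 2, specifically bounding $k_2$ relative to $k_3+k_4$. The paper only states the $k_5/k_3$ ratio explicitly. I would attempt this by expressing every $k_j$ as a partial trace of $|{\rm tr}(W_R P)|^2$ over Pauli classes distinguished by their support on $A_1$, $\bar A_1$, and $A_2$. Under the GUE, all Pauli components are i.i.d.\ Gaussian, so each ratio reduces to a ratio of class cardinalities. Concentration, for example via a Laplace-type tail bound like the $e^{-d^2}$ estimate, then upgrades these expected ratios to statements that hold typically.
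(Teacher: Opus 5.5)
Your proposal is correct and follows essentially the same route as the paper. The paper first obtains $\langle S_{corr}\rangle=\frac{\epsilon^2}{2}(k_3+k_4)f_3(\lambda)+\mathcal{O}(1/d^2)+\mathcal{O}(1/\bar d^2)$ from the same Pauli-block term counting under the Gaussian model, then substitutes the expression for $\mathcal{M}^{NL}_\alpha$ in terms of $k_2,\dots,k_5$. Its appendix already records $\langle k_2/k_3\rangle_{\rm GUE}=1/\bar d^2$ alongside $\langle k_5/k_3\rangle_{\rm GUE}=1/d^2$, so your ``main obstacle'' is exactly that counting. The only differences are small: the paper discards the $k_1$ term via the typical ratio $\langle k_1/k_4\rangle_{\rm GUE}=1/\bar d^2$ rather than via the optimization lemma, and, like you, it silently drops the $\lambda$-independent $k_6$ piece.
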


At finite $d, \bar d$, there is residual coupling between bulk entanglement and some function of $W$ which should be better understood. We will leave this for future work.

Although we have restricted ourselves to perturbations of stabilizer codes for technical convenience, we conjecture that the notion that non-local magic gives rise to PA variation is 
general. Indeed, even if there is local magic that is originally present in the code, it is known that they also do not contribute to a non-trivial area operator and yielding no PA dependence on the logical state \cite{Cao:2023mzo}. However, an extended version of Definition \ref{def:tripartitemagic} and the above theorems will be needed to account for the initial magic that did not originate from the non-local perturbation $W$.

\section{Discussion}\label{sec:6}
In this work we studied how deviations from exact quantum error-correcting
codes can produce entropic features reminiscent of gravitational backreaction.
We introduced a modified RT-like entropy decomposition for approximate
subsystem erasure-correcting codes and showed that the resulting proto-area
entropy becomes state dependent. In particular, the proto-area typically
increases with the entropy or entanglement of the logical state, closely
resembling gravity-induced effects such as the response of extremal surfaces to bulk entropy in the quantum
extremal surface (QES) formula. We further showed that the strength of this
response is controlled by a tripartite form of non-local magic in the encoding.
These results apply to a broad class of skewed subsystem quantum codes and
provide toy models capable of reproducing gravity-like entropic behavior in
both hyperbolic and near-flat emergent geometries.

A key lesson is that exact subsystem erasure-correcting codes are too rigid to reproduce
gravitational backreaction. For such codes, matter and
geometric degrees of freedom are cleanly separated and the area term is
necessarily state independent. Approximate recovery relaxes this separation,
allowing correlations between recoverable bulk degrees of freedom and the
geometric entanglement structure. The absence of backreaction in stabilizer
codes can therefore be traced to the absence of the required non-local quantum
resources.

Our analysis also highlights an important distinction between operator
reconstruction and state recovery in approximate codes. While these notions
coincide in exact QECCs, they can diverge when recovery becomes approximate.
The framework developed here focuses on optimal state recovery and entropic
quantities rather than operator reconstruction. Clarifying the relation between
this entropic picture and operator-based formulations remains an important 
direction for further study.

From an algebraic perspective, extending holographic code models beyond the
exact-code regime might require new mathematical tools. Approximate
erasure correction breaks the exact algebraic structures underlying area
operators in operator algebra quantum error correction. Developing a theory of
approximate $C^*$ or von Neumann algebras may therefore be helpful for
defining geometric observables in approximate codes \cite{Kitaev:2024qak}. Furthermore, our work does not constrain the possibility of having state-dependent proto-area-like entropy in exact \textit{subalgebra} erasure correcting codes. Indeed, such codes are expected to be able to support non-trivial area operators. Concurrently, the emergence of the QES-like behavior in approximate codes should be examined in close connection with systems that can produce non-trivial area operators through algebraic constraints in the bulk \cite{Donnelly_2017,Cao_2021,Dolev_2022,Dong:2023kyr,Qi_2022} and random tensor networks \cite{Hayden_2016}.

An important open question is the geometric interpretation of the proto-area
entropy. Although its monotonic increase with bulk entropy resembles the
behavior of quantum extremal surfaces, it remains unclear whether the
proto-area corresponds to the area of a backreacted extremal surface, the
quantum extremal surface area, or some other entropic quantity.

Moving beyond AdS/CFT, it is natural to ask whether specific subclasses of
approximate quantum codes can reproduce the entanglement equilibrium conditions
that lead to linearized Einstein equations in near-flat geometries
\cite{Cao_2018}. The perturbations considered in this work do not satisfy these
conditions, but localized bulk entanglement perturbations may produce relations
with the correct qualitative structure.

In summary, our results point toward a broad information-theoretic
perspective on emergent gravity, in which tripartite non-local
magic serves as the resource that enables correlations between bulk
matter and geometric entanglement. 
Because this behavior arises in a wide class of skewed quantum codes
generated by non-Clifford circuits, such phenomena may also be experimentally
realizable on near-term quantum devices. Exploring
these systems may therefore provide an informative route for probing
emergent spacetime dynamics in controllable quantum
platforms.

\section*{Acknowledgment}
We thank  Ning Bao, Aidan Chatwin-Davies, Alexander Jahn and Sreehari A. P for helpful discussions and comments.  C.C. acknowledges funding from the Commonwealth Cyber Initiative. J.P. acknowledges funding provided by the Institute for Quantum Information and Matter, an NSF Physics Frontiers Center (PHY-2317110), and the DOE Office of High Energy Physics (DE-SC0018407).

\appendix
\section{Proof of theorems}
\subsection{Theorem \ref{thm:relative}}\label{app:relative}

\begin{proof}
Let $\sigma^{(R^{(\epsilon)})}_{M}$ denote the recovered algebraic state corresponding to the skewed code, and $\sigma^{(R^{(0)})}_{M}$ the recovered state of the exact code. We define their difference by  
\begin{equation}
\delta\sigma^{(R^{(\epsilon)})}_{A_1A_2}
:=
\sigma^{(R^{(\epsilon)})}_{A_1A_2}-\sigma^{(R^{(0)})}_{A_1A_2},
\qquad
\delta\sigma^{(R^{(\epsilon)})}_{A_1}
:=
\sigma^{(R^{(\epsilon)})}_{A_1}-\sigma^{(R^{(0)})}_{A_1}.
\end{equation}

These variations satisfy
\begin{equation}
\Tr\!\left(\delta\sigma^{(R^{(\epsilon)})}_{A_1A_2}\right)=0,
\qquad
\delta\sigma^{(R^{(\epsilon)})}_{A_1}
=
\Tr_{A_2}\!\left(\delta\sigma^{(R^{(\epsilon)})}_{A_1A_2}\right).
\end{equation}

For notational simplicity, in the remainder of this section we abbreviate
\begin{eqns}
    \sigma_M^{(\epsilon)}:=\sigma^{(R^{(\epsilon)})}_{M}, \qquad \sigma_M^{(0)}:=\sigma^{(R^{(0)})}_{M}
\end{eqns}

First we show that the following combination vanishes to all orders in $\epsilon$:
\begin{equation}
\operatorname{Tr}\left(\delta\sigma^{(\epsilon)}_{A_1A_2}\,\ln\sigma^{(0)}_{A_1A_2}\right)
-
\operatorname{Tr}\left(\delta\sigma^{(\epsilon)}_{A_1}\,\ln\sigma^{(0)}_{A_1}\right)
-
\operatorname{Tr}\left(\delta\chi\,\ln \chi\right)
=0,
\label{eq:vanish-identity}
\end{equation}
where
\begin{equation}
\delta\chi
=
\operatorname{Tr}_{A_1}\left(\delta\sigma^{(\epsilon)}_{A_1A_2}\right).
\end{equation}

Moreover, for the undeformed code the recovered state factorizes as
\begin{equation}
\sigma^{(0)}_{A_1A_2}
=
\sigma^{(0)}_{A_1}\otimes \chi_{A_2} .
\label{eq:factorization-0}
\end{equation}
Using $\ln(X\otimes Y)=\ln X\otimes I + I\otimes \ln Y$, together with
\eqref{eq:factorization-0} and the trace relation defining $\delta\chi$, immediately yields
\eqref{eq:vanish-identity}.

With this identity, we compute
\begin{align}
S_{PA}
&=
-\operatorname{Tr}\left(\sigma^{(\epsilon)}_{A_1A_2}\,\ln\sigma^{(\epsilon)}_{A_1A_2}\right)
+\operatorname{Tr}\left(\sigma^{(\epsilon)}_{A_1}\,\ln\sigma^{(\epsilon)}_{A_1}\right)
\nonumber\\
&\quad
+\operatorname{Tr}\left(\delta\sigma^{(\epsilon)}_{A_1A_2}\,\ln\sigma^{(0)}_{A_1A_2}\right)
-\operatorname{Tr}\left(\delta\sigma^{(\epsilon)}_{A_1}\,\ln\sigma^{(0)}_{A_1}\right)
-\operatorname{Tr}\left(\delta\chi\,\ln\chi\right)
\nonumber\\
&=
-\operatorname{Tr}\left(\sigma^{(0)}_{A_1A_2}\,\ln\sigma^{(0)}_{A_1A_2}\right)
+\operatorname{Tr}\left(\sigma^{(0)}_{A_1}\,\ln\sigma^{(0)}_{A_1}\right)
\nonumber\\
&\quad
-\Big(
D\left(\sigma^{(\epsilon)}_{A_1A_2}\Vert\sigma^{(0)}_{A_1A_2}\right)
-
D\left(\sigma^{(\epsilon)}_{A_1}\Vert\sigma^{(0)}_{A_1}\right)
\Big)
-\operatorname{Tr}\left(\delta\chi\,\ln\chi\right).
\label{eq:PA-step}
\end{align}
By the factorization \eqref{eq:factorization-0}, the first line of \eqref{eq:PA-step} reduces to $S(\chi)$. In the special case where $\chi$ has a flat entanglement spectrum across $A_2\bar A_2$, we have $\delta\chi=0$, and hence
\begin{equation}
S_{PA}
=
S(\chi)
-
\Big(
D\left(\sigma^{(\epsilon)}_{A_1A_2}\Vert\sigma^{(0)}_{A_1A_2}\right)
-
D\left(\sigma^{(\epsilon)}_{A_1}\Vert\sigma^{(0)}_{A_1}\right)
\Big),
\label{eq:S_corr}
\end{equation}
which is the desired relation, with $S_{\mathrm{corr}}$ given by the difference of relative entropies.

\end{proof}

\subsection{Theorem~\ref{thm:monotonic-mixed}}\label{app:mixed}
\begin{proof}
Based on Theorem~\ref{thm:relative}, we expand the boundary and bulk relative entropies
$D\!\left(\sigma^{(\epsilon)}_{A_1A_2}\Vert\sigma^{(0)}_{A_1A_2}\right)$ and
$D\!\left(\sigma^{(\epsilon)}_{A_1}\Vert\sigma^{(0)}_{A_1}\right)$
to leading order in $\epsilon$.
Since relative entropy is non-negative and vanishes at $\epsilon=0$, the leading contribution appears at order $\mathcal{O}(\epsilon^2)$.
For any subsystem $M\subseteq A\bar A$, with complement $\bar M$, we define the corresponding reduced recovered state by
\begin{equation}
\sigma^{(\epsilon)}_{M}
=
\operatorname{Tr}_{\bar M}\!\left[
\sigma^{(\epsilon)}_{A\bar A}
\right].
\label{rhoM-general}
\end{equation}
For a general subsystem $M$, we obtain
\begin{align}
D\!\left(\sigma^{(\epsilon)}_{M}\Vert\sigma^{(0)}_{M}\right)
&= \Tr\!\left(\sigma^{(\epsilon)}_{M}\,\delta\ln\sigma^{(0)}_{M}\right)
\notag\\
&= \Tr\!\left(\sigma^{(0)}_{M}\,\delta\ln\sigma^{(0)}_{M}\right)
   + \Tr\!\left(\delta\sigma^{(\epsilon)}_{M}\,\delta\ln\sigma^{(0)}_{M}\right)
\notag\\
&= \epsilon^2\,
\Tr\!\left(
\sigma^{(0)}_{M}\,
D_{\ln}\!\bigl(\sigma^{(0)}_{M}\bigr)
\left[\delta^{(2)}\sigma_{M}\right]
\right)
\notag\\
&\quad
-\epsilon^2\,
\Tr\!\left(
\sigma^{(0)}_{M}\,
D_{\ln}^2\!\bigl(\sigma^{(0)}_{M}\bigr)
\left[\delta^{(1)}\sigma_{M},\delta^{(1)}\sigma_{M}\right]
\right)
\notag\\
&\quad
+\epsilon^2\,
\Tr\!\left(
\delta^{(1)}\sigma_{M}\,
D_{\ln}\!\bigl(\sigma^{(0)}_{M}\bigr)
\left[\delta^{(1)}\sigma_{M}\right]
\right)
+\mathcal{O}(\epsilon^3)
\notag\\
&=
\frac{\epsilon^2}{2}\,
\Tr\!\left(
\delta^{(1)}\sigma_{M}\,
D_{\ln}\!\bigl(\sigma^{(0)}_{M}\bigr)
\left[\delta^{(1)}\sigma_{M}\right]
\right)
+\mathcal{O}(\epsilon^3),
\label{rel-ent}
\end{align}
where we expand the perturbed state as
\begin{equation}\label{eq:expandstate}
\sigma^{(\epsilon)}_{M}
=
\sigma^{(0)}_{M}
+\delta\sigma^{(\epsilon)}_{M}
=
\sigma^{(0)}_{M}
+\epsilon\,\delta^{(1)}\sigma_{M}
+\frac{\epsilon^2}{2}\,\delta^{(2)}\sigma_{M}
+\mathcal{O}(\epsilon^{3}).
\end{equation}
where $\delta^{(1)}\sigma_{M}$ and $\delta^{(2)}\sigma_{M}$ denote the first- and second-order
corrections in the perturbative expansion in $\epsilon$, and $M$ denotes the
subsystem of interest. We will use this expansion and notation throughout the paper for any subsystem $M$ under consideration. For any positive operator \(A\), we define \(D_{\ln}(A)[X]\) as the Fréchet derivative of \(\ln A\) at \(A\) in the direction \(X\), namely
\begin{equation}
D_{\ln}(A)[X]
:=
\int_{0}^{\infty}
(A+sI)^{-1}X(A+sI)^{-1}\,ds.
\label{eq:log_expansion}
\end{equation}

Using Eq.~\eqref{eq:S_corr} and Eq.~\eqref{rel-ent}, for flat $\chi$ spectrum, we obtain
\begin{equation}
\label{eq:S_corr_gen_equation}
S_{\mathrm{corr}}
=
\frac{\epsilon^2}{2}\Bigg[
\Tr\Big(
\delta^{(1)}\sigma_{A_1A_2}\,
D_{\ln}\big(\sigma^{(0)}_{A_1A_2}\big)
\big[
\delta^{(1)}\sigma_{A_1A_2}
\big]
\Big)
-
\Tr\Big(
\delta^{(1)}\sigma_{A_1}\,
D_{\ln}\big(\sigma^{(0)}_{A_1}\big)
\big[
\delta^{(1)}\sigma_{A_1}
\big]
\Big)
\Bigg].
\end{equation}

In this setting, we have $|\bar A_1|=0$ and $\bar A_2=\bar A$. To allow for a mixed logical input, we purify $\sigma^{(L)}$ by introducing a reference system $r$ and a pure state $|\psi\rangle_{Lr}$ such that
\begin{equation}
\sigma^{(L)}=\Tr_{r}\!\left(|\psi\rangle\langle\psi|_{Lr}\right),
\qquad
|\psi\rangle_{Lr}
=
\sum_{i=1}^{d}\sqrt{\lambda_i}\,|i\rangle_{L}\,|i\rangle_{r},
\label{eq:purif_logical_mixed}
\end{equation}
where $\{\lambda_i\}$ is a probability distribution and $d=\dim(\mathcal{H}_L)=\dim(\mathcal{H}_r)$. After encoding and recovery, the joint state on $A\bar A$ is given by:
\begin{equation}
\sigma^{(\epsilon)}_{A\bar A}
=
\Tr_{r}\!\left[
e^{i\epsilon W_R}
\Big(|\psi\rangle\langle\psi|_{rA_1}\otimes |\chi\rangle\langle\chi|_{A_2\bar A}\Big)
e^{-i\epsilon W_R}
\right],
\label{eq:sigma_eps_AAbar_mixed}
\end{equation}
together with its reduced states on the boundary $A_1A_2$ and on the bulk factor $A_1$,
\begin{equation}
\begin{aligned}
\sigma^{(\epsilon)}_{A_1A_2}
&=
\Tr_{r\bar A}\!\left[
e^{i\epsilon W_R}
\Big(|\psi\rangle\langle\psi|_{rA_1}\otimes |\chi\rangle\langle\chi|_{A_2\bar A}\Big)
e^{-i\epsilon W_R}
\right],\\
\sigma^{(\epsilon)}_{A_1}
&=
\Tr_{rA_2\bar A}\!\left[
e^{i\epsilon W_R}
\Big(|\psi\rangle\langle\psi|_{rA_1}\otimes |\chi\rangle\langle\chi|_{A_2\bar A}\Big)
e^{-i\epsilon W_R}
\right].
\end{aligned}
\label{eq:reduced_states_eps_mixed}
\end{equation}

Here, $W_R$ acts trivially on the reference subsystem $r$. In the unperturbed limit $\epsilon\to 0$ the unitary drops out and the states factorize as
\begin{align}
\sigma^{(0)}_{A\bar A}
&=
\Tr_{r}\!\left(|\psi\rangle\langle\psi|_{rA_1}\right)\otimes |\chi\rangle\langle\chi|_{A_2\bar A},
\label{eq:unpert_AAbar_mixed}\\
\sigma^{(0)}_{A_1A_2}
&=
\Tr_{r}\!\left(|\psi\rangle\langle\psi|_{rA_1}\right)\otimes \chi_{A_2},
\label{eq:unpert_A_mixed}\\
\sigma^{(0)}_{A_1}
&=
\Tr_{r}\!\left(|\psi\rangle\langle\psi|_{rA_1}\right),
\label{eq:unpert_A1_mixed}
\end{align}
where we use the shorthand
\begin{equation}
\chi_{A_2\bar A}:=|\chi\rangle\langle\chi|_{A_2\bar A},
\qquad
\chi_{A_2}:=\Tr_{\bar A}\!\left(\chi_{A_2\bar A}\right).
\label{eq:chi_defs_mixed}
\end{equation}

Finally, in the Schmidt basis of the purification \eqref{eq:purif_logical_mixed} the reduced state on $A_1$ is diagonal with eigenvalues $\{\lambda_i\}$, and hence admits the spectral decomposition
\begin{equation}
\sigma^{(0)}_{A_1}
=
\sum_{i=1}^{d}\lambda_i\,|i\rangle\langle i|_{A_1},
\qquad d=\dim(\mathcal{H}_{A_1}).
\label{eq:sigmaA1_spectral_mixed}
\end{equation}

We now turn to the evaluation of the Haar-averaged correction, $\langle S_{\mathrm{corr}}\rangle$, obtained by averaging over local unitaries acting on the logical subsystem $\sigma_a^{(L)}$, or equivalently, on the recovered bulk subsystem $A_1$. Concretely, for any operator $X$ supported on $A_1$ (or on $A_1A_2$ with trivial action on $A_2$), we define its Haar-rotated version by conjugation with a unitary $U$ on $A_1$,
\begin{equation}
\sigma^{(0)}_{A_1}(U)
:=
U\,\sigma^{(0)}_{A_1}\,U^\dagger,
\qquad
\sigma^{(0)}_{A_1A_2}(U)
:=
(U\otimes I_{A_2})\,\sigma^{(0)}_{A_1A_2}\,(U^\dagger\otimes I_{A_2}),
\label{eq:haar_rot_defs_mixed}
\end{equation}
and similarly for the perturbative corrections $\delta^{(1)}\sigma_{A_1}(U)$ and
$\delta^{(1)}\sigma_{A_1A_2}(U)$. We then average over $U$ with respect to the
Haar measure $dU$ on $\mathrm{U}(d)$, where $d=\dim(\mathcal{H}_{A_1})$.

Using the second-order entropy expansion, the Haar-averaged correction
takes the form
\begin{align}
\big\langle S_{\mathrm{corr}}\big\rangle
=\frac{\epsilon^2}{2}\int dU \Bigg[
&\Tr\!\left(
\delta^{(1)}\sigma_{A_1A_2}(U)\,
D_{\ln}\!\big(\sigma^{(0)}_{A_1A_2}(U)\big)
\Big[\delta^{(1)}\sigma_{A_1A_2}(U)\Big]
\right)\nonumber\\
&\hspace{2.8em}-
\Tr\!\left(
\delta^{(1)}\sigma_{A_1}(U)\,
D_{\ln}\!\big(\sigma^{(0)}_{A_1}(U)\big)
\Big[\delta^{(1)}\sigma_{A_1}(U)\Big]
\right)
\Bigg].
\label{eq:Scorr_Haar_mixed}
\end{align}
Next, we expand the recovered reduced states perturbatively in $\epsilon$. In
the unperturbed limit the joint state on $A\bar A$ factorizes as
\begin{equation}
\sigma^{(0)}_{A\bar A}
=
\sigma^{(0)}_{A_1}\otimes \chi_{A_2\bar A},
\label{eq:sigma0_factor_mixed}
\end{equation}
The first-order corrections to the reduced states on $A_1A_2$
and $A_1$ are obtained by expanding the perturbation
$e^{i\epsilon W_R}$ to linear order and tracing out the appropriate
subsystems. This yields
\begin{align}
\delta^{(1)}\sigma_{A_1A_2}
&=
i\,\Tr_{\bar A}\!\Big(
\big[W_R,\ \sigma^{(0)}_{A_1}\otimes \chi_{A_2\bar A}\big]
\Big),
\label{eq:delta1_bdry_mixed}\\
\delta^{(1)}\sigma_{A_1}
&=
i\,\Tr_{A_2\bar A}\!\Big(
\big[W_R,\ \sigma^{(0)}_{A_1}\otimes \chi_{A_2\bar A}\big]
\Big).
\label{eq:delta1_bulk_mixed}
\end{align}

A technical point is that the Fr\'echet derivative $D_{\ln}(\sigma)$ is
well-defined only when $\sigma_{A_1}$ is strictly positive. If $\sigma_{A_1}$ has zero
eigenvalues, then $\ln\sigma$ (and hence $D_{\ln}(\sigma)$) is singular on the
corresponding subspace, which can lead to divergences in
\eqref{eq:Scorr_Haar_mixed}. To avoid this issue, we assume throughout that the
unperturbed reduced states are full rank,
\begin{equation}
\sigma^{(0)}_{A_1} > 0,
\qquad
\sigma^{(0)}_{A_1A_2} > 0.
\label{eq:fullrank_assumption_mixed}
\end{equation}
Equivalently, one may view this as working with an implicit regularization
$\sigma^{(0)}\mapsto (1-\eta)\sigma^{(0)}+\eta\,I/\dim$ and taking
$\eta\downarrow 0$ at the end. We keep the notation uncluttered and proceed
under the full-rank assumption \eqref{eq:fullrank_assumption_mixed}.

Substituting \eqref{eq:delta1_bdry_mixed}--\eqref{eq:delta1_bulk_mixed} into \eqref{eq:Scorr_Haar_mixed}, expanding the commutators and using the diagrammatic convention described in Appendix~\eqref{WGcalculus}. we obtain

\begin{align}
\big\langle S_{\mathrm{corr}}\big\rangle= \frac{\epsilon^2}{2d_{\chi}^2}\int ds \int dU 
 &\left(-\vcenter{\hbox{\includegraphics[height=7em]{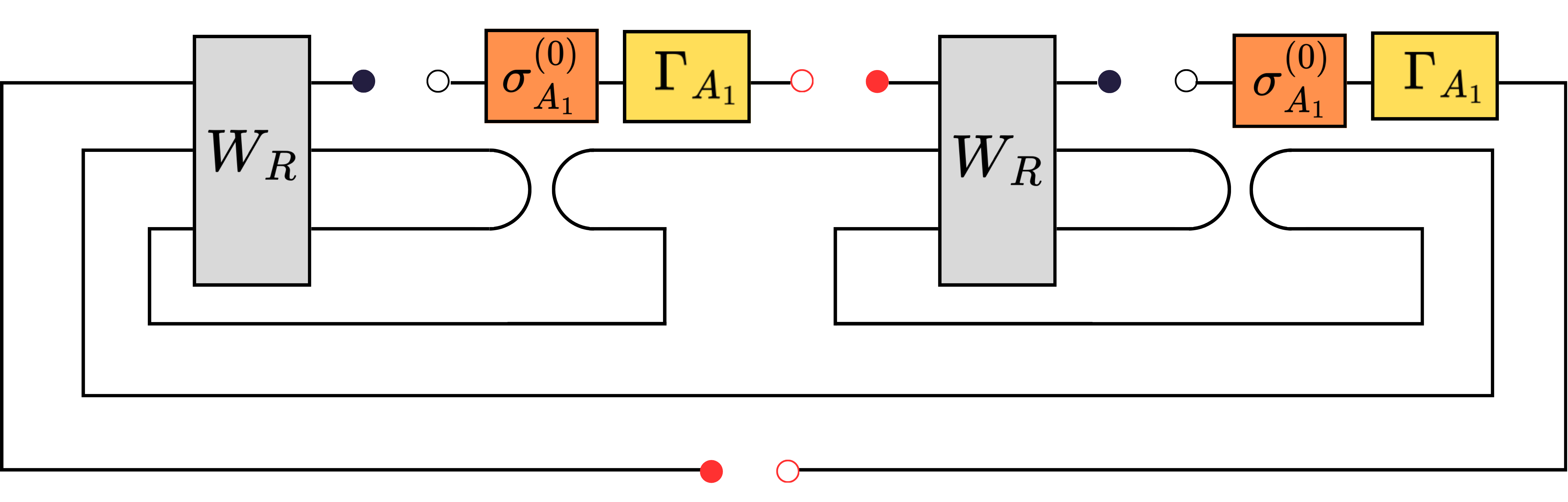}}}\right.\notag\\ + & \quad 2 \,  \vcenter{\hbox{\includegraphics[height=7em]{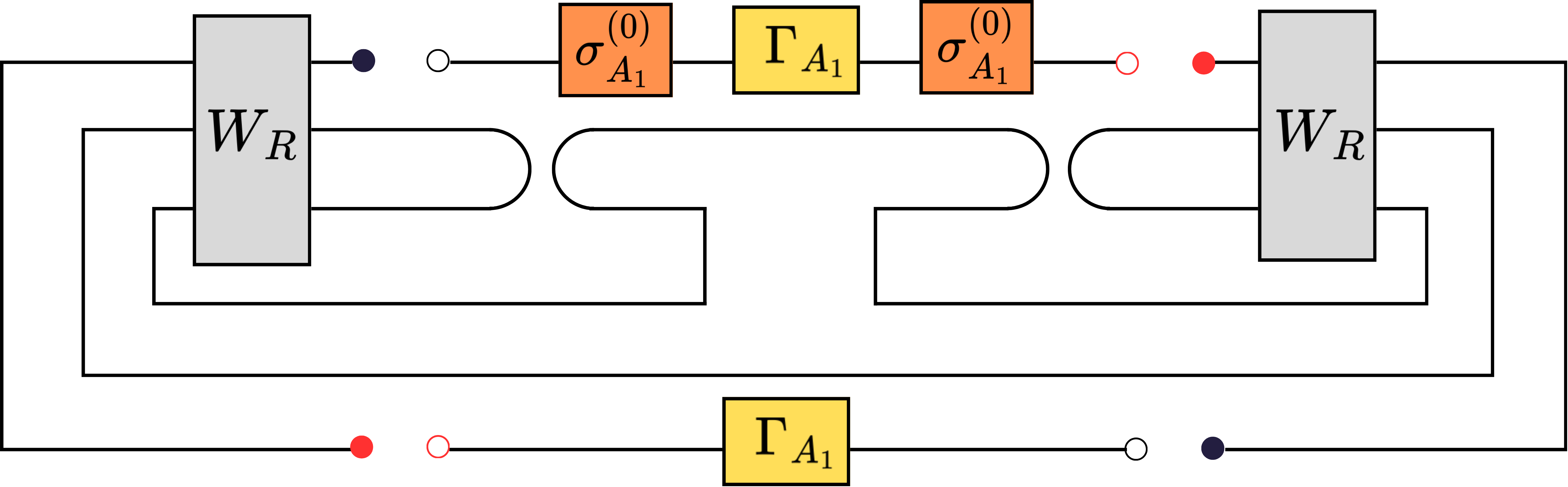}}} \notag\\
 &\quad - \vcenter{\hbox{\includegraphics[height=7em]{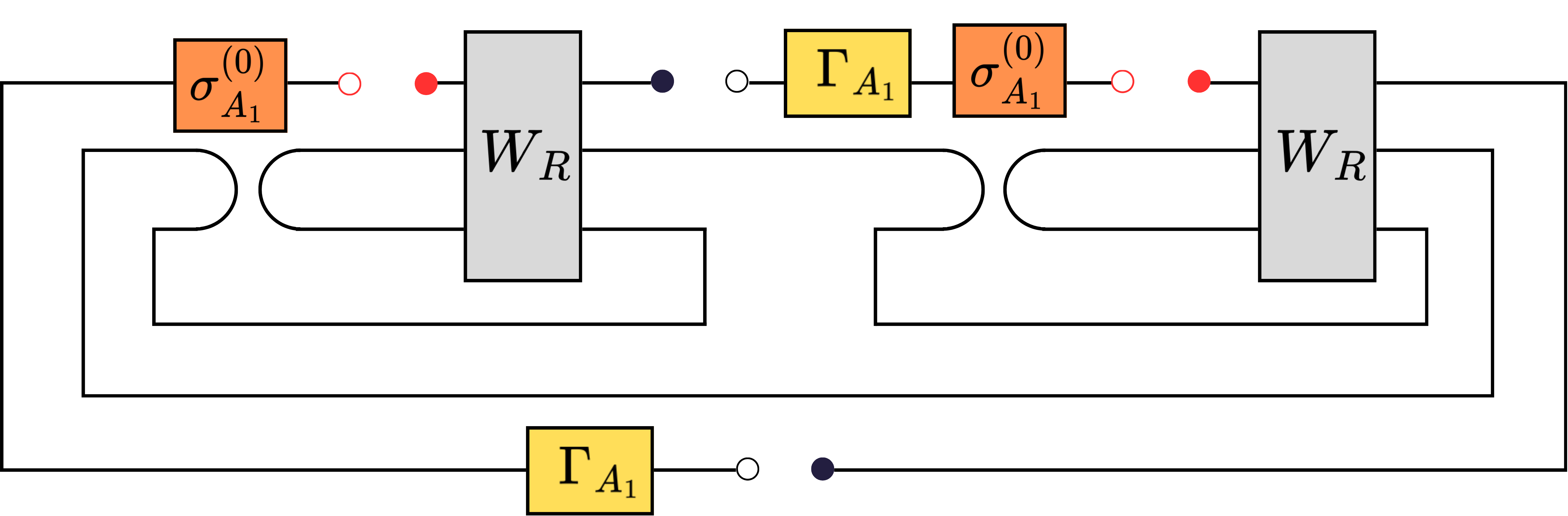}}} \notag\\
 +\frac{1}{d_\chi}  &\quad \vcenter{\hbox{\includegraphics[height=7em]{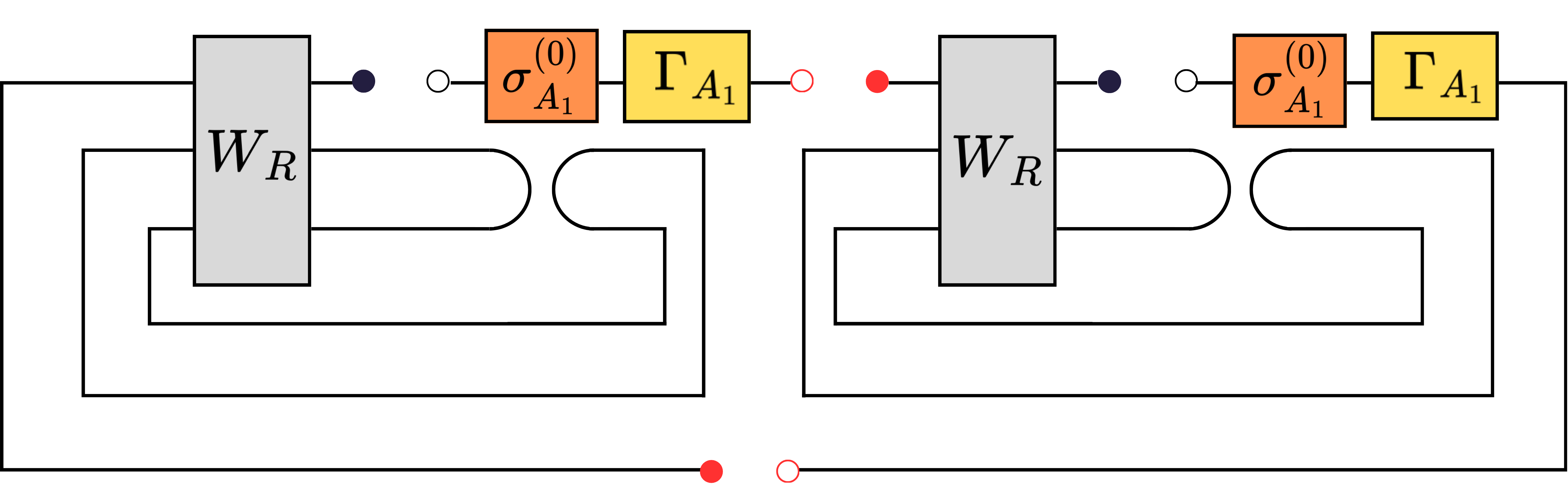}}}
 \notag \\ -\frac{2}{d_\chi}\,
  &\quad \vcenter{\hbox{\includegraphics[height=7em]{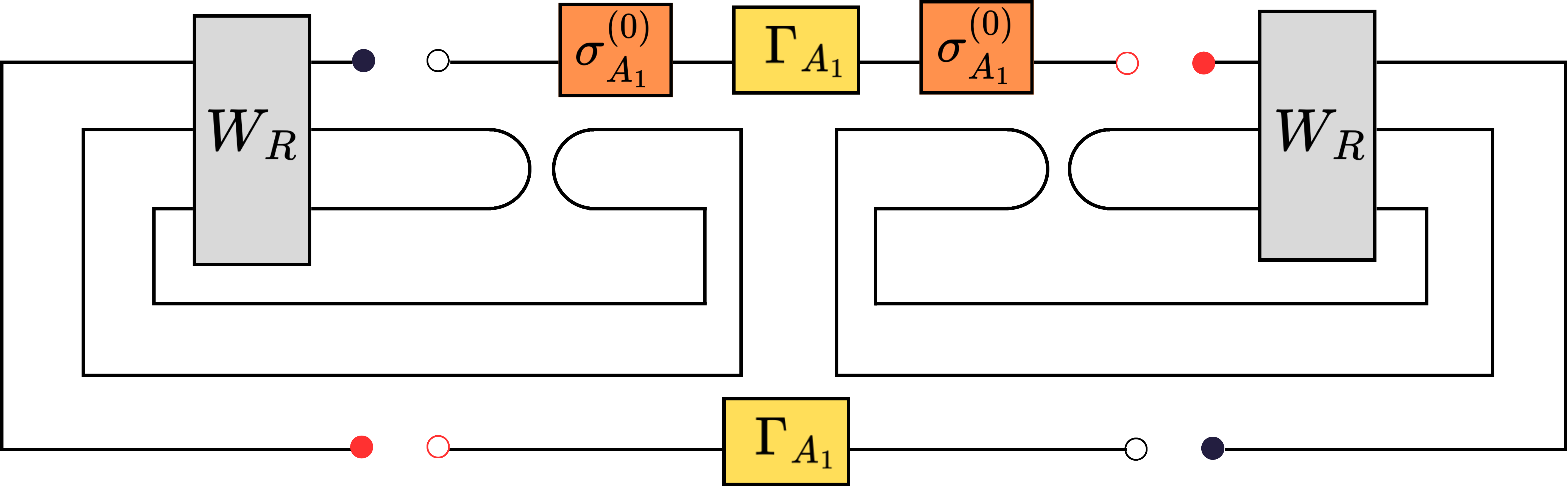}}}\notag\\
 +  \frac{1}{d_\chi} & \quad \left.\vcenter{\hbox{\includegraphics[height=7em]{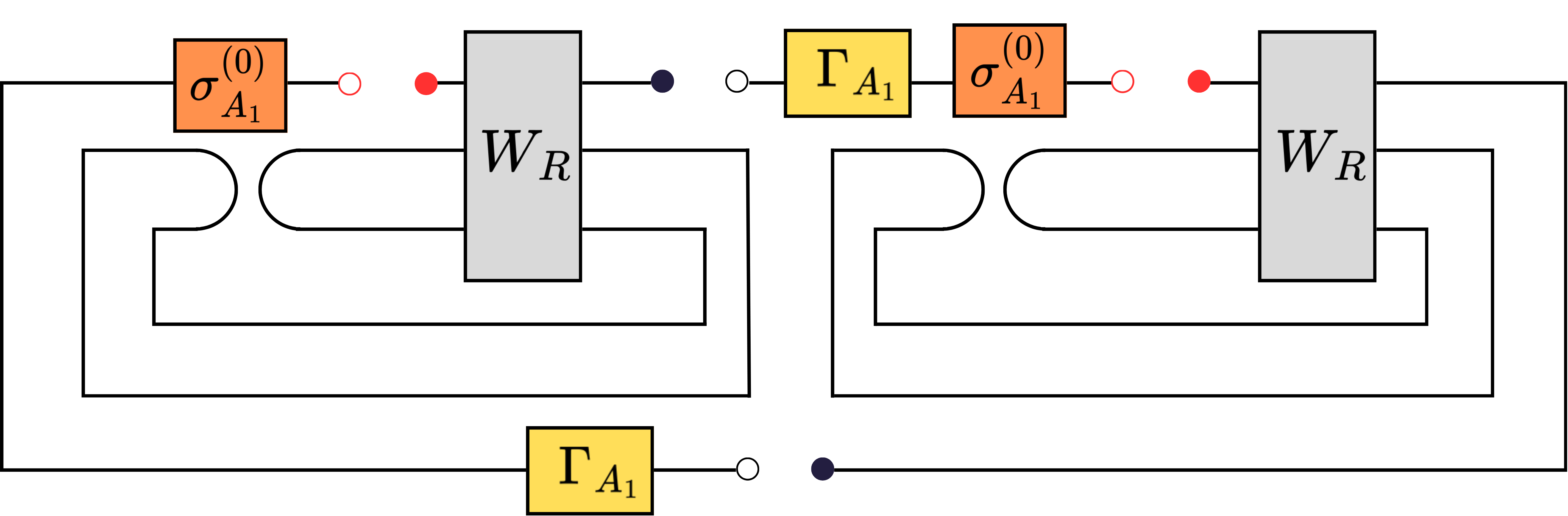}}}
\quad \right)
\end{align}

In the above diagrams, we use the following diagrammatic representation for the state $\chi_{A_2\bar A_2}$ (in this particular case $\bar A_2=\bar A$):
\begin{eqns}
\chi_{A_2\bar A_2} 
= \frac{1}{d_\chi} \ \ \vcenter{\hbox{\includegraphics[height=3em]{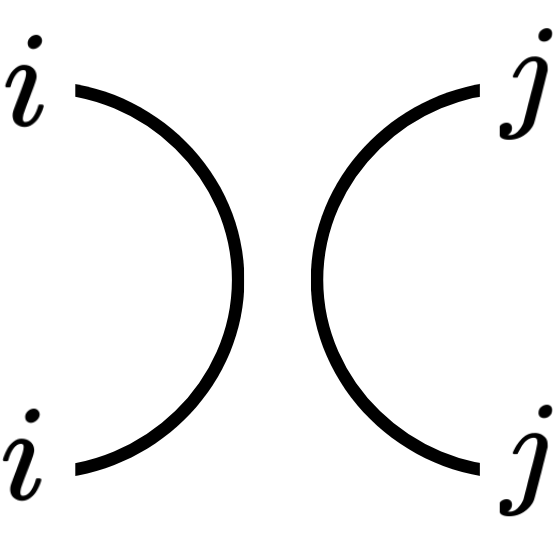}}}
\end{eqns}
where $d_{\chi}=\dim(\mathcal{H}_{A_2})=\dim(\mathcal{H}_{\bar A_2})$. This diagram will be used as a shorthand for $\chi_{A_2\bar A_2}$ in all subsequent figures and contractions.

For the maximally mixed resource on $A_2$ (so that $\chi_{A_2}=I_{A_2}/d_\chi$
with $d_\chi:=\dim(\mathcal{H}_{A_2})$), the corresponding resolvent appearing
in $D_{\ln}$ simplifies as
\begin{align}
\left(\sigma^{(0)}_{A_1}\otimes \chi_{A_2} + s\,I_{A_1A_2}\right)^{-1}
&=
\left(\sigma^{(0)}_{A_1}\otimes \frac{I_{A_2}}{d_\chi} + s\,I_{A_1}\otimes I_{A_2}\right)^{-1}\nonumber\\
&=
\left(\frac{\sigma^{(0)}_{A_1}}{d_\chi}+s\,I_{A_1}\right)^{-1}\otimes I_{A_2}
\;:=\;
\Gamma_{A_1}(s)\otimes I_{A_2}.
\label{eq:resolvent_factorization_mixed}
\end{align}
Here we have defined 
\begin{equation}
\Gamma_{A_1}(s)
:=
\left(\frac{\sigma^{(0)}_{A_1}}{d_\chi}+s\,I_{A_1}\right)^{-1},
\label{eq:Gamma_def_mixed}
\end{equation}
which is precisely the object denoted by $\Gamma(s)$ in our trace-diagram notation. For the bulk contributions, the corresponding expressions involve an additional overall factor of $1/d_\chi$ relative to the boundary case. Equivalently, at the level of the resolvent factors, this amounts to using a rescaled $\Gamma_{A_1}(s)$. The underlying reason is simple: in the bulk term we encounter the Fr\'echet derivative acting on $\sigma^{(0)}_{A_1}$ rather than on $\sigma^{(0)}_{A_1A_2}=\sigma^{(0)}_{A_1}\otimes \chi_{A_2}$. Using the integral representation \eqref{eq:log_expansion}, the bulk contribution contains factors of the form
\begin{equation}
D_{\ln}\!\big(\sigma^{(0)}_{A_1}\big)[Y]
=
\int_{0}^{\infty}
\left(\sigma^{(0)}_{A_1}+s I_{A_1}\right)^{-1}
\,Y\,
\left(\sigma^{(0)}_{A_1}+s I_{A_1}\right)^{-1}\,ds,
\label{eq:Dlog_bulk_integral_mixed}
\end{equation}
for an operator $Y$ supported on $A_1$ (in our application,
$Y=\delta^{(1)}\sigma_{A_1}(U)$).

It is convenient to express \eqref{eq:Dlog_bulk_integral_mixed} in terms of the
same resolvent variable that appears in the boundary diagrams. Introducing
\begin{equation}
\Gamma_{A_1}(s)
:=
\left(\frac{\sigma^{(0)}_{A_1}}{d_\chi}+s I_{A_1}\right)^{-1},
\qquad
d_\chi:=\dim(\mathcal{H}_{A_2})=\dim \chi_{A_2},
\label{eq:Gamma_def_repeat_mixed}
\end{equation}
we use the identity
\begin{equation}
\left(\sigma^{(0)}_{A_1}+s I_{A_1}\right)^{-1}
=
\frac{1}{d_\chi}\,
\Gamma_{A_1}\!\left(\frac{s}{d_\chi}\right).
\label{eq:rescaling_resolvent_mixed}
\end{equation}
Substituting \eqref{eq:rescaling_resolvent_mixed} into
\eqref{eq:Dlog_bulk_integral_mixed} and changing variables $s\mapsto s/d_\chi$
gives
\begin{align}
D_{\ln}\!\big(\sigma^{(0)}_{A_1}\big)[Y]
&=
\int_{0}^{\infty}
\left(\sigma^{(0)}_{A_1}+s I_{A_1}\right)^{-1}
\,Y\,
\left(\sigma^{(0)}_{A_1}+s I_{A_1}\right)^{-1}\,ds \nonumber\\
&=
\frac{1}{d_\chi}
\int_{0}^{\infty}
\Gamma_{A_1}(s)\,Y\,\Gamma_{A_1}(s)\,ds.
\label{eq:Dlog_bulk_in_terms_of_Gamma_mixed}
\end{align}
Equation \eqref{eq:Dlog_bulk_in_terms_of_Gamma_mixed} explains the extra factor
of $1/d_\chi$ in the bulk diagrams. For notational convenience, in the remainder of this section we write $\Gamma(s)$ in place of $\Gamma_{A_1}(s)$.

Using the Weingarten calculus and the diagrammatic Haar-averaging rules collected in Appendix~\eqref{WGcalculus-rules}, we can now perform the $U$-integrals and obtain the expressions below.

\begin{align}
\langle S_{corr}\rangle =\frac{\epsilon^2 }{2d_\chi^2}\int_0^{\infty} ds\, \Bigg[
&\frac{1}{d^2 - 1}\, \mathscr{R}_1(s)\,\Big(
  \tfrac{1}{d}\, W_1
  + \tfrac{1}{d}\, W_3
  - W_2
  - W_4
  - \tfrac{2}{d\, d_\chi}\, W_8
  + \tfrac{2}{d_\chi}\, W_7
\Big)
\nonumber \\ &+
\frac{2}{d^2 - 1}\, \mathscr{R}_3(s)\,\Big(
  W_5 - \tfrac{1}{d}\, W_6 -\tfrac{1}{d_\chi}\, W_7+\tfrac{1}{dd_\chi}\, W_8
\Big)\nonumber \\ &
+
\frac{2}{d^2 - 1}\, \mathscr{R}_2(s)\,\Big(
  \tfrac{1}{d}\, W_2
  + \tfrac{1}{d}\, W_4
  - \tfrac{2}{d}\, W_4
  - W_1
  - W_3
  + 2 W_6
\Big)
\Bigg].
\end{align}
where $W_i$ and $\mathscr{R}_i(s)$ are given in Table~\eqref{W-R-diagrams} in Appendix. A neat feature is that, once we average over the Haar random unitary, the contributions essentially separate: the $\mathscr{R}$ terms depend only on the bulk state $\sigma^{0}_{A_1}$, while the $W$ terms depend only on the skewing matrix $W_R$.

Now we explicitly perform the integral on s, and obtain the following  functions of the spectrum:
\begin{align}
   \frac{1}{d_\chi}\int_0^{\infty} ds\, \mathscr{R}_1(s, \lambda_i) =  \frac{1}{d_\chi}\int_0^{\infty} ds \Tr(\sigma^{0}_{A_1}\Gamma(s))^2=&\sum_{ij}\int_0^{\infty} ds \frac{\lambda_i\lambda_j}{(\lambda_i+s)(\lambda_j+s)}\nonumber\\
    =&\sum_{i\neq j}\frac{\lambda_i\lambda_j}{\lambda_i-\lambda_j}\ln\frac{\lambda_i}{\lambda_j}+\sum_{i}\lambda_i
    \end{align}
    \begin{align}
    \frac{1}{d_\chi}\int_0^{\infty} ds\, \mathscr{R}_2(s, \lambda_i)=\int_0^{\infty} ds \Tr((\sigma^{0}_{A_1}\Gamma(s))^2)=1
\end{align}
    \begin{align}
    \frac{1}{d_\chi}\int_0^{\infty} ds\, \mathscr{R}_3(s, \lambda_i)=\frac{1}{d_\chi}\int_0^{\infty} ds \Tr((\sigma^{0}_{A_1})^2 \, \Gamma(s))\Tr(\Gamma(s))=&\sum_{ij}\int_0^{\infty} ds \frac{\lambda_i^2}{(\lambda_i+s)(\lambda_j+s)}\nonumber\\
    =&\sum_{i\neq j}\frac1 2\frac{\lambda_i^2+\lambda_j^2}{\lambda_i-\lambda_j}\ln\frac{\lambda_i}{\lambda_j}+\sum_{i}\lambda_i.
      \end{align}

To simplify our notation, we use $f(\lambda)$ to denote functions that depend on the entanglement spectrum $\{\lambda_i\}$ and write $S(\lambda) = -\sum_i \lambda_i\log \lambda_i$ as the Shannon entropy. 
From these two expressions, define two functions $f_1(\lambda)$ and $f_2(\lambda)$: 
\begin{equation}\label{eq:f1f2}
    \begin{split}
        f_1(\lambda):= & \frac 1{dd_\chi}\int_0^{\infty} ds \, \left(\mathscr{R}_3(s, \lambda_i)-\mathscr{R}_1(s, \lambda_i)\right)\\
        =&\frac 1 {2d}\sum_{i\neq j}(\lambda_i-\lambda_j)\ln\frac{\lambda_i}{\lambda_j}\\
        =&-\sum_i \ln(\lambda_i)/d- S(\lambda)\\
        f_2(\lambda):=& \frac 1 {dd_\chi}\int_0^{\infty} ds \, \left(\mathscr{R}_3(s, \lambda_i)+\mathscr{R}_1(s, \lambda_i)-d \,
        \ \mathscr{R}_2(s, \lambda_i)\right)\\
        =&\left(\frac {2}{d}+ \sum_{i\neq j}\frac {1}{2d}\frac{(\lambda_i+\lambda_j)^2}{\lambda_i-\lambda_j}\ln{\frac{\lambda_i}{\lambda_j}}\right)-2.
    \end{split}
\end{equation}
We specialize to the case where bulk qubits are a register of 
$n = \log_2 d$ qubits, each entangled with a reference qubit in $r$ in a
parameterised Bell state. \begin{equation}
|\Phi(\theta)\rangle
=
\cos\theta\,|00\rangle
+
\sin\theta\,|11\rangle,
\qquad.
\end{equation}The joint bulk-reference pure state  is 
\begin{equation}
|\Psi(\boldsymbol{\theta})\rangle_{A_1 r}
=
\bigotimes_{k=1}^{n}
\left(
\cos\theta_k\,|0\rangle_{A_{1,k}}|0\rangle_{r_k}
+
\sin\theta_k\,|1\rangle_{A_{1,k}}|1\rangle_{r_k}
\right),
\label{eq:Psi_theta_vec}
\end{equation}
with angles $\boldsymbol{\theta} = (\theta_1,\dots,\theta_n)$. As discussed before, the resolvent representation of $D_{\ln}$ requires the
relevant reduced states to be full rank. For the ansatz
\eqref{eq:Psi_theta_vec}, the reduced state on each bulk qubit has eigenvalues
$\cos^2\theta_k$ and $\sin^2\theta_k$, so full rank holds provided both are
nonzero. We therefore restrict to
\[
0<\theta_k<\frac{\pi}{2}
\qquad \text{for all } k=1,\dots,n,
\label{eq:theta_fullrank_condition}
\]
which ensures $\sigma^{(0)}_{A_1}>0$ and avoids the divergences associated with
zero eigenvalues. Tracing out the reference system $r$ gives 
\begin{equation}\label{eq:bulkbell}
\sigma^{(L)}\equiv\sigma^{0}_{A_1}(\boldsymbol{\theta})
=
\operatorname{Tr}_r\!\left(|\Psi(\boldsymbol{\theta})\rangle\langle\Psi(\boldsymbol{\theta})|\right)
=
\bigotimes_{k=1}^{n}
\begin{pmatrix}
p_k & 0 \\
0   & 1-p_k
\end{pmatrix},
\qquad
p_k := \cos^2\theta_k .
\end{equation}
Labeling the computational basis of $A_1$ by bitstrings 
$b=(b_1,\dots,b_n)\in\{0,1\}^n$, the eigenvalues of $\sigma^{(0)}_{A_1} (\equiv \sigma^{(L)})$ factorises as
\begin{equation}
\lambda_b
= \prod_{k=1}^{n}
\bigl[p_k\bigr]^{1-b_k}
\bigl[1-p_k\bigr]^{b_k},
\qquad d=2^n .
\label{eig-Lambda}
\end{equation}
Now, let us define a pair of  Hermitian matrices $J$ and $D$ as 
\begin{eqns}
    J:=d_\chi\Tr_{\bar A_2}\left(\{W_R,\chi_{A_2\bar A}\}\right)\\
    D:=id_\chi\Tr_{\bar A_2}\left([W_R,\chi_{A_2\bar A}]\right)
    \label{eq:JD-mixed}
\end{eqns}
which are diagrammatically represented in Fig. \ref{ref-PQ}, 
\begin{eqns}
J_{bj;ai}&=\vcenter{\hbox{\includegraphics[height=5em]{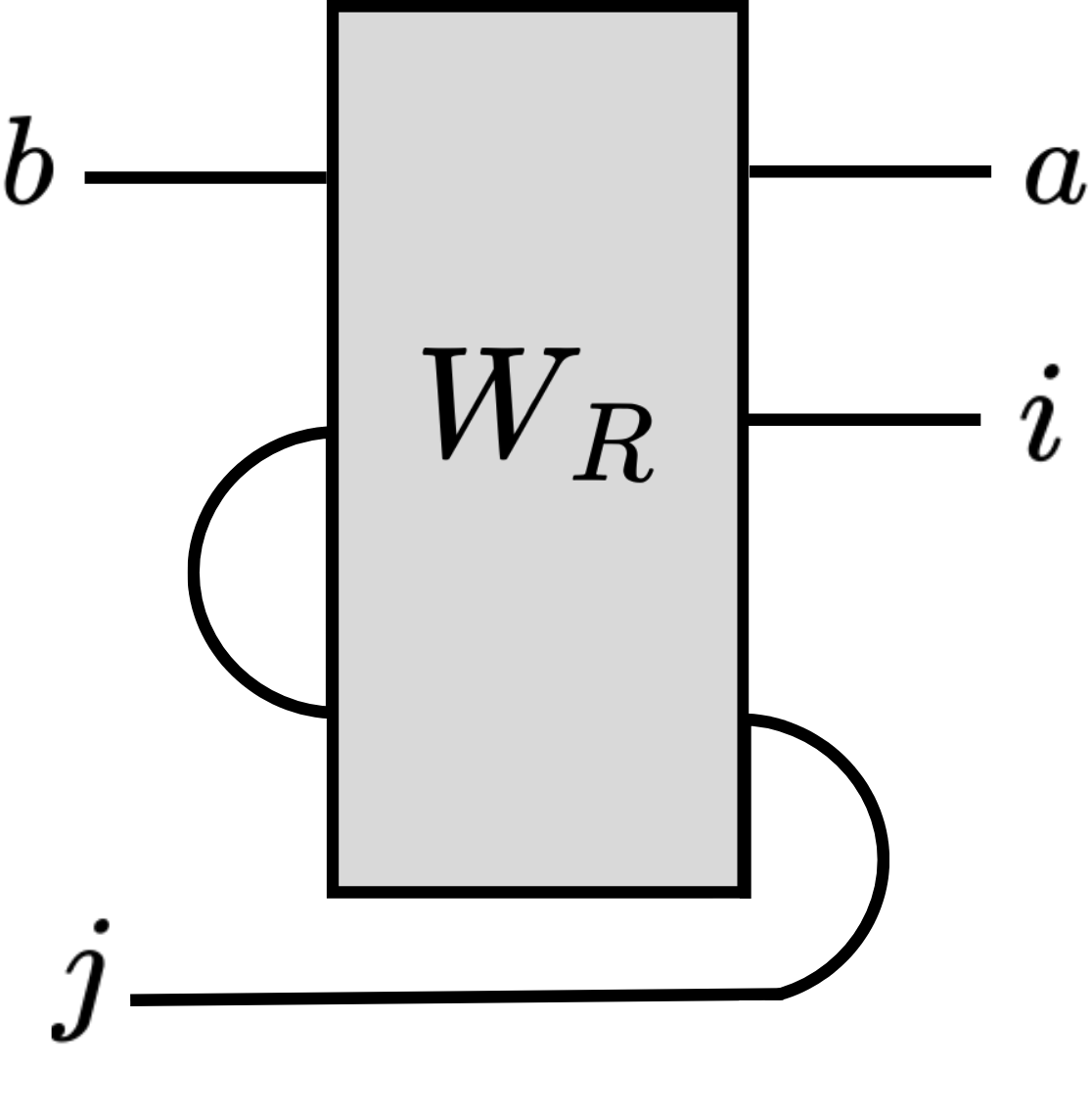}}}\,+\,\vcenter{\hbox{\includegraphics[height=5em]{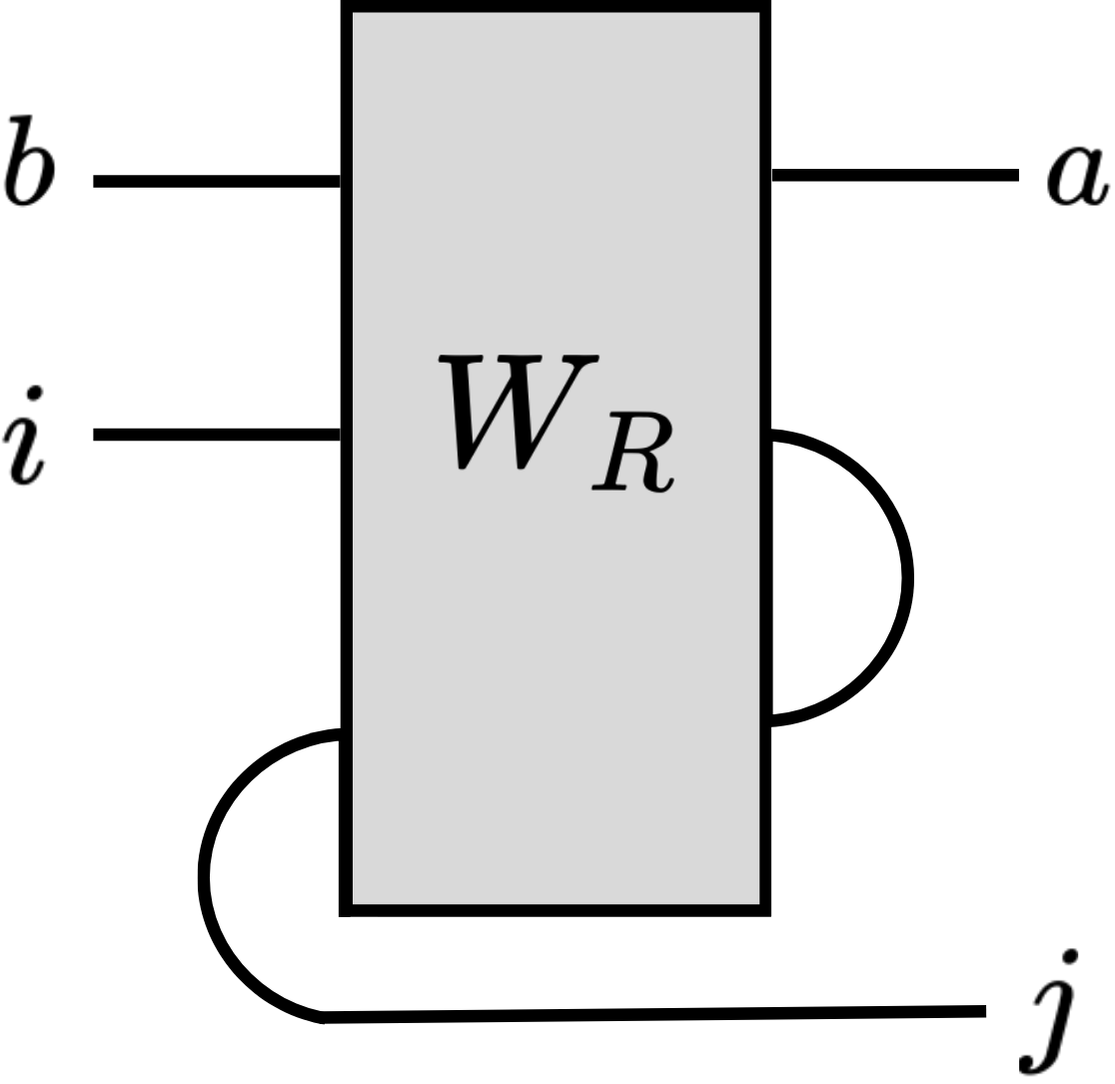}}}\equiv \mathcal{W}^{(1)}_{bj;ai}+\mathcal{W}^{(2)}_{bj;ai}\\
iD_{bj;ai}&=\vcenter{\hbox{\includegraphics[height=5em]{theory_draft_images/MatrixP.png}}}\, - \,\vcenter{\hbox{\includegraphics[height=5em]{theory_draft_images/MatrixQ.png}}}\equiv \mathcal{W}^{(1)}_{bj;ai}-\mathcal{W}^{(2)}_{bj;ai}
\label{ref-PQ}
\end{eqns}
Here, \(\mathcal{W}^{(1)}_{bj;ai}\) and \(\mathcal{W}^{(2)}_{bj;ai}\) denote the respective contributions from the first and second trace diagrams. Then $\langle S_{corr}\rangle$ can be reorganized into,
\begin{equation}
    \begin{split}
        \langle S_{corr}\rangle=\frac{1}{2}\epsilon^2\Big(c_1f_1(\lambda)+c_2f_2(\lambda)+c_3\Big)
    \end{split}
\end{equation}
where we have defined:

\begin{align}
\label{eq:c1c2c3}
c_1(J)
=&\frac{d}{2 d_\chi(d^{2}-1)}
\Big(
\Tr_{A_1A_2}(J^{2})
-\frac{1}{d}\,\Tr_{A_2}\!\left(\Tr_{A_1}(J)^{2}\right)
\\& \qquad -\frac{1}{d_\chi}\,\Tr_{A_1}\!\left(\Tr_{A_2}(J)^{2}\right)
+\frac{1}{d d_\chi}\,\Tr_{A_1A_2}(J)^{2}
\Big)
\\
c_2(D)
=&\frac{d}{2 d_\chi(d^{2}-1)}
\Big(
\Tr_{A_1A_2}((D)^{2})
-\frac{1}{d}\,\Tr_{A_2}\!\left(\Tr_{A_1}(D)^{2}\right)
\Big)
\\
c_3(D)
=&\frac{1}{d_\chi(d^{2}-1)}
\Big(
-\frac{1}{d}\,\Tr_{A_1A_2}((D)^{2})
+\Tr_{A_2}\!\left(\Tr_{A_1}(D)^{2}\right)
\Big)
+2c_2
\end{align}
 
It suffices to show that the three coefficients $c_1, c_2$, and $c_3$ are strictly positive and that the functions $f_1$ and $f_2$ are positive and monotonically decreasing in the bulk-entanglement parameter. Once these ingredients are established, the correction term satisfies $\langle S_{\text {corr }} \rangle \geq 0$ and decreases as the bulk entanglement increases. The monotonocity and positivity of $f_1$ and $f_2$ has been shown in Appendix \ref{app:f1f2f3}. To show that $c_1$, $c_2$, and $c_3$ are non–negative, we decompose $J$ and $D$ into orthonormal Pauli basis on $A_1$ and $A_2$,
\[
J=\sum_{a,b} p_{ab}\,P_a\otimes P_b,
\qquad
D=\sum_{a,b} q_{ab}\,P_a\otimes P_b,
\]
with $\Tr(P_a P_{a'}) = d\delta_{aa'}$ and $\Tr(P_b P_{b'}) = d_\chi\delta_{bb'}$, and all non–identity elements traceless. From this one directly obtains the trace combinations that appear in $c_1,c_2,c_3$:
\begin{align}\label{eq:tracePQ}
\Tr_{A_1A_2}(J^2) &= d d_\chi \sum_{a,b} p_{ab}^2, \qquad
\Tr_{A_2}\!\bigl(\Tr_{A_1}(J)^2\bigr) = d^2 d_\chi \sum_b p_{0b}^2,\notag \\
\Tr_{A_1}\!\bigl(\Tr_{A_2}(J)^2\bigr) &= d d_\chi^2 \sum_a p_{a0}^2,\qquad 
\Tr_{A_1A_2}(J) = d d_\chi p_{00}
\notag\\
\Tr_{A_1A_2}((D)^2) &= d d_\chi \sum_{a,b} q_{ab}^2, \qquad
\Tr_{A_2}\!\bigl(\Tr_{A_1}(D)^2\bigr) = d^2 d_\chi \sum_b q_{0b}^2.
\end{align}
Substituting these expressions into the definitions of $c_1$, $c_2$, and $c_3$ and simplifying yields
\begin{equation}
c_1=\frac{d^2}{2(d^2-1)}
       \sum_{a\neq 0,\,b\neq 0} p_{ab}^2,
\qquad
c_2=\frac{d^2}{2(d^2-1)}
       \sum_{a\neq 0,\,b} q_{ab}^2,
\qquad
c_3=\sum_{a,b} q _{ab}^2,   
\end{equation}
so each coefficient is a positive constant times a sum of squares of Pauli coefficients, and thus
\begin{equation}
c_1\ge 0,\qquad c_2\ge 0,\qquad c_3\ge 0.
\end{equation}
In Appendix~\ref{app:f1f2f3}, we further show that the functions $f_1(\lambda)$ and $f_2(\lambda)$ are both monotonic decreasing function as we increase the entanglement of each bulk bell pairs defined in Eq.~\eqref{eq:bulkbell}. 
\end{proof}

\subsection{Lemma \ref{lemma:optimization:mixed}}\label{app:optimization:mixed}
\begin{proof}

The recovery is optimized by choosing a channel $R^\ast$ that maximizes the coherent information of the logical-to-output map. For a fixed perturbation strength $\epsilon$, the encode--noise--recover procedure induces an effective quantum channel (cf.~Eq.~\eqref{eqn:recov_state})
\begin{equation}
\mathcal{N}_{R^{(\epsilon)}}:
\mathcal{L}(\mathcal{H}_L)\longrightarrow \mathcal{L}(\mathcal{H}_{A_1}),
\qquad
\mathcal{N}_{R^{(\epsilon)}}\!\left(\sigma^{(L)}\right)
=
\sigma^{(\epsilon)}_{A_1},
\label{eq:effective_channel_mixed}
\end{equation}
where $\sigma^{(\epsilon)}_{A_1}$ is the recovered output state on the subsystem
$A_1$. In our setup, Eq.~\eqref{eqn:recov_state} is specialized to $|\bar A_1|=0$ and $\bar A_2=\bar A$. To quantify the performance of $R^{(\epsilon)}$, we evaluate the coherent information of $\mathcal{N}_{R^{(\epsilon)}}$, defined in Eq.~\eqref{ref-coherentinfo} as:
\begin{equation}\label{eq:coherent}
I_c(\mathcal{N}_{R^{(\epsilon)}})
=
S\!\left(\Tr_r\!\left[(\mathcal{N}_{R^{(\epsilon)}}\otimes I_r)\bigl(|\Phi_d\rangle\langle\Phi_d|\bigr)\right]\right)
-
S\!\left[(\mathcal{N}_{R^{(\epsilon)}}\otimes I_r)\bigl(|\Phi_d\rangle\langle\Phi_d|\bigr)\right].
\end{equation}
\begin{figure}[H]
    \centering
    \includegraphics[width=0.6\linewidth]{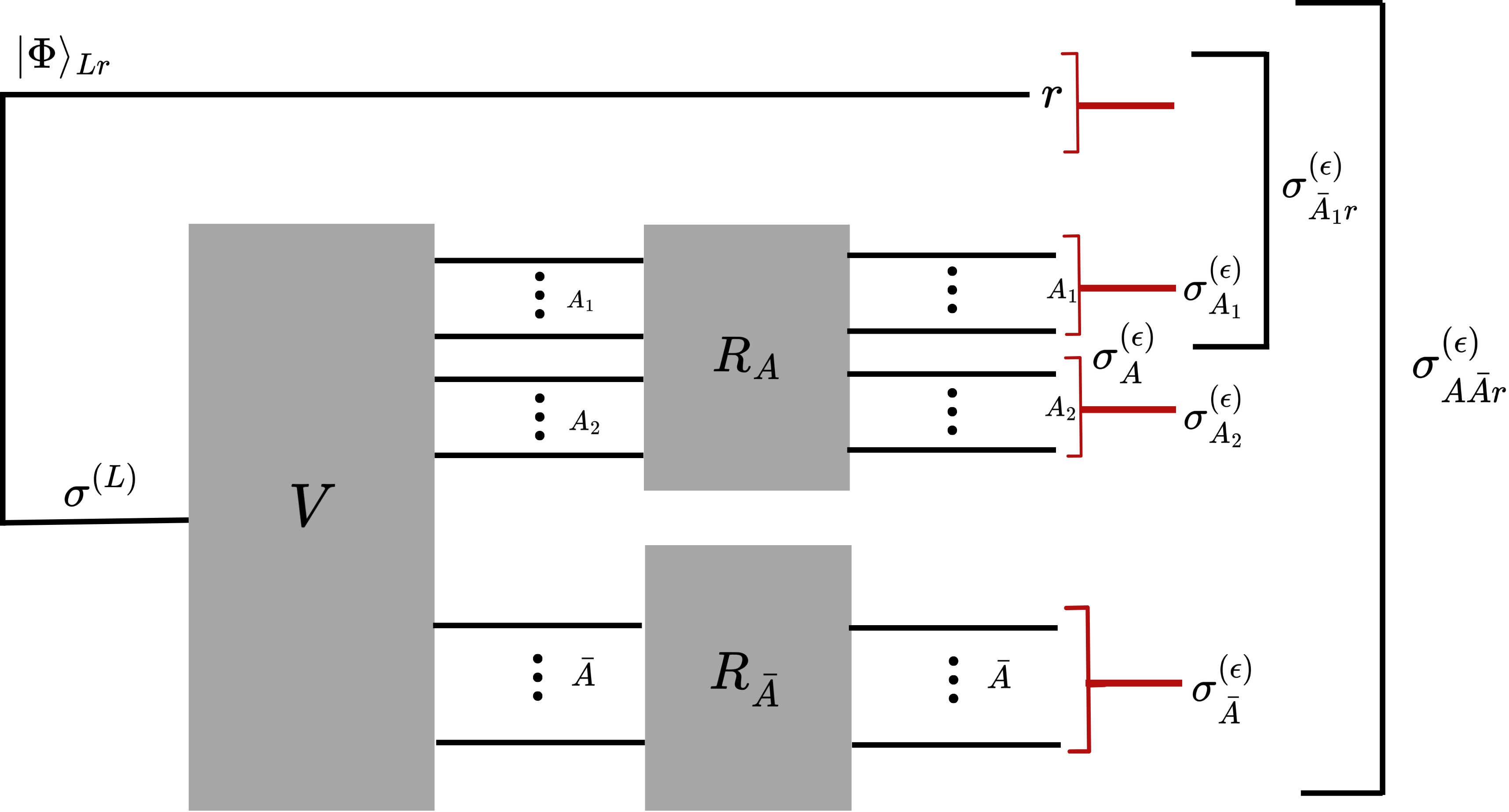}
    \caption{Circuit picture for optimizing recovery via coherent information. A maximally entangled state
$|\Phi\rangle_{Lr}=\frac{1}{\sqrt d}\sum_{i=1}^{d}|i\rangle_L\otimes|i\rangle_r$
purifies the logical input on $\mathcal{H}_L$, with $r$ an isomorphic reference system.
The logical system is encoded by the isometry $V$, and the boundary region $A$ is acted on by the recovery map $R_A$ (with ancillary outputs $A_1,A_2$), while $\bar A$ is processed by $R_{\bar A}$.
This induces the effective channel $\mathcal{N}_{R^{(\epsilon)}}:\mathcal{L}(\mathcal{H}_L)\to \mathcal{L}(\mathcal{H}_{A_1})$, with output state
$\sigma^{(\epsilon)}_{A_1}=\mathcal{N}_{R^{(\epsilon)}}(\sigma^{(L)})$.
The coherent information $I_c(\mathcal{N}_{R^{(\epsilon)}})$ is evaluated on the joint output
$\sigma^{(\epsilon)}_{A_1 r}$ and the full state $\sigma^{(\epsilon)}_{A\bar A r}$, and is maximized over recovery channels to obtain the optimal recovery $R^\ast$.
}
    \label{fig:optimization_mixed}
\end{figure}
Here $|\Phi_d\rangle$ is a fixed maximally entangled state between the logical input system $\mathcal{H}_{L}$ and an isomorphic reference system $\mathcal{H}_{r}$,
\begin{equation}
|\Phi_d\rangle
=
\frac{1}{\sqrt d}\sum_{i=1}^{d}|i\rangle_L\otimes |i\rangle_r, \qquad d=\dim(\mathcal{H}_L)=\dim(\mathcal{H}_r),
\end{equation}
with $\{|i\rangle_L\}$ and $\{|i\rangle_r\}$ orthonormal bases of $\mathcal{H}_L$ and $\mathcal{H}_r$.

After encoding, perturbation, and recovery, the joint state on  $A_1r$ is
\begin{equation}
\sigma^{(\epsilon)}_{A_1 r}
=
\Tr_{A_2 \bar{A}}\!\left[
e^{i\epsilon W_R}
\bigl(\sigma^{(0)}_{A_1 r} \otimes \chi_{A_2 \bar{A}}\bigr)
e^{-i\epsilon W_R}
\right],
\end{equation}
where $\sigma^{(0)}_{A_1 r}\equiv |\Phi_d\rangle\!\langle\Phi_d|_{A_1 r}\in
\mathcal{L}(\mathcal{H}_{A_1}\otimes\mathcal{H}_{r})$. The perturbation acts trivially on the reference, and we will therefore write $W_R$ in place of $W_R\otimes I_r$ when no confusion can arise.

Tracing out the reference yields the reduced output state on $A_1$,
\begin{equation}
\sigma^{(\epsilon)}_{A_1}
=
\Tr_{r}\,\sigma^{(\epsilon)}_{A_1 r}.
\end{equation}
where $\sigma^{(0)}_{A_1}\equiv \Tr_r \sigma^{(0)}_{A_1 r}=I_d/d$. This admits the perturbative expansion
\begin{equation}
\sigma^{(\epsilon)}_{A_1}
=
\sigma^{(0)}_{A_1}
+\epsilon\,\delta^{(1)}\sigma_{A_1}
+\frac{\epsilon^2}{2}\,\delta^{(2)}\sigma_{A_1}
+O(\epsilon^3),
\label{eq:sigma-expansion_A1}
\end{equation}
with
\begin{align}
\delta^{(1)}\sigma_{A_1} 
&= i\,\Tr_{A_2\bar{A}}
\Big(
\big[W_R,\ \sigma^{(0)}_{A_1} \otimes \chi_{A_2 \bar A}\big]
\Big),\\[4pt]
\delta^{(2)}\sigma_{A_1} 
&= -\Tr_{A_2\bar A}
\Big(
\frac{1}{2}\big\{W_R^2,\ \sigma^{(0)}_{A_1} \otimes \chi_{A_2\bar A}\big\}
- W_R\,
\big(\sigma^{(0)}_{A_1} \otimes \chi_{A_2\bar A}\big)\,
W_R
\Big).
\end{align}

In this notation, the coherent information of $\mathcal{N}_{R^{(\epsilon)}}$ takes the standard form
\begin{equation}
I_c(\mathcal{N}_{R^{(\epsilon)}})
=
S\!\left(\sigma^{(\epsilon)}_{A_1}\right)
-
S\!\left(\sigma^{(\epsilon)}_{A_1 r}\right).
\end{equation}

We first expand $S(\sigma^{(\epsilon)}_{A_1})$. Using the logarithm expansion in Appendix~\eqref{logexpansion}, we find
\begin{align}
S\big(\sigma^{(\epsilon)}_{A_1} \big)
&= S\big(\sigma^{(0)}_{A_1} \big)
-\epsilon\,\Tr\Big(
\delta^{(1)}\sigma_{A_1} \,\ln\sigma^{(0)}_{A_1} 
+\sigma^{(0)}_{A_1} \,
D_{\ln}\big(\sigma^{(0)}_{A_1} \big)
\big[\delta^{(1)}\sigma_{A_1} \big]\Big)
\notag\\
&\quad
-\epsilon^2\,\Tr\Big(
\,\delta^{(2)}\sigma_{A_1} \,\ln\sigma^{(0)}_{A_1} 
+\frac{1}{2}\,
\delta^{(1)}\sigma_{A_1} \,
D_{\ln}\big(\sigma^{(0)}_{A_1} \big)
\big[\delta^{(1)}\sigma_{A_1} \big]
\notag\\
&\hspace{2.2cm}
+\,
\sigma^{(0)}_{A_1} \,
D_{\ln}\big(\sigma^{(0)}_{A_1} \big)
\big[\delta^{(2)}\sigma_{A_1} \big]
\Big)
+O(\epsilon^3).
\label{eq:SA1-expansion-A1-mixed}
\end{align}

Substituting the perturbations we obtain.
\begin{equation}
 S(\sigma^{(\epsilon)}_{A_1})
=
\ln d 
\label{eq:Lemma4.1:SrhoA1}
\end{equation}
and the correction vanishes. This is because $\sigma_{A_1}^{(0)}$ is already maximally mixed. 

We next evaluate $S(\sigma^{(\epsilon)}_{A_1 r})$. A technical subtlety is that $\sigma^{(0)}_{A_1 r}=|\Phi_d\rangle\langle\Phi_d|$ is pure, so a Taylor expansion of the entropy around $\sigma^{(0)}_{A_1 r}$ is ill-defined because $\sigma^{(0)}_{A_1 r}$ has zero eigenvalues. We therefore introduce a full-rank regulator family $\sigma^{(0)}_{A_1 r}(\Delta)$ on $\mathbb{C}^d\otimes\mathbb{C}^d$,
\begin{equation}
\sigma^{(0)}_{A_1 r}(\Delta)
=
(1-3\Delta)\,|\Phi_d\rangle\langle\Phi_d|
+
\frac{2\Delta}{d(d-1)}
\sum_{i \neq j} |ij\rangle\langle ij|
+
\frac{\Delta}{d-1}
\sum_{m=1}^{d-1}
|\Phi_d^{(m)}\rangle\langle\Phi_d^{(m)}|,
\qquad
 0< \Delta \ll 1.
\end{equation}
and average its entropy over local unitary acting on $\mH_{A_1}\simeq {\mL_a}$. Here $\Delta$ serves as a regulator. Sending $\Delta\to 0$ recovers the pure state $\sigma^{(0)}_{A_1r}$. The maximally entangled Fourier Bell basis is defined as 
\begin{equation}
\label{eq:fourier-bell}
\ket{\Phi_d^{(m)}} \;=\; \frac{1}{\sqrt{d}} \sum_{k=0}^{d-1} \omega^{mk}\ket{k\,k},
\qquad 
\omega = e^{2\pi i/d},
\qquad 
m = 0,1,\ldots,d-1 .
\end{equation}
The standard maximally entangled state corresponds to the $m=0$ element of this basis,
\begin{equation}
\label{eq:max-ent}
\ket{\Phi_d} \;\equiv\; \ket{\Phi_d^{(0)}} \;= \frac{1}{\sqrt{d}} \sum_{k=0}^{d-1} \ket{k\,k}.
\end{equation}

By construction, the reference extension is consistent with the original marginal on $A_1$, namely
\begin{equation}
\sigma^{(0)}_{A_1}
=
\Tr_{r}\!\big[\sigma^{(0)}_{A_1 r}(\Delta)\big]
=
\frac{I_{d}}{d}.
\end{equation}

Then we compute the Haar-averaged entropy of the perturbed state $\sigma^{(\epsilon)}_{A_1 r}(\Delta,U)$ obtained after conjugating by a local unitary $U$ on $A_1$.
Expanding $\sigma^{(\epsilon)}_{A_1 r}$ perturbatively as in Eq.~\eqref{eq:sigma-expansion_A1} (with $A_1\to A_1 r$), we obtain the entropy expansion
\begin{align}
S\big(\sigma^{(\epsilon)}_{A_1 r}(U)\big)
&=S(\sigma_{A_1 r}^{(0)}(U))+
\epsilon\,\operatorname{Tr}\Big(
\delta^{(1)}\sigma_{A_1 r}(U)\,
\ln\sigma^{(0)}_{A_1 r}(U)
+\,
\sigma^{(0)}_{A_1 r}(U)\,
D_{\ln}\big(\sigma^{(0)}_{A_1 r}(U)\big)
\big[\delta^{(1)}\sigma_{A_1 r}(U)\big]
\Big)\notag\\& \quad
-\epsilon^2\,\operatorname{Tr}\Big(
\delta^{(2)}\sigma_{A_1 r}(U)\,
\ln\sigma^{(0)}_{A_1 r}(U)
+\frac{1}{2}\,
\delta^{(1)}\sigma_{A_1 r}(U)\,
D_{\ln}\big(\sigma^{(0)}_{A_1 r}(U)\big)
\big[\delta^{(1)}\sigma_{A_1 r}(U)\big]
\notag\\
&\quad+\,
\sigma^{(0)}_{A_1 r}(U)\,
D_{\ln}\big(\sigma^{(0)}_{A_1 r}(U)\big)
\big[\delta^{(2)}\sigma_{A_1 r}(U)\big]
\Big)
\notag+\mO(\epsilon^3),
\label{eq:entropy-expansion-A1r}
\end{align}
with
\begin{align}
\sigma^{(0)}_{A_1 r}(U)
&= U\,\sigma^{(0)}_{A_1 r}\,U^\dagger,\\[4pt]
\delta^{(1)}\sigma_{A_1 r}(U)
&= i\,\operatorname{Tr}_{A_2\bar A}
\Big(
\big[W_R,\,
\sigma^{(0)}_{A_1 r}(U)\otimes\chi_{A_2\bar A}\big]
\Big),\\[4pt]
\delta^{(2)}\sigma_{A_1 r}(U)
&= -\,\operatorname{Tr}_{A_2\bar A}\Big(
\frac{1}{2}\big\{W_R^{2},\,
\sigma^{(0)}_{A_1 r}(U)\otimes\chi_{A_2\bar A}\big\}
- W_R\,
\big(\sigma^{(0)}_{A_1 r}(U)\otimes\chi_{A_2\bar A}\big)\,
W_R
\Big).
\end{align}

To ensure the entropy admits a controlled Taylor expansion, the regulator must not be taken parametrically smaller than the perturbation strength; we therefore impose $\Delta\gtrsim \epsilon$. At the same time, we work in the regime $\Delta\ll 1$, so that $\Delta$ remains small, justifying an expansion to leading order in $\Delta$.

Haar averaging over $U$ removes all terms linear in $\epsilon$ and the terms of the form $\Tr[\sigma^{(0)} D_{\ln}(\sigma^{(0)})[\cdot]]$, leaving
\begin{align}
\Big\langle S\big(\sigma^{(\epsilon)}_{A_1 r}\big)\Big\rangle
&=
\Big\langle S\big(\sigma^{(0)}_{A_1 r}\big)\Big\rangle
-\epsilon^2\int dU\;
\operatorname{Tr}\Big(
\delta^{(2)}\sigma_{A_1 r}(U)\,
\ln\sigma^{(0)}_{A_1 r}(U)
\notag\\
&\qquad\qquad
+\frac{1}{2}\,
\delta^{(1)}\sigma_{A_1 r}(U)\,
D_{\ln}\big(\sigma^{(0)}_{A_1 r}(U)\big)
\big[\delta^{(1)}\sigma_{A_1 r}(U)\big]
\Big)
+O(\epsilon^3).
\label{eq:Sbulk-mixed-expanded-A1r}
\end{align}

Diagrammatically, Eq~\eqref{eq:Sbulk-mixed-expanded-A1r} can be represented as 

\begin{align}
\big\langle S\!\left(\sigma^{(R)}_{A_1 r}\right)\big\rangle
&= \frac{\epsilon^2}{d_\chi}\int dU \left(
\frac{1}{2}\, \, \vcenter{\hbox{\includegraphics[height=9em]{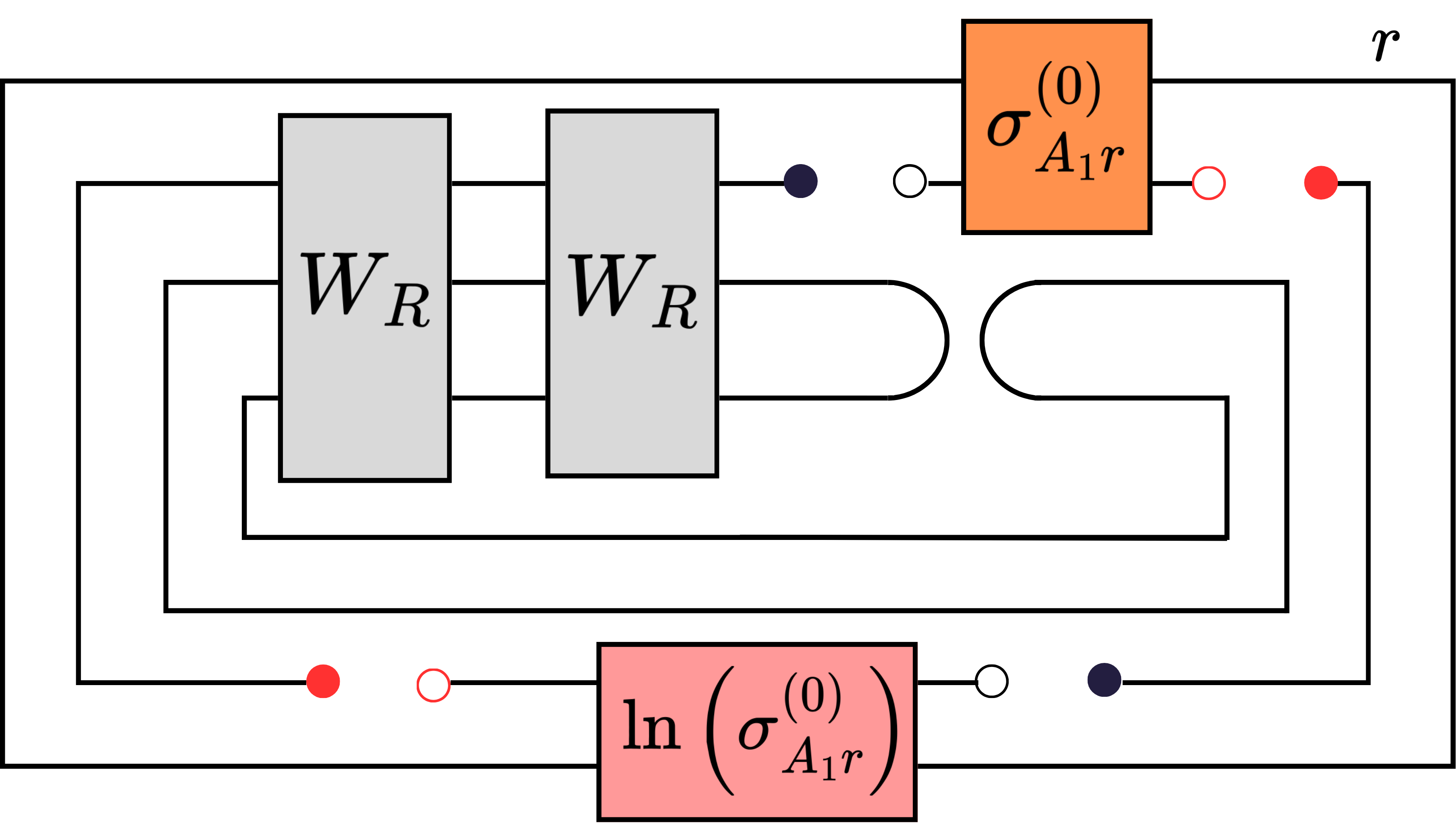}}}
\notag \right. \\  & \hspace{20mm} -\left.\vcenter{\hbox{\includegraphics[height=9em]{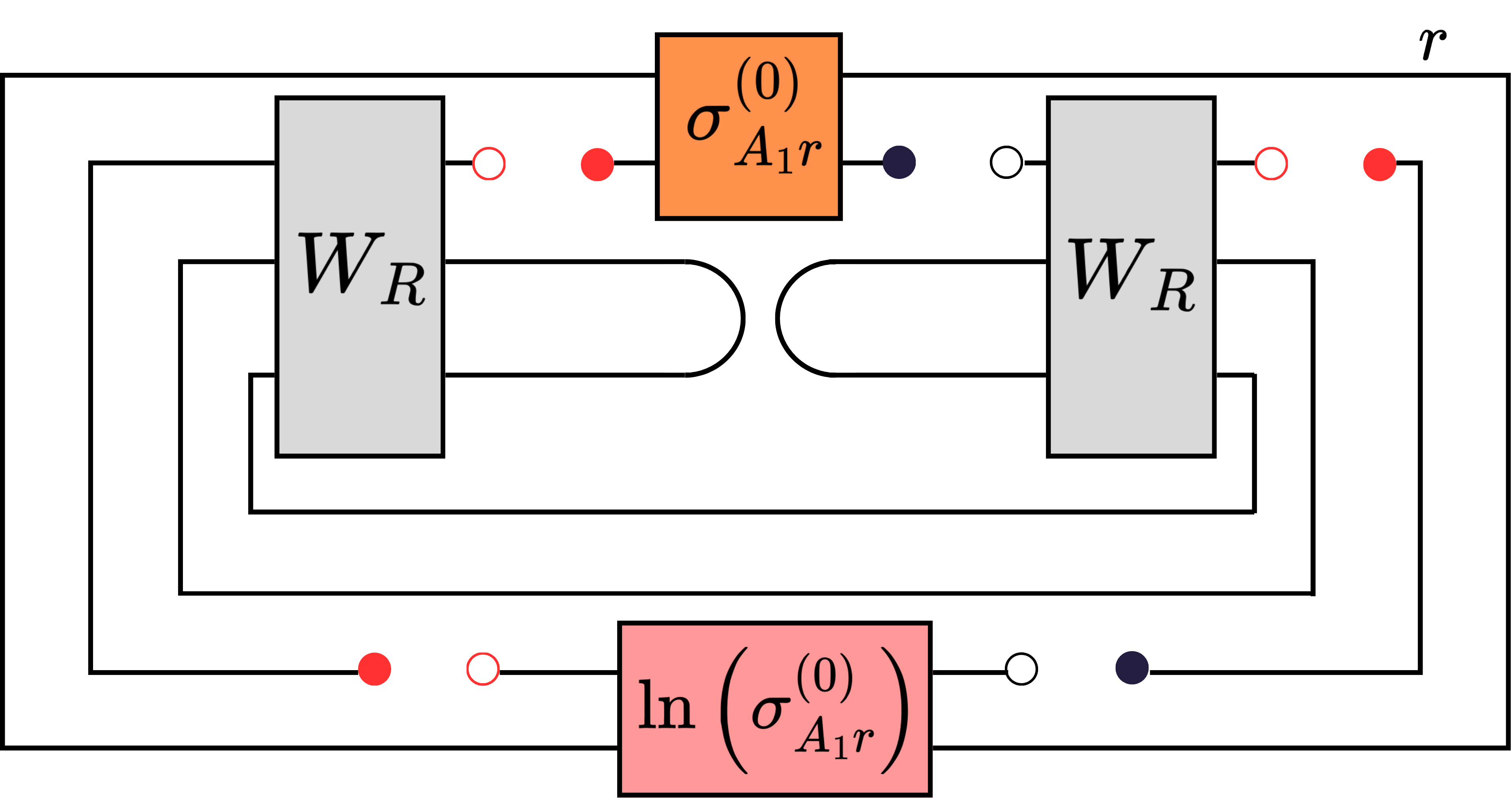}}}
\, \, \right)\nonumber\\
& \hspace{-5mm} + \frac{\epsilon^2}{d_\chi^3}\int dU \int_0^\infty ds \left(\, \,
\vcenter{\hbox{\includegraphics[height=9em]{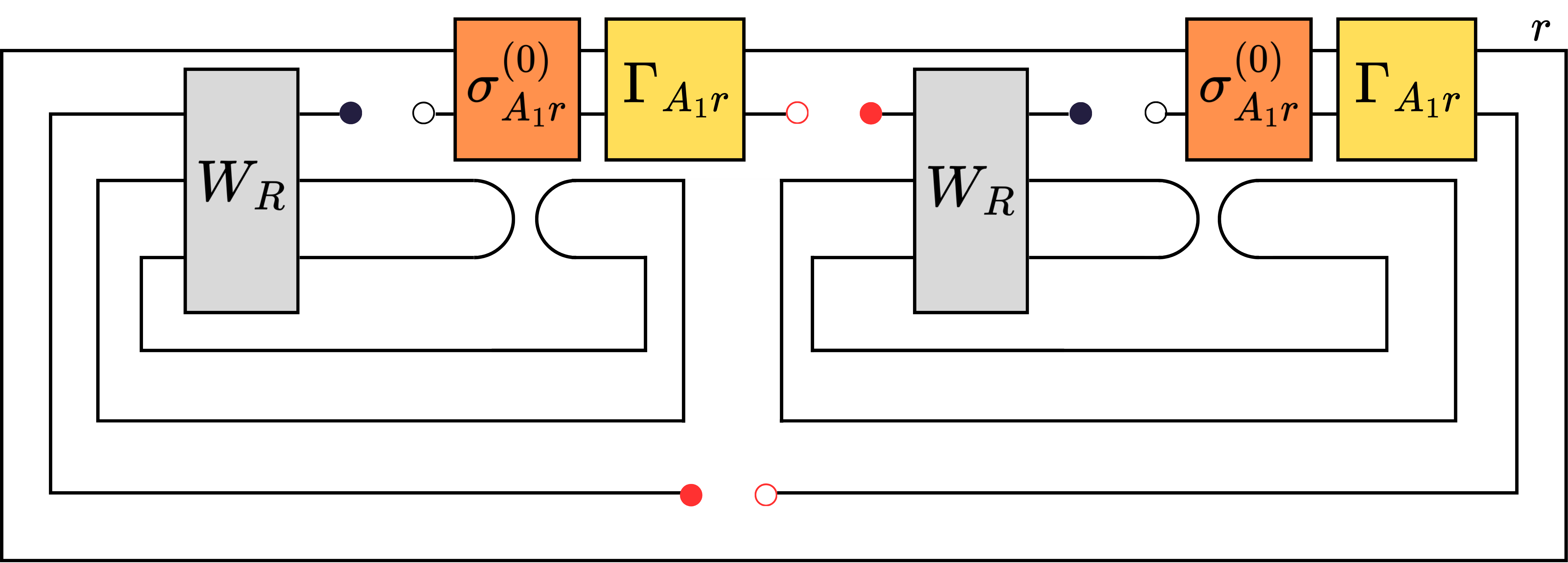}}}\right.
\notag\\ & \hspace{20mm} -2\,\quad \vcenter{\hbox{\includegraphics[height=9.7em]{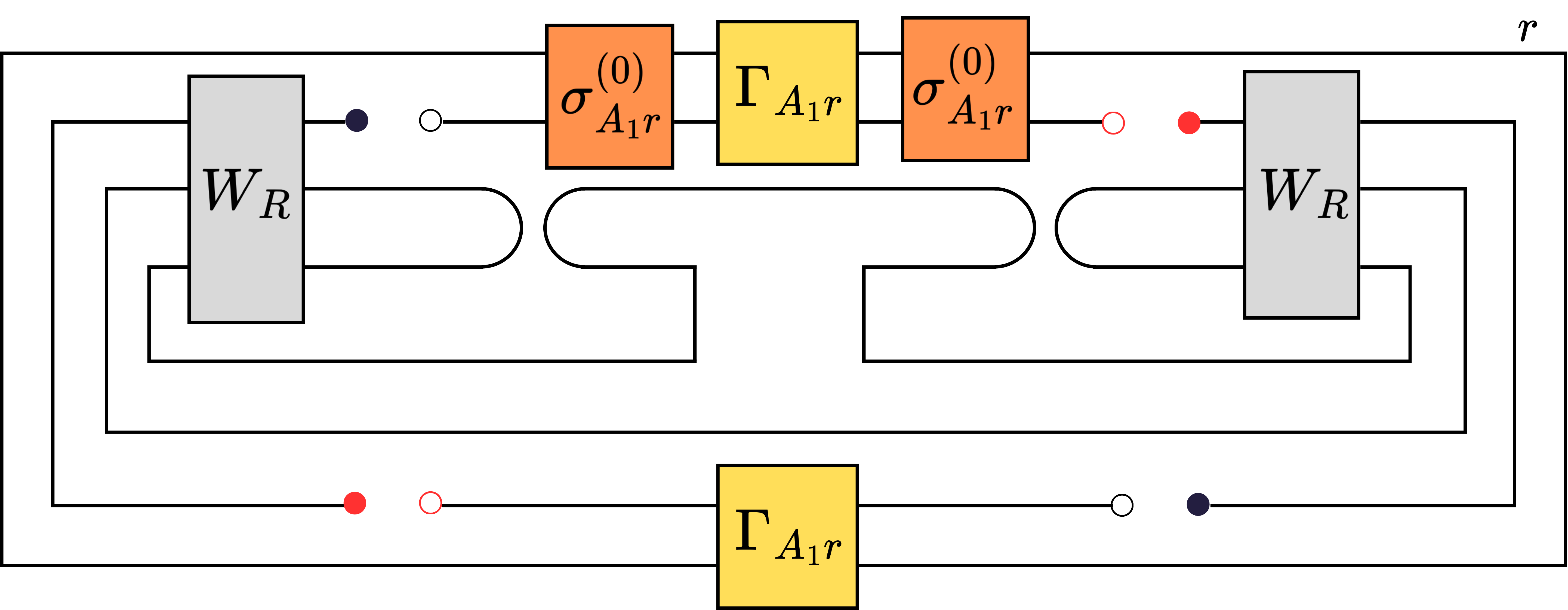}}}
 \notag\\ & \hspace{25mm} \left. + \vcenter{\hbox{\includegraphics[height=9.7em]{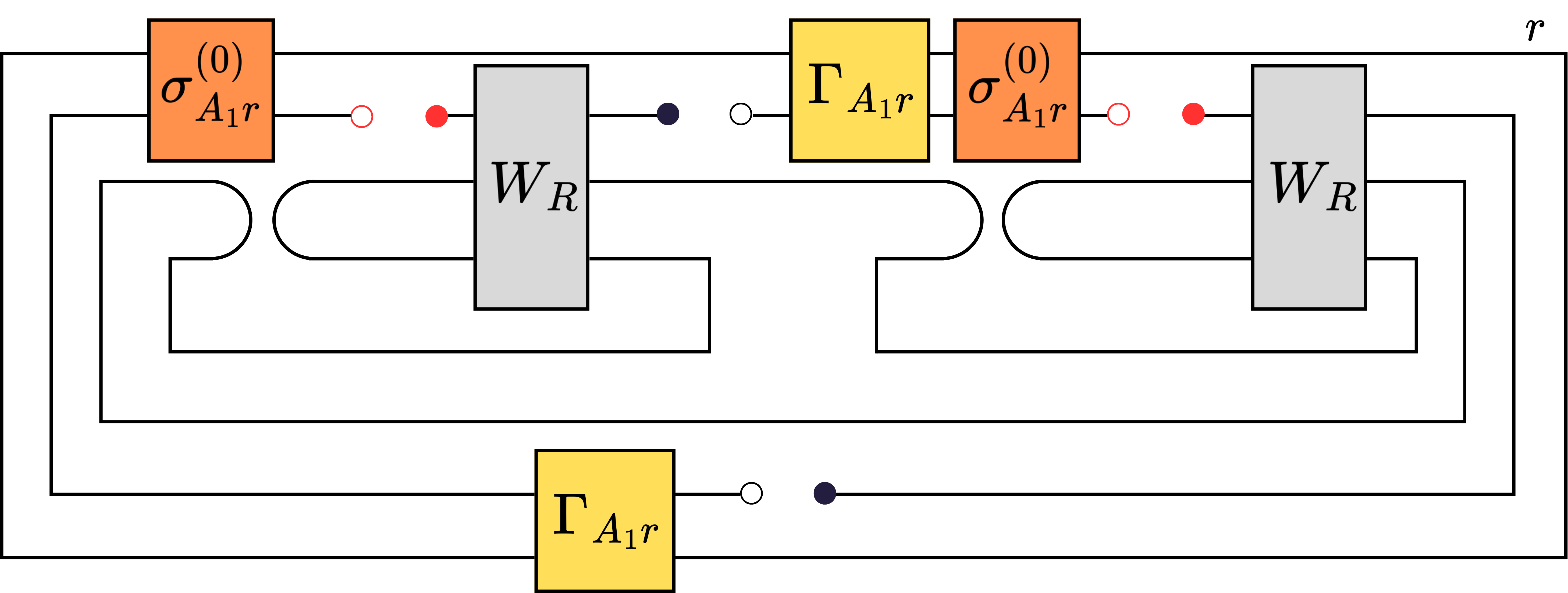}}}
\, \right).
\end{align}

In these diagrams, the outermost loop (labeled by $r$) represents the reference Hilbert space $\mathcal{H}_r$.
We define the resolvent as
\begin{equation}
\Gamma_{A_1 r}(s)
=
\left(\frac{\sigma^{(0)}_{A_1 r}}{d_{\chi}} + s\, I_{A_1 r}\right)^{-1}.
\end{equation}
Integrating over $U$, we obtain the following expression in terms of the $W$- and $\mathscr{R}$-diagrams:
\begin{equation}
\begin{aligned}
\Big\langle S\big(\sigma^{(\epsilon)}_{A_1 r}\big)\Big\rangle
&=
\epsilon^2\Bigg\{
\frac{d}{d_\chi(d^2-1)}
\left(\mathscr{R}_8-\frac{1}{d}\mathscr{R}_9\right)
\left(W_5-\frac{1}{d}W_6\right)
\\
&\qquad
-\frac{1}{d_\chi^3}
\int_0^{\infty} ds\;
\frac{1}{d^2-1}
\left(\mathscr{R}_6-\mathscr{R}_7\right)
\operatorname{Tr}\left(W_7-\frac{1}{d}W_8\right)
\Bigg\}.
\end{aligned}
\label{eq:Lemma4.1:SrhoA1r}
\end{equation}

For the regulated family $\sigma^{(0)}_{A_1 r}(\Delta)$, the $\mathscr{R}$-terms in
Eq.~\eqref{eq:Lemma4.1:SrhoA1r} evaluates to 
\begin{align}
\mathscr{R}_8 - \frac{1}{d}\mathscr{R}_9
=&
\Tr\left(\sigma^{(0)}_{A_1 r}\,\ln \sigma^{(0)}_{A_1 r}\right)
-
\frac{1}{d}\,
\Tr\left[
\Tr_{A_1}\left(\sigma^{(0)}_{A_1 r}\right)\,
\Tr_{A_1}\left(\ln \sigma^{(0)}_{A_1 r}\right)
\right]\notag\\
=& \left((1-3\Delta)-\frac{1}{d^2}\right)\ln(1-3\Delta)
+\left(\Delta-\frac{d-1}{d^2}\right)\ln\!\left(\frac{\Delta}{d-1}\right)
\notag\\ &+\left(2\Delta-\frac{d-1}{d}\right)\ln\!\left(\frac{2\Delta}{d(d-1)}\right)\notag\\
\int_0^\infty ds \Big(\mathscr{R}_6 - \frac{1}{d}\mathscr{R}_7\Big)
=&
\int_0^\infty ds\Tr\left[\Tr_{A_1}\left(\Gamma_{A_1 r}(s) \right)\Tr_{A_1}\left((\sigma^{(0)}_{A_1 r})^2 \ \Gamma_{A_1 r}(s)\right)-\left[\Tr_{A_1}(\sigma^{(0)}_{A_1 r} \Gamma_{A_1 r}(s))\right]^2\right]\notag \\ \
=& dd_\chi \left((1-3\Delta)-\frac{1}{d^2}\right)\ln(1-3\Delta)
+\left(\Delta-\frac{d-1}{d^2}\right)\ln\!\left(\frac{\Delta}{d-1}\right)
\notag\\ &+\left(2\Delta-\frac{d-1}{d}\right)\ln\!\left(\frac{2\Delta}{d(d-1)}\right)
\end{align}
Now, substituting this back in the Eq.\eqref{eq:Lemma4.1:SrhoA1r}, and writing the $W$-diagrams in terms of $J$ and $D$,  we get expression $\Big\langle S\big(\sigma^{(\epsilon)}_{A_1 r};\Delta\big)\Big\rangle$, in leading order of $\Delta$, as following.
\begin{align}
\Big\langle S\big(\sigma^{(\epsilon)}_{A_1 r};\Delta\big)\Big\rangle
&=
-\frac{\,\epsilon^{2}d}{d_\chi (d^2-1)}\,
\Bigg(\frac{d-1}{d^2}\left[d \log \left(\frac{2}{d}\right)+(d+1) \log \left(\frac{\Delta}{d-1}\right)\right]\Bigg)
\Bigg[\Big(
\Tr_{A_1A_2}(J^{2})
\notag\\&-\frac{1}{d}\,\Tr_{A_2}\!\left(\Tr_{A_1}(J)^{2}\right)
-\frac{1}{d_\chi}\,\Tr_{A_1}\!\left(\Tr_{A_2}(J)^{2}\right)
+\frac{1}{d d_\chi}\,\Tr_{A_1A_2}(J)^{2}
\Big)\notag\\&+\Big(
\Tr_{A_1A_2}((D)^{2})
-\frac{1}{d}\,\Tr_{A_2}\!\left(\Tr_{A_1}(D)^{2}\right)
\Big)\Bigg] + O(\epsilon^3, \Delta)\notag\\
 = &
-2\,\epsilon^{2}\,
\Bigg(\frac{d-1}{d^2}\left[d \log \left(\frac{2}{d}\right)+(d+1) \log \left(\frac{\Delta}{d-1}\right)\right]
\Bigg)
\Big[c_1(J)+c_2(D)\Big]\notag \\ & \qquad \qquad +O(\epsilon^3, \Delta)\notag.
\end{align}

The regulated coherent information is therefore
\begin{align}
\label{eq:regcoh}
I_c^{(\Delta)}\big(\mathcal{N}_{R^{(\epsilon)}}\big)
&=
\ln d
-
2\,\epsilon^{2}\,
\Bigg(\frac{d-1}{d^2}\left[d \log \left(\frac{d}{2}\right)+(d+1) \log \left(\frac{d-1}{\Delta}\right)\right]
\Bigg)
\Bigg[c_1(J)+c_2(D)\Bigg]\notag \\& \qquad \qquad \quad \qquad + O(\epsilon^3, \Delta).
\end{align}

The $\Delta$-dependence of the $O(\epsilon^2)$ term is dominated by $\log \Delta$; all omitted contributions are $O(1)$ or $O(\Delta)$ as $\Delta\to 0$.

Finally, we optimize over the recovery. The prefactor multiplying $c_1(J)+c_2(D)$ in Eq.~\eqref{eq:regcoh} is strictly positive for $0<\Delta\le 1/3$ and $d\ge 2$, and the diagrammatic coefficients $c_1(J)$ and $c_2(D)$ are nonnegative (see Eq.~\ref{ref-PQ}). Hence maximizing $I_c^{(\Delta)}$ is equivalent to minimizing $c_1(J)+c_2(D)$ over the allowed recovery operations.

Recall that the effective perturbation (conjugated by recovery unitaries) is 
\begin{equation}
e^{i\epsilon W_R}
=
R_A^{(\epsilon)}\,R_{\bar A}^{(\epsilon)}\,
e^{i\epsilon W}\,
R_A^{(0)\dagger}\,R_{\bar A}^{(0)\dagger},
\end{equation}
Composing the recovery with additional local unitaries,
\[
R_A^{(\epsilon)} \mapsto e^{i\epsilon O_A}\,R_A^{(\epsilon)},
\qquad
R_{\bar A}^{(\epsilon)} \mapsto e^{i\epsilon O'_{\bar A}}\,R_{\bar A}^{(\epsilon)},
\]
shifts $W_R$ (to leading order in $\epsilon$) as
\begin{equation}
W_R \longrightarrow W_R + O_A\otimes I_{\bar A} + I_A\otimes O'_{\bar A}.
\label{eq:WR-shift}
\end{equation}

By the definition of the operators $J$ and $D$ (see Fig.~\ref{ref-PQ}), this implies
\begin{equation}
J \longrightarrow J + 2O_A + 2I_{A_1}\otimes O'_{A_2},
\qquad
D \longrightarrow D .
\label{eq:PQ-shift}
\end{equation}
Thus local adjustments can change $J$ but leave $D$ invariant. In particular, $c_2(D)$ is unaffected, whereas $c_1(J)$ can be minimized by an appropriate choice of $O_A$ (and $O'_{A_2}$). For example, choosing $O_A=-\tfrac12 J$ and $O'_{A_2}=0$ sets the shifted $J$ to zero and yields the maximal value of $I_c^{(\Delta)}$ within this class of local variations.

\end{proof}

\subsection{Theorem~\ref{thm:monotonic-pure}}\label{app:monotonic-pure}
\begin{proof}
We take the bulk input state in the pure-state case (cf.~Eq.~\eqref{eq:bulk_state_pure_case}) and express it in Schmidt form:
\begin{equation}
|\psi\rangle_L
=
\sum_{i=1}^{d}\sqrt{\lambda_i}\,|i\rangle_{a}\,|i\rangle_{\bar a}, \qquad \lambda_i>0
\end{equation}
Here, we denote
$d=\dim(\mathcal{H}_{A_1})$ and $\bar d=\dim(\mathcal{H}_{\bar A_1})$.

In what follows, we assume $\bar d \ge d$ so that the Taylor expansion of the entropy is well defined. The case $d>\bar d$ can be recovered as a limiting case by taking some of the Schmidt coefficients to approach zero.  

After skewed encoding and decoding the effective perturbation, the recovered state on the full physical Hilbert space
$\mathcal{H}_{A\bar A}$ is
\begin{equation}
\sigma^{(\epsilon)}_{A\bar A}
=
e^{i\epsilon W_R}
\Big(
|\psi\rangle\langle\psi|_{A_1\bar A_1}
\otimes
|\chi\rangle\langle\chi|_{A_2\bar A_2}
\Big)
e^{-i\epsilon W_R}.
\end{equation}

In the unperturbed limit $\epsilon\to 0$, the recovered state factorizes as
\begin{align}
\sigma^{(0)}_{A\bar A}
&=
|\psi\rangle\langle\psi|_{A_1\bar A_1}
\otimes
\chi_{A_2\bar A_2},
\label{eq:unpert_bdAAbar}
\end{align}
where we define
\begin{equation}
\chi_{A_2\bar A_2}
:=
|\chi\rangle\langle\chi|_{A_2\bar A_2},
\qquad
\chi_{A_2}
:=
\operatorname{Tr}_{\bar A_2}\!\left[\chi_{A_2\bar A_2}\right].
\end{equation}

The corresponding reduced unperturbed states are
\begin{align}
\sigma^{(0)}_{A_1A_2}
&=
\operatorname{Tr}_{\bar A_1\bar A_2}\!\left[\sigma^{(0)}_{A\bar A}\right]
=
\operatorname{Tr}_{\bar A_1}\!\left[
|\psi\rangle\langle\psi|_{A_1\bar A_1}
\right]
\otimes
\chi_{A_2},
\label{eq:unpert_bd}\\
\sigma^{(0)}_{A_1\bar A_1}
&=
\operatorname{Tr}_{A_2\bar A_2}\!\left[\sigma^{(0)}_{A\bar A}\right]
=
|\psi\rangle\langle\psi|_{A_1\bar A_1},
\label{eq:unpert_A1A1bar}\\
\sigma^{(0)}_{A_1}
&=
\operatorname{Tr}_{A_2\bar A_1\bar A_2}\!\left[\sigma^{(0)}_{A\bar A}\right]
=
\operatorname{Tr}_{\bar A_1}\!\left[
|\psi\rangle\langle\psi|_{A_1\bar A_1}
\right].
\label{eq:unpert_bulk}
\end{align}

In the Schmidt basis of $|\psi\rangle_{A_1\bar A_1}$, the reduced state on $A_1$ is diagonal with eigenvalues $\{\lambda_i\}$, and hence admits the spectral decomposition
\begin{equation}
\sigma^{(0)}_{A_1}
=
\sum_{i=1}^{d} \lambda_i\, |i\rangle\langle i|_{A_1},
\end{equation}

We now evaluate the Haar-averaged correction $\langle S_{\mathrm{corr}}\rangle$.
Since the unperturbed bulk state is supported on $A_1\bar A_1$, we average independently over local unitaries acting on each factor.
Concretely, the logical state is conjugated by local unitaries, 
\begin{equation}
|\psi\rangle\langle\psi|_{A_1\bar A_1}
\;\longrightarrow\;
(U_{A_1}\otimes V_{\bar A_1})\,
|\psi\rangle\langle\psi|_{A_1\bar A_1}\,
(U_{A_1}^\dagger\otimes V_{\bar A_1}^\dagger),
\end{equation}
where $U$ acts on $A_1$ and $V$ acts on $\bar A_1$, and then average over $U$ and $V$ with respect to the Haar measure. We emphasize that we do not average with a single global Haar unitary on $A_1\bar A_1$, because such a transformation would generically modify the entanglement structure between $A_1$ and $\bar A_1$. By restricting to independent local Haar rotations, we randomize only local bases on each factor while preserving the intrinsic bipartite entanglement content of the bulk state. The resulting unperturbed rotated states are
\begin{align}
\sigma^{(0)}_{A\bar A}(U,V)
&=
(U\otimes V)\,
\sigma^{(0)}_{A_1\bar A_1}\,
(U^\dagger\otimes V^\dagger)
\otimes
\chi_{A_2\bar A_2},
\label{eq:sigma0-UV-full}\\
\sigma^{(0)}_{A_1A_2}(U,V)
&=
\operatorname{Tr}_{\bar A_1\bar A_2}\!\left[
\sigma^{(0)}_{A\bar A}(U,V)
\right],\\
\sigma^{(0)}_{A_1}(U,V)
&=
\operatorname{Tr}_{A_2\bar A_1\bar A_2}\!\left[
\sigma^{(0)}_{A\bar A}(U,V)
\right].
\end{align}
The corresponding skewed state on $A\bar A$ is
\begin{equation}
\sigma^{(\epsilon)}_{A\bar A}(U,V)
=
e^{i\epsilon W_R}\,
\sigma^{(0)}_{A\bar A}(U,V)\,
e^{-i\epsilon W_R},
\label{eq:sigmaeps-UV-full}
\end{equation}
with reduced states defined by partial tracing as in Eq.~\eqref{rhoM-general}. Taking the expansion in $\epsilon$ as Eq.~\eqref{eq:expandstate}, we obtain the Haar-averaged correction in the form
\begin{align}
\langle S_{\mathrm{corr}}\rangle
=
\frac{\epsilon^2}{2}
\int dU\,dV
\Bigg[
&\operatorname{Tr}\!\left(
\delta^{(1)}\sigma_{A_1A_2}(U,V)\,
D_{\ln}\big(\sigma^{(0)}_{A_1A_2}(U,V)\big)
\!\left[
\delta^{(1)}\sigma_{A_1A_2}(U,V)
\right]
\right)
\nonumber\\
&-
\operatorname{Tr}\!\left(
\delta^{(1)}\sigma_{A_1}(U,V)\,
D_{\ln}\big(\sigma^{(0)}_{A_1}(U,V)\big)
\!\left[
\delta^{(1)}\sigma_{A_1}(U,V)
\right]
\right)
\Bigg].
\label{eq:Scorr-Haar-pure}
\end{align}
while the first-order corrections are
\begin{align}
\delta^{(1)}\sigma_{A_1A_2}(U,V)
&=
i\,
\operatorname{Tr}_{\bar A_1\bar A_2}\!\left(
\big[
W_R,\,
\sigma^{(0)}_{A\bar A}(U,V)
\big]
\right),
\label{eq:delta1-bdry-pure}\\
\delta^{(1)}\sigma_{A_1}(U,V)
&=
i\,
\operatorname{Tr}_{A_2\bar A_1\bar A_2}\!\left(
\big[
W_R,\,
\sigma^{(0)}_{A\bar A}(U,V)
\big]
\right).
\label{eq:delta1-bulk-pure}
\end{align}

Substituting \eqref{eq:delta1-bdry-pure}--\eqref{eq:delta1-bulk-pure} into
\eqref{eq:Scorr-Haar-pure} and expanding the commutators into the resulting
trace terms using the diagrammatic convention and rules described in Appendix~\eqref{WGcalculus}, we obtain
\begin{align}
\big\langle S_{\mathrm{corr}}\big\rangle
= \frac{\epsilon^2}{2d_{\chi}^2}
\int ds \int dU &\left( \,
-\vcenter{\hbox{\includegraphics[height=8.7em]{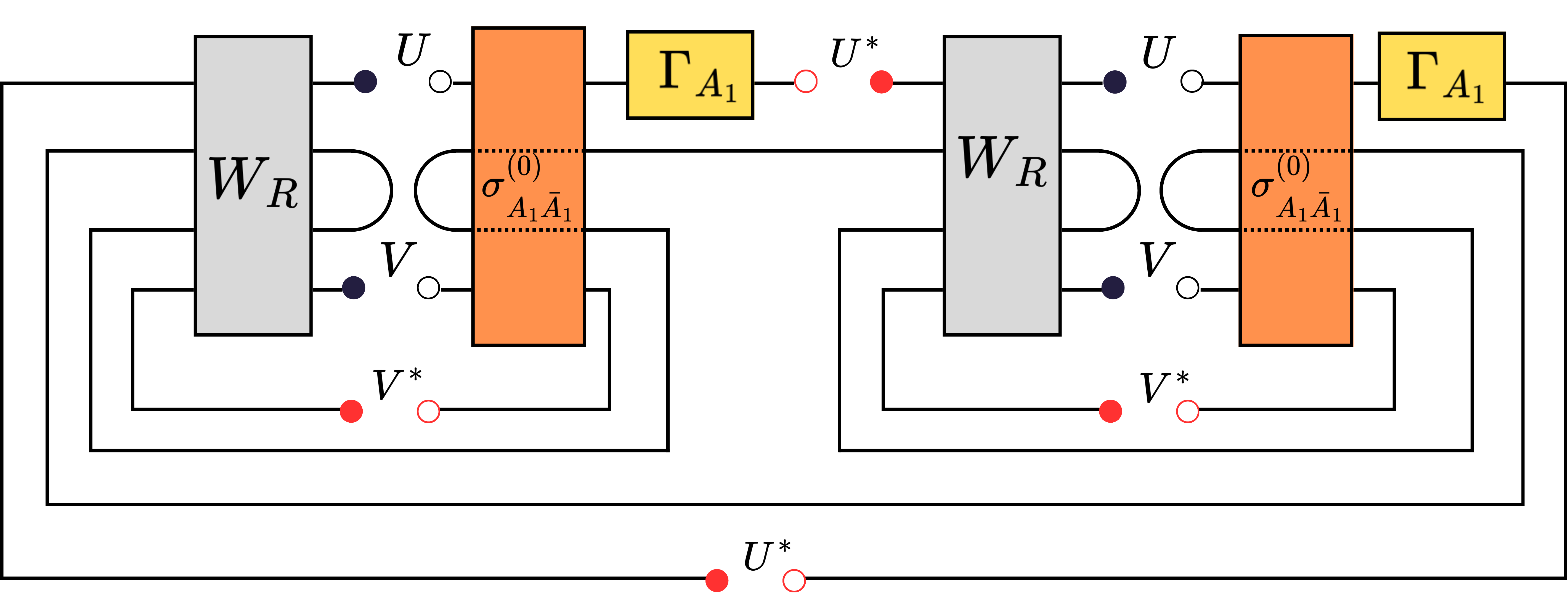}}}
\right.\notag\\
+ 2\,&\vcenter{\hbox{\includegraphics[height=8.7em]{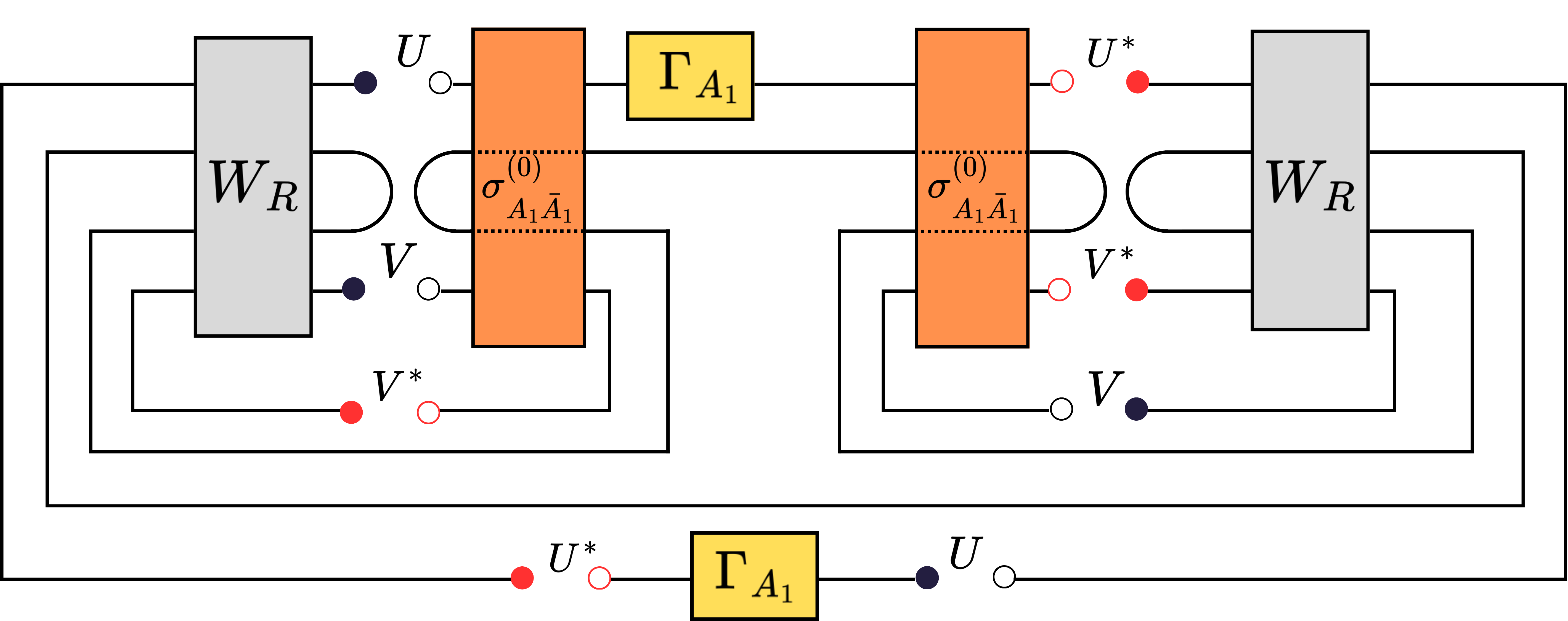}}}
\notag\\
-&\vcenter{\hbox{\includegraphics[height=8.7em]{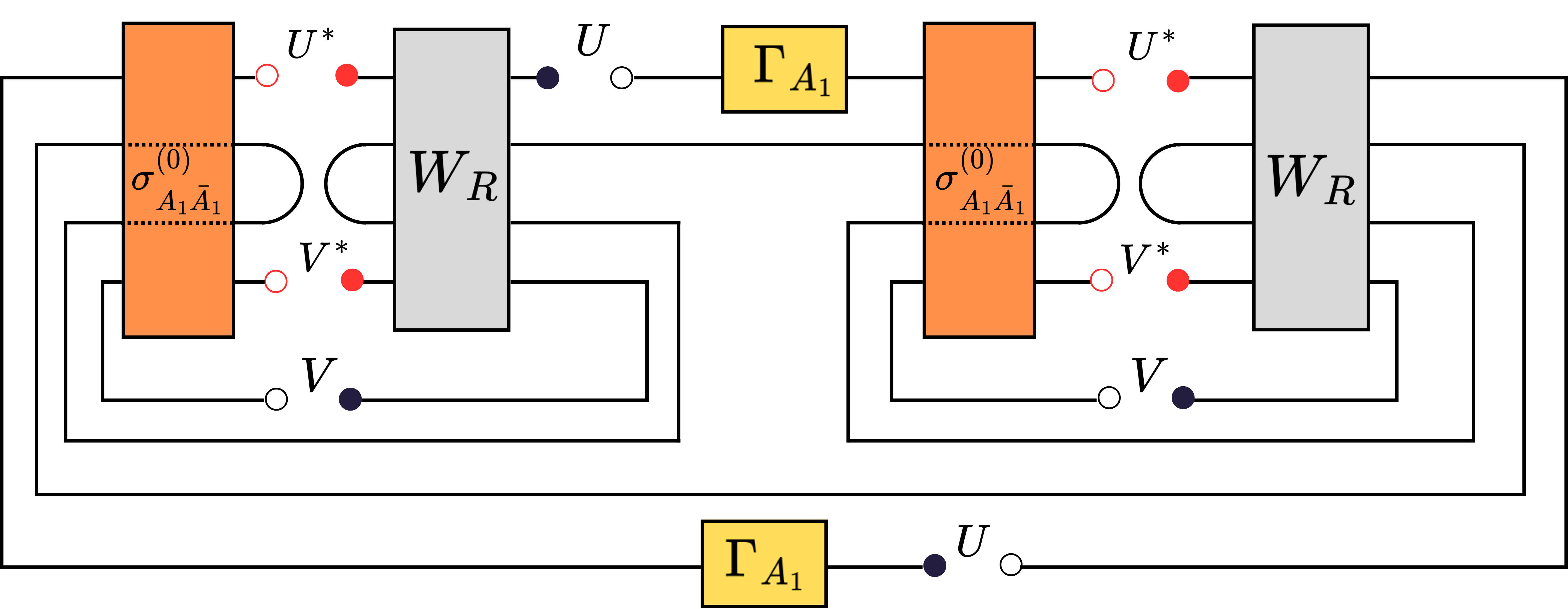}}}
\notag\\
+ \frac{1}{d_\chi}
  &\vcenter{\hbox{\includegraphics[height=8.7em]{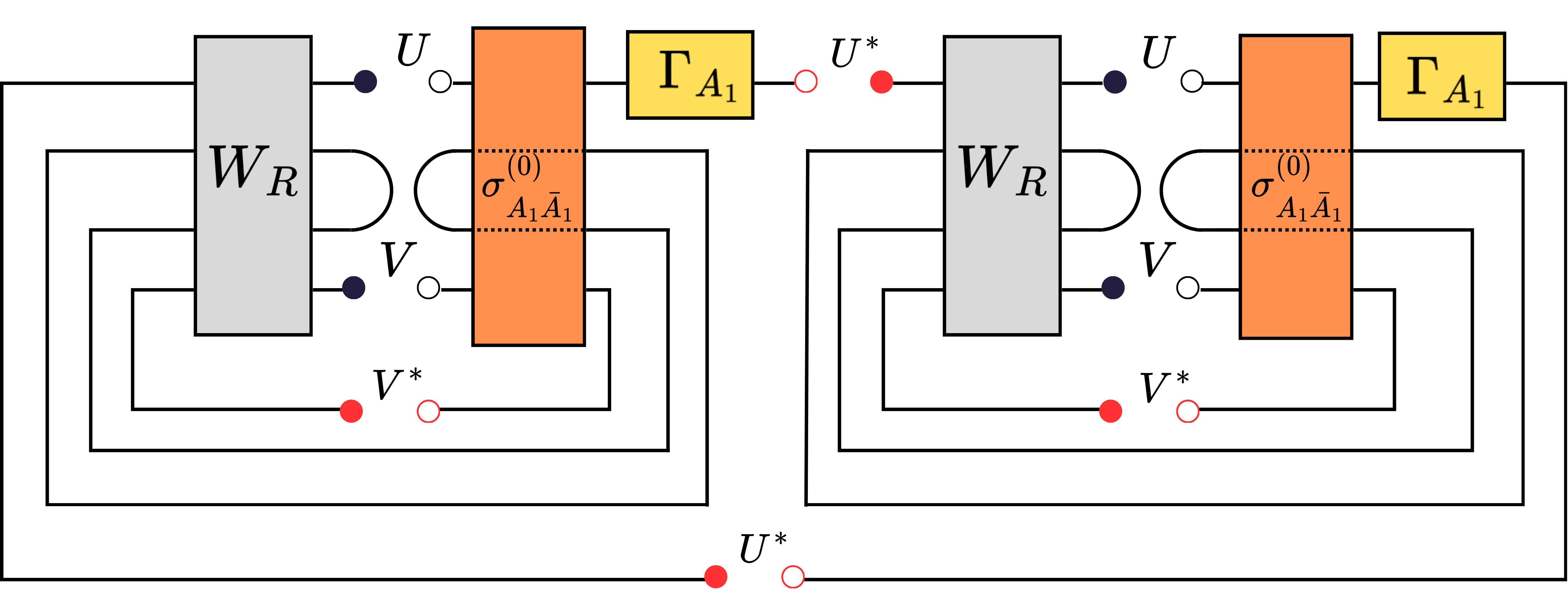}}}
\notag\\
-\frac{2}{d_\chi}
  &\vcenter{\hbox{\includegraphics[height=8.7em]{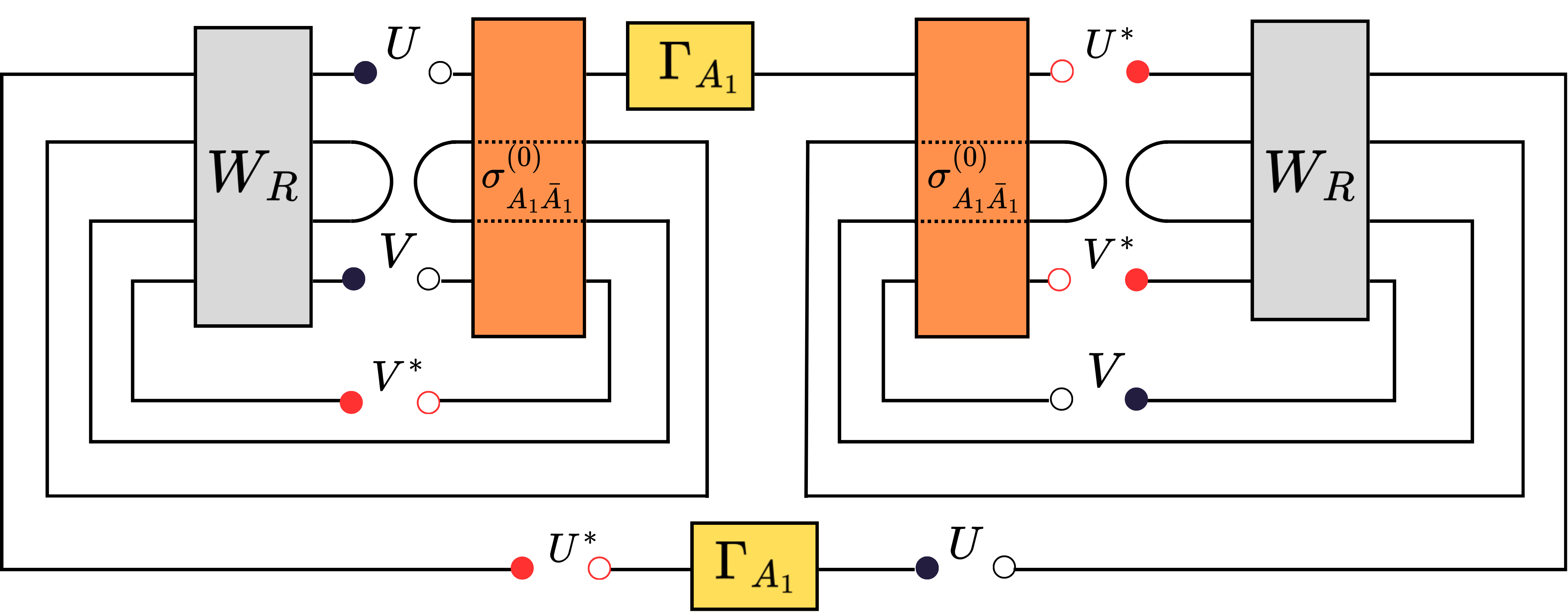}}}
\notag\\
+ \frac{1}{d_\chi}
  &\left.\vcenter{\hbox{\includegraphics[height=8.7em]{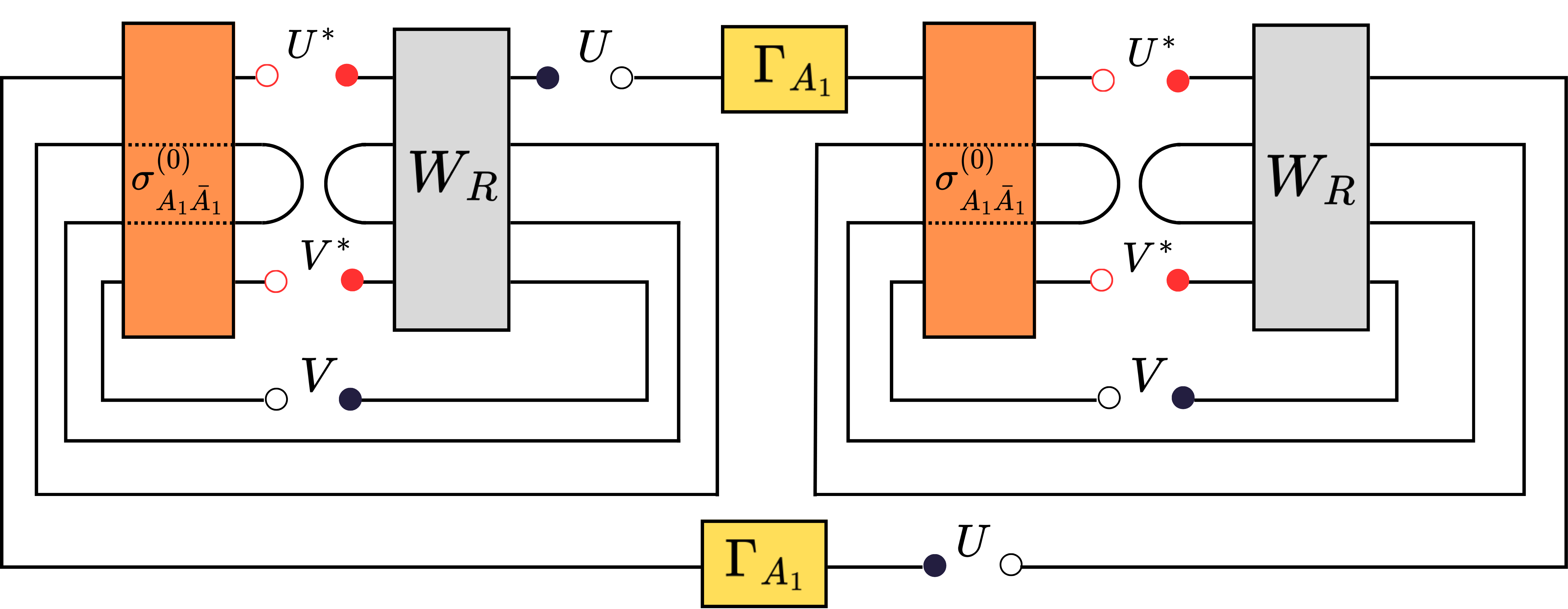}}}
\, \right)
\end{align}
Applying the Weingarten calculus together with the diagrammatic Haar-averaging rules summarized in Appendix~\eqref{WGcalculus-rules} yields
\begin{equation}
\begin{aligned}
\langle S_{corr}\rangle =&\frac{1}{2d_{\chi}d\bar d (d^2-1)(\bar d^2-1)}\epsilon^2 \int_0^{\infty} ds\, 
\Big[\, \\
&\widetilde{\mathscr{R}}_1(s)\,\Big(2\widetilde W_{10}- 2 \bar d\widetilde W_{9}-2d\widetilde W_{12}
+2d \bar d\widetilde W_{11}+2d\bar d \widetilde W_{8}-2d\widetilde W_{5}-2\bar d\widetilde W_{7}-2\widetilde W_{6} \notag \\ 
&\qquad-\left(1+ d \bar d\right)\widetilde W_{2} 
+\left(d+\bar d\right)\widetilde W_{1}+\left(d+\bar d\right)\widetilde W_{3}-\left(1+d\bar d\right)\widetilde W_{4}+\left(d+\bar d\right)\widetilde W_{15}\notag \\
&\qquad -\left(1+d\bar d\right)\widetilde W_{13}
-\left(1+d\bar d\right)\widetilde W_{14}+\left(d+\bar d\right)\widetilde W_{16}
\Big)\\
&+\widetilde{\mathscr{R}}_2(s)\,\Big( 
2\bar d\widetilde W_{12}-2\widetilde W_{11}-2d\bar d\widetilde W_{10}
+2d\widetilde W_{9}+2d\bar d\widetilde{W}_8-2d \widetilde{W}_5-2\bar d \widetilde{W}_7+2\widetilde{W}_6\Big)
\\
&+\widetilde{\mathscr{R}}_3(s)\,\Big(2d\ \widetilde W_{10}-2d\bar d \ \widetilde W_{9} -2 \ \widetilde W_{12}+2\bar d \ \widetilde W_{11}-2d \ \widetilde W_{8}+2d\bar d\ \widetilde W_{5}+2 \ \widetilde W_{7}-2 \bar d \ \widetilde W_{6}
\Big)\\
&+\widetilde{\mathscr{R}}_4(s)\,\Big(-(d+\bar d)\widetilde W_2+(1+d\bar d)\widetilde W_1+(1+d\bar d)\widetilde W_3-(d+\bar d)\widetilde W_4+2d\widetilde W_8\\
&\qquad -2d\bar d\widetilde W_5-2\widetilde W_7+2\bar d\widetilde W_6+2\bar d\widetilde W_{10}-2\widetilde W_9-2d\bar d\widetilde W_{12}+2d\widetilde W_{11}+(1+d\bar d)\widetilde W_{15}\\
&\qquad-(d+\bar d)\widetilde W_{13}-(d+\bar d)\widetilde W_{14}+(1+d\bar d)\widetilde W_{16}\Big)\Big].
\end{aligned}
\end{equation}
where $\widetilde W_i$ and $\widetilde{\mathscr{R}}_i(s)$ are given in Table. \eqref{W-tilde-R-tilde-diagrams}. They're functions of the state on $A_1\bar A_1$:
\begin{eqns}
    \sigma_{A_1\bar A_1}^{(0)}=\sum_{i,j=1}^d\sqrt{\lambda_i\lambda_j}\ket{ii}_{A_1\bar A_1}\bra{jj}. 
\end{eqns}

Substituting this into the definition of $\widetilde{\mathscr{R}}i(s)$ and using
\begin{eqns}
    \sum_{i=1}^d\sqrt{\lambda_i}\ket{i}\bra{i}=\sqrt{\sigma_{A_1}^{(0)}}, 
\end{eqns}
we obtain $\widetilde{\mathscr{R}}_1(s)=\widetilde{\mathscr{R}}_5(s)=\mathscr{R}_1(s)$ and $\widetilde{\mathscr{R}}_2(s)=\mathscr{R}_3(s)$ as functions of the eigenvalues ${\lambda_i}$. Equivalently, they are determined by the same spectral functions $f_1(\lambda)$ and $f_2(\lambda)$, defined (as linear combinations) in Eq.~\eqref{eq:f1f2}.
$\int ds\,\widetilde{\mathscr{R}}_4(s)=\int ds\, \widetilde{\mathscr{R}}_6(s)$ is a constant function independent of entanglmeent spectrum, while $\widetilde{\mathscr{R}}_3(s)$ gives rise to a new spectra function $f_3(\lambda)$,
\begin{eqns}
    f_3(\lambda):=&\frac{1}{d\bar dd_{\chi}}\int_0^{\infty} ds\,\widetilde{\mathscr{R}}_3(s)\\
    =&\frac{1}{d\bar dd_{\chi}}\int ds \Tr\lrp{\rho\Gamma(s)}\Tr(\rho)\Tr\lrp{\Gamma(s)}\\
    =&\frac{1}{d\bar d}\sum_{ij}\frac{\lambda_i+\lambda_j}{2(\lambda_i-\lambda_j)}\ln\frac{\lambda_i}{\lambda_j}.
\end{eqns} 
In analogy with Eq.~\eqref{eq:JD-mixed}, we define the following pair of Hermitian matrices:
\begin{eqns}\label{eq:PQ}
    J:=d_\chi\Tr_{\bar A_2}\left(\{W_R,\chi_{A_2\bar A_2}\}\right)\\
    D:=id_\chi\Tr_{\bar A_2}\left([W_R,\chi_{A_2\bar A_2}]\right)
\end{eqns}

Since $\bar A_2$ is traced out, both $J$ and $D$ are supported on $A_1$, $\bar A_1$, and $A_2$. Their diagrammatic representations are shown in Fig.~\ref{ref-JD-pure}.
\begin{eqns}
J_{bj \alpha;ai \beta}&=\vcenter{\hbox{\includegraphics[height=5em]{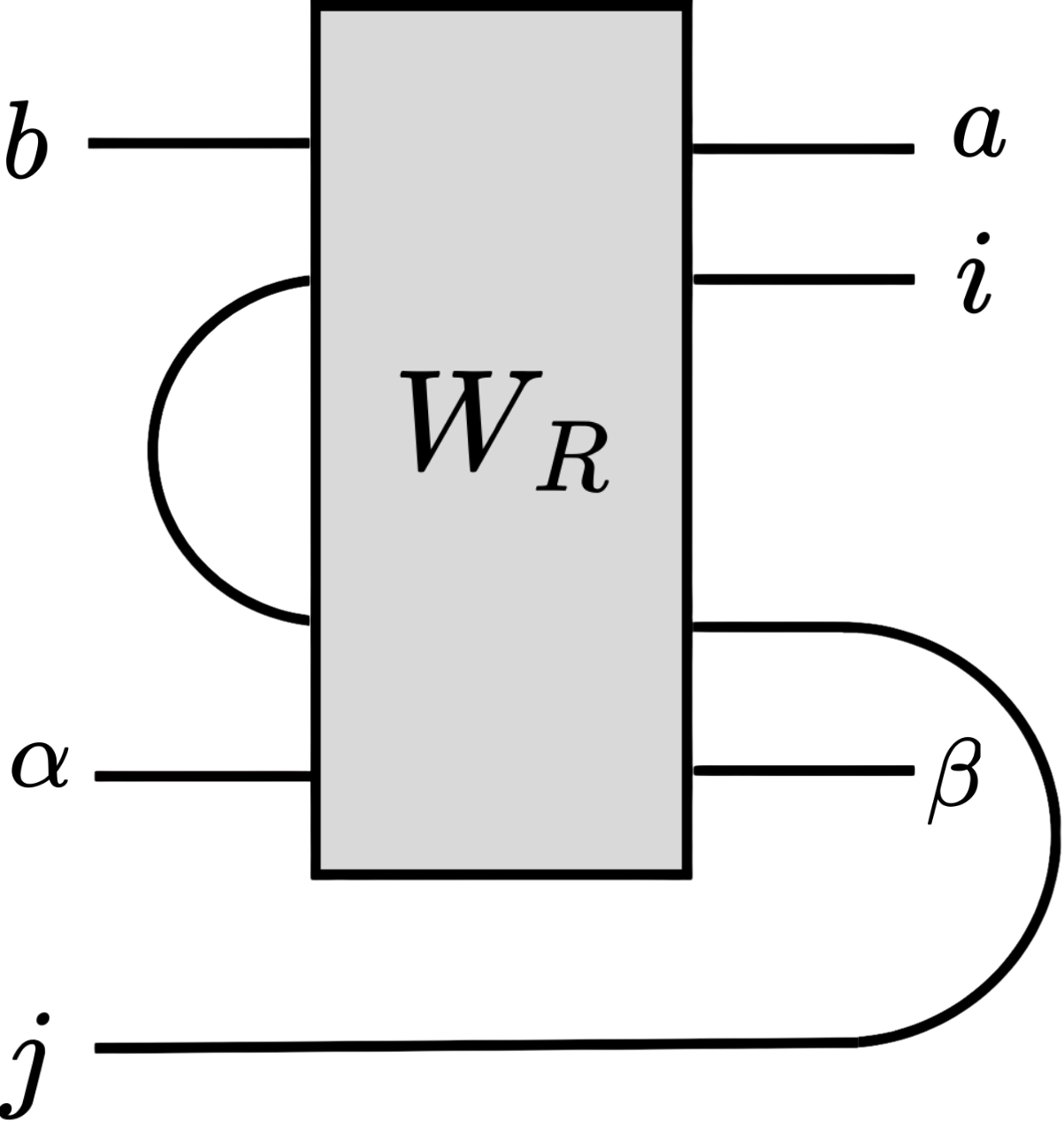}}}\,+\,\vcenter{\hbox{\includegraphics[height=5em]{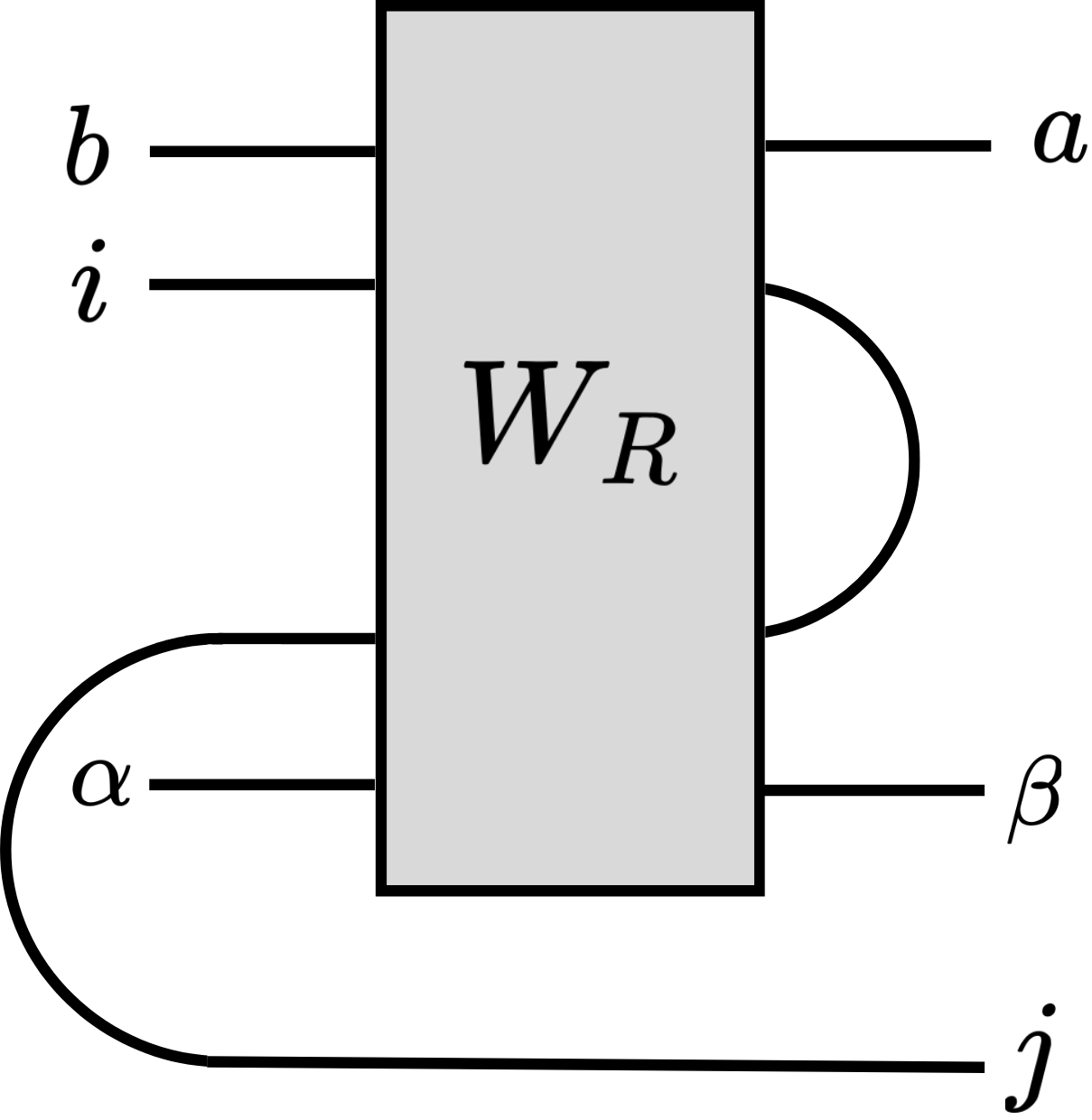}}}\equiv \widetilde {\mathcal{W}}^{(1)}_{b j \alpha;ai \beta}+\widetilde {\mathcal{W}}^{(2)}_{bj \alpha;ai \beta}\\
iD_{bj \alpha;ai \beta}&=\vcenter{\hbox{\includegraphics[height=5em]{theory_draft_images/J2-diag-mixed.png}}}\, - \,\vcenter{\hbox{\includegraphics[height=5em]{theory_draft_images/J1-diag-mixed.png}}}\equiv \widetilde {\mathcal{W}}^{(1)}_{bj;ai}-\widetilde {\mathcal{W}}^{(2)}_{bj;ai}
\label{ref-JD-pure}
\end{eqns}
Here, \(\widetilde {\mathcal{W}}^{(1)}_{bj \alpha;ai \beta }\) and \(\widetilde {\mathcal{W}}^{(2)}_{bj \alpha;ai \beta}\) denote the contributions from the first and second trace diagrams, respectively. Note that these quantities differ from $\mathcal{W}^{(1)}_{bj;ai}$ and $\mathcal{W}^{(2)}_{bj;ai}$ defined in Eq.~\eqref{eq:PQ}, because of the additional contribution from $\bar A_1$. 

Rearranging the terms and completing the squares, we obtain the following expression for the averaged $S_{corr}$:
\begin{eqns}
\langle S_{corr}\rangle=&\langle S_{corr}(J)\rangle+\langle S_{corr}(D)\rangle.
\end{eqns}
where
\begin{align}
    \langle S_{corr}(J)\rangle &=\frac{1}{4d_{\chi}d\bar d (d^2-1)(\bar d^2-1)}\epsilon^2\Bigg[\Big(d\bar d\Tr\big(Tr_{\bar A_1}(J)^2\big)-d\Tr\big(J^2\big)-\bar d\Tr\big(Tr_{A_1 \bar A_1}(J)^2\big)\notag\\&+\Tr\big(Tr_{A_1}(J)^2\big)-\frac{1}{d_\chi}\Big(d\bar d \Tr\big(Tr_{A_2 \bar A_1}(J)^2\big)-d\Tr\big(Tr_{A_2}(J)^2\big)-\bar d \Tr\big(J\big)^2\notag\\&+\Tr\big(Tr_{A_1A_2}(J)^2\big)\Big)
    \Big)df_1(\lambda)+\Bigg(d\Tr\big(Tr_{\bar A_1}(J)^2\big)-d\bar d\Tr\big(J^2\big)+\bar d\Tr\big(Tr_{A_1}(J)^2\big)\notag\\&-\Tr\big(Tr_{A_1 \bar A_1}(J)^2\big)+\frac{1}{d_\chi}\Big(d \Tr\big(Tr_{A_2 \bar A_1}(J)^2\big)-d\bar d\Tr\big(Tr_{A_2}(J)^2\big)- \Tr\big(J\big)^2 \notag\\&+\bar d\Tr\big(Tr_{A_1A_2}(J)^2\big)\Big)
    \Bigg)d\bar df_3(\lambda)
\Bigg]
\end{align}
\begin{align}
\langle S_{corr}(D)\rangle &= \frac{1}{4d_{\chi}d\bar d (d^2-1)(\bar d^2-1)}\epsilon^2\Bigg[\Big(\Tr\big(Tr_{\bar A_1}(D)^2\big)-\bar d \Tr\big(D^2\big)-d\Tr\big(\Tr_{A_1\bar A_1}(D)^2\big)\notag\\
&+d \bar d \Tr\big(Tr_{ A_1}(D^2)\big)\Big)df_1(\lambda)+\Big(-(1+d\bar d) \Tr\big(Tr_{\bar A_1}(D)^2\big)+(d+\bar d)\Tr\big(D^2\big)\notag\\&+(d+\bar d)\Tr\big(\Tr_{A_1\bar A_1}(D)^2\big)-(1+d\bar d) \Tr\big(Tr_{ A_1}(D^2)\big)\Big)f_2(\lambda)\  + \Big(-d\bar d\Tr\big(D^2\big)\notag\\&+d\Tr\big(Tr_{\bar A_1}(D)^2\big)-\Tr\big(\Tr_{A_1\bar A_1}(D)^2\big)+\bar d \Tr\big(Tr_{ A_1}(D^2)\big)\Big)d\bar df_3(\lambda)\Bigg]
\end{align}
We now expand $J$ and $D$ in a product Pauli basis adapted to the tensor-factor decomposition.
Let $\{P_r^{(A_1)}\}_{r=0}^{d^2-1}$ and $\{P_s^{(\bar A_1)}\}_{s=0}^{\bar d^{\,2}-1}$ be Pauli (or generalized-Pauli) bases
on $\mathcal{H}_{A_1}$ and $\mathcal{H}_{\bar A_1}$, respectively, and let
$\{P_t^{(A_2)}\}_{t=0}^{d_\chi^2-1}$ be a Pauli basis on $\mathcal{H}_{A_2}$.
We choose these bases to be orthonormal with respect to the Hilbert--Schmidt inner product,
\begin{equation}
\Tr\!\left(P_\alpha^{(X)}P_\beta^{(X)}\right)=\dim(\mathcal{H}_X)\,\delta_{\alpha\beta},
\qquad X\in\{A_1,\bar A_1,A_2\},
\end{equation}
With this notation,
\begin{eqns}
J&=\sum_{l=0}^{d^2-1}\sum_{m=0}^{\bar d^{\,2}-1}\sum_{n=0}^{d_\chi^2-1}
p_{lmn}(J)\,P_l^{(A_1)}\otimes P_m^{(\bar A_1)}\otimes P_n^{(A_2)},\\
D&=\sum_{l=0}^{d^2-1}\sum_{m=0}^{\bar d^{\,2}-1}\sum_{n=0}^{d_\chi^2-1}
p_{lmn}(D)\,P_l^{(A_1)}\otimes P_m^{(\bar A_1)}\otimes P_n^{(A_2)}.
\label{eq:JDexpansion_pure}
\end{eqns}
Since $\Tr_{A_2}(D)=0$, we have $p_{lm0}(D)=0$ for all $l,n$.

 Using the Pauli-basis coefficients of $D$, the trace expressions appearing in
$\langle S_{\mathrm{corr}}(D)\rangle$ can be written in terms of the following quadratic coefficient sums:
\begin{eqns}\label{eq:Dtermscoe}
A_D&:=\sum_{n>0}p_{00n}^2(D),\qquad
B_D:=\sum_{m,n>0}p_{0mn}^2(D),\qquad
C_D:=\sum_{l,n>0}p_{l0n}^2(D),\qquad
E_D:=\sum_{l,m,n>0}p_{lmn}^2(D).
\end{eqns}

In terms of these coefficients, the corresponding trace expressions become
\begin{align}
&\Tr\!\left(\Tr_{A_1\bar A_1}(D)^2\right)
=d^2\bar d^{\,2}d_{\chi}\sum_{n>0}p_{00n}^2(D)
:=d^2\bar d^{\,2}d_{\chi}\,A_D,\\
&\Tr\!\left(\Tr_{\bar A_1}(D)^2\right)
=d\bar d^{\,2}d_{\chi}\left(\sum_{n>0}p_{00n}^2(D)+\sum_{l,n>0}p_{l0n}^2(D)\right)
:=d\bar d^{\,2}d_{\chi}\left(A_D+C_D\right),\\
&\Tr\!\left(\Tr_{A_1}(D)^2\right)
=d^2\bar d\,d_{\chi}\left(\sum_{n>0}p_{00n}^2(D)+\sum_{m,n>0}p_{0mn}^2(D)\right)
:=d^2\bar d\,d_{\chi}\left(A_D+B_D\right),\\
&\Tr(D^2)
=d\bar d\,d_{\chi}\left(\sum_{n>0}p_{00n}^2(D)+\sum_{m,n>0}p_{0mn}^2(D)+\sum_{l,n>0}p_{l0n}^2(D)+\sum_{l,m,n>0}p_{lmn}^2(D)\right)\\
& \qquad \quad :=d\bar d\,d_{\chi}\left(A_D+B_D+C_D+E_D\right).
\end{align}

Similarly, we define the following quadratic coefficient sums of the Pauli-expansion coefficients $p_{lmn}(J)$:
\begin{eqns}\label{eq:Jtermscoe}
A_J&:=\sum_{n>0}p_{00n}^2(J),\qquad
B_J:=\sum_{m,n>0}p_{0mn}^2(J),\qquad
C_J:=\sum_{l,n>0}p_{l0n}^2(J),\\
E_J:&=\sum_{l,m,n>0}p_{lmn}^2(J),\qquad
F_J:=p_{000}^2(J),\qquad
G_J:=\sum_{m>0}p_{m00}^2(J),\\
H_J:&=\sum_{m>0}p_{0m0}^2(J),\qquad
I_J:=\sum_{l,m>0}p_{lm0}^2(J).
\end{eqns}

In complete analogy, using the Pauli-basis coefficients of $J$, the trace terms appearing in
$\langle S_{\mathrm{corr}}(J)\rangle$ can be written as
\begin{align}
\Tr\!\left(\Tr_{A_1 \bar{A}_1}(J)^2\right)
&=d^2 \bar{d}^2 d_\chi\left(A_J+F_J\right),
\label{eq:TrTrA1Abar1-J2}\\[6pt]
\Tr\!\left(\Tr_{\bar{A}_1}(J)^2\right)
&=d \bar{d}^2 d_\chi\left(A_J+C_J+F_J+G_J\right),
\label{eq:TrTrAbar1-J2}\\[6pt]
\Tr\!\left(\Tr_{A_1}(J)^2\right)
&=d^2 \bar{d} d_\chi\left(A_J+B_J+F_J+H_J\right),
\label{eq:TrTrA1-J2}\\[6pt]
\Tr\!\left(J^2\right)
&=d \bar{d} d_\chi\left(A_J+B_J+C_J+E_J+F_J+G_J+H_J+I_J\right),
\label{eq:Tr-J2}\\[8pt]
\Tr\!\left(\Tr_{A_2 \bar{A}_1}(J)^2\right)
&=d \bar{d}^2 d_\chi^{\,2}\left(F_J+G_J\right),
\label{eq:TrTrA2Abar1-J2}\\[6pt]
\Tr\!\left(\Tr_{A_2}(J)^2\right)
&=d \bar{d} d_\chi^{\,2}\left(F_J+G_J+H_J+I_J\right),
\label{eq:TrTrA2-J2}\\[6pt]
\Tr(J)^2
&=d^2 \bar d^{\,2} d_\chi^{\,2}\,F_J,
\label{eq:TrJ-sq}\\[6pt]
\Tr\!\left(\Tr_{A_1A_2}(J)^2\right)
&=d^2 \bar{d} d_\chi^{\,2}\left(F_J+H_J\right).
\label{eq:TrTrA1A2-J2}
\end{align}

Now we can expand $\langle S_{corr}(D)\rangle$ in terms of these coefficients in a concise form: 
\begin{eqns}
    \langle S_{corr}(D)\rangle=&\frac{\epsilon^2}{2}\left[\frac{1}{2} d^2 \left(\bd^2-1\right)C_Df_2(\lambda)+\frac{1}{2}\left(d^2-1\right)\bd^2 B_D\lrp{f_2(\lambda)-f_1(\lambda)}\right.
    \\
    &\left.+E_D\left(\frac{d^2 \bd^2}{2}f_3(\lambda)-\frac{d(d+\bd)}{2}f_2(\lambda)+\frac{d\bd}{2}f_1(\lambda)\right)+const\right]/\left[(d^2-1)(\bar d^2-1)\right]. \\
\end{eqns}

Similarly $\langle S_{corr}(J)\rangle$ in terms of $J$ is 
\begin{eqns}\label{eq:PAcorrectionpure}
    \langle S_{corr}(J)\rangle =\frac{\epsilon^2}{2}\left[\frac{1}{2}d^2(\bd^2-1)C_Jf_1(\lambda)+E_J\lrp{\frac{d^2\bd^2}{2}f_3(\lambda)-\frac{d^2}{2}f_1(\lambda)}\right]/\left[(d^2-1)(\bar d^2-1)\right].
\end{eqns}

Now define the following coefficients:
\begin{eqns}
    &k_1:=\frac{d^2}{2(d^2-1)}C_J, \qquad \qquad \quad \  k_2:=\frac{d^2}{2(d^2-1)}C_D\\
    &k_3:=\frac{d^2\bd^2}{2(d^2-1)(\bd^2-1)}E_D, \qquad k_4:=\frac{d^2\bd^2}{2(d^2-1)(\bd^2-1)}E_J\\
    &k_5:=\frac{\bd^2}{2(\bd^2-1)}B_D.
\end{eqns}

The full $\langle S_{corr}\rangle$ becomes 
\begin{eqns}
    \langle S_{corr}\rangle=&\frac{\epsilon^2}{2}\left[k_1f_1(\lambda)+k_2f_2(\lambda)+k_3\lrp{f_3(\lambda)-\frac{d+\bd}{d^2\bd}f_2(\lambda)+\frac{1}{d^2}f_1(\lambda)}\right.\\
    &\left.+k_4\lrp{f_3(\lambda)-\frac{1}{d\bd}f_1(\lambda)}+k_5\lrp{f_2(\lambda)-f_1(\lambda)}+const\right].
    \label{PA-entropy_pure_full}
\end{eqns}

Monotonicity of $f_1(\lambda)$, $f_2(\lambda)$ and $f_3(\lambda)$ are discussed in Appendix~\ref{app:f1f2f3}. 

Note that $f_2-f_1$ is not a decreasing function of entanglement. Making the averaged PA entropy not monotonic for generic perturbation $W_R$. However, we show that non-monotonicity only occurs with small probability. 

Assume that $W_R$ is drawn from a Gaussian ensemble. In the product-Pauli expansion, we therefore model the Pauli-basis coefficients as independent random variables with zero mean and a common variance. Under this model, each quadratic block (e.g.\ $B_D$, $C_D$, $E_D$, and the corresponding blocks for $J$) is a sum of many squared coefficients, so its typical magnitude is proportional to the number of Pauli strings contributing to that block. Consequently, typical ratios of such blocks are governed by the ratio of the corresponding term counts, with fluctuations suppressed when the sums involve many terms.

We first estimate the typical ratios $B_D/E_D$ and $C_D/E_D$ by simple counting. $B_D$ collects coefficients with the $A_1$-index fixed to the identity and with the remaining indices restricted to $m>0$ and $n>0$. Hence the sum defining $B_D$ contains $(\bar d^2-1)(d_\chi^2-1)$ terms. Similarly, $C_D$ fixes the $\bar A_1$-index to the identity and sums over $l>0$ and $n>0$, and therefore contains $(d^2-1)(d_\chi^2-1)$ terms. Finally, $E_D$ sums over $l>0$, $m>0$, and $n>0$, and thus contains $(d^2-1)(\bar d^2-1)(d_\chi^2-1)$ terms. Since each contributing squared coefficient has the same typical size, we obtain
\begin{eqns}
\Big\langle \frac{B_D}{E_D}\Big\rangle_{\mathrm{GUE}}
\approx
\frac{(\bar d^2-1)(d_\chi^2-1)}{(d^2-1)(\bar d^2-1)(d_\chi^2-1)}
&=
\frac{1}{d^2-1},
\\
\Big\langle \frac{C_D}{E_D}\Big\rangle_{\mathrm{GUE}}
\approx
\frac{(d^2-1)(d_\chi^2-1)}{(d^2-1)(\bar d^2-1)(d_\chi^2-1)}
&=
\frac{1}{\bar d^2-1}.
\end{eqns}

Moreover, since $B_D$ and $E_D$ are sums of squares of Gaussian random variables, their normalized ratio $B_D/E_D$ follows Fisher-Snedecor \(F\) distribution, 
\begin{equation}
\frac{(B_D/N_1)}{(E_D/N_2)} \;\sim\; F_{N_1,N_2},
\qquad
N_1=(\bar d^{\,2}-1)(d_\chi^{\,2}-1),\quad
N_2=(d^{2}-1)(\bar d^{\,2}-1)(d_\chi^{\,2}-1).
\end{equation}

The non-monotonicity of \(S_{PA}\) requires the atypical fluctuation \(B_D \gtrsim E_D\), i.e.\ \(B_D/E_D \ge 1\). In terms of the \(F\) variable this corresponds to \(F_{N_1,N_2}\ge N_2/N_1=d^{2}-1\), and hence
\begin{eqns}
    \mathbb P\!\left(\frac{B_D}{E_D}\ge 1\right)
=
\mathbb P\!\left(F_{N_1,N_2}\ge \frac{N_2}{N_1}\right)
=
\int_{N_2/N_1}^{\infty}\!dx\, f_F(x;N_1,N_2)
\sim \mathcal O\!\left(e^{-\, d^{2}}\right).
\end{eqns}

We carry out the same counting estimate for the remaining \(J\)-term coefficients. Here  $C_J$ sums over $(l,n)$ with $l>0$ and $n>0$, while $E_J$ sums over $(l,m,n)$ with $l>0$, $m>0$, and $n>0$. Thus $C_J$ contains $(d^2-1)(d_\chi^2-1)$ terms and $E_J$ contains $(d^2-1)(\bar d^2-1)(d_\chi^2-1)$ terms, and hence
\begin{equation}
\Big\langle \frac{C_J}{E_J}\Big\rangle_{\mathrm{GUE}}
\approx
\frac{(d^2-1)(d_\chi^2-1)}{(d^2-1)(\bar d^2-1)(d_\chi^2-1)}
=
\frac{1}{\bar d^2-1}.
\end{equation}

Substituting these typical ratios into the definitions of the coefficients $k_i$ yields the corresponding estimates for the typical relative sizes. Using
\begin{equation}
\frac{k_5}{k_3}=\frac{d^2-1}{d^2}\,\frac{B_D}{E_D},
\qquad
\frac{k_2}{k_3}=\frac{\bar d^2-1}{\bar d^2}\,\frac{C_D}{E_D},
\qquad
\frac{k_1}{k_4}=\frac{\bar d^2-1}{\bar d^2}\,\frac{C_J}{E_J},
\end{equation}
we obtain the corresponding typical GUE estimates by substituting the term-counting results for the block ratios.
\begin{equation}
\Big\langle \frac{k_5}{k_3}\Big\rangle_{\mathrm{GUE}}
=
\frac{d^2-1}{d^2}\Big\langle \frac{B_D}{E_D}\Big\rangle_{\mathrm{GUE}}
=
\frac{d^2-1}{d^2}\cdot\frac{1}{d^2-1}
=
\frac{1}{d^2},
\end{equation}
\begin{equation}
\Big\langle \frac{k_2}{k_3}\Big\rangle_{\mathrm{GUE}}
=
\frac{\bar d^2-1}{\bar d^2}\Big\langle \frac{C_D}{E_D}\Big\rangle_{\mathrm{GUE}}
=
\frac{\bar d^2-1}{\bar d^2}\cdot\frac{1}{\bar d^2-1}
=
\frac{1}{\bar d^2},
\end{equation}
and
\begin{equation}
\Big\langle \frac{k_1}{k_4}\Big\rangle_{\mathrm{GUE}}
=
\frac{\bar d^2-1}{\bar d^2}\Big\langle \frac{C_J}{E_J}\Big\rangle_{\mathrm{GUE}}
=
\frac{\bar d^2-1}{\bar d^2}\cdot\frac{1}{\bar d^2-1}
=
\frac{1}{\bar d^2}.
\end{equation}

So in the large $d$ and $\bar d$ limit,   we find that the typical PA entropy correction can be approximated as 
\begin{equation}
    \langle S_{corr}\rangle = \frac{\epsilon^2}{2} (k_3+k_4)f_3(\lambda)+\mathcal{O}(\frac{1}{d^2})+\mathcal{O}(\frac{1}{\bd^2}).
\end{equation}

\end{proof}
\subsection{Lemma~\ref{lemma:optimization-pure}}\label{app:optimization-pure}
\begin{proof}
The overall encode--noise--recover process defines a quantum channel
\begin{equation}
\mathcal{N}_{R^{(\epsilon)}}:
\mathcal{L}(\mathcal{H}_L)\longrightarrow \mathcal{L}(\mathcal{H}_{A_1\bar A_1}),
\qquad
\mathcal{N}_{R^{(\epsilon)}}(\sigma^{(L)})=\sigma^{(\epsilon)}_{A_1\bar A_1},
\label{pure:N-channel}
\end{equation}
where $\sigma^{(\epsilon)}_{A_1\bar A_1}$ denotes the recovered output state on $\mathcal{H}_{A_1\bar A_1}$. In the present setting, the logical Hilbert space factorizes as
\begin{equation}
\mathcal{H}_L=\mathcal{H}_a\otimes \mathcal{H}_{\bar a}.
\label{pure:HL-factor}
\end{equation}
Accordingly, we introduce two isomorphic reference systems $\mathcal{H}_r\simeq \mathcal{H}_a$ and $\mathcal{H}_{\bar r}\simeq \mathcal{H}_{\bar a}$.

To quantify the performance of the recovery channel, we consider the coherent information  defined in Eq.~\eqref{ref-coherentinfo} as
\begin{equation}\label{pure:eq:coherent}
I_c\!\left(\mathcal{N}_{R^{(\epsilon)}}\right)
=
S\!\left(
\Tr_{r\bar r}\!\left[
(\mathcal{N}_{R^{(\epsilon)}}\otimes I_{r\bar r})
\bigl(|\Phi\rangle\!\langle\Phi|_{a\bar a\, r\bar r}\bigr)
\right]
\right)
-
S\!\left[
(\mathcal{N}_{R^{(\epsilon)}}\otimes I_{r\bar r})
\bigl(|\Phi\rangle\!\langle\Phi|_{a\bar a\, r\bar r}\bigr)
\right].
\end{equation}
where the maximally entangled state in Eq.~\eqref{pure:eq:coherent} factorizes as
\begin{equation}
|\Phi\rangle_{a\bar a\, r\bar r}
=
|\Phi\rangle_{a r}\otimes|\Phi\rangle_{\bar a\,\bar r},
\label{pure:Phi-ar-abar-rbar}
\end{equation}
with
\begin{equation}
|\Phi\rangle_{a r}
=
\frac{1}{\sqrt{d_a}}\sum_{i=1}^{d_a}|i\rangle_{a}\otimes|i\rangle_{r},
\qquad
|\Phi\rangle_{\bar a\,\bar r}
=
\frac{1}{\sqrt{d_{\bar a}}}\sum_{j=1}^{d_{\bar a}}|j\rangle_{\bar a}\otimes|j\rangle_{\bar r},
\label{pure:Phi-factors-ar-abar}
\end{equation}
and $d_a=\dim(\mathcal{H}_a)$ and $d_{\bar a}=\dim(\mathcal{H}_{\bar a})$.
When the logical system consists of $n$ qubits, i.e. $\mathcal{H}_L\simeq (\mathbb{C}^2)^{\otimes n}$, one may view $|\Phi\rangle_{Lr}$ as tensor product of $n$ independent EPR pairs shared between the logical system and the reference:
\begin{equation}\label{eq:bell}
|\Phi\rangle_{Lr}
=
\bigotimes_{k=1}^{n}
\frac{1}{\sqrt{2}}
\left(
|0\rangle_{L_k}|0\rangle_{r_k}
+
|1\rangle_{L_k}|1\rangle_{r_k}
\right),
\end{equation}
so that $n_a$ Bell pairs are shared between $a$ and $r$ together and $n_{\bar a}$ Bell pairs are shared between $\bar a$ and $\bar r$ (with $n=n_a+n_{\bar a}$ for qubits).
\begin{figure}[H]
    \centering
    \includegraphics[width=0.6\linewidth]{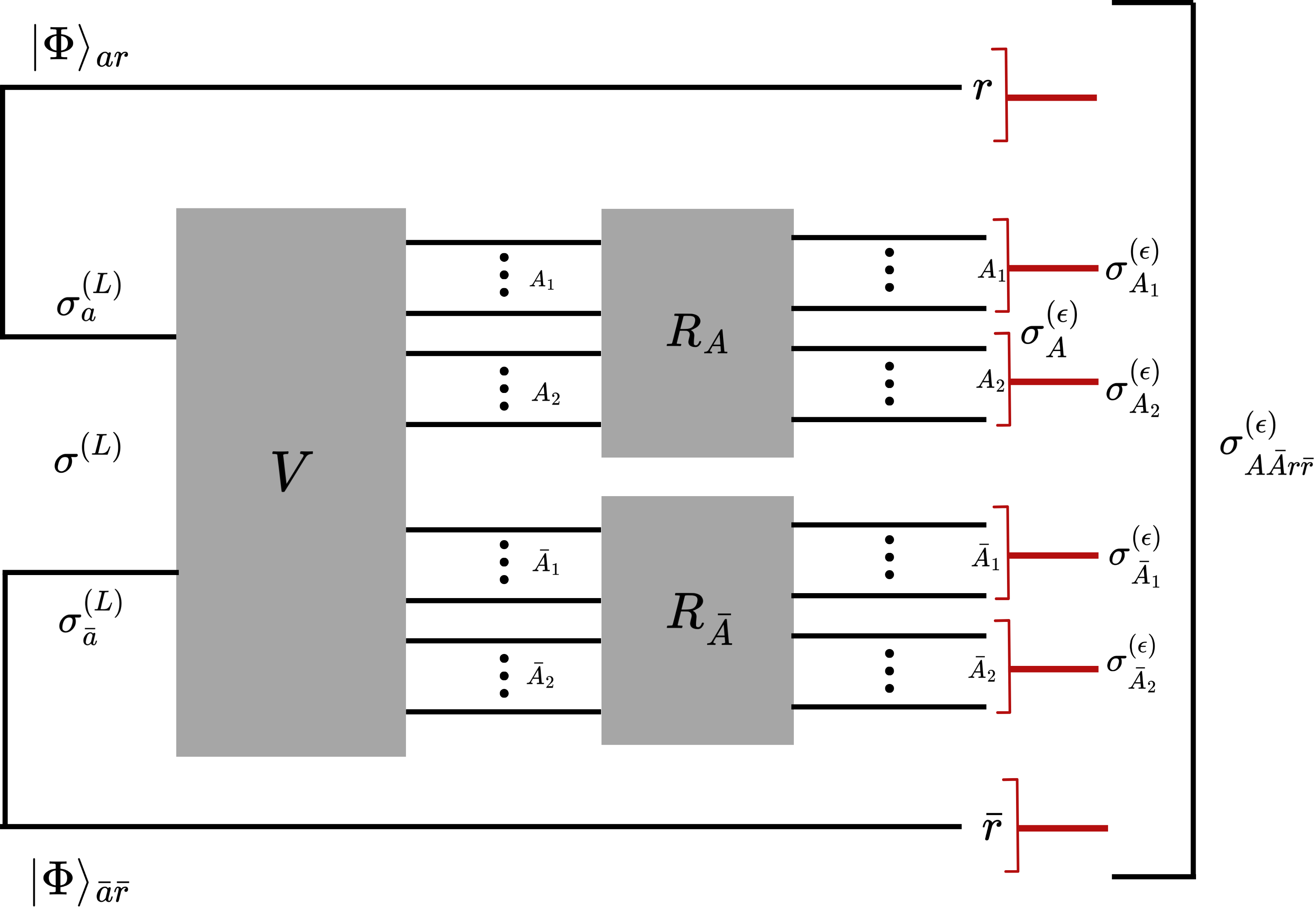}
    \caption{Channel picture for optimizing recovery via coherent information in the factorized logical setting $\mathcal{H}_L=\mathcal{H}_a\otimes \mathcal{H}_{\bar a}$. The logical input is purified by a product of maximally entangled states $|\Phi\rangle_{a r}\otimes|\Phi\rangle_{\bar a \bar r}$, with $r\simeq a$ and $\bar r\simeq \bar a$.The joint logical state $\sigma^{(L)}_{a\bar a}$ is encoded by the isometry $V$ into $\mathcal{H}_A\otimes\mathcal{H}_{\bar A}$, followed by recovery maps on each boundary region. This induces an effective channel $\mathcal{N}_{R^{(\epsilon)}}:\mathcal{L}(\mathcal{H}_L)\to \mathcal{L}(\mathcal{H}_{A_1\bar A_1})$, with output $\sigma^{(\epsilon)}_{A_1\bar A_1}=\mathcal{N}_{R^{(\epsilon)}}(\sigma^{(L)}_{a\bar a})$. The coherent information is evaluated on the joint outputs $\sigma^{(\epsilon)}_{A_1 r}$ and $\sigma^{(\epsilon)}_{\bar A_1 \bar r}$ (and the full state $\sigma^{(\epsilon)}_{A\bar A r\bar r}$), and maximized over recovery channels to obtain the optimal recovery $R^\ast$.}
    \label{fig:optimization_pure}
\end{figure}

After encoding, perturbation, and recovery, the joint state on the recovered subsystem $A_1\bar A_1$ and the reference $r\bar r$ is
\begin{equation}
\sigma^{(\epsilon)}_{A_1\bar A_1 r\bar r}
=
(\mathcal{N}_{R^{(\epsilon)}}\otimes I_{r\bar r})
\bigl(|\Phi\rangle\!\langle\Phi|_{a\bar a\, r\bar r}\bigr)
=
\Tr_{A_2 \bar A_2}\!\left[
e^{i\epsilon W_R}
\bigl(\sigma^{(0)}_{A_1\bar A_1 r\bar r}\otimes \chi_{A_2 \bar A_2}\bigr)
e^{-i\epsilon W_R}
\right],
\label{eq:sigmaAR}
\end{equation}
where $\sigma^{(0)}_{A_1\bar A_1 r\bar r}$ is the density matrix of the input maximally mixed state
$|\Phi\rangle_{a\bar a\, r\bar r}$. The perturbation acts trivially on the reference systems, so we adopt the shorthand
\begin{equation}
W_R \otimes I_r \otimes I_{\bar r}\;\equiv\; W_R .
\label{pure:WR-shorthand}
\end{equation}

Tracing out the reference in Eq. \eqref{eq:sigmaAR} gives the reduced output state on $A_1\bar A_1$,
\begin{equation}
\sigma^{(\epsilon)}_{A_1\bar A_1}
= \Tr_{r\bar r}\,\sigma^{(\epsilon)}_{A_1\bar A_1 r\bar r}.
\label{pure:sigma-reduced}
\end{equation}

In this notation, the coherent information is 
\begin{equation}
I_c\!\left(\mathcal{N}_{R^{(\epsilon)}}\right)
=
S\!\left(\sigma^{(\epsilon)}_{A_1\bar A_1}\right)
-
S\!\left(\sigma^{(\epsilon)}_{A_1\bar A_1 r\bar r}\right).
\label{pure:Ic}
\end{equation}
Maximizing $I_c(\mathcal{N}_{R^{(\epsilon)}})$ over the choice of recovery unitary defines the optimal recovery, denoted by $R^\ast$, which preserves the largest possible amount of recoverable bulk information in the decoded outputs $A_1$ and $\bar A_1$.

Let's first calculate entropy $S(\sigma^{(\epsilon)}_{A_1\bar A_1})$, using the perturbative expansion
\begin{equation}
\sigma^{(\epsilon)}_{A_1\bar A_1}
=
\sigma^{(0)}_{A_1\bar A_1}
+\delta\sigma^{(\epsilon)}_{A_1\bar A_1}
=
\sigma^{(0)}_{A_1\bar A_1}
+\epsilon\,\delta^{(1)}\sigma_{A_1\bar A_1}
+\frac{\epsilon^2}{2}\,\delta^{(2)}\sigma_{A_1\bar A_1}
+O(\epsilon^3).
\label{pure:eq:sigma-expansion}
\end{equation}
where
\begin{align}
    \sigma^{(0)}_{A_1\bar A_1}(U,V)
&\equiv (U \otimes V)\,\sigma^{(0)}_{A_1\bar A_1}\,(U^\dagger \otimes V^\dagger)\\
\delta^{(1)}\sigma_{A_1\bar A_1}(U,V)
&= i\,\Tr_{A_2\bar{A_2}r\bar r}
\Big(
\big[W_R,\ \sigma^{(0)}_{A_1\bar A_1r\bar r}(U,V)\otimes \chi_{A_2 \bar A_2}\big]
\Big),\\[4pt]
\delta^{(2)}\sigma_{A_1 \bar A_1}(U,V)
&= -\Tr_{A_2\bar A_2r\bar r}
\Big(
\frac{1}{2}\big\{W_R^2,\ \sigma^{(0)}_{A_1\bar A_1r\bar r}(U,V)\otimes \chi_{A_2\bar A}\big\} \notag
\\ & \qquad \qquad \qquad- W_R\,
\big(\sigma^{(0)}_{A_1\bar A_1r\bar r}(U,V)\otimes \chi_{A_2\bar A_2}\big)\,
W_R
\Big).
\end{align}

 Using the logarithm expansion derived in Appendix.~\ref{logexpansion}, we find
\begin{align}
S\big(\sigma^{(\epsilon)}_{A_1\bar A_1}(U,V)\big)
&=
S\big(\sigma^{(0)}_{A_1\bar A_1}(U,V)\big)
-\epsilon\,\operatorname{Tr}\Big(
\delta^{(1)}\sigma_{A_1\bar A_1}(U,V)\,
\ln\sigma^{(0)}_{A_1\bar A_1}(U,V)
\notag\\
&\qquad\qquad
+\,
\sigma^{(0)}_{A_1\bar A_1}(U,V)\,
D_{\ln}\big(\sigma^{(0)}_{A_1\bar A_1}(U,V)\big)
\big[\delta^{(1)}\sigma_{A_1\bar A_1}(U,V)\big]
\Big)
\notag\\
&\qquad \qquad
-\epsilon^2\,\operatorname{Tr}\Big(
\delta^{(2)}\sigma_{A_1\bar A_1}(U,V)\,
\ln\sigma^{(0)}_{A_1\bar A_1}(U,V)
\notag\\
&\qquad \qquad
+\frac{1}{2}\,
\delta^{(1)}\sigma_{A_1\bar A_1}(U,V)\,
D_{\ln}\big(\sigma^{(0)}_{A_1\bar A_1}(U,V)\big)
\big[\delta^{(1)}\sigma_{A_1\bar A_1}(U,V)\big]
\notag\\
&\qquad \qquad
+\,
\sigma^{(0)}_{A_1\bar A_1}(U,V)\,
D_{\ln}\big(\sigma^{(0)}_{A_1\bar A_1}(U,V)\big)
\big[\delta^{(2)}\sigma_{A_1\bar A_1}(U,V)\big]
\Big)+O(\epsilon^3).
\label{eq:Sbulk-pure-expanded}
\end{align}

Similar as in Sec.~\ref{app:optimization:mixed}, this entropy becomes 
\begin{equation}\label{eq:trivialbulkentropy}
S\big(\sigma^{(\epsilon)}_{A_1\bar A_1}\big)=S\big(\sigma^{(0)}_{A_1\bar A_1}\big)=\ln(d \bar d).
\end{equation}
and correction vanishes. 

We next compute the entropy of the joint recovered-and-reference state,
$S(\sigma^{(\epsilon)}_{A_1\bar A_1 r\bar r})$. $\sigma^{(\epsilon)}_{A_1\bar A_1 r\bar r}$ admits the perturbative expansion
\begin{equation}
\sigma^{(\epsilon)}_{A_1\bar A_1 r\bar r}
=
\sigma^{(0)}_{A_1\bar A_1 r\bar r}
+\delta\sigma^{(\epsilon)}_{A_1\bar A_1 r\bar r}
=
\sigma^{(0)}_{A_1\bar A_1 r\bar r}
+\epsilon\,\delta^{(1)}\sigma_{A_1\bar A_1 r\bar r}
+\frac{\epsilon^2}{2}\,\delta^{(2)}\sigma_{A_1\bar A_1 r\bar r}
+O(\epsilon^3).
\label{eq:sigma-expansion-joint}
\end{equation}
where
\begin{align}
\sigma^{(0)}_{A_1\bar A_1 r\bar r}(U,V)
&=
\operatorname{Tr}_{A_2\bar A_2}\!\left[
\sigma^{(0)}_{A\bar A\, r\bar r}(U,V)
\right]
\equiv
(U\otimes V)\,
\sigma^{(0)}_{A_1\bar A_1 r\bar r}\,
(U^\dagger\otimes V^\dagger),
\\
\delta^{(1)}\sigma_{A_1\bar A_1 r\bar r}(U,V)
&=
i\,\operatorname{Tr}_{A_2\bar A_2}\Big(
\big[W_R,\ \sigma^{(0)}_{A_1\bar A_1 r\bar r}(U,V)\otimes \chi_{A_2 \bar A_2}\big]
\Big),
\\[4pt]
\delta^{(2)}\sigma_{A_1\bar A_1 r\bar r}(U,V)
&=
-\operatorname{Tr}_{A_2\bar A_2}\Big(
\frac{1}{2}\big\{W_R^2,\ \sigma^{(0)}_{A_1\bar A_1 r\bar r}(U,V)\otimes \chi_{A_2\bar A_2}\big\}
\notag\\
&\qquad\qquad\qquad\qquad
- W_R\,
\big(\sigma^{(0)}_{A_1\bar A_1 r\bar r}(U,V)\otimes \chi_{A_2\bar A_2}\big)\,
W_R
\Big).
\end{align}
The derivation proceeds exactly as before, with the only change being that the entropy expansion is now carried out on the enlarged Hilbert space
$\mathcal{H}_{A_1}\otimes\mathcal{H}_{\bar A_1}\otimes\mathcal{H}_{r}\otimes\mathcal{H}_{\bar r}$.
Accordingly, every instance of the zeroth-order reduced state in the previous computation is replaced by
$\sigma^{(0)}_{A_1\bar A_1 r\bar r}$, and the perturbation generator is taken to act trivially on the reference systems,
$W_R \mapsto W_R\otimes I_{r}\otimes I_{\bar r}$.
With these replacements in place, we obtain
\begin{align}
S\big(\sigma^{(\epsilon)}_{A_1\bar A_1 r\bar r}(U,V)\big)
&=
S\big(\sigma^{(0)}_{A_1\bar A_1 r\bar r}(U,V)\big)
-\epsilon\,\operatorname{Tr}\Big(
\delta^{(1)}\sigma_{A_1\bar A_1 r\bar r}(U,V)\,
\ln\sigma^{(0)}_{A_1\bar A_1 r\bar r}(U,V)
\notag\\
&\qquad\qquad
+\,
\sigma^{(0)}_{A_1\bar A_1 r\bar r}(U,V)\,
D_{\ln}\big(\sigma^{(0)}_{A_1\bar A_1 r\bar r}(U,V)\big)
\big[\delta^{(1)}\sigma_{A_1\bar A_1 r\bar r}(U,V)\big]
\Big)
\notag\\
&\qquad \qquad
-\epsilon^2\,\operatorname{Tr}\Big(
\delta^{(2)}\sigma_{A_1\bar A_1 r\bar r}(U,V)\,
\ln\sigma^{(0)}_{A_1\bar A_1 r\bar r}(U,V)
\notag\\
&\qquad \qquad
+\frac{1}{2}\,
\delta^{(1)}\sigma_{A_1\bar A_1 r\bar r}(U,V)\,
D_{\ln}\big(\sigma^{(0)}_{A_1\bar A_1 r\bar r}(U,V)\big)
\big[\delta^{(1)}\sigma_{A_1\bar A_1 r\bar r}(U,V)\big]
\notag\\
&\qquad \qquad
+\,
\sigma^{(0)}_{A_1\bar A_1 r\bar r}(U,V)\,
D_{\ln}\big(\sigma^{(0)}_{A_1\bar A_1 r\bar r}(U,V)\big)
\big[\delta^{(2)}\sigma_{A_1\bar A_1 r\bar r}(U,V)\big]
\Big)
\notag\\
&\qquad \qquad
+O(\epsilon^3).
\label{eq:Sjoint-pure-expanded}
\end{align}

A technical point arises at this stage. The Fr\'echet-derivative term
\[
D_{\ln}\big(\sigma^{(0)}_{A_1\bar A_1 r\bar r}\big)\Big[\delta^{(1)}\sigma_{A_1\bar A_1 r\bar r}\Big]
\]
appearing in $\langle S(\sigma_{A_1 \tilde{A}_1 r \tilde{r}}^{(e)})\rangle$ given in Eq~ \eqref{eq:SRA-pure-expanded}is ill-defined because $\sigma^{(0)}_{A_1\bar A_1 r\bar r}$ is a pure state and hence not full rank. In particular, it has zero eigenvalues, so the resolvent
\((\sigma^{(0)}_{A_1\bar A_1 r\bar r}+sI)^{-1}\) diverges as \(s\to 0\).

We handle this by introducing a full-rank regulator at the level of the Bell-pair factorization \eqref{eq:bell}.
Specifically, for each Bell pair state, denoted as $\ket{\Phi_+}$, we replace the projector $|\Phi^+\rangle\langle\Phi^+|$ by the following state
\begin{equation}
\bigl(|\Phi^+\rangle\langle\Phi^+|\bigr)^{\mathrm{reg}}
=
(1-3\Delta)\,|\Phi^+\rangle\langle\Phi^+|
+
\Delta\Bigl(|01\rangle\langle 01|+|10\rangle\langle 10|\Bigr)
+
\Delta\,|\Phi^-\rangle\langle\Phi^-|,
\label{pure:bell-regulator}
\end{equation}
where $|\Phi^\pm\rangle=(|00\rangle\pm|11\rangle)/\sqrt{2}$.
The regulator is introduced at the level of the input purification. 
The regulated output state of the channel is then obtained by pushing this input state through the channel $\mathcal{N}_{R^{(\epsilon)}}$
\begin{equation}
\sigma^{(\epsilon)}_{A_1\bar A_1 r\bar r}(\Delta)
:=
\bigl(\mathcal{N}_{R^{(\epsilon)}}\otimes I_{r\bar r}\bigr)\!\left(|\Phi\rangle\langle\Phi|^{\mathrm{reg}}_{a\bar a\, r\bar r}\right).
\label{pure:sigmaeps-reg-defined}
\end{equation}
We evaluate $S\big(\sigma^{(\epsilon)}_{A_1\bar A_1 r\bar r}\big)$ by first performing the perturbative expansion at fixed $\Delta>0$ and then taking the limit $\Delta\to 0$ at the end of the calculation.

Note that Eq.~\eqref{eq:trivialbulkentropy} is still valid since
$\sigma^{(0)}_{A_1}(\Delta)=I_d/d$ and
$\sigma^{(0)}_{\bar A_1}(\Delta)=I_{\bar d}/\bar d$,
which remain maximally mixed even in the presence of the regulator.

Then we compute the Haar-averaged entropy $\big\langle S\big(\sigma^{(\epsilon)}_{A_1\bar A_1 r\bar r}(U,V)\big)\big\rangle$. Averaging the entropy expansion over $U$ and $V$ with respect to the Haar measure gives
\begin{align}
\Big\langle S\big(\sigma^{(\epsilon)}_{A_1\bar A_1 r\bar r}\big)\Big\rangle
&=
S\big(\sigma^{(0)}_{A_1\bar A_1 r\bar r}\big)
-\epsilon^2\int dU\,dV\;
\operatorname{Tr}\Big(
\delta^{(2)}\sigma_{A_1\bar A_1 r\bar r}(U,V)\,
\ln\sigma^{(0)}_{A_1\bar A_1 r\bar r}(U,V)
\notag\\
&\qquad\qquad\qquad
+\frac{1}{2}\,
\delta^{(1)}\sigma_{A_1\bar A_1 r\bar r}(U,V)\,
D_{\ln}\big(\sigma^{(0)}_{A_1\bar A_1 r\bar r}(U,V)\big)
\big[\delta^{(1)}\sigma_{A_1\bar A_1 r\bar r}(U,V)\big]
\Big)\notag\\
&\qquad\qquad\qquad
+O(\epsilon^3).
\label{eq:SRA-pure-expanded}
\end{align}
We now focus on the first $\epsilon^2$-order term in Eq.~\eqref{eq:SRA-pure-expanded}. This term alone is sufficient to determine the PA entropy when the channel-capacity optimization condition is imposed. The key point is that the PA entropy in Eq.~\eqref{eq:PAcorrectionpure} depends only on the Pauli coefficients $p_{rst}(J)$ (defined in Eq.~\eqref{eq:Jtermscoe}) with $t>0$. By contrast, after  carrying out the average over $U$ and $V$, one finds that the second $\epsilon^2$-order term in Eq.~\eqref{eq:SRA-pure-expanded} depends only on the sector $p_{rs0}(J)$ with $t=0$. Furthermore, the first term which we compute below (see Eq.~\eqref{eq:BEHIBE}),   receives decoupled contribution from the $t>0$ and $t=0$ sectors, so the optimization conditions for $t>0$ coefficients are determined independently of the $t=0$ ones. It is therefore enough to optimize the first term to fix the optimal $t>0$ components and hence the PA entropy.   
We denote this contribution by $\mathscr{S}_2^{(\epsilon)}$, defined as
\begin{equation}
\mathscr{S}_2^{(\epsilon)}
:=
-\epsilon^{2}\int dU\,dV\;
\operatorname{Tr}\left[
\delta^{(2)}\sigma_{A_1\bar A_1r \bar r}(U,V)\,
\ln\sigma^{(0)}_{A_1\bar A_1 r \bar r}(U,V)
\right].
\label{eq:ScrS1-eps-pure}
\end{equation}

Diagrammatically, $\mathscr{S}_2^{(\epsilon)}$ takes the following form
\begin{align}
\mathscr{S}_2^{(\epsilon)}
&= \frac{\epsilon^2}{d_\chi}\int dU \left( \quad
\vcenter{\hbox{\includegraphics[height=11.4em]{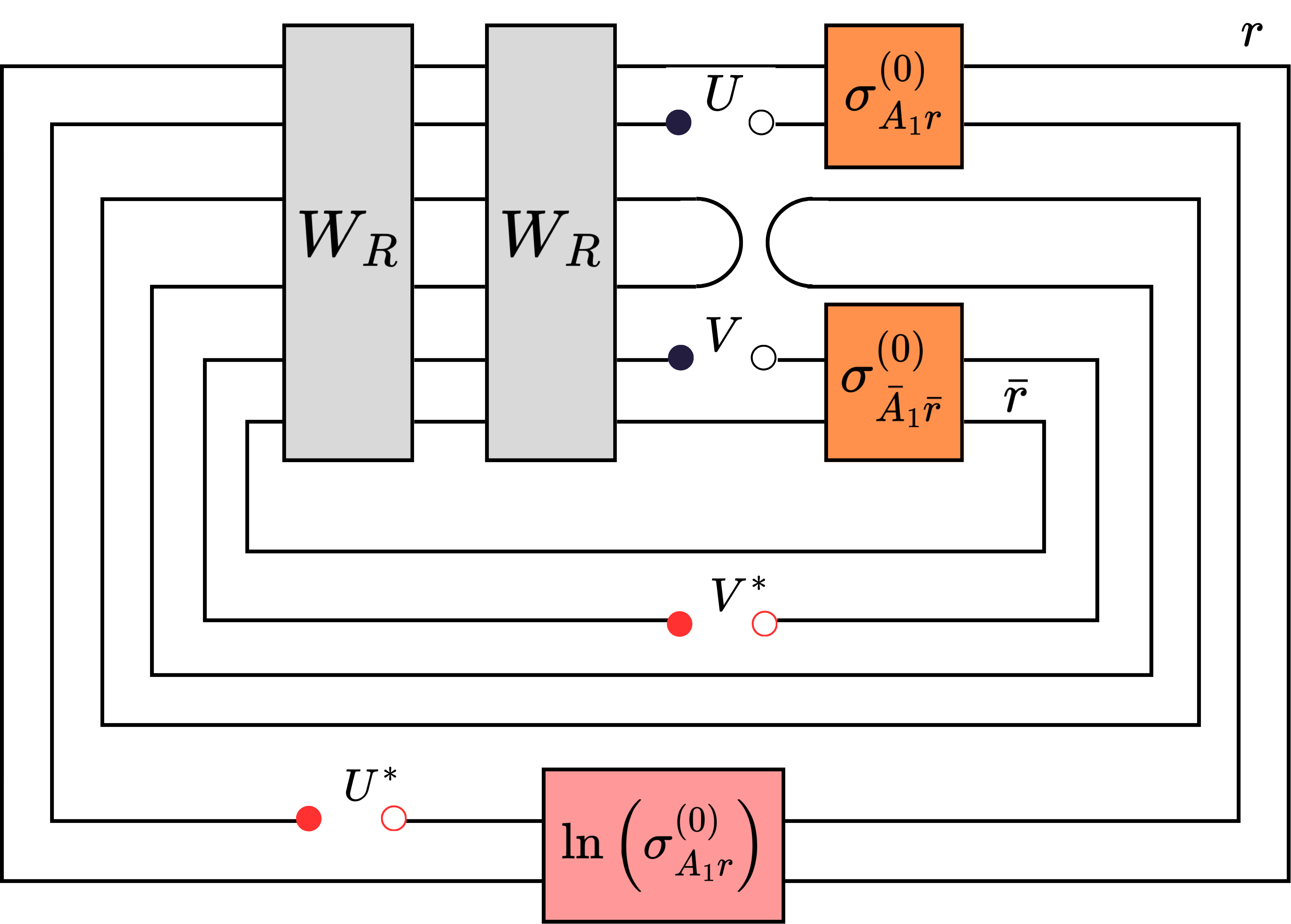}}}
\right.\notag\\
&\qquad\qquad \qquad
+ \vcenter{\hbox{\includegraphics[height=11.4em]{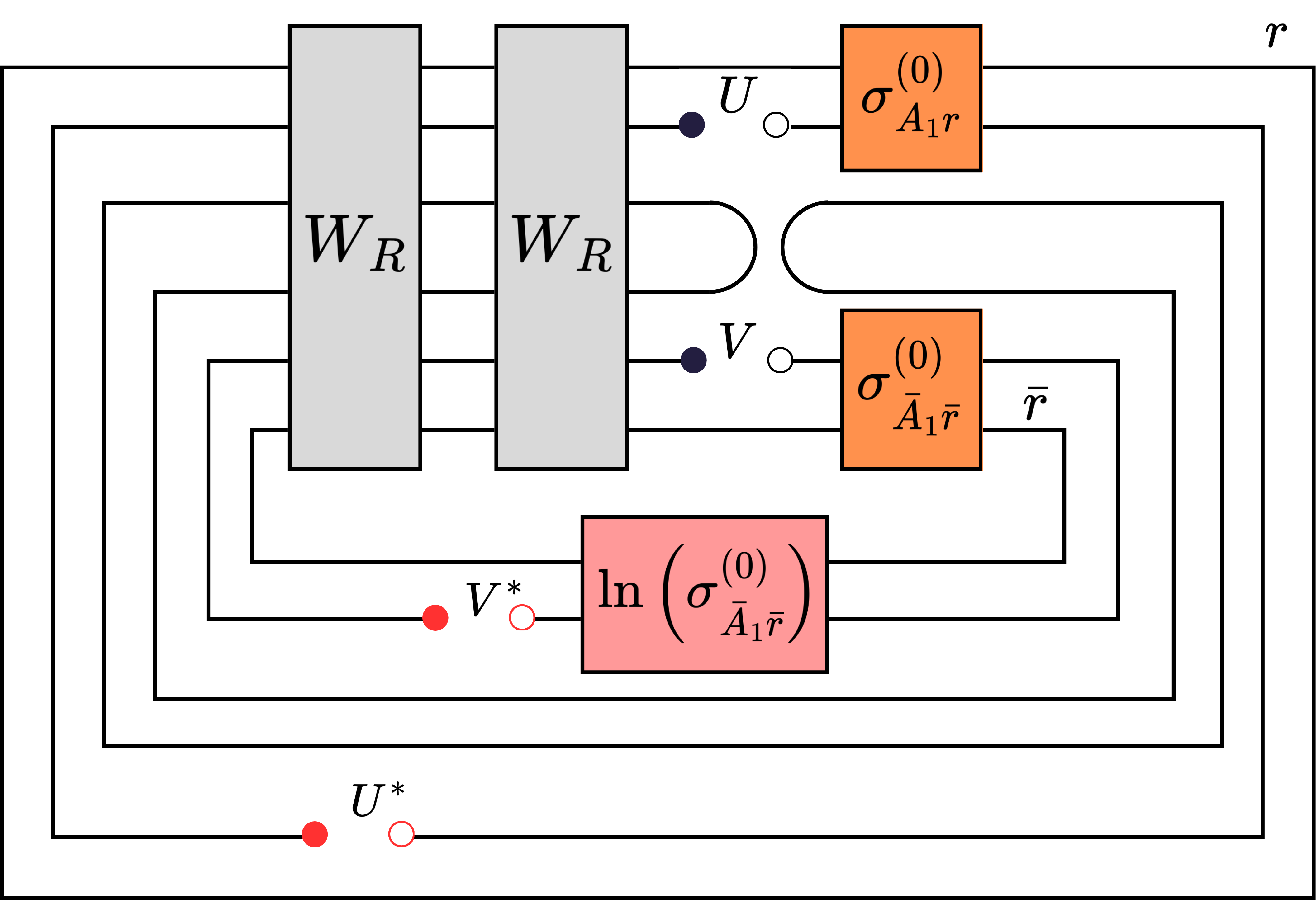}}}
\notag\\
&\qquad\qquad \qquad
- \vcenter{\hbox{\includegraphics[height=11.4em]{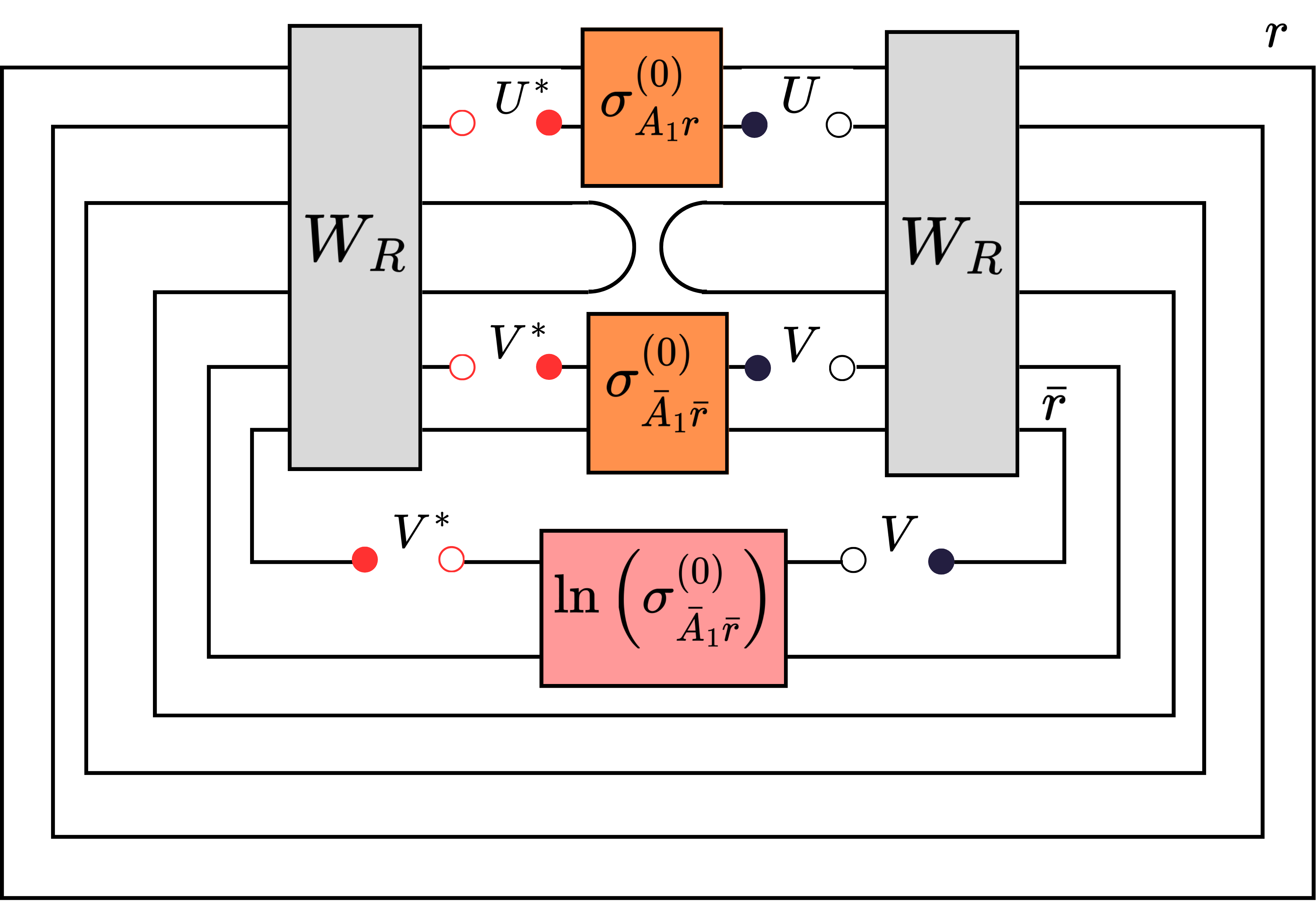}}}
\notag\\
&\left.\qquad\qquad \qquad
- \vcenter{\hbox{\includegraphics[height=11.4em]{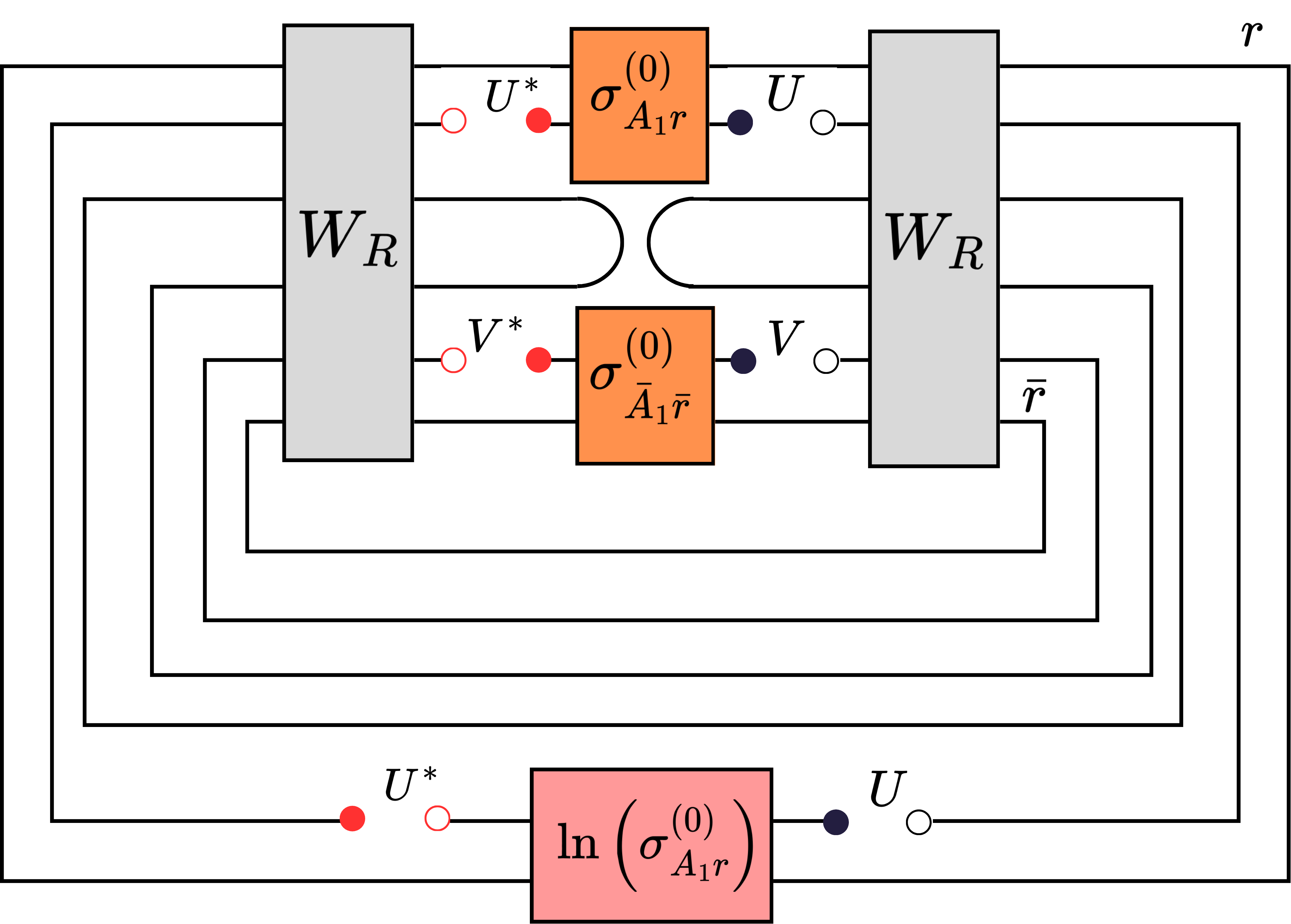}}}
\quad \right).
\end{align}
In the diagrams above, we have used the identity for the logarithm of a tensor-product state 
\begin{equation}
\ln\bigl(\sigma^{(0)}_{A_1}\otimes\sigma^{(0)}_{\bar A_1}\bigr)
=
\ln\sigma^{(0)}_{A_1}\otimes I_{\bar A_1}
+
I_{A_1}\otimes \ln\sigma^{(0)}_{\bar A_1}.
\label{eq:log-tensor-sum-0}
\end{equation}
After averaging over the Haar measures of $U$ and $V$ and substituting the regulated marginals
$\sigma^{(0)}_{A_1 r}(\Delta)$ and $\sigma^{(0)}_{\bar A_1 \bar r}(\Delta)$, the expression simplifies to
\begin{align}
\mathscr{S}_2^{(\epsilon)}&=\frac{3(1-4\Delta)\,\epsilon^{2}}{4 d_\chi}\,
\log\Bigg(\frac{1-3\Delta}{\Delta}\Bigg)
\Bigg[\frac{\widetilde{\mathcal W}_{9}(\log d+\log \bar{d})}{d \bar{d}}  +
\frac{\bigl(d\,\widetilde{\mathcal W}_{9}-\widetilde{\mathcal W}_{11}\bigr)\,\log d}
{\bar d\,(d^{2}-1)}
\notag \\ &\hspace{6cm} +
\frac{\bigl(\bar d\,\widetilde{\mathcal W}_{9}-\widetilde{\mathcal W}_{10}\bigr)\,\log \bar d}
{d\,(\bar d^{2}-1)}
\Bigg]
\end{align}
Using Table~\eqref{W-tilde-R-tilde-diagrams} together with
Eqs.~\eqref{eq:TrTrAbar1-J2}, \eqref{eq:TrTrA1-J2}, and \eqref{eq:Tr-J2}, we obtain
\begin{align}
    \widetilde{\mathcal W}_{9}&=\frac{1}{4}\Tr(J^2+D^2)=\frac{d\bar d d_\chi}{4}\Big(A_J+B_J+C_J+E_J+F_J+G_J+H_J+I_J \notag \\ & \qquad \qquad \qquad \qquad \qquad \qquad +A_D+B_D+C_D+E_D\Big)\\
    \widetilde{\mathcal W}_{10}&=\frac{1}{4}\Tr(\Tr_{\bar A_1}(J)^2+\Tr_{\bar A_1}(D)^2)=\frac{d{\bar d}^2 d_\chi}{4}\Big(A_J+C_J+F_J+G_J+A_D+C_D\Big)\\
     \widetilde{\mathcal W}_{11}&=\frac{1}{4}\Tr(\Tr_{A_1}(J)^2+\Tr_{A_1}(D)^2)
    =\frac{d^2\bar d d_\chi}{4}\Big(A_J+B_J+F_J+H_J+A_D+B_D\Big)
\end{align}
Now we get 
\begin{align}\label{eq:BEHIBE}
\mathscr{S}_2^{(\epsilon)}
&=
\frac{3\epsilon^{2}}{16}(1-4\Delta)\,
\log\Biggl(\frac{1-3\Delta}{\Delta}\Biggr)
\Biggl[
\frac{d^{2}\log d}{d^{2}-1}\,\bigl(C_J+E_J+G_J+I_J+C_D+E_D\bigr)
\notag\\  & +
\frac{\bar d^{2}\log \bar d}{\bar d^{2}-1}\,\bigl(B_J+E_J+H_J+I_J+B_D+E_D\bigr)+\Big(\log d+\log \bar{d}\Big)\Big(A_J+B_J+C_J\notag \\ & +E_J+F_J+G_J+H_J+I_J  +A_D+B_D+C_D+E_D\Big)
\Biggr].
\end{align}
Since correction of $S\big(\sigma^{(\epsilon)}_{A_1\bar A_1}\big)$ vanishes,  the relevant part of the coherent information needed to determine the PA entropy is given by
$-\mathscr{S}_2^{(\epsilon)}$. We therefore focus on analyzing and minimizing $\mathscr{S}_2^{(\epsilon)}$. 

Since $\log\!\big((1-3\Delta)/\Delta\big)$ diverges as $\Delta \to 0^+$, the coherent-information correction becomes singular in the regulator-removal limit. We therefore keep $\Delta$ fixed and use the coherent information as a regulated comparison functional for different recovery choices. Maximizing the coherent information over $R_A\otimes R_{\bar A}$ then provides a well-defined notion of optimal recovery at fixed $\Delta$, as discussed in Sec.~\eqref{app:optimization:mixed}.

In our expression for $\mathscr{S}_2^{(\epsilon)}$, the recovery dependence enters only through the  coefficients appearing in the square brackets in Eq.~\eqref{eq:BEHIBE}. All remaining prefactors are independent of the recovery choice.

Among the coefficients entering the PA entropy \eqref{PA-entropy_pure_full} through $\mathscr{S}_2^{(\epsilon)}$, namely $C_J$, $E_J$, $B_D$, $C_D$, and $E_D$, only $C_J$ is affected by the boundary-unitary freedom in the recovery.
To see this, let us examine how the boundary-unitary freedom acts on the operators
$J$ and $D$.  A redefinition of the recovery of the form
\begin{equation}
R_A^{(\epsilon)} \;\longrightarrow\; e^{i\epsilon O_A}\,R_A^{(\epsilon)},
\qquad
R_{\bar A}^{(\epsilon)} \;\longrightarrow\; e^{i\epsilon O'_{\bar A}}\,
R_{\bar A}^{(\epsilon)},
\label{eq:local-unitary-redef-recovery}
\end{equation}
with Hermitian $O_A$ on $A=A_1\otimes A_2$ and Hermitian $O'_{\bar A}$ on
$\bar A=\bar A_1\otimes \bar A_2$, induces, to first order in $\epsilon$, the shift
\begin{equation}
W_R \;\longrightarrow\; W_R + O_A\otimes I_{\bar A} + I_A\otimes O'_{\bar A}.
\label{eq:WR-shift-paper}
\end{equation}
Using Eq~\eqref{eq:PQ}, the corresponding variations are
\begin{align}
\delta J
&=
d_\chi\,\Tr_{\bar A_2}\!\Bigl(
\{O_A\otimes I_{\bar A},\chi_{A_2\bar A_2}\}
+
\{I_A\otimes O'_{\bar A},\chi_{A_2\bar A_2}\}
\Bigr),
\label{eq:deltaJ-paper}\\
\delta D
&=
i d_\chi\,\Tr_{\bar A_2}\!\Bigl(
[O_A\otimes I_{\bar A},\chi_{A_2\bar A_2}]
+
[I_A\otimes O'_{\bar A},\chi_{A_2\bar A_2}]
\Bigr).
\label{eq:deltaD-paper}
\end{align}
We now expand $J$ and $D$ in the same Pauli basis as in
Eq.~\eqref{eq:JDexpansion_pure},
\begin{equation}
X=\sum_{l=0}^{d^2-1}\sum_{m=0}^{\bar d^2-1}\sum_{n=0}^{d_\chi^2-1}
p_{lmn}(X)\,
P_l^{(A_1)}\otimes P_m^{(\bar A_1)}\otimes P_n^{(A_2)},
\qquad X\in\{J,D\}.
\label{eq:X-general-pauli-expansion}
\end{equation}
From \eqref{eq:deltaJ-paper} and \eqref{eq:deltaD-paper}, one sees that the
shift generated by $O_A$ is supported only in the sector with $m=0$, while
the shift generated by $O'_{\bar A}$ is supported only in the sector with
$l=0$.  Hence the coefficients with
\[
l>0,\qquad m>0
\label{eq:lm-positive-sector}
\]
are untouched by the boundary-unitary freedom.  Equivalently,
\[
\delta p_{lmn}(X)=0,
\qquad
X\in\{J,D\},\,  \text{for} \quad l>0,\quad m>0.
\label{eq:coeff-invariant-general}
\]
On the other hand, coefficients in the sectors with either $m=0$ or $l=0$ can be affected by the transformation \eqref{eq:local-unitary-redef-recovery}.
This immediately explains the behavior of the grouped coefficients appearing in
$\mathscr S_2^{(\epsilon)}$.Hence, the quantities
\begin{eqns}
E_J&=\sum_{l,m,n>0} p_{lmn}^2(J),
\qquad
B_D=\sum_{m,n>0} p_{0mn}^2(D),
\\
C_D&=\sum_{l,n>0} p_{l0n}^2(D),
\qquad
E_D=\sum_{l,m,n>0} p_{lmn}^2(D)
\label{eq:PA-coeffs-invariant-blocks}
\end{eqns}
are invariant under the local-unitary freedom relevant here, whereas
\begin{equation}
C_J=\sum_{l,n>0} p_{l0n}^2(J), \qquad \text{and}\qquad B_J=\sum_{m,n>0} p_{0mn}^2(J)
\label{eq:CJ-noninvariant-block}
\end{equation}
lies precisely in the $m=0$ and $l=0$ sector of $J$ and therefore can change under the
shift generated by $O_A$ and $O_{\bar{A}}'$. In fact, one can show that by proper choice of $O_A$ and $O_{\bar A}'$, the two coefficients $C_J$ and $B_J$ can be set to zero. Since both of them enter Eq.~\eqref{eq:BEHIBE} as a positive contribution. This choice optimizes the coherent information.

From the result in Sec.~\ref{app:monotonic-pure}, the PA entropy depends on $C_J$, but not on $B_J$, and all other contributing terms remain fixed uncer the local transformation. Therefore we conclude that the optimization condition sets 
\begin{equation}
C_J=0 \implies k_1=0,
\end{equation}

\end{proof}

\subsection{Theorem~\ref{thm:KLcondition}}\label{app:KLcondition}
\begin{proof}
First we show that approximate Knill-Laflamme (aKL) condition Eq.~\eqref{eq:appKL} implies the skewing condition  Eq.~\eqref{eq:apprecovery}.
    
To make the logical tensor-product structure explicit, choose a product basis
$\{\ket{i}_a\}_{i=1}^{d}$ of $\mL_a$ and $\{\ket{j}_{\bar a}\}_{j=1}^{\bar d}$ of $\mL_{\bar a}$,
and define the corresponding code basis
\[
\ket{\widetilde{ij}} \;:=\; V\bigl(\ket{i}_a\otimes \ket{j}_{\bar a}\bigr)\in\mathcal C .
\] 
 Now restrict to operators supported on $\bar A$.  Let $\{P_{\bar A}^{\gamma}\}_{\gamma=0}^{d_{\bar A}^2-1}$ be any operator basis of
$\mathcal B(\mH_{\bar A})$ (for concreteness one may take the Pauli basis when $\mH_{\bar A}$ is a qubit system).  Then the
$\bar A$-part of the aKL condition \eqref{eq:appKL} is equivalently the collection of matrix-element identities
\begin{equation}\label{eq:matrixelement}
\bra{\widetilde{ij}}\,(I_A\otimes P_{\bar A}^{\gamma})\,\ket{\widetilde{kl}}
\;=\;
\delta_{ik}\,\bra{j}\,\mathbb P_{\bar a}^{\gamma}\,\ket{l}
\;+\;
\epsilon\, \bar Y^{\gamma}_{ij;kl},
\end{equation}
where we have set
\[
\mathbb P_{\bar a}^{\gamma}:=\mathcal E_{\bar A}(P_{\bar A}^{\gamma})\in\mathcal B(\mL_{\bar a}),\qquad
\bar Y^{\gamma}_{ij;kl}:=\bar Y_{ij;kl}(P_{\bar A}^{\gamma}).
\]
We adopt the convention $P_{\bar A}^{0}=I_{\bar A}$, so that
$\mathbb P_{\bar a}^{0}=\mathcal E_{\bar A}(I_{\bar A})=I_{\bar a}$ and $\bar Y^{0}_{ij;kl}=0$.

Consider the  Choi state associated with the encoding isometry $V$,
\begin{equation}\label{eq:choiphi}
\ket{\phi}
\;:=\;
\frac{1}{\sqrt{d_L}}
\sum_{i=1}^{d}\sum_{j=1}^{\bar d}
\ket{ij}_{r\bar r}\,\ket{\widetilde{ij}}_{A\bar A},
\end{equation}
where we have written $d:=\dim\mL_a$, $\bar d:=\dim\mL_{\bar a}$ so that $d_L=d\,\bar d$, and where $r,\bar r$ are reference systems
isomorphic to $\mL_a,\mL_{\bar a}$, respectively.

Tracing out $A$ yields a state on $r\bar r\bar A$:
\begin{align}
\phi_{r\bar r\bar A}
&:=\Tr_A\!\left(\ket{\phi}\bra{\phi}\right)\notag\\
&=
\frac{1}{d}\,I_r\otimes \rho_{\bar r\bar A}
\;+\;
\epsilon\,\eta_{r\bar r\bar A}.
\label{eq:reducedphi}
\end{align}
The first term captures the ideal decoupling of $r$ from $\bar r\bar A$, while $\eta_{r\bar r\bar A}$ collects the deviation.

Using the matrix-element identity \eqref{eq:matrixelement}, we can express $\rho_{\bar r\bar A}$ and $\eta_{r\bar r\bar A}$
in an operator basis.  Let $\{P_{\bar A}^{\gamma}\}_{\gamma}$ be an orthonormal operator basis on $\bar A$
(e.g.\ Pauli operators for qubit systems), and let $\{\mathbb P^{\gamma}_{\bar r}\}_{\gamma}$ denote the corresponding operators on
$\bar r$ defined by the aKL map (i.e.\ $\mathbb P^{\gamma}_{\bar r}$ is the representation of $\mathbb P^{\gamma}_{\bar a}$ on $\bar r$
under the fixed identification $\bar r\simeq \mL_{\bar a}$).  Then
\begin{align}
\rho_{\bar r\bar A}
&=
\frac{1}{d_{\bar A}\,\bar d}\sum_{\gamma}
\mathbb P^{\gamma}_{\bar r}\otimes P^{\gamma}_{\bar A},
\label{eq:KLstate-rho}\\
\eta_{r\bar r\bar A}
&=
\frac{1}{d_L\,d_{\bar A}}
\sum_{\gamma} \bar Y^{\gamma}_{r\bar r}\otimes P^{\gamma}_{\bar A},
\qquad
\bar Y^{\gamma}_{r\bar r}
:=
\frac{1}{d_L}\sum_{\beta}\Tr\!\bigl(\bar Y^{\gamma} P^{\beta}\bigr)\,P^{\beta}_{r\bar r}.
\label{eq:KLstate-eta}
\end{align}
Here $\{P^{\beta}_{r\bar r}\}_{\beta}$ is an orthonormal operator basis on $r\bar r$ (again, Pauli strings in the qubit case),
and $\bar Y^{\gamma}$ denotes the matrix with entries $\bar Y^{\gamma}_{ij;kl}$ in the $\ket{ij}$ basis.

By Choi's theorem, a linear map $\mathcal E_{\bar A}:\mathcal B(\mH_{\bar A})\to\mathcal B(\mL_{\bar a})$ is completely positive
if and only if its Choi operator is positive semidefinite. Equivalently,
\begin{align}
J(\mathcal E_{\bar A})
&:=\frac{1}{d_{\bar A}}\sum_{m,n=1}^{d_{\bar A}}
\ket{m}\!\bra{n}_{\bar A}\otimes \mathcal E_{\bar A}\!\left(\ket{m}\!\bra{n}_{\bar A}\right)\succeq 0 \notag\\
&=\frac{1}{d_{\bar A}}\sum_{\gamma} P_{\bar A}^{\gamma}\otimes \mathbb P_{\bar a}^{\gamma}\succeq 0,
\label{eq:choi-positive}
\end{align}
where in the second line we expanded the maximally entangled operator
$\sum_{m,n}\ket{m}\!\bra{n}\otimes \ket{m}\!\bra{n}$ in the operator basis $\{P_{\bar A}^{\gamma}\}_{\gamma}$, and
used $\mathbb P_{\bar a}^{\gamma}:=\mathcal E_{\bar A}(P_{\bar A}^{\gamma})$.

Under the fixed identification $\bar r\simeq \mL_{\bar a}$, the operator $\rho_{\bar r\bar A}$ defined in
\eqref{eq:KLstate-rho} coincides (up to relabeling) with the normalized Choi operator of $\mathcal E_{\bar A}$.
Since $\mathcal E_{\bar A}$ is completely positive, its Choi operator is positive semi-definite, and hence
\[
\rho_{\bar r\bar A}\succeq 0.
\]
Moreover, with the convention $P_{\bar A}^0=I_{\bar A}$ and
$\mathbb P_{\bar a}^0=\mathcal E_{\bar A}(I_{\bar A})=I_{\bar a}$, we have
\begin{align}
\Tr(\rho_{\bar r\bar A})
&=\frac{1}{d_{\bar A}\,\bar d}\Tr\!\left(\mathbb P_{\bar r}^0\otimes P_{\bar A}^0\right)
=\frac{1}{d_{\bar A}\,\bar d}\,\Tr(I_{\bar r})\Tr(I_{\bar A})
=1,
\end{align}
so $\rho_{\bar r\bar A}$ is a valid density operator.

Any density operator admits a purification on an complement space of dimension at least its rank.
We therefore fix a Hilbert space $\mH_{A_2}$ with
\begin{equation}\label{eq:dimA2-rank}
\dim\mH_{A_2}\;\ge\;\rank(\rho_{\bar r\bar A}),
\end{equation}
and choose a purification $\ket{\xi}_{A_2\bar r\bar A}$ such that
\begin{equation}
\Tr_{A_2}\ket{\xi}\bra{\xi}=\rho_{\bar r\bar A}.
\end{equation}

Define the product state
\[
\rho_{r\bar r\bar A}:=\frac{1}{d}I_r\otimes \rho_{\bar r\bar A},
\]
which is the leading-order term in \eqref{eq:reducedphi}. A convenient purification of $\rho_{r\bar r\bar A}$ on
$rA_1A_2\bar r\bar A$ is
\begin{equation}\label{eq:phi0prime}
\ket{\phi_0'}
:=
\frac{1}{\sqrt d}\sum_{i=1}^{d}\ket{i}_r\otimes \ket{i}_{A_1}\otimes \ket{\xi}_{A_2\bar r\bar A},
\end{equation}
where we used $\mH_{A_1}\simeq r$ to purify $\frac{1}{d}I_r$ by the maximally entangled state
$\frac1{\sqrt d}\sum_i\ket{i}_r\ket{i}_{A_1}$.

If the physical subsystem $A$ is large enough to accommodate the purification $\ket{\phi_0'}$
i.e.\ if
\begin{equation}\label{eq:embed-condition-noanc}
d\,\rank \rho_{\bar r\bar A}\le d_A,
\end{equation}
then we choose $\mH_{A_2}$ such that the $\mH_A$ has the following decomposition,
\begin{equation}\label{eq:A-decomp}
\mH_A\simeq (\mH_{A_1}\otimes\mH_{A_2})\oplus \mH_{A_3}.
\end{equation}
Otherwise, if \eqref{eq:embed-condition-noanc} fails, we adjoin an auxiliary system $E$ initialized in $\ket{0}_E$ and work in
$\mH_{AE}=\mH_A\otimes\mH_E$, choosing $\mH_E$ large enough so that
\begin{equation}\label{eq:embed-condition-anc}
d\,\dim\mH_{A_2}\le d_A\,d_E.
\end{equation}
In this case we obtain the analogous decomposition
\begin{equation}\label{eq:AE-decomp}
\mH_{AE}\simeq (\mH_{A_1}\otimes\mH_{A_2})\oplus \mH_{A_3}.
\end{equation}
In what follows we write $AE$ for the purifying register, with the understanding that the special case $d_E=1$ corresponds to
no ancilla (for the case~\eqref{eq:embed-condition-noanc} holds).

Using the standard inequalities
\[
1-\sqrt{F(\sigma,\tau)}\le \frac12\|\sigma-\tau\|_1
\le \frac{\sqrt{D}}{2}\|\sigma-\tau\|_2,
\qquad D:=\dim(\mathrm{supp}),
\]
and noting that $\phi_{r\bar r\bar A}$ and $\rho_{r\bar r\bar A}$ act on a space of dimension $D=d_L\,d_{\bar A}$, we obtain
\begin{align}
1-\sqrt{F(\phi_{r\bar r\bar A},\rho_{r\bar r\bar A})}
&\le \frac{\sqrt{d_L d_{\bar A}}}{2}\,\|\phi_{r\bar r\bar A}-\rho_{r\bar r\bar A}\| \notag\\
&= \epsilon\,\frac{\sqrt{d_L d_{\bar A}}}{2}\,\sqrt{\Tr(\eta_{r\bar r\bar A}\eta_{r\bar r\bar A}^{\dagger})}.
\label{eq:fidelity-bound-start}
\end{align}
Substituting the expansion \eqref{eq:KLstate-eta} of $\eta_{r\bar r\bar A}$ and using orthonormality of the operator basis on $\bar A$
gives
\begin{align}
\Tr(\eta_{r\bar r\bar A}\eta_{r\bar r\bar A}^{\dagger})
&=
\frac{1}{d_L^2 d_{\bar A}}\sum_{\gamma=0}^{d_{\bar A}^2-1}\Tr\!\left(\bar Y^{\gamma}\bar Y^{\gamma\dagger}\right),
\end{align}
and therefore
\begin{equation}\label{eq:fidelity-bound}
1-\sqrt{F(\phi_{r\bar r\bar A},\rho_{r\bar r\bar A})}
\le
\frac{\epsilon}{2\sqrt{d_L}}\,
\Bigl(\sum_{\gamma}\Tr(\bar Y^{\gamma}\bar Y^{\gamma\dagger})\Bigr)^{1/2}
=
\frac{\epsilon}{2\sqrt{d_L}}\,\|\bar Y\|,
\end{equation}
where
\[
\|\bar Y\|:=\left(\sum_{\gamma=0}^{d_{\bar A}^2-1}\Tr(\bar Y^{\gamma}\bar Y^{\gamma\dagger})\right)^{1/2}.
\]

By Uhlmann's theorem, there exists a purification $\ket{\phi_0}$ of $\rho_{r\bar r\bar A}$ on the purifying register $AE$ such that
\begin{equation}
F(\phi_{r\bar r\bar A},\rho_{r\bar r\bar A})
=
\bigl|\bra{\phi, 0_E}\phi_0\rangle\bigr|^2,
\qquad
\Tr_{AE}\ket{\phi_0}\bra{\phi_0}=\rho_{r\bar r\bar A},
\end{equation}
and this overlap is maximal over all purifications of $\rho_{r\bar r\bar A}$ on $AE$.
Combining with \eqref{eq:fidelity-bound} yields
\begin{align}
\|\ket{\phi}\ket{0}_E-\ket{\phi_0}\|^2
&=2-2\Re\braket{\phi,0_E|\phi_0}
= 2\bigl(1-\sqrt{F(\phi_{r\bar r\bar A},\rho_{r\bar r\bar A})}\bigr)
\le \frac{\epsilon}{\sqrt{d_L}}\,\|\bar Y\|,
\end{align}
hence
\begin{equation}\label{eq:phi-close-purif}
\|\ket{\phi}\ket{0}_E-\ket{\phi_0}\|
\le
\left(\frac{\epsilon}{\sqrt{d_L}}\,\|\bar Y\|\right)^{1/2}.
\end{equation}

Finally, since $\ket{\phi_0}$ and $\ket{\phi_0'}$ are both purifications of $\rho_{r\bar r\bar A}$, they differ by a unitary on the
purifying register $AE$. That is, there exists a unitary $R_{AE}$ on $AE$ such that
\begin{equation}
\ket{\phi_0}=(R_{AE}^{\dagger}\otimes I_{r\bar r\bar A})\ket{\phi_0'}.
\end{equation}
Plugging this into \eqref{eq:phi-close-purif}, we may write
\begin{equation}\label{eq:phi0E}
\ket{\phi}\ket{0}_E
=
R_{AE}^{\dagger}\ket{\phi_0'}
+\mathcal N(\epsilon)\ket{\delta},
\end{equation}
for some unit vector $\ket{\delta}$, where the error amplitude obeys
\[
\mathcal N(\epsilon)
=
\bigl\|\ket{\phi}\ket{0}_E-R_{AE}^{\dagger}\ket{\phi_0'}\bigr\|
\le
\left(\frac{\epsilon}{\sqrt{d_L}}\,\|\bar Y\|\right)^{1/2}.
\]

Since $\rho_{\bar r}:=\Tr_{\bar A}(\rho_{\bar r\bar A})=\frac{1}{\bar d}I_{\bar r}$ has a flat spectrum, any purification
$\ket{\xi}_{A_2\bar r\bar A}$ of $\rho_{\bar r\bar A}$ has uniform Schmidt coefficients across the bipartition
$\bar r\,:\,(A_2\bar A)$. In particular, we may choose a Schmidt decomposition of the form
\begin{equation}
\ket{\xi}
=
\frac{1}{\sqrt{\bar d}}\sum_{j=1}^{\bar d}\ket{j}_{\bar r}\otimes \ket{\chi_j}_{A_2\bar A},
\label{eq:xi-schmidt}
\end{equation}
where the vectors $\{\ket{\chi_j}\}_{j=1}^{\bar d}$ are orthonormal in $\mH_{A_2}\otimes\mH_{\bar A}$.

Substituting \eqref{eq:xi-schmidt} into the purification $\ket{\phi_0'}$ and using
Eq.~\eqref{eq:phi0E}, we obtain
\begin{equation}\label{eq:fromA}
R_{AE}\ket{\phi}\ket{0}_E
=
\frac{1}{\sqrt{d_L}}\sum_{i=1}^{d}\sum_{j=1}^{\bar d}
\ket{ij}_{r\bar r}\otimes \ket{i}_{A_1}\otimes \ket{\chi_j}_{A_2\bar A}
\;+\;
\mathcal N(\epsilon)\,R_{AE}\ket{\delta}.
\end{equation}

By the same argument applied to the $A$-supported part of the aKL condition, there exists a unitary $R_{\bar A\bar E}$ on $\bar A$ and auxiliary $\bar E$, and
a purification $\ket{\xi'}_{\bar A_2 r A}$ of the corresponding reduced state such that
\begin{equation}\label{eq:fromAbar}
R_{\bar A\bar E}\ket{\phi}\ket{0}_{\bar E}
=
\frac{1}{\sqrt{d_L}}\sum_{i=1}^{d}\sum_{j=1}^{\bar d}
\ket{ij}_{r\bar r}\otimes \ket{j}_{\bar A_1}\otimes \ket{\chi'_i}_{\bar A_2 A}
\;+\;
\overline{\mathcal N}(\epsilon)\,R_{\bar A\bar E}\ket{\delta'},
\end{equation}
with $\overline{\mathcal N}(\epsilon)\le \sqrt{\epsilon\,\| Y\|/d_L}$.

Comparing \eqref{eq:fromA} and \eqref{eq:fromAbar} (they describe the same state $\ket{\phi}\ket{0}_{E\bar E}$ after adjoining with auxiliary systems) gives an averaged  bound:
\begin{align}
\frac{1}{d_L}\sum_{i=1}^{d}\sum_{j=1}^{\bar d}
\Bigl\|
\ket{i}_{A_1}\,R_{\bar A\bar E}\ket{\chi_j}\ket{0}_{\bar E}
-
\ket{j}_{\bar A_1}\,R_{AE}\ket{\chi'_i}\ket{0}_E
\Bigr\|^2
&=
\norm{\overline{\mathcal N}(\epsilon)\,\ket{\delta'}\ket{0}_{ E}
-\mathcal N(\epsilon)\,\ket{\delta}\ket{0}_{\bar E}}^2 \notag\\
&\le \bigl(\mathcal N(\epsilon)+\overline{\mathcal N}(\epsilon)\bigr)^2 .
\label{eq:avg-mismatch}
\end{align}
Hence there exists a fixed index $i_0\in\{1,\dots,d\}$ such that
\begin{equation}\label{eq:exists-i0}
\sum_{j=1}^{\bar d}
\Bigl\|
\ket{i_0}_{A_1}\,R_{\bar A\bar E}\ket{\chi_j}\ket{0}_{\bar E}
-
\ket{j}_{\bar A_1}\,R_{AE}\ket{\chi'_{i_0}}\ket{0}_E
\Bigr\|^2
\le
\bar d\, \bigl(\mathcal N(\epsilon)+\overline{\mathcal N}(\epsilon)\bigr)^2 .
\end{equation}

Taking the inner product of each term in \eqref{eq:exists-i0} with $\bra{i_0}_{A_1}$ produces a joint (unnormalized) state  on $A_2\bar A_2$, 
\[
\ket{\tilde\chi}_{A_2\bar A_2}
\;:=\;
(\bra{i_0}_{A_1}\otimes I_{A_2\bar A_2})\,R_{AE}\ket{\chi'_{i_0}}_{\bar A_2 A}\ket{0}_E.
\]
Let
\[
    \ket{\chi}_{A_2\bar A_2}:=\frac{\ket{\tilde \chi}_{A_2\bar A_2}}{\sqrt{\bra{\tilde\chi}\tilde{\chi}\rangle}}.
\]
Then \eqref{eq:exists-i0} implies the following bound
\begin{equation}\label{eq:chi-factor}
\sum_{j=1}^{\bar d}\norm{
R_{\bar A\bar E}\ket{\chi_j}_{A_2\bar A}\ket{0}_{\bar E}
-
\ket{j}_{\bar A_1}\otimes \ket{\chi}_{A_2\bar A_2}
}^2
\le
\bar d\,\lrp{\mathcal N(\epsilon)+\overline{\mathcal N}(\epsilon)}^2.
\end{equation}

Substituting \eqref{eq:chi-factor} back into \eqref{eq:fromA} gives the desired approximate factorization of the code basis.
Equivalently,
\begin{align}
&\sqrt{\sum_{i=1}^{d}\sum_{j=1}^{\bar d}
\Bigl\|
\ket{\widetilde{ij}}\ket{0}_{E\bar E}
-
(R_{AE}^\dagger\otimes R_{\bar A\bar E}^\dagger)
\bigl(\ket{i}_{A_1}\otimes\ket{j}_{\bar A_1}\otimes\ket{\chi}_{A_2\bar A_2}\bigr)
\Bigr\|^2}\\
=&\sqrt{d_L}\norm{R_{AE}\otimes R_{\bar A\bar E}\ket{\phi}\ket{0}_{E\bar E}-\frac{1}{\sqrt{d_L}}\sum_{i=1}^d\sum_{j=1}^{\bar d}\ket{ij}_{r\bar r}\otimes\ket{ij}_{A_1\bar A_1}\otimes\ket{\chi}_{A_2\bar A_2}}\\
\le& \sqrt{\sum_{i=1}^d\sum_{j=1}^{\bar d}\norm{
R_{\bar A\bar E}\ket{\chi_j}\ket{0}_{\bar E}
-
\ket{j}_{\bar A_1}\otimes \ket{\chi}_{A_2\bar A_2}
}^2}+\sqrt{d_L}\,\mathcal{N}(\epsilon)\\
\le&
2\sqrt{d_L}\,\bigl(\mathcal N(\epsilon)+\overline{\mathcal N}(\epsilon)\bigr)\\
\le&
2\sqrt{\epsilon}\,\bigl(\sqrt{\|Y\|}+\sqrt{\|\bar Y\|}\bigr).
\label{eq:RRphi}
\end{align}
where in the last step we used the bounds on $\mathcal N(\epsilon)$ and $\overline{\mathcal N}(\epsilon)$ established earlier.

Now define, for each logical basis label $k\in\{1,\dots,d_L\}$,
\begin{equation}
\ket{t_k}\;:=(R_{AE}^\dagger\otimes R_{\bar A\bar E}^\dagger)\;\ket{k}_{A_1\bar A_1}\otimes \ket{\chi}_{A_2\bar A_2}.
\end{equation}
The vectors $\{\ket{t_k}\}_{k=1}^{d_L}$ are orthonormal and span a subspace $\mH_T\subset \mH_{P'}:= \mH_{AE}\otimes\mH_{\bar A\bar E}$
 isomorphic to the unskewed (exact subsystem complementary) code subspace.
Extend $\{\ket{t_k}\}$ to an orthonormal basis of $\mH_{P'}$, and denote the orthogonal complement of $\mH_T$ by
$\mH_{T^\perp}$.

Define a unitary $U$ on $\mH_{P'}$ by specifying its action on $\mH_T$ and $\mH_{T^\perp}$:
\begin{equation}\label{eq:def-U}
U
\;:=\;
\Bigl(\ket{0}_{E\bar E}\otimes \sum_{k=1}^{d_L}\ket{\tilde k}\!\bra{t_k}\Bigr)\;\oplus\; I_{\mH_{T^\perp}}.
\end{equation}
(Here $\{\ket{\tilde k}\}$ is the corresponding orthonormal basis of the code $\mathcal C$, and the direct sum is with respect to
$\mH_{A\bar A}=\mH_T\oplus \mH_{T^\perp}$.)

We next bound $\|U-I\|$ on $\mH_T$.  Let $\ket{\psi}\in\mH_{P'}$ be arbitrary with expansion
$\ket{\psi}=\sum_{i=1}^{d_L}\alpha_i\ket{t_i}+\alpha_{\perp}\ket{t^{\perp}}$, where $\ket{t^{\perp}}\in \mH_{T^{\perp}}$ and  $\sum_{i=1}^{d_L}|\alpha_i|^2\le 1$. Then
\begin{align}
\|(U-I)\ket{\psi}\|
&=
\Bigl\|\sum_{i=1}^{d_L}\alpha_i\bigl(\ket{\tilde i}\ket{0}-\ket{t_i}\bigr)\Bigr\|
\le
\sqrt{\sum_{i=1}^{d_L}\|\ket{\tilde i}\ket{0}-\ket{t_i}\|^2}\\
&\leq 2\sqrt{\epsilon}\,\bigl(\sqrt{\|Y\|}+\sqrt{\|\bar Y\|}\bigr),
\label{eq:UminusI-bound}
\end{align}

where we used the bound in Eq.~\eqref{eq:RRphi} obtained previously.

Eq.~\eqref{eq:UminusI-bound} implies the spectral-norm (denoted as $\norm{\cdot}_2$) bound
\begin{equation}\label{eq:UminusI-opnorm}
\|U-I\|_2
\le
2\sqrt{\epsilon}\,\bigl(\sqrt{\|Y\|}+\sqrt{\|\bar Y\|}\bigr).
\end{equation}

Finally, using the standard estimate $\|\log U\|_2\le \frac{\pi}{2}\|U-I\|_2$ for unitaries with spectrum in the principal branch,
together with \eqref{eq:UminusI-opnorm}, we conclude that there exists a Hermitian operator $W$ on $\mH_{A\bar A}$ and a parameter $\epsilon'>0$ such that
$U=e^{i\epsilon' W}$ with
\[
\epsilon'\,\|W\|_2=\|\log U\|_2
\le
\pi\,\sqrt{\epsilon}\,\bigl(\sqrt{\|Y\|}+\sqrt{\|\bar Y\|}\bigr),
\]

Using the norm $|Y|$ and $|\bar Y|$, defined as 
\begin{eqns}
    |Y|:=\sup_{\|X\|\le1}\max_{i,j}|Y_{ij}(X)|,\qquad
|\bar Y|:=\sup_{\|X\|\le1}\max_{i,j}|\bar Y_{ij}(X)|,
\end{eqns}
we obtain 
\begin{eqns}
    \epsilon'\,\|W\|_2\le \pi\,\sqrt{\epsilon\,d_L}\,\bigl(\sqrt{d_{A}|Y|}+\sqrt{d_{\bar A}|\bar Y|}\bigr).
\end{eqns}

Since $U-I$ vanishes on $\mH_{T^\perp}$ by definition, this leads to a bound in Frobenius norm:
\begin{eqns}
    \epsilon'\,\|W\|_F\le \pi\,\sqrt{\epsilon}\,d_L\,\bigl(\sqrt{d_{A}|Y|}+\sqrt{d_{\bar A}|\bar Y|}\bigr)
\end{eqns}

We now prove the reverse implication \textup{(2)$\Rightarrow$(1)}.

In what follows, we incorporate the ancillas into the corresponding physical subsystems, i.e.\
we replace $A\leftarrow AE$ and $\bar A\leftarrow \bar A\bar E$, and similarly for the codewords $\ket{\tilde i}_P\leftarrow \ket{i}_P\ket{0}_{E\bar E}$.

Assume the approximate recovery form \eqref{eq:apprecovery}, and write each codeword as
\begin{equation}\label{eq:tilde-as-perturb}
\ket{\tilde i}=e^{i\epsilon' W}\ket{t_i},
\end{equation}
where
\begin{equation}\label{eq:def-ti}
\ket{t_i}
:=
(R_A^\dagger\otimes R_{\bar A}^\dagger)\Bigl(\ket{i_L}_{A_1\bar A_1}\otimes \ket{\chi}_{A_2\bar A_2}\Bigr).
\end{equation}
The family $\{\ket{t_i}\}$ therefore consists of codewords of an unskewed subsystem complementary code.
Equivalently, there exists an isometry $V_t:\mH_L\to \mH_P$ such that
\begin{equation}
\ket{t_i}=V_t\ket{i_L}\qquad (\forall\, i).
\end{equation}

For an exact subsystem complementary code, there is a linear map
$\mathcal E_{\bar A}:\mathcal B(\mH_{\bar A})\to \mathcal B(\mL_{\bar a})$ satisfying, for all $X_{\bar A}$,
\begin{equation}\label{eq:exact-KL-t}
\bra{t_i}(I_A\otimes X_{\bar A})\ket{t_j}
=
\bra{i_L}\bigl(I_a\otimes \mathcal E_{\bar A}(X_{\bar A})\bigr)\ket{j_L}.
\end{equation}
Moreover, $\mathcal E_{\bar A}$ is completely positive: the reduced state obtained from the Choi vector
$\frac{1}{\sqrt{d_L}}\sum_i \ket{i}_{r\bar r}\ket{t_i}$ is (up to normalization and the identification $\bar r\simeq \mL_{\bar a}$)
the Choi operator $J(\mathcal E_{\bar A})$, hence positive semidefinite, which is equivalent to complete positivity.

It remains to compare matrix elements in the perturbed code $\{\ket{\tilde i}\}$ to those in the exact code $\{\ket{t_i}\}$.
For any operator $O\in\mathcal B(\mH_P)$,
\begin{align}
\Bigl|\bra{\tilde i}O\ket{\tilde j}-\bra{t_i}O\ket{t_j}\Bigr|
&=
\Bigl|\bra{t_i}\bigl(e^{-i\epsilon' W} O e^{i\epsilon' W}-O\bigr)\ket{t_j}\Bigr|
\notag\\
&\le \bigl\|e^{-i\epsilon' W} O e^{i\epsilon' W}-O\bigr\|_2.
\label{eq:conj-diff-start}
\end{align}
Using the integral representation of the conjugation derivative,
\[
e^{-i\epsilon' W} O e^{i\epsilon' W}-O
=
\int_{0}^{\epsilon'} ds \,\frac{d}{ds}\Bigl(e^{-isW}Oe^{isW}\Bigr)
=
\int_{0}^{\epsilon'} ds\,\bigl(-i\bigr)\,e^{-isW}[W,O]e^{isW},
\]
and unitary invariance of the Hilbert--Schmidt norm, we obtain
\begin{align}
\bigl\|e^{-i\epsilon' W} O e^{i\epsilon' W}-O\bigr\|_2
&\le
\int_{0}^{\epsilon'} ds\,\|[W,O]\|_2
\le
\epsilon'\,\|[W,O]\|_2
\le
2\epsilon'\,\|W\|_2\,\|O\|_2,
\label{eq:conj-diff}
\end{align}
where the last inequality uses $\|AB\|_2\le \|A\|_2\|B\|_2$ and the triangle inequality.

In particular, for $O=I_A\otimes X_{\bar A}$ with $\|X_{\bar A}\|_2=1$,
combining \eqref{eq:exact-KL-t} with \eqref{eq:conj-diff-start}--\eqref{eq:conj-diff} yields
\begin{align}
\epsilon\,|Y_{ij}(X_{\bar A})|
&=
\Bigl|\bra{\tilde i}(I_A\otimes X_{\bar A})\ket{\tilde j}
-\bra{i_L}\bigl(I_a\otimes \mathcal E_{\bar A}(X_{\bar A})\bigr)\ket{j_L}\Bigr|\notag\\
&=
\Bigl|\bra{\tilde i}(I_A\otimes X_{\bar A})\ket{\tilde j}
-\bra{t_i}(I_A\otimes X_{\bar A})\ket{t_j}\Bigr|
\le
2\epsilon'\,\|W\|_2.
\end{align}
This is exactly the desired aKL bound for $\bar A$-supported operators (and the $A$-supported case is identical), completing the proof.
\end{proof}

\subsection{Theorem~\ref{thm:general_mixed}}\label{app:general_mixed}
We generalize the PA entropy calculation to the case where the resource state $\ket{\chi}$ has a non-flat entanglement spectrum.
 The proof follows the same steps as the flat-spectrum calculation in Appendix~\eqref{app:mixed}, with the only difference that the reduced state $\chi_{A_2}$ is no longer proportional to the identity. We therefore indicate only the required modifications to the Weingarten (WG) evaluation and to the spectral $s$-integrals. All other steps (the perturbative relative-entropy expansion \eqref{rel-ent}, the Haar averaging over $U, V$ on $A_1$ and $\bar A_1$, and the diagrammatic reduction to $W$--$\mathscr{R}$ building blocks) are unchanged.

 In this section, we focus on the case where  $|\bar A_1|$ is empty, generalizing Theorem~\ref{thm:monotonic-mixed}.
 Write the Schmidt decomposition of the resource state as
\begin{equation}
\ket{\chi}_{A_2\bar A_2}
=
\sum_{p}\sqrt{\mu_p}\,\ket{p}_{A_2}\ket{p}_{\bar A_2},
\qquad
\mu_p>0,\ \ \sum_p \mu_p=1,
\label{eq:chi-nonflat}
\end{equation}
so that the reduced state $\chi_{A_2}$ is diagonal with possible degeneracies,
\begin{equation}
\chi_{A_2}
=
\sum_m \mu_m\,\Pi_m
\equiv
\bigoplus_m \mu_m\, I_m,
\label{eq:chi-blockdiag}
\end{equation}
where $\Pi_m$ acting on $A_2$, is the projector onto the eigenspace of eigenvalue $\mu_m$, and $I_m$ denotes the identity on that
eigenspace.

In the unperturbed limit, the recovered boundary state of an exact code factorizes, so
$\sigma^{(0)}_{A_1A_2}=\sigma^{(0)}_{A_1}\otimes \chi_{A_2}$, and the resolvent appearing in
$D_{\ln}(\sigma^{(0)}_{A_1A_2})$ is
\begin{equation}
\Gamma_{A_1A_2}(s)
:=
(\sigma^{(0)}_{A_1}\otimes \chi_{A_2}+sI)^{-1}
=
\sum_m \Gamma_m(s)\otimes \Pi_m,
\qquad
\Gamma_m(s):=(\mu_m\sigma^{(0)}_{A_1}+sI)^{-1}.
\label{eq:Gamma-nonflat}
\end{equation}

For the boundary-WG diagrams, every occurrence of the flat-spectrum resolvent
$\Gamma(s)=(d_\chi^{-1}\sigma^{(0)}_{A_1}+sI)^{-1}$ is now replaced according to the the block
decomposition \eqref{eq:Gamma-nonflat}. For example, any boundary term containing 
$\Tr(\sigma^{(0)}_{A_1}\Gamma(s))\,\Tr(\sigma^{(0)}_{A_1}\Gamma(s))$ is replaced by
\begin{eqns}
\Tr(\sigma^{(0)}_{A_1}\Gamma_{A_1A_2}(s))\,
\Tr(\sigma^{(0)}_{A_1}\Gamma_{A_1A_2}(s))
\rightarrow
\Tr(\sigma^{(0)}_{A_1}\Gamma_m(s))\,
\Tr(\sigma^{(0)}_{A_1}\Gamma_n(s)).
\label{eq:block-splitting}
\end{eqns}
Similar block decomposition also occurs to operators whose support has overlap with $A_2$. To make this structure explicit, we adopt the convention that for any operator $X$ on $MA_2$ we write its
$\chi_{A_2}$-eigenbasis block components as
\begin{equation}
X_{mn}
:=
(I_{M}\otimes \Pi_m)\,X\,(I_{M}\otimes \Pi_n),
\label{eq:block-component-def}
\end{equation}
where $M$ denotes any subregions of $A_1\bar A$. We have  $X=\sum_{m,n}X_{mn}$ and each $X_{mn}$ is a linear map
$\mathcal{H}_{M}\otimes\mathrm{Ran}(\Pi_n)\to \mathcal{H}_{M}\otimes\mathrm{Ran}(\Pi_m)$. As a result, each boundary WG contraction splits into a sum over block labels. For example, the flat-spectrum combination
$\int ds\,\mathscr{R}_1(s)\,\mathcal{W}_2$ is replaced by
\begin{align}
\int_0^\infty ds\,\mathscr{R}_1(s)\,\mathcal{W}_2
\ &\rightarrow\
\sum_{m,n}\int_0^\infty ds\,
\Tr\!\big(\sigma^{(0)}_{A_1}\Gamma_m(s)\big)\,
\Tr\!\big(\sigma^{(0)}_{A_1}\Gamma_n(s)\big)\,
\times \\ & \qquad \qquad \Tr\!\Big[
\Tr_{\bar A}\!\big((W_R\chi)_{mn}\big)\,
\Tr_{\bar A}\!\big((W_R\chi)_{nm}\big)
\Big],
\label{eq:R1W2-nonflat}
\end{align}
where $(W_R\chi)_{mn}$ denotes the $(m,n)$ block defined in  \eqref{eq:block-component-def}. Each
block $(\cdot)_{mn}$ can be viewed as a linear map
$\mathcal{H}_{A_1\bar A}\otimes\mathcal{H}_{A_2^n}\to \mathcal{H}_{A_1\bar A}\otimes\mathcal{H}_{A_2^m}$.

On the bulk side, the WG evaluation is unchanged: all bulk diagrams trace out $A_2$ (as in the flat case, e.g. the analogues of $W_7$ and $W_8$), so no $A_2$-block labels survive in the resolvent. Equivalently, the bulk relative entropy still depends on $A_1$ only through the same operator \begin{equation} \Gamma(s)=(d_\chi^{-1}\sigma^{(0)}_{A_1}+sI)^{-1}, \label{eq:Gamma-bulk-same} \end{equation} while the dependence on $\chi$ enters only through the $A_2$-traced contractions of $W_R\chi$

We now express the resulting block-summed answer in a compact operator form.  Define the Hermitian operators
\begin{equation}
J:=d_\chi\,\Tr_{\bar A_2}\!\Big(\{W_R,\ket{\chi}\bra{\chi}\}\Big),
\qquad
D:=i d_\chi\,\Tr_{\bar A_2}\!\Big([W_R,\ket{\chi}\bra{\chi}]\Big),
\label{eq:JD-nonflat}
\end{equation}
 same as the definitions in \eqref{eq:JD-mixed} in the flat-spectrum case.  In the eigenbasis of
$\chi_{A_2}$, we decompose $J$ and $D$ into block components by inserting the projectors $\Pi_m$ on $A_2$:
\begin{equation}
J_{mn}:=(I_{A_1}\otimes \Pi_m)\,J\,(I_{A_1}\otimes \Pi_n),
\qquad
D_{mn}:=(I_{A_1}\otimes \Pi_m)\,D\,(I_{A_1}\otimes \Pi_n),
\label{eq:JD-blocks}
\end{equation}
so that $J=\sum_{m,n}J_{mn}$ and $D=\sum_{m,n}D_{mn}$,  
matching the block-index structure produced by resolving every $A_2$ wire as explained above.

After performing the Haar average over $U$ on $A_1$, the same Weingarten rules as in Appendix~\eqref{app:mixed}
again separate the answer into a product of (i) spectral functions coming from the $s$-integrals involving
$\Gamma_m(s)$ and $\Gamma_n(s)$, and (ii) purely algebraic contractions of $W_R$ and $\chi$, which
can be written in terms of the blocks $J_{mn}$ and $D_{mn}$.  We record the resulting replacements in the same
order as the flat-spectrum decomposition \eqref{eq:Scorr}, starting with the $J$-sector.

Concretely, the flat contribution $c_1 f_1(\lambda)$ is promoted to a block-sum
\begin{equation}
c_1 f_1(\lambda)
\ \longrightarrow\
\sum_{m,n} c_1^{mn}\, f_1^{mn}(\lambda),
\label{eq:c1f1-promotion}
\end{equation}
where $c_1^{mn}$ is obtained from the $J$-dependent trace combinations in \eqref{eq:c1c2c3} by resolving every
$A_2$-contraction into block components. 

Equivalently, one applies the rule
\[
J\ \mapsto\ J_{mn},\qquad J^2\ \mapsto\ J_{mn}J_{nm},\qquad
\Tr_{A_1}(J)\ \mapsto\ \Tr_{A_1}(J_{mn}),
\]
and similarly for the remaining traced quantities.
In particular, each occurrence of $\Tr(J^2)$ is replaced by $\Tr(J_{mn}J_{nm})$, and each occurrence of
$\Tr(\Tr_{M}(J)^2)$ is replaced, in the same manner, by
$\Tr\!\big(\Tr_{M}(J_{mn})\,\Tr_{M}(J_{nm})\big)$, where M is any subsystem of $A\bar A$

Hence, we have:
\begin{equation}
\begin{aligned}
c_1^{mn}(J)
:=
\frac{d}{2 d_\chi\left(d^2-1\right)}\Bigg[
&\Tr_{A_1 A_2}\!\left(J_{mn}J_{nm}\right)
-\frac{1}{d}\,\Tr_{A_2}\!\left(\Tr_{A_1}(J_{mn})\,\Tr_{A_1}(J_{nm})\right) \\
&-\frac{\delta_{mn}}{d_\chi}\,\Tr_{A_1}\!\left(\Tr_{A_2}(J_{mm})^2\right)
+\frac{\delta_{mn}}{d\,d_\chi}\,\Tr_{A_1 A_2}\!\left(J_{mm}\right)^2
\Bigg].
\end{aligned}
\end{equation}

The terms in $c_1$ that originate from tracing out $A_2$ remain diagonal in the block labels, since a complete trace over $A_2$ collapses $\Pi_m\Pi_n=\delta_{mn}\Pi_m$.  The corresponding spectral factor $f_1^{mn}(\lambda)$ is obtained from the same linear combination of $s$-integrals that defines $f_1(\lambda)$ in \eqref{eq:f1f2}, with the boundary resolvents replaced as $\Gamma(s)\mapsto \Gamma_m(s)$ and $\Gamma(s)\mapsto \Gamma_n(s)$.  Writing
$\sigma^{(0)}_{A_1}=\sum_i \lambda_i |i\rangle\langle i|$, one convenient closed form is 
\begin{equation}
f_1^{mn}(\lambda)
=\frac{1}{2d}\sum_{i j}\frac{(\lambda_i-\lambda_j)^2}{(\mu_m\lambda_i-\mu_n\lambda_j)}\ln\frac{\mu_m\lambda_i}{\mu_n\lambda_j}.
\label{eq:f1mn-def}
\end{equation}
with the continuous extension at $\mu_m\lambda_i=\mu_n\lambda_j$ understood.  When $\mu_m=\mu_n=1/d_\chi$ this reduces to the flat-spectrum function $f_1(\lambda)$. 

Next, the contribution that in the flat case recombined into $c_2 f_2(\lambda)$ is promoted to
\begin{equation}
c_2 f_2(\lambda)
\ \longrightarrow\
\sum_{m,n} c_2^{mn}\, f_2^{mn}(\lambda),
\label{eq:c2f2-promotion}
\end{equation}
with coefficients expressed by the same $D$-trace combinations as in \eqref{eq:c1c2c3}, but evaluated on the block products $D_{mn}D_{nm}$:
\begin{equation}
c_2^{mn}(D)
:=
\frac{d}{2d_\chi(d^2-1)}
\left[
\Tr_{A_1A_2}\!\big(D_{mn}D_{nm}\big)
-\frac{1}{d}\,
\Tr_{A_2}\!\Big(\Tr_{A_1}(D_{mn})\,\Tr_{A_1}(D_{nm})\Big)
\right].
\label{eq:c2mn-def}
\end{equation}
The associated spectral function $f_2^{mn}(\lambda)$ is obtained from the same linear combination of $s$-integrals that defines $f_2(\lambda)$ in \eqref{eq:f1f2}, with the replacement $\Gamma(s)\mapsto \Gamma_m(s)$ and
$\Gamma(s)\mapsto \Gamma_n(s)$.  In the eigenbasis of $\sigma^{(0)}_{A_1}$, a convenient closed form is
\begin{equation}
f_2^{mn}(\lambda)
=
\sum_{i,j}\frac{1}{2d}\,
\frac{(\lambda_i+\lambda_j)^2}{\mu_m\lambda_i-\mu_n\lambda_j}\,
\ln\!\frac{\mu_m\lambda_i}{\mu_n\lambda_j}
\;,
\label{eq:f2mn-def}
\end{equation}
where again the continuous extension at $\mu_m\lambda_i=\mu_n\lambda_j$ is understood, and the constant terms arise from the same normalization pieces as in the flat-spectrum evaluation.

Collecting these block-resolved contributions together with the similarly promoted constant term
$c_3\to\sum_{m,n}c_3^{mn}$ yields the stated structure
\[
\langle S_{\mathrm{corr}}\rangle
=
\epsilon\,c_0
+
\frac{\epsilon^2}{2}
\sum_{m,n}\Big(c_1^{mn}f_1^{mn}(\lambda)+c_2^{mn}f_2^{mn}(\lambda)+c_3^{mn}\Big),
\]
where $c_0$ arises solely from the term $-\Tr\!\big(\delta\chi\,\ln\chi\big)$ in the general proto-area identity
Eq.~\eqref{eq:PA-step}. For a flat entanglement spectrum of $\chi_{A_2}$ one has $\delta\chi=0$ and this contribution
vanishes, whereas for a non-flat spectrum it is generically nonzero and produces an $\mathcal{O}(\epsilon)$ correction
after Haar averaging.

The monotonicity statements follow from Appendix~\ref{app:f1f2f3} after the block replacement. For each fixed pair $(m,n)$, we have shown in Appendix~\ref{app:f1f2f3} that the functions $f^{mn}_1(\lambda)$ and $f^{mn}_2(\lambda)$ are each monotonically decreasing with bulk entanglement (even though their $s$-integral representations differ from those of $f_1,f_2$). Since the corresponding coefficients $c_i^{mn}$ are nonnegative by construction, the block-summed correction inherits these monotonicity properties.
 \qed

\subsection{Theorem~\ref{thm:general_pure}}\label{app:general_pure}
We now extend Theorem~\eqref{thm:monotonic-pure} to a general resource state
$\ket{\chi}_{A_2\bar A_2}$ with non-flat entanglement spectrum. The derivation proceeds exactly as in the flat-spectrum pure-state analysis of Appendix~\eqref{app:monotonic-pure}, with the only change being the block-resolution procedure introduced in the mixed-state generalization of Appendix~\eqref{app:general_mixed}.

Specifically, every boundary and bulk trace diagram containing an $A_2$ wire is modified by the same replacement rule: $\chi_{A_2}$ is promoted to the direct sum of spectral projectors (see Eq.~\eqref{eq:chi-blockdiag}), while the corresponding $A_1A_2$ resolvents are replaced by their block-diagonal versions (see Eq.~\eqref{eq:Gamma-nonflat}). Accordingly, we record only the resulting promoted coefficients and spectral functions.

Define $J$ and $D$ exactly as in the flat-spectrum pure case, \begin{equation} J:=d_\chi\,\Tr_{\bar A_2}\!\left(\{W_R,\ket{\chi}\bra{\chi}\}\right), \qquad D:=i d_\chi\,\Tr_{\bar A_2}\!\left([W_R,\ket{\chi}\bra{\chi}]\right), \end{equation} now regarded as operators acting on $A_1\bar A_1A_2$. Decomposing $A_2$ into the eigenspaces of $\chi_{A_2}$ induces corresponding block components $J_{mn}$ and $D_{mn}$, viewed as maps from the \(n\)-th eigenspace of \(\chi_{A_2}\) to the \(m\)-th eigenspace. 

Upon carrying out the same independent local Haar averages over \(U\) on \(A_1\) and \(V\) on \(\bar A_1\), the Weingarten calculus produces the same diagrammatic structure as in Appendix~\eqref{app:monotonic-pure}, with the only change that each flat-spectrum coefficient is promoted to a double sum over block labels \((m,n)\). Accordingly, the flat-spectrum decomposition in Eq.~\eqref{PA-entropy_pure_full} is replaced by
\begin{align}
\big\langle S_{\mathrm{corr}}\big\rangle
=&\epsilon k_0+\frac{\epsilon^2}{2}\sum_{m,n}\bigg[
k_1^{mn}f^{mn}_1(\lambda)+k_2^{mn}f^{mn}_2(\lambda)
+k_3^{mn}\!\left(f^{mn}_3(\lambda)-\frac{1}{d^2}\Bigl(1+\frac{d}{\bar d}\Bigr)f_2^{mn}(\lambda)\right.\notag\\
&\left.+\frac{1}{d^2}f_1^{mn}(\lambda)\right)
+k_4^{mn}\!\left(f^{mn}_3(\lambda)-\frac{1}{d\bar d}f_1^{mn}(\lambda)\right)
+k_5^{mn}\!\left(f_2^{mn}(\lambda)-f_1^{mn}(\lambda)\right)
+k_6^{mn}
\bigg],
\label{eq:Scorr-pure-nonflat-structure-app}
\end{align}
where $\dim\mathcal{L}_a=d$ and $\dim\mathcal{L}_{\bar a}=\bar d$ as before, and $f_3^{m n}$ is defined as:
\begin{equation}
f_3^{m n}(\lambda)=\frac{1}{2 d} \sum_{i, j} \frac{\lambda_i+\lambda_j}{\left(\mu_m \lambda_i-\mu_n \lambda_j\right)} \ln \left(\frac{\mu_m \lambda_i}{\mu_n \lambda_j}\right).
\end{equation}
Each coefficient \(k_\ell^{mn}\) is obtained from its flat-spectrum counterpart \(k_\ell\) by promoting \(J\to J_{mn}\) and \(D\to D_{mn}\) within the same trace polynomials. In particular, every occurrence of \(\Tr(J^2)\) and \(\Tr(D^2)\) is replaced by \(\Tr(J_{mn}J_{nm})\) and \(\Tr(D_{mn}D_{nm})\), respectively, while every squared partial trace is replaced by the corresponding product of blockwise partial traces. For example,
\[
\Tr\!\big(\Tr_M(J)^2\big)\ \mapsto\ \Tr\!\big(\Tr_M(J_{mn})\,\Tr_M(J_{nm})\big),\]
\[
\Tr\!\big(\Tr_M(D)^2\big)\ \mapsto\ \Tr\!\big(\Tr_M(D_{mn})\,\Tr_M(D_{nm})\big),
\]
for any subsystem $M$ appearing in $k_\ell$.

The spectral functions \(f_\ell^{mn}(\lambda)\) are defined by evaluating the same \(s\)-integral combinations that define \(f_\ell(\lambda)\), but with the \(A_1A_2\) resolvent factors on the two boundary legs replaced by \((\Gamma_m(s),\Gamma_n(s))\). Equivalently, after diagonalizing
\(\sigma^{(0)}_{A_1}=\sum_i \lambda_i\,|i\rangle\langle i|\),
they may be obtained from the flat-spectrum expressions by replacing
\(\lambda_i\mapsto \nu_m\lambda_i\) and \(\lambda_j\mapsto \nu_n\lambda_j\)
wherever these eigenvalues enter through the resolvents. In the special case \(\nu_m=\nu_n=1/d_\chi\), these definitions reduce to the flat-spectrum ones. Their monotonicity then follows pointwise in \((m,n)\) from the same integrand-level argument used in Appendix~\ref{app:f1f2f3}.

Finally, because \(\chi_{A_2}\) is not maximally mixed, the proto-area identity acquires an additional first-order contribution,
\(
\epsilon\,k_0,
\)
where \(k_0\) arises from the term \(\Tr(\delta\chi\,\ln\chi)\) in Eq.~\eqref{eq:PA-step}. Combining all block-resolved contributions yields the statement of the theorem, with
\(\langle S_{PA}\rangle=S(\chi)-\langle S_{\mathrm{corr}}\rangle\)
and \(\langle S_{\mathrm{corr}}\rangle\) given by Eq.~\eqref{eq:Scorr-pure-nonflat-structure-app}.

For each fixed block pair $(m,n)$, the functions $f^{mn}_1(\lambda)$, $f^{mn}_2(\lambda)$, and $f^{mn}_3(\lambda)$ are monotonically decreasing in $\lambda$ by the same integrand-level argument as in Appendix~\ref{app:f1f2f3}. Thus, in \eqref{eq:Scorr-pure-nonflat-structure-app} the only potentially non-monotone structure is the difference $f^{mn}_2(\lambda)-f^{mn}_1(\lambda)$, which appears with coefficient $k^{mn}_5$. The typicality argument from the flat-spectrum pure case carries over directly: under the same Gaussian-ensemble model for the product-basis coefficients of $W_R$, each promoted coefficient $k^{mn}_\ell$ concentrates around its term-counting estimate, and the coefficient sector multiplying $f^{mn}_2-f^{mn}_1$ is typically suppressed relative to the dominant monotone sector. Since \eqref{eq:Scorr-pure-nonflat-structure-app} is a finite sum over $(m,n)$, a union bound implies that the probability that any block produces a monotonicity-violating fluctuation remains exponentially small (up to a prefactor polynomial in $d_\chi$), exactly as in the flat-spectrum analysis. So in the large $d$ and $\bar d$ limit, we find that the \emph{typical} PA entropy correction in the non-flat case
is dominated by the block-summed $f^{mn}_3$ sector,
\begin{equation}
\big\langle S_{\mathrm{corr}}\big\rangle
=
\frac{\epsilon^2}{2}\sum_{m,n}\big(k^{mn}_3+k^{mn}_4\big)\,f^{mn}_3(\lambda)
+\mathcal{O}\!\left(\frac{1}{d^2}\right)
+\mathcal{O}\!\left(\frac{1}{\bar d^{2}}\right).
\end{equation}
\qed

\section{Magic of the code}\label{app:magic}
In this section, we compute the magic  of the skewed code. In the main text, we have defined the Choi state of the skewed code to be
\begin{equation}
    \ket{V^{(\epsilon)}}:=\frac 1 {\sqrt{d_L}}\sum_{i=1}^{d_L} \ket{i}_R\otimes e^{i\epsilon W_R}\ket{i}_{A_1\bar{A}_1}\ket{\chi}_{A_2\bar A_2}. 
\end{equation}

The magic of the state is quantified using stabilizer Renyi entropy (SRE), defined as 
\begin{equation}
    \mathcal{M}_{\alpha}(\ket{V^{(\epsilon)}})=\frac{1}{1-\alpha}\log\left(2^{-n}\sum_{s=1}^{4^n}|\bra{V^{(\epsilon)}}P_s\ket{V^{(\epsilon)}}|^{2\alpha}\right),
\end{equation}
where $P_s$'s are  the Pauli string operators on $R \cup A_1 \cup \bar{A}_1 \cup A_2 \cup \bar{A}_2$.
We decompose the Pauli string operator $P_s$ into tensor product of four parts, $P_a, P_b, P_c,P_d$, supported on $R, A_1\bar{A}_1, A_2,$ and $\bar A_2$ respectively. Up to order $\epsilon^2$, expectation value of the Pauli string is as:
\begin{align}
& |\langle V^{(\epsilon)}|P_a\otimes P_b\otimes P_c \otimes P_d|V^{(\epsilon)}\rangle| \\ \nonumber
&=\frac{1}{d_Ld_{\chi}}\left[\vcenter{\hbox{\includegraphics[height = 8 em ]{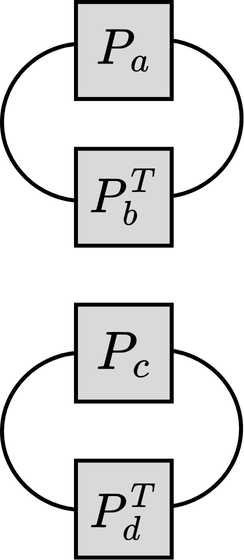}}}
+ i \epsilon \left(
\vcenter{\hbox{\includegraphics[height=8 em]{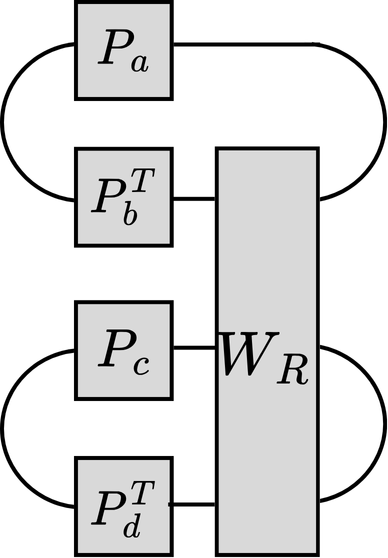}}} - 
\vcenter{\hbox{\includegraphics[height=8 em]{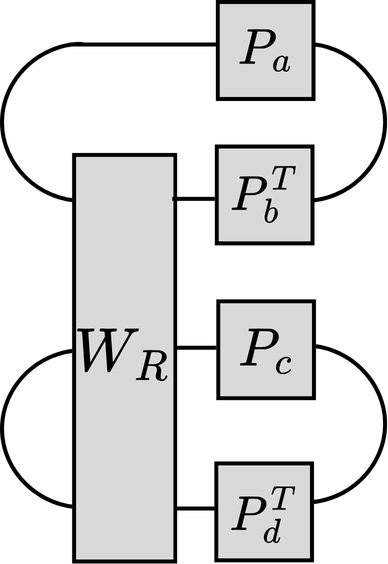}}} \right)\right. \\ \nonumber
& \left. \hspace{1cm}
 - \frac{\epsilon^2}{2 } \left( \vcenter{\hbox{\includegraphics[height=8 em]{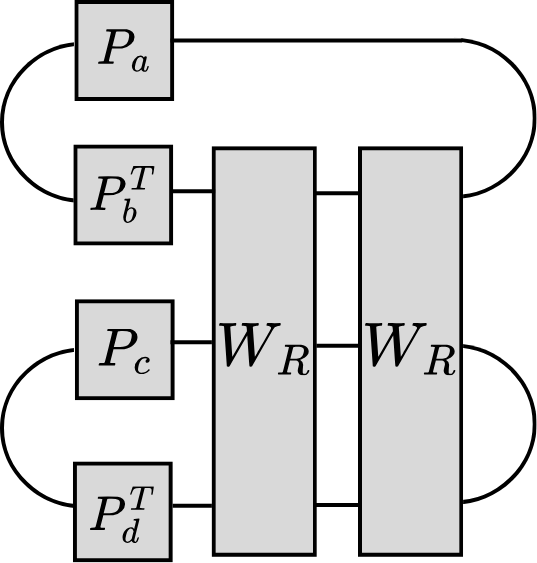}}} + 
 \vcenter{\hbox{\includegraphics[height=8 em]{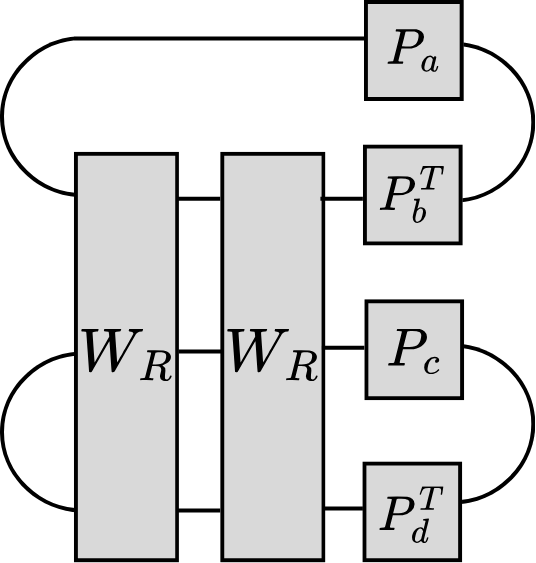}}} -2
 \vcenter{\hbox{\includegraphics[height=8 em]{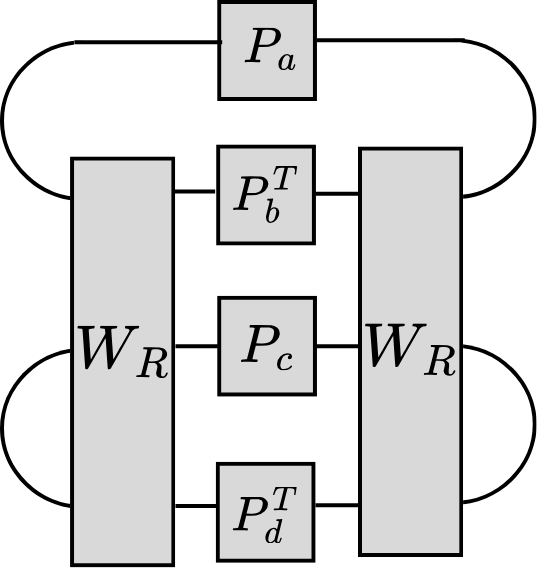}}}\right)\right]\\ \nonumber
 & \equiv\delta_{ab}\delta_{cd}+\epsilon f_{abcd}-\frac{\epsilon^2}{2}g_{abcd}+O(\epsilon^3)
\end{align}
where  $f_{abcd}$ and $g_{abcd}$ are defined in the equation.

Using this expansion, we derive the SRE up to second order of $\epsilon$ as:
\begin{align}
\mathcal{M}_{\alpha}\!\left(\ket{V^{(\epsilon)}}\right)
&=\frac{1}{1-\alpha}\log\!\left[
\frac{1}{(d_L d_{\chi})^{2}}
\sum_{a b c d}
\left|\delta_{ab}\delta_{cd}
+\epsilon f_{abcd}
-\frac{\epsilon^{2}}{2}g_{abcd}\right|^{2\alpha}
\right]
+O(\epsilon^{3})
\notag\\[4pt]
&=\frac{1}{1-\alpha}\log\!\left[
\frac{1}{(d_L d_{\chi})^{2}}
\sum_{a b c d}
\Bigl(
\delta_{ab}\delta_{cd}
+2\alpha\,\epsilon\,\delta_{ab}\delta_{cd} f_{abcd}
-\alpha\,\epsilon^{2}\,\delta_{ab}\delta_{cd}\bigl(g_{abcd}+(1-2\alpha)f_{abcd}^{2}\bigr)
\Bigr)
\right]
+O(\epsilon^{3})
\notag\\[4pt]
&=\frac{1}{1-\alpha}\log\!\left[
\frac{1}{(d_L d_{\chi})^{2}}
\sum_{a b c d}
\Bigl(
\delta_{ab}\delta_{cd}
-\alpha\,\epsilon^{2}\,\delta_{ab}\delta_{cd}\, g_{abcd}
\Bigr)
\right]
+O(\epsilon^{3})
\notag\\[4pt]
&=\frac{1}{1-\alpha}\log\!\left[
1-\frac{\alpha\,\epsilon^{2}}{(d_L d_{\chi})^{2}}
\sum_{a,c} g_{aacc}
\right]
+O(\epsilon^{3})
\notag\\[4pt]
&=\frac{\alpha}{\alpha-1}\,
\frac{\epsilon^{2}}{(d_L d_{\chi})^{2}}
\sum_{a,c} g_{aacc}
+O(\epsilon^{3}).
\label{eq:Missumg}
\end{align}

where we have used the fact that $\delta_{ab}\delta_{cd}f_{abcd}=0$, since $P_a^2=I$.

Using the identity $\frac{1}{d}\sum_a P_a O P_a=\Tr(O)I$, we can further simplify the sum to:
\begin{align}
    \sum_{ac} g_{aacc} &= \frac{1}{d_Ld_{\chi}}\sum_{abcd}\delta_{ab}\delta_{cd}\left( \vcenter{\hbox{\includegraphics[height=8 em]{theory_draft_images/SRE_diagrams/PsW_RW_R.png}}} + \vcenter{\hbox{\includegraphics[height=8 em]{theory_draft_images/SRE_diagrams/W_RW_RPs.png}}} - 2 \vcenter{\hbox{\includegraphics[height=8 em]{theory_draft_images/SRE_diagrams/W_RPsW_R.png}}}\right) \\ \nonumber
    & = 2 (d_L d_\chi ) \left( \vcenter{\hbox{\includegraphics[height=8 em]{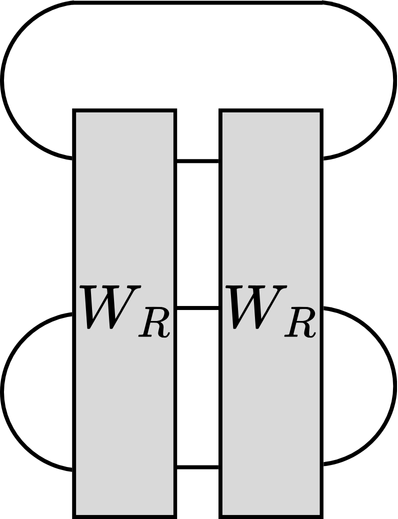 }}} - \frac{1}{d_L d_\chi} \vcenter{\hbox{\includegraphics[height=8 em]{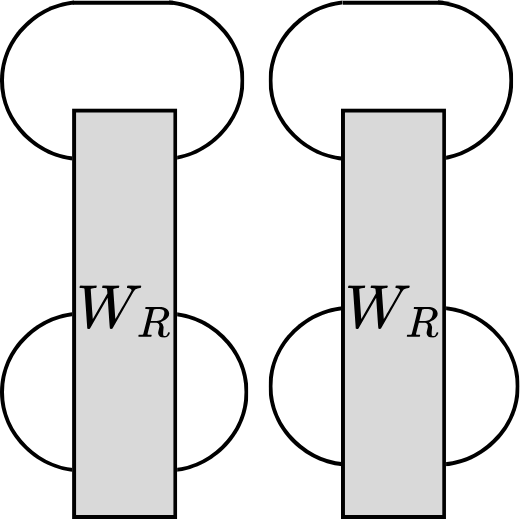}}}\right).
\end{align}

Substitute it into Eq.~\eqref{eq:Missumg} we obtain
\begin{equation}
    \mathcal{M}_{\alpha}(\ket{V^{(\epsilon)}})=\frac{2\alpha}{\alpha-1}\frac{\epsilon^2}{d_L}\left(\Tr(\langle W_R^2\rangle_{\chi})-\frac{1}{d_Ld_{\chi}}\Tr(\langle W_R\rangle_{\chi})^2\right)+\mO(\epsilon^3),
\end{equation}
where $\langle W_R\rangle_{\chi}:=\bra{\chi}W_R\ket{\chi}$, with $\ket{\chi}_{A_2\bar A}:=\frac{1}{\sqrt{d_{\chi}}}\sum_i\ket{i}_{A_2}\ket{i}_{\bar A_2}$.

Finally, we rewrite the SRE using $J$ and $D$ tensors defined in Eq.~\eqref{eq:PQ}:
\begin{equation}
    \mathcal{M}_{\alpha}(\ket{V^{(\epsilon)}})=\frac{\alpha}{2(\alpha-1)}\frac{\epsilon^2}{d_Ld_{\chi}}\left(\Tr(J^2)-\frac{1}{d_Ld_{\chi}}\Tr(J)^2+\Tr(D^2)\right)+\mO(\epsilon^3).
\end{equation}
Next, we use this formula to compute the amount of perturbative tripartite non-local magic in the state $\ket{V^{(\epsilon)}}$.

\subsection{Case I: $|\bar A_1|=0$}
In this case, we have $\dim \mH_L=\dim \mL_a=d$, and $\bar d=\dim \mL_{\bar a}=1$. The perturbative tripartite non-local magic is defined as the minimal magic that cannot be removed by the 2-local unitaries.
In the case when $\bar A_2$ is empty, the 2-local unitary is parametrized as 
\begin{eqns}
    \exp\left\{i\epsilon \left(T_{A_1A_2}\otimes I_{\bar A}+T_{A_1\bar A}\otimes I_{A_2}+T_{A_2\bar A}\otimes I_{A_1}\right)\right\}
\end{eqns}

The $J$ term can be set to 0 by shifting $W_R\rightarrow W_R+T_{A_1A_2}\otimes I_{\bar A}+T_{A_1\bar A}\otimes I_{A_2}$, corresponding to the local unitary transformation on $A_1A_2$ and $A_1\bar A$,  as discussed in the last section; $D$ is unchanged under this shift. The only 2-local transformation left is the shift $I_{A_1}\otimes T_{A_2\bar A}$, which changes $D$ by $D\rightarrow D+I_{A_1}\otimes T^D_{A_2 \bar{A}}$, with $T^D$ defined as 
\begin{eqns}
    T^D:=id_{\chi}\Tr_{A_2}\left([ T_{},\chi]\right).
\end{eqns}

After setting $J=0$ and include this shift to SRE, we find 

\begin{equation}\label{eq:3pmagic}
    \mathcal{M}^{(3)}_\alpha (\ket{V^{(\epsilon)}})=\frac{\alpha}{2(\alpha-1)}\frac{\epsilon^2}{dd_{\chi}}\min_{T^D}\Tr\left((D+I\otimes T^D)^2\right)+\mO(\epsilon^3).
\end{equation}

To find the $T^D$ that minimizes the nonlocal magic, we expand $D$ on $\mH_{A_1} \otimes \mH_{A_2 \bar{A}}$
\begin{equation}
    D=I\otimes \delta_2+ \delta_1\otimes I+\sum_a \delta_1^a\otimes \delta_2^a
\end{equation}
where each $\delta^a_i$ are traceless. The expression of the trace becomes
\begin{equation}
    \Tr\left((D+I\otimes T^D)^2\right)=d\Tr\left((\delta_2+T^D)^2\right)+d_{\chi}\Tr(\delta_1^2)+\sum_a \Tr((\delta_1^a)^2)\Tr((\delta_2^a)^2).
\end{equation}

Clearly, picking 
\[
T^D = -\delta_2 = -\frac{1}{d}\operatorname{Tr}_{A_2 \bar{A}}(D)
\]
minimizes Eq.~\eqref{eq:3pmagic}. Substituting it back, we find
\begin{equation}
    \mathcal{M}^{(3)}\bigl(\ket{V^{(\epsilon)}}\bigr)
    = \frac{\alpha}{2(\alpha - 1)}\frac{\epsilon^2}{dd_{\chi}}
    \left[
        \operatorname{Tr_{A_1A_2}}\!\big( D^{2} \big)
        - \frac{1}{d}\,\operatorname{Tr}_{A_2}\!\left( \operatorname{Tr}_{A_1}\!\big(D \big)^{2} \right)
    \right]+\mO(\epsilon^3).
\end{equation}
This is exactly \( c_{2} \) (see Eq.~\eqref{eq:c1c2c3}) up to constant prefactors. This proves Theorem~\ref{thm:nlmagicmixed}. 

\subsection{Case II: $|\bar A_1|>0$}
The case with nontrivial $\bar A_1$ is analogous by treating $A_1$ and $\bar A_1$ equivalent with their respective reference systems. Now we have $d_L=d\bar d$.  The 2-local perturbation can be on any pair in $\{A_1, A_2, \bar A_1 , \bar A_2\}$. To the $\mathcal{O}(\epsilon^2)$:
\begin{align}
    \mathcal{M}_\alpha\bigl(\ket{V^{(\epsilon)}}\bigr)= \frac{\alpha}{2(\alpha-1)} \frac{\epsilon^2 }{d\bar dd_{\chi} } \left(\operatorname{Tr}(J^2) - \frac{1}{d \bar{d} d_\chi} \operatorname{Tr} (J)^2 + \operatorname{Tr} (D^2) \right),
\end{align}
where $d$, $\bar{d}$ and $d_\chi$ are the dimensions of $\mathcal{L}_{a}$, $\mathcal{L}_{\bar a}$ and $\mathcal{H}_{A_2}$ (same size as  $\mH_{\bar A_2}$). To see how $\mathcal{M}$ is minimized in the equivalent class under 2-site local unitaries, we decompose $J$ and $D$ in terms of the Pauli operators on $A_1$, $\bar A_1$ and $A_2 \cup \bar A_2$: 
\begin{align}
    J &= \sum_{r=0}^{d^2 -1} \sum_{s=0}^{\bar{d}^2 -1} \sum_{t=0}^{d_\chi^2 -1} p_{r s t} (J) P_r P_s P_t \\
    D &= \sum_{r=0}^{d^2 -1} \sum_{s=0}^{\bar{d}^2 -1} \sum_{t=0}^{d_\chi^2 -1} p_{r s t} (D) P_r P_s P_t
\end{align}

Plugging these expressions back in gives:
\begin{align}
    \mathcal{M}_\alpha\bigl(\ket{V^{(\epsilon)}}\bigr)= \frac{\alpha}{2(\alpha-1)} \epsilon^2\sum_{r=0}^{d^2-1}\sum_{s=0}^{\bar d^2-1}\sum_{t=0}^{d_{\chi}^2-1}\left( p_{rst}^2(J) + p_{000}^2(J) + p_{rst}^2(D) \right).
    \end{align}
This is a sum of quadratic functions. Now we show that some of the terms in this sum can be removed by applying 2-local unitaries, thereby minimizing SRE of the code.  For example, a carefully chosen 2-site unitary on $A_1 \bar A_1$, in the form of $e^{i\epsilon T_{A_1\bar A_1}}$, can dial value of $c_{rs0}(J)$ with $r,s>0$ to zero. Because this transformation shift $W_R$ by 
\begin{eqns}
    W_R\rightarrow W_R+T_{A_1\bar A_1}\otimes I_{A_2\bar A_2}. 
\end{eqns}

This transform $J$ and $D$ by 
\begin{eqns}
    J\rightarrow & J+2T_{A_1\bar A_1}\otimes I_{A_2}\\
    D\rightarrow & D. 
\end{eqns}

This transformation only changes the Pauli coefficients $c_{rs0}(J)$ and can change it to any value. Therefore to minimize SRE we set this coefficient to zero. 

More generally we have the following list of two local unitary generators and the corresponding Pauli coefficients that they could affect: 
\begin{eqns}
    &T_{A_1A_2}: p_{r0t}(J), \qquad  T_{A_1\bar A_1}: p_{rs0}(J), \qquad 
T_{A_1\bar A_2}: p_{r0t}(J), \qquad T_{A_2\bar A_1}: p_{0st}(J), \\
&T_{\bar A_1\bar A_2}: p_{0st}(J), \qquad T_{A_2\bar A_2}: p_{00t}(J),\ p_{00t}(D).
\end{eqns}

By this token, after setting the corresponding coefficients to zero we obtain 
\begin{align}
    \mathcal{M}^{(3)}_\alpha\bigl(\ket{V^{(\epsilon)}}\bigr)=& \frac{\alpha \epsilon^2}{2(\alpha-1)} \sum_{r>0,s>0,t>0} \left( p_{r0t}^2(D) + p_{0st}^2(D) + p_{rst}^2(D) + p_{rst}^2(J) \right)\\
    =& \frac{\alpha\epsilon^2}{2(\alpha-1)}\lrp{C_D+B_D+E_D+E_J},
\end{align}
according to the definition in Eq.~\eqref{eq:Dtermscoe} and Eq.~\eqref{eq:Jtermscoe}.  Since these coefficients are still positive after all possible 2-site unitary perturbations, we conclude that there is non-trivial non-local magic in a generically perturbed $\ket{V^{(\epsilon)}}$.
In terms of the coefficients $k_i$'s, this is 
\begin{eqns}
    \mathcal{M}^{(3)}_\alpha\bigl(\ket{V^{(\epsilon)}}\bigr)=& \frac{\alpha \epsilon^2}{(\alpha-1)}\lrp{(1-\frac{1}{d^2})(1-\frac{1}{\bar d^2})(k_3+k_4)+(1-\frac{1}{d^2})k_2+(1-\frac{1}{\bar d^2})k_5}
\end{eqns}

Here we provide some intuition for why local or bipartite forms of magic do not produce a state-dependent proto-area. First, any local non-Clifford deformation of the code of the form
\begin{eqns} 
V\rightarrow O_A O_{A}^{\dagger} VO_A^{\dagger} O_{\bar A}^{\dagger}, 
\end{eqns} 
can be absorbed into the recovery unitaries $R_A$ and $R_{\bar A}$. Such deformations therefore leave both the boundary entropy and the bulk entropy unchanged.

Moreover, deformations that act separately on $A_1\bar A_1$ and on $A_2\bar A_2$ transform the bulk matter degrees of freedom $\ket{\psi}_L$ and the geometric auxiliary state $\ket{\chi}$ independently, without generating correlations between them. As discussed in Sec.~\ref{sec:2}, such matter--geometry correlations are necessary ingredients for nontrivial state dependence of the (proto-)area. We do not have a simple argument for why other bipartite unitaries do not produce state-dependence except by referring to the technical proof above.

\section{Monotonicity of functions}

\label{app:f1f2f3}
\subsection{Monotonicity of $f_1$}
\subsubsection*{Flat $\chi$ spectrum}
We will show that $f_1$ is a monotonic function of $\theta_k \in (0,\pi/4)$, where $\theta_k$ is an arbitrary angle defined in \eqref{eq:Psi_theta_vec}. We have defined $f_1$ in \eqref{eq:f1f2}, which can be rewritten in terms of the spectrum of $\sigma^{(L)}$ \eqref{eig-Lambda} as
\begin{align}
f_1(\{\lambda\}) &= - \frac{1}{2^n} \Tr({\ln \sigma^{(L)}}) + S(\sigma^{(L)})\\
& = \sum_{k=1}^n -\frac{1}{2}\bigl(\log p_k + \log(1-p_k)\bigr) - \bigl(p_k\log p_k + (1-p_k)\log(1-p_k)\bigr)\\
& = \sum_{k=1}^n (p_k - \frac{1}{2}) \log\frac{p_k}{1-p_k} :=  \sum_{k=1}^n g(p_k),
\end{align} 
where $S$ is the classical Shannon entropy $\sigma^{(L)}$ and we used the fact that the product structure \eqref{eq:bulkbell} implies that the two terms are the sum of the function of the single qubit spectrum. Now we only need to show the monotonicity of $g(p (\theta)).$

A direct differentiation gives
\begin{equation}
g'(p)
=
\ln\frac{p}{1-p}
+
\frac{2p-1}{2p(1-p)} >0,
\end{equation}
when $p\in(1/2,1)$ and $g'(1/2)=0$. On the other hand, we have $p(\theta)=\cos^2\theta\in[1/2,1)$ when $\theta\in(0,\pi/4]$, which gives 
\begin{equation}
\frac{dp}{d\theta}=-\sin(2\theta)<0 .
\label{eq: p_theta_der}
\end{equation}
By the chain rule, the following inequality holds:
\begin{equation}
\frac{d}{d\theta}\,g(p(\theta))
=
g'(p(\theta))\,\frac{dp}{d\theta}
<
0
\qquad
\text{for }\theta\in(0,\pi/4],
\end{equation}
with strict inequality for $\theta\in(0,\pi/4)$. \qed

\subsubsection*{Non-flat $\chi$ spectrum}
In the general setting of Theorems~\eqref{thm:general_mixed} and \eqref{thm:general_pure}, we defined the quantity $f_1^{mn}$ in Eq.~\eqref{eq:f1mn-def} by
\begin{equation}
f_1^{m n}(\lambda)=\frac{1}{2 d} \sum_{i, j} \frac{\left(\lambda_i-\lambda_j\right)^2}{\left(\mu_m \lambda_i-\mu_n \lambda_j\right)} \ln \frac{\mu_m \lambda_i}{\mu_n \lambda_j}\, .
\end{equation}
The expression is continuous, with the apparent singularity at $\mu_m \lambda_i=\mu_n \lambda_j$ being removable by taking the corresponding limit. We will show that $f_1^{mn}$ is a monotonic function of $\theta_k \in (0,\pi/4)$, where $\theta_k$ is an arbitrary angle introduced in \eqref{eq:Psi_theta_vec}. We also use the same parametrization of the eigenvalues introduced in Eq. \eqref{eig-Lambda}
Since $\mu_m\lambda_i>0$ and $\mu_n\lambda_j>0$, use
\begin{equation}
\ln\frac{x}{y}=\int_0^\infty ds\left(\frac{1}{s+y}-\frac{1}{s+x}\right),
\label{eq:lnxbyy}
\end{equation}
Then
\begin{equation}
f_1^{mn}(\lambda)
=\frac{1}{2d}\sum_{i, j}\int_0^\infty ds\, g^{mn}_s(\lambda_i,\lambda_j), \qquad g^{mn}_s(\lambda_i,\lambda_j):=\frac{(\lambda_i-\lambda_j)^2}{(s+\mu_m\lambda_i)(s+\mu_n\lambda_j)}.
\end{equation}
where
It suffices to prove that, for each fixed $s$, the integrand is pointwise monotonic in $\theta_k\in(0,\pi/4]$. To this end, rewrite the spectrum $\{\lambda_i\}$ as an explicit function of $p:=p_k$, keeping all remaining factors $(\ell\neq k)$ in~\eqref{eig-Lambda} fixed. The spectrum then decomposes into two families,
\[
\{\,p\,\omega_\beta,\ (1-p)\,\omega_\beta\,\}_{\beta\in\{0,1\}^{n-1}},
\]
where the weight associated with the $(n-1)$-bit string $\beta$ is
\begin{equation}
\omega_\beta=\prod_{\ell\neq k} p_\ell^{\,1-\beta_\ell}\bigl(1-p_\ell\bigr)^{\beta_\ell}.
\label{pl-weight_re}
\end{equation}
where $\beta_i$ denotes the $i$-th bit (component) of the string $\beta$.

Since each argument of $g^{mn}_s$ may come from either eigenvalue family, we can split the double sum into the four possible group pairings:
\begin{align}
\sum_{i, j} g^{mn}_s(\lambda_i,\lambda_j)
&=\sum_{\beta,\tau\in\{0,1\}^{n-1}}\Big[
g^{mn}_s(p\omega_\beta,\,p\omega_\tau)
+g^{mn}_s\big((1-p)\omega_\beta,\,p\omega_\tau\big) \notag\\
&\hspace{2cm}
+g^{mn}_s\big(p\omega_\beta,\,(1-p)\omega_\tau\big)
+g^{mn}_s\big((1-p)\omega_\beta,\,(1-p)\omega_\tau\big)
\Big],
\label{eq:hs-mn-4split}
\end{align}
Differentiating the integrand with respect to $p$ yields
\begin{align*}
\frac{\partial}{\partial p}\sum_{i,j} g_s(\lambda_i,\lambda_j)
=(2p-1)\,\mathcal{} \Upsilon_s(p),
\end{align*}
where we have introduced the shorthand
\begin{equation}
\Upsilon_s(p)
=\sum_{\beta,\tau}
\frac{X^{mn}_{\beta,\tau}(p;s)}
{\Big[\bigl(\mu_m\,\omega_\sigma\,p+s\bigr)
       \bigl(\mu_n\,\omega_\tau\,p+s\bigr)
       \bigl(\mu_m\,\omega_\sigma(1-p)+s\bigr)
       \bigl(\mu_n\,\omega_\tau(1-p)+s\bigr)\Big]^2}.
\end{equation}
with
\begin{align}
X^{mn}_{\beta,\tau}(p;s)
={}&\ \mu_m^{3}\,\omega_{\sigma}^{3}\,\omega_{\tau}^{2}\Big(
\mu_n^{3}(p-1)^{2}p^{2}\,\omega_{\tau}\big(\omega_{\sigma}^{2}+\omega_{\tau}^{2}\big)
-2\mu_n^{2}(p-1)p\,s\Big(-2(p-1)p\,\omega_{\sigma}\omega_{\tau}\nonumber\\
& \hspace{-10mm}-(p-1)p\,\omega_{\sigma}^{2}+\big((p-1)p+1\big)\omega_{\tau}^{2}\Big)
+\mu_n s^{2}\big(8(p-1)^{2}p^{2}\,\omega_{\sigma}+\omega_{\tau}\big)
+s^{3}\Big)
\nonumber\\
&\hspace{-10mm}+\mu_m^{2}s\,\omega_{\sigma}^{2}\omega_{\tau}\Big(
-2\mu_n^{3}(p-1)p\,\omega_{\tau}^{2}\Big(-2(p-1)p\,\omega_{\sigma}\omega_{\tau}+\big((p-1)p+1\big)\omega_{\sigma}^{2}-(p-1)p\,\omega_{\tau}^{2}\Big)
\nonumber\\
&\hspace{-10mm}
-4\mu_n^{2}(p-1)p\,s\,\omega_{\tau}\Big(\big((p-1)p+1\big)\omega_{\sigma}^{2}+\big((p-1)p+1\big)\omega_{\tau}^{2}+3\omega_{\sigma}\omega_{\tau}\Big)
+\mu_n s^{2}\omega_{\tau}\Big(2\big(1-\nonumber\\
&\hspace{-10mm}8(p-1)p\big)\omega_{\sigma}+3\omega_{\tau}\Big)
+2s^{3}\big(\omega_{\sigma}+2\omega_{\tau}\big)\Big)
+\mu_m s^{2}\omega_{\sigma}\Big(
\mu_n^{3}\omega_{\sigma}\omega_{\tau}^{3}\big(8(p-1)^{2}p^{2}\omega_{\tau}+\omega_{\sigma}\big)
\nonumber\\
&\hspace{-12mm}+\mu_n^{2}s\,\omega_{\sigma}\omega_{\tau}^{2}\Big(2\big(1-8(p-1)p\big)\omega_{\tau}+3\omega_{\sigma}\Big)
+\mu_n s^{2}\omega_{\tau}\Big(8\big(1-2(p-1)p\big)\omega_{\sigma}\omega_{\tau}+3\omega_{\sigma}^{2}+3\omega_{\tau}^{2}\Big)
\nonumber\\
\hspace{-10mm}&\hspace{-10mm}
+2s^{3}\Big(2\omega_{\sigma}\omega_{\tau}+\omega_{\sigma}^{2}+3\omega_{\tau}^{2}\Big)\Big)
+s^{3}\big(\mu_n\omega_{\tau}+2s\big)\Big(
\mu_n^{2}\omega_{\sigma}^{2}\omega_{\tau}^{2}
+2\mu_n s\,\omega_{\sigma}\omega_{\tau}(\omega_{\sigma}+\omega_{\tau})
\nonumber\\
\hspace{-10mm}&
\hspace{-10mm}+2s^{2}\big(\omega_{\sigma}^{2}+\omega_{\tau}^{2}\big)\Big).
\end{align}
A direct symbolic check in \textsc{Mathematica} shows that, for parameters in the range $0<\omega_\beta, \omega_\tau<1$, $s\ge 0$, $\tfrac12<p<1$, $0<\mu_m,\mu_n<1$, the numerator $X^{mn}_{\beta,\tau}(p;s)$ is strictly positive. Consequently the summand in $\Upsilon_s(p)$ is positive term-by-term, and therefore
\[
\Upsilon_s(p)>0 \qquad \text{for all } p\in\Big(\tfrac12,1\Big).
\]
Therefore, using Eq.~\eqref{eq: p_theta_der} and the chain rule, for $\theta_k\in(0,\pi/4]$ we obtain
\begin{equation}
\frac{\partial f_1^{mn}}{\partial \theta_k}
=\frac{\partial f_1^{mn}}{\partial p}\,\frac{\partial p}{\partial \theta_k}
<0.
\end{equation} \qed
\subsection{Monotonicity of $f_2$}
\subsubsection*{Flat $\chi$ spectrum}
Recall that we have defined~\eqref{eq:f1f2}:
\begin{equation}
   f_2(\{ \lambda_i \})=\left(\frac {1}{2^{n-1}}+ \sum_{i\neq j}\frac {1}{2^{n+1}}\frac{(\lambda_i+\lambda_j)^2}{\lambda_i-\lambda_j}\ln{\frac{\lambda_i}{\lambda_j}}\right)-2
   \label{eq:f2_re}
\end{equation}
and we want to know that it is a monotonic function $\theta_k \in(0,\pi/4]$ for all $k$. 
To study the $\theta_k$–dependence of $f_2$, we use the same parametrization of the eigenvalues introduced in Eq. \eqref{eig-Lambda} and differentiate with respect to fixed $p_k$. Since $\lambda_i>0$, we substitute Eq~\eqref{eq:lnxbyy} in Eq~\eqref{eq:f2_re} and the  part of $f_2$ in the parenthesis can be written as:
\begin{align}
    F(\{\lambda\}) = \int_0^\infty ds \sum_{i, j} \frac{(\lambda_i + \lambda_j)^2}{(s+\lambda_i) (s + \lambda_j)} := \sum_{i, j} \int_0^\infty ds h_s(\lambda_i,\lambda_j).
\end{align}
where we have defined
\begin{equation}
    h_s(\lambda_i,\lambda_j)= \frac{(\lambda_i + \lambda_j)^2}{(s+\lambda_i) (s + \lambda_j)} 
\end{equation}
It suffices to show that the integrand is a point-wise monotonic function of $\theta_k \in (0, \pi/4]$ in $s$. Rewrite the spectrum $\lambda$ as an explicit function of $p:=p_k$ for a fixed $k$ and all the other factors $(i \neq k)$ in~\eqref{eig-Lambda}. Then we can split the spectrum into two groups: $\{p \omega_\beta$, $(1-p) \omega_\beta\}_{\beta \in \{0, 1\}^{n-1}}$ where the $(n-1)$-bit string dependent $\omega_\beta$ is defined as Eq~\eqref{pl-weight_re}. Since the two arguments of $h_s$ can come from either group of eigenvalues, the sum can be split up into 4 terms that contains all pairs of group inputs, as follows
\begin{align}
\sum_{i, j} h_s(\lambda_i,\lambda_j) 
    & =\sum_{\beta, \tau \in \{0,1\}^{n-1}}\Bigg[ h_s(p \omega_\beta,  p \omega_\tau) + h_s( (1-p)\omega_\beta, p \omega_\tau)) + h_s(p \omega_\beta, (1-p) \omega_\tau) \\
    &+h_s((1-p) \omega_\beta, (1-p) \omega_\tau) \Bigg] 
\end{align}
Differentiate the integrand with respect to $p $ gives
\begin{align*}
    \frac{\partial}{\partial p}\sum_{i, j} h_s(\lambda_i,\lambda_j) 
 = (2p-1)\,\Phi_s(p),
\end{align*}
where 
\begin{equation}
\Phi_s(p,s)=
\sum_{\beta,\tau}\frac{K_{\beta,\tau}(p;s)}{\Big(\left(-p w_{\beta }+s+w_{\beta }\right) \left(p w_{\beta }+s\right) \left(-p w_{\tau }+s+w_{\tau }\right) \left(p w_{\tau }+s\right)\Big)^2} 
\end{equation}
and 
\begin{eqns}
    K_{\beta,\tau}(p;s)=&(p-1)^2 p^2 w_{\beta }^5 w_{\tau }^2 \left(2 s+w_{\tau }\right)+2 s w_{\beta }^2 \left(\left(8 p^2-8 p+1\right) s^2 w_{\tau }^3+(p-1)^2 p^2 w_{\tau }^5\right. \\ & \left. +2 p \left(-3 p^3+6 p^2-4 p+1\right) s w_{\tau }^4+8 (p-1) p s^3 w_{\tau }^2+2 s^5+s^4 w_{\tau }\right) \\ & +w_{\beta }^3 \left(2 \left(8 p^2-8 p+1\right) s^3 w_{\tau }^2+2 \left(6 p^2-6 p+1\right) s^2 w_{\tau }^3+(p-1)^2 p^2 w_{\tau }^5\right. \\ & \left. +2 p \left(-3 p^3+6 p^2-4 p+1\right) s w_{\tau }^4+2 s^5+s^4 w_{\tau }\right)+s^4 w_{\beta } w_{\tau }^2 \left(2 s+w_{\tau }\right) \\ &  -2 p \left(3 p^3-6 p^2+4 p-1\right) s w_{\beta }^4 w_{\tau }^2 \left(2 s+w_{\tau}\right)+2 s^5 w_{\tau }^2 \left(2 s+w_{\tau }\right).
\end{eqns}
One can verify that $K_{\beta,\tau}(p;s)$ is a strictly positive function for all
$0<w_\beta,w_\tau<1$, $s\ge0$, and $1/2<p<1$, using \textsc{Mathematica} and $\Phi_s(p)>0$ for $p\in(1/2,1)$. 
Hence, by chain rule, when $p\in(1/2,1)$ or $\theta \in (0, \pi/4]$,
\begin{equation}
    \frac{\partial f_2}{\partial \theta} = \frac{\partial f_2}{\partial p}\frac{\partial p}{\partial \theta} <0
\end{equation}
where we have applied equation~\eqref{eq: p_theta_der}. \qed

\subsubsection*{Non flat $\chi$ spectrum}
We define
\begin{equation}
f_2^{mn}(\lambda)
=\frac{1}{2d}\sum_{i j}\frac{(\lambda_i+\lambda_j)^2}{\mu_m\lambda_i-\mu_n\lambda_j}
\ln\frac{\mu_m\lambda_i}{\mu_n\lambda_j}
\end{equation}
We understand the potentially singular terms with $\mu_m\lambda_i=\mu_n\lambda_j$ by continuous extension, which is finite and well-defined. Substituting the following
\begin{equation}
\ln\frac{x}{y}=\int_{0}^{1}\frac{x-y}{s\,x+(1-s)\,y}\,ds,
\qquad x,y>0.
\label{eq:lnxbyy2}
\end{equation}
into \eqref{eq:f2mn-def}, we get
\begin{align}
f_2^{mn}(p)
&=\frac{1}{2d}
\int_{0}^{1}ds\sum_{i,j}\Bigg(\frac{(\lambda_i+\lambda_j)^2}{s\,\mu_m\lambda_i+(1-s)\,\mu_n\lambda_j}\Bigg)
=\frac{1}{2d}\int_{0}^{1} ds\;\sum_{i,j} h^{mn}_s(\lambda_i,\lambda_j),
\label{eq:f2mn-s01}
\end{align}
where we define
\begin{equation}
h^{mn}_s(x,y):=\frac{(x+y)^2}{s\,\mu_m x+(1-s)\,\mu_n y}.
\label{eq:hs-def-alt}
\end{equation}
Next, as in the flat-spectrum case, we partition the index set for ${\lambda_i}$ into the two subsets corresponding to $\{p\,\omega_\beta\}_{\beta\in\{0,1\}^{n-1}}$ and $\{(1-p)\,\omega_\beta\}_{\beta\in\{0,1\}^{n-1}}$. Therefore,
\begin{align}
f_2^{mn}(p)
&=\frac{1}{2d}\int_{0}^{1} ds\;\sum_{\beta,\gamma}\Big[
h^{mn}_s\!\big(p\omega_\beta,\,p\omega_\gamma\big)
+h^{mn}_s\!\big(p\omega_\beta,\,(1-p)\omega_\gamma\big)
+h^{mn}_s\!\big((1-p)\omega_\beta,\,p\omega_\gamma\big)
\notag \\& \qquad \qquad+h^{mn}_s\!\big((1-p)\omega_\beta,\,(1-p)\omega_\gamma\big)
\Big].
\end{align}
For each fixed $s\in(0,1)$ and $(\beta,\gamma)$, define the four-term block
\begin{align}
F^{mn}_s(p;\beta,\gamma)
&:=h^{mn}_s\!\big(p\omega_\beta,\,p\omega_\gamma\big)
+h^{mn}_s\!\big(p\omega_\beta,\,(1-p)\omega_\gamma\big)
+h^{mn}_s\!\big((1-p)\omega_\beta,\,p\omega_\gamma\big)
\\& \qquad+h^{mn}_s\!\big((1-p)\omega_\beta,\,(1-p)\omega_\gamma\big).
\label{eq:F-fourterm-block}
\end{align}
By inspection, $F^{mn}_s(p;\beta,\gamma)=F^{mn}_s(1-p;\beta,\gamma)$, so $F^{mn}_s$ is symmetric about $p=\tfrac12$.
We now show that $F^{mn}_s(p;\beta,\gamma)$ is convex in $p$. First note that the first and last terms in
\eqref{eq:F-fourterm-block} are affine in $p$:
Hence
\begin{equation}
\frac{\partial^2}{\partial p^2}h^{mn}_s\!\big(p\omega_\beta,\,p\omega_\gamma\big)=0,
\qquad
\frac{\partial^2}{\partial p^2}h^{mn}_s\!\big((1-p)\omega_\beta,\,(1-p)\omega_\gamma\big)=0.
\label{eq:affine-secondzero}
\end{equation}
It remains to treat the two cross terms. Let
\begin{equation}
g(p):=h^{mn}_s\!\big(p\omega_\beta,\,(1-p)\omega_\gamma\big)
=\frac{\big((\omega_\beta-\omega_\gamma)p+\omega_\gamma\big)^2}
{\big(s\mu_m\omega_\beta-(1-s)\mu_n\omega_\gamma\big)p+(1-s)\mu_n\omega_\gamma}.
\end{equation}
This has the form $g(p)=\dfrac{(a p+b)^2}{c p+d}$ with $cp+d>0$, and therefore
\begin{equation}
g''(p)=\frac{2(ad-bc)^2}{(c p+d)^3}\ge 0,
\qquad p\in[1/2,1).
\end{equation}
The other cross term $h^{mn}_s\!\big((1-p)\omega_\beta,\,p\omega_\gamma\big)$ is obtained by swapping
$\beta\leftrightarrow\gamma$ and $p\leftrightarrow(1-p)$, hence it is convex as well.
 Consequently,
$F^{mn}_s(p;\beta,\gamma)$ is convex in $p$ for each fixed $(s,\beta,\gamma)$. Hence for each fixed $(s,\beta,\gamma)$, and sums/integrals preserve convexity, it follows from
\[
f_2^{mn}(p)=\frac{1}{2d}\int_0^1 ds\;\sum_{\beta,\gamma} F^{mn}_s(p;\beta,\gamma)
\]
that $f_2^{mn}(p)$ is convex in $p$. Moreover, $f_2^{mn}(p)$ is symmetric under $p\leftrightarrow(1-p)$. Hence
\begin{equation}
f_2^{mn}(p)=f_2^{mn}(1-p).
\label{eq:f2-sym}
\end{equation}
For a convex function symmetric about $p=\frac{1}{2}$, the minimum is attained at $p=\frac{1}{2}$, and the function is nondecreasing on $[\frac{1}{2},1]$. Therefore \footnote{In particular, taking $\mu_m=\mu_n=1$ recovers the function $f_2$ considered in the flat case, so the above provides an independent second proof of convexity/monotonicity for $f_2$},
\begin{equation}
\frac{d}{dp}f_2^{mn}(p)\ge 0,
\qquad p\in\Big[\frac{1}{2},1\Big).
\label{eq:f2-monotone}
\end{equation}
\qed

\subsection{Monotonicity of $f_3$}
\subsubsection*{Flat $\chi$ spectrum}
We will employ the same strategy by singling out $p:=p_k$, dividing the spectrum into two groups
$\{p\,\omega_\beta,(1-p)\omega_\beta\}_{\beta\in\{0,1\}^{n-1}}$, and differentiating $f_3$ with respect to $p$.

Notice that
\begin{equation}
   f_3(\{\lambda_i(p)\})
   := \sum_{i,j}\frac{\lambda_i+\lambda_j}{2(\lambda_i-\lambda_j)}\ln\frac{\lambda_i}{\lambda_j}
\end{equation}
depends only on the ratios between the eigenvalues $\{\lambda_i\}$. Let
\(
t_{ij}:=\lambda_i/\lambda_j
\)
and define the function on the positive axis
\begin{equation}
g(t):=\frac{t+1}{2(t-1)}\ln t,
\qquad g(1)=1
\end{equation}
by continuity. Note that $g(t)=g(t^{-1})$, and
\begin{equation}
g'(t)=\frac{t^2-1-2t\log t}{2(t-1)^2 t}.
\end{equation}
Now the monotonicity of $f_3$ can be repackaged in terms of all possible ratios between pairs of eigenvalues:
\begin{align}
    f_3(\lambda)
    &= \sum_{\sigma,\tau\in\{0,1\}^{n-1}}
    g\!\left(\frac{\omega_\sigma}{\omega_\tau}\right)
    +g\!\left(\frac{\omega_\sigma p}{\omega_\tau(1-p)}\right)
    +g\!\left(\frac{\omega_\sigma(1-p)}{\omega_\tau p}\right)
    +g\!\left(\frac{\omega_\tau}{\omega_\sigma}\right) \nonumber\\
    &= \sum_{\sigma,\tau\in\{0,1\}^{n-1}}
    2\,g\!\left(\frac{\omega_\sigma}{\omega_\tau}\right)
    +g\!\left(\frac{\omega_\sigma p}{\omega_\tau(1-p)}\right)
    +g\!\left(\frac{\omega_\sigma(1-p)}{\omega_\tau p}\right).
\end{align}

Let
\begin{equation}
t:=\frac{p}{1-p}.
\end{equation}
Then $t\ge 1$ for $p\in[1/2,1)$. We will show that the derivative of $f_3$ with respect to $p$,
\begin{align}
    \frac{\partial f_3}{\partial p}
    &=\sum_{\sigma,\tau\in\{0,1\}^{n-1}}
    \frac{1}{p(1-p)}
    \left[
    \frac{\omega_\sigma}{\omega_\tau}\,t\, g'\!\left(\frac{\omega_\sigma}{\omega_\tau}\,t\right)
    -\frac{\omega_\sigma}{\omega_\tau}\,\frac{1}{t}\, g'\!\left(\frac{\omega_\sigma}{\omega_\tau}\,\frac{1}{t}\right)
    \right] \nonumber\\
    &=\frac{1}{p(1-p)}
    \sum_{\sigma,\tau\in\{0,1\}^{n-1}}
    \left[
    h\!\left(\frac{\omega_\sigma}{\omega_\tau}\,t\right)
    -h\!\left(\frac{\omega_\sigma}{\omega_\tau}\,\frac{1}{t}\right)
    \right],
    \label{eq:df3dp}
\end{align}
takes a definite sign on $p\in(1/2,1]$ by showing that $h(u):=u\,g'(u)$
is monotone on $u\in(0,\infty)$. Since $\frac{\omega_\sigma}{\omega_\tau}>0$ and $t\ge 1$, we have
\[
\frac{\omega_\sigma}{\omega_\tau}\,t
\;\ge\;
\frac{\omega_\sigma}{\omega_\tau}\,\frac{1}{t},
\]
so if $h$ is increasing, then each bracket term in \eqref{eq:df3dp} is nonnegative. By definition,
\begin{equation}
h(t)=\frac{t^2-1-2t\log t}{2(t-1)^2},
\end{equation}
and its derivative is
\begin{equation}
h'(t)=\frac{t+1}{(t-1)^3}\left(\log t-\frac{2(t-1)}{t+1}\right)
=: \frac{t+1}{(t-1)^3}\,q(t).
\end{equation}
Now
\begin{equation}
q'(t)=\frac{(t-1)^2}{t(t+1)^2}\ge 0,
\qquad q(1)=0.
\end{equation}
Hence $q(t)\le 0$ for $0<t<1$ and $q(t)\ge 0$ for $t>1$. Since $q(t)$ has the same sign as $(t-1)$, it follows that
\begin{equation}
h'(t)\ge 0
\qquad\text{for all }t>0.
\end{equation}
Therefore $h$ is increasing on $(0,\infty)$, and thus every bracket term in \eqref{eq:df3dp} is nonnegative when
$p\in(1/2,1]$. This shows that
\begin{equation}
\frac{\partial f_3}{\partial p}\ge 0
\qquad\text{for }p\in(1/2,1].
\end{equation}

By the chain rule,
\begin{equation}
    \frac{d f_3}{d\theta_k}
=
\frac{d f_3}{dp_k}\,\frac{dp_k}{d\theta_k}
\le 0
\end{equation}
when $\theta_k\in(0,\pi/4]$ as before. \qed

\subsubsection*{Non-flat $\chi$ spectrum}
We define
\begin{equation}
f_3^{mn}(\lambda)
=\frac{1}{2d}\sum_{i j}\mathcal P^{mn}(\lambda_i,\lambda_j),
\qquad
\mathcal P^{mn}(x,y):=\frac{x+y}{2(\mu_m x-\mu_n y)}\ln\frac{\mu_m x}{\mu_n y},
\label{eq:def-f3mn}
\end{equation}
with $\mathcal P^{mn}(x,x)$ understood by continuous extension. Fix $(m,n)$ and set $\gamma^{mn}:=\mu_n/\mu_m$. Since $\mathcal P^{mn}(cx,cy)=\mathcal P^{mn}(x,y)$ for all $c>0$,
it depends only on the ratio $t:=\frac{x}{y}>0$.

Define the associated one-variable function
\begin{equation}
g^{mn}(t):=\frac{t+1}{2(t-\gamma^{mn})}\,\ln\!\Big(\frac{t}{\gamma^{mn}}\Big),
\qquad t>0,
\qquad 
g^{mn}(\gamma^{mn}):=\frac{\gamma^{mn}+1}{2\gamma^{mn}},
\label{eq:def-gmn-xy}
\end{equation}
so that
\begin{equation}
\mathcal P^{mn}(x,y)=\frac{1}{\mu_m}\,g^{mn}\!\Big(\frac{x}{y}\Big),
\qquad
f_3^{mn}(\lambda)=\frac{1}{2d\,\mu_m}\sum_{i,j} g^{mn}\!\Big(\frac{\lambda_i}{\lambda_j}\Big).
\label{eq:f3mn-sumg-xy}
\end{equation}

Fix an index $k$ and vary $p:=p_k\in[1/2,1)$, holding all other $p_\ell$ fixed. As in the flat-spectrum case, the
spectrum can be written as
\[
\{\,w_\sigma p,\; w_\sigma(1-p)\,\}_{\sigma\in\{0,1\}^{n-1}},
\]
where $w_\sigma>0$ depends only on $\{p_\ell\}_{\ell\neq k}$. For each ordered pair $(\sigma,\tau)$ define the ratio
\(
n_{\sigma\tau}:=\frac{w_\sigma}{w_\tau}>0.
\)
Then the contribution of the four ratios formed by $\{w_\sigma p,w_\sigma(1-p)\}$ against $\{w_\tau p,w_\tau(1-p)\}$
can be packaged into the two-level quantity
\begin{equation}
\mathcal Q^{mn}_n(p):=\sum_{a,b\in\{0,1\}} g^{mn}\!\Big(n\,\frac{q_a}{q_b}\Big),
\qquad (q_0=p,\ q_1=1-p),
\label{eq:def-Qmn-xy}
\end{equation}
and hence
\begin{equation}
f_3^{mn}(p)=\frac{1}{2d\,\mu_m}\sum_{\sigma,\tau}\mathcal Q^{mn}_{n_{\sigma\tau}}(p).
\label{eq:f3mn-sumQ}
\end{equation}

Expanding \eqref{eq:def-Qmn-xy} gives
\begin{equation}
\mathcal Q^{mn}_n(p)
=
2\,g^{mn}(n)
+g^{mn}\!\Big(n\,\frac{p}{1-p}\Big)
+g^{mn}\!\Big(n\,\frac{1-p}{p}\Big).
\label{eq:Qmn-expanded-xy}
\end{equation}
The first term is independent of $p$, so it suffices to control the last two terms. Let
\begin{equation}
t:=\frac{p}{1-p}\ge 1\quad\text{for }p\in\Big[\frac12,1\Big),
\qquad\text{so that}\qquad
\frac{dt}{dp}=\frac{t}{p(1-p)}.
\label{eq:t-odds}
\end{equation}
Differentiating \eqref{eq:Qmn-expanded-xy} using \eqref{eq:t-odds} yields
\begin{align}
\frac{d}{dp}\mathcal Q^{mn}_n(p)
&=\frac{t}{p(1-p)}\left[
n\,(g^{mn})'(nt)-n\,t^{-2}(g^{mn})'(n/t)
\right]
\nonumber\\
&=\frac{1}{p(1-p)}\left[
h^{mn}(nt)-h^{mn}(n/t)
\right],
\label{eq:dQmn-dp-xy}
\end{align}
where we define
\begin{equation}
h^{mn}(x):=x\,(g^{mn})'(x).
\label{eq:def-hmn-xy}
\end{equation}
Thus $\frac{d}{dp}\mathcal Q^{mn}_n(p)$ has a definite sign on $p\in[1/2,1)$ once we know that $h^{mn}$ is monotone
increasing on $(0,\infty)$, since $t\ge 1$ implies $nt\ge n/t$.

A direct computation  gives
\begin{equation}
(h^{mn})'(x)
=\frac{1+\gamma^{mn}}{2(\gamma^{mn})^2}\,
\frac{(u+1)\ln u-2(u-1)}{(u-1)^3}.
\label{eq:hmn-prime-u}
\end{equation}
where we defined $u:=x/\gamma^{mn}>0$. Define $q(u):=\ln u-\frac{2(u-1)}{u+1}$. One checks
\[
q'(u)=\frac{(u-1)^2}{u(u+1)^2}\ge 0,
\qquad q(1)=0,
\]
so $q(u)\le 0$ for $0<u<1$ and $q(u)\ge 0$ for $u>1$. Since
\(
(u+1)\ln u-2(u-1)=(u+1)q(u)
\)
has the same sign as $(u-1)$, the fraction in \eqref{eq:hmn-prime-u} is nonnegative for all $u>0$. Hence
\begin{equation}
(h^{mn})'(x)\ge 0\quad\text{for all }x>0,
\label{eq:hmn-increasing}
\end{equation}
so $h^{mn}$ is increasing on $(0,\infty)$. Therefore, for $p\in[1/2,1)$ we have $t\ge 1$ and thus $nt\ge n/t$, implying from \eqref{eq:dQmn-dp-xy} that
\[
\frac{d}{dp}\mathcal Q^{mn}_n(p)\ge 0.
\]
Summing over $(\sigma,\tau)$ preserves monotonicity, and \eqref{eq:f3mn-sumQ} gives
\begin{equation}
\frac{d}{dp}f_3^{mn}(p)\ge 0,
\qquad p\in\Big[\frac12,1\Big).
\label{eq:f3mn-monotone}
\end{equation}
Finally, with $p_k=\cos^2\theta_k$ and $\theta_k\in(0,\pi/4]$, we have $\frac{dp_k}{d\theta_k}=-\sin(2\theta_k)\le 0$, and thus
\[
\frac{d f_3^{mn}}{d\theta_k}
=
\frac{d f_3^{mn}}{dp_k}\,\frac{dp_k}{d\theta_k}
\le 0,
\]
as claimed. \qed
\newpage
\section{Trace diagram dictionary}
\subsection{W, $\mathscr{R}$ diagrams for mixed case}

\label{red-WRdiag}

\begin{longtable}{C{2.5cm} C{12cm}}
\caption{Trace diagram key for \(W_i\) and \(\mathscr{R}_i\) for the mixed bulk case.
The first, second, and third legs (from outermost to innermost) represent the 
Hilbert spaces \(\mathcal{H}_{A_1}\), \(\mathcal{H}_{A_2}\), and \(\mathcal{H}_{\bar A}\), 
respectively. If a reference leg is present, it is denoted \(r\) and corresponds 
to the Hilbert space \(\mathcal{H}_{r}\). In that case, the ordering becomes 
\(\mathcal{H}_{r}\), \(\mathcal{H}_{A_1}\), \(\mathcal{H}_{A_2}\), 
\(\mathcal{H}_{\bar A}\) from outermost to innermost. 
}
\label{W-R-diagrams} \\
\toprule
\textbf{Label} & \textbf{Trace diagram} \\
\midrule
\endfirsthead

\toprule
\textbf{Label} & \textbf{Trace diagram} \\
\midrule
\endhead

\midrule
\endfoot

\bottomrule
\endlastfoot

$W_1$ &
\includegraphics[height=20mm]{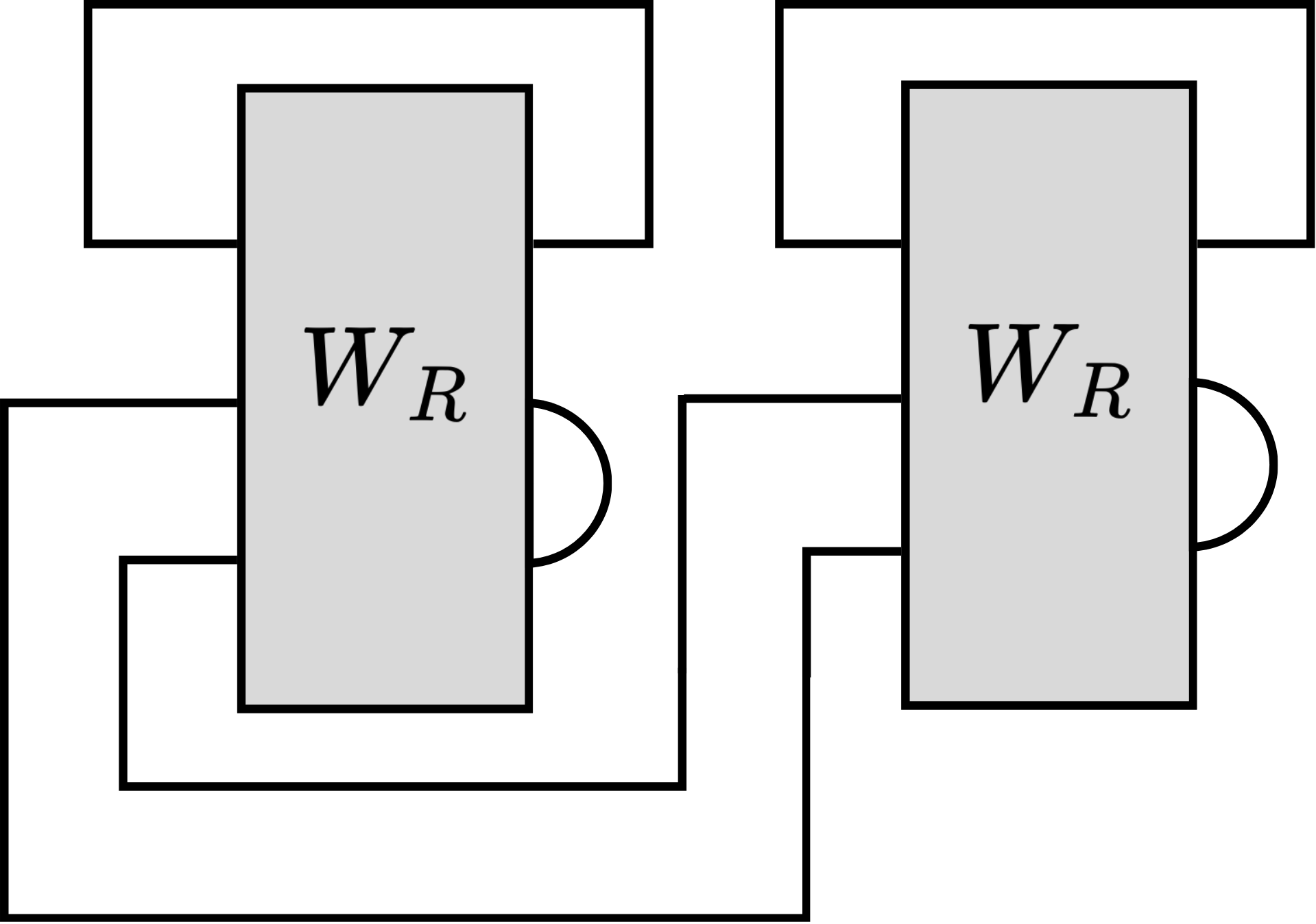} \\[6pt]

$W_2$ &
\includegraphics[height=20mm]{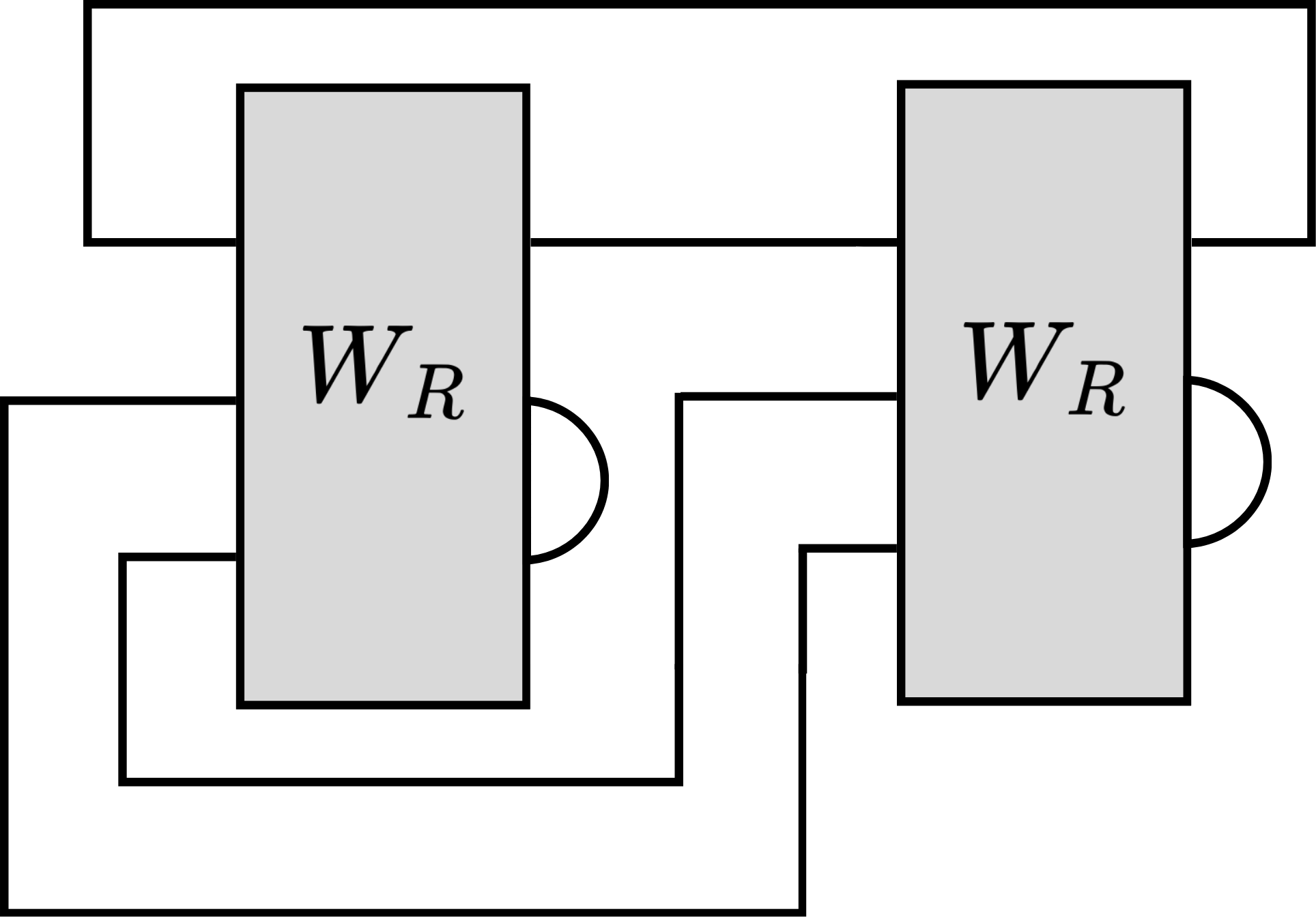} \\[6pt]

$W_3$ &
\includegraphics[height=20mm]{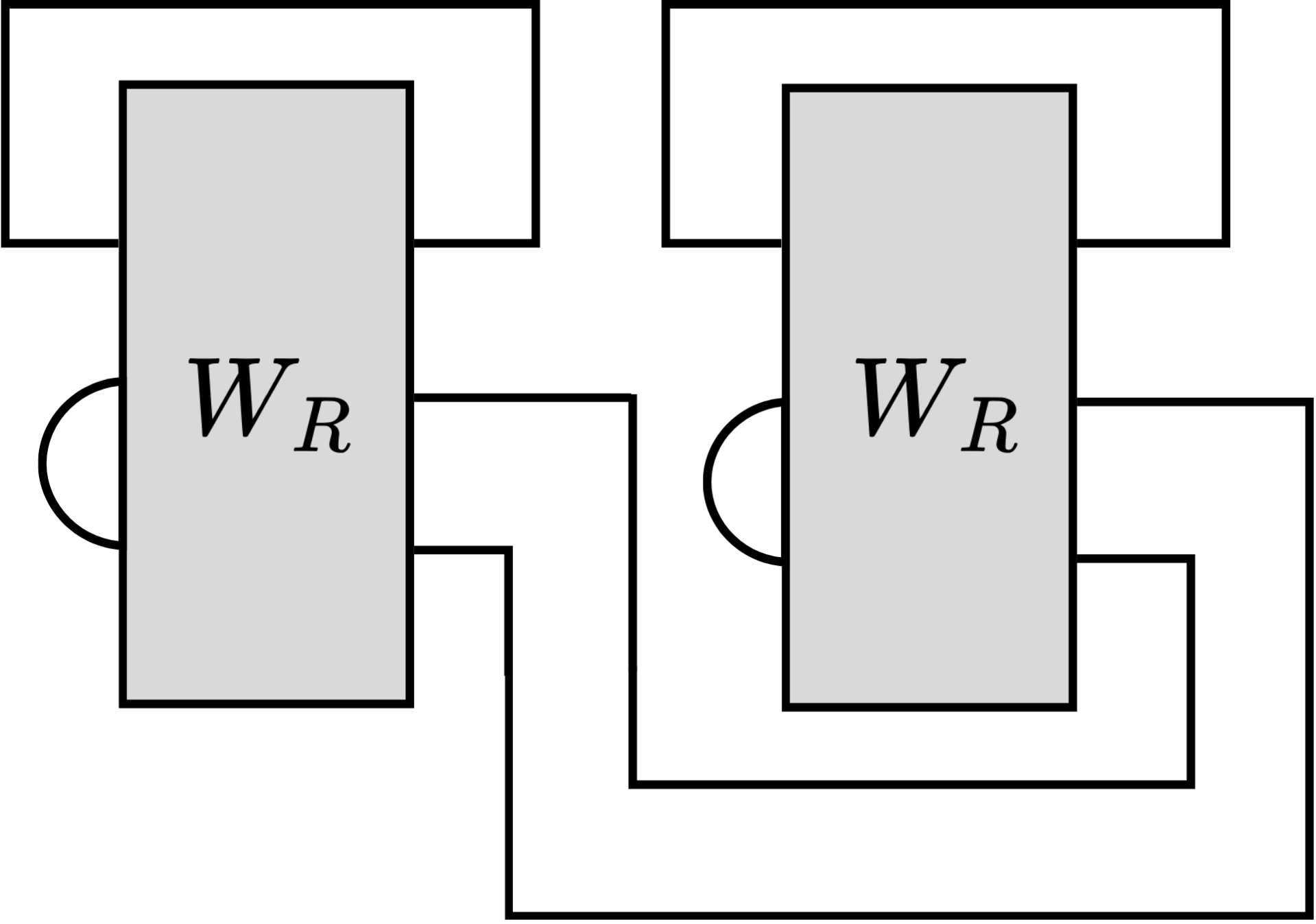} \\[6pt]

$W_4$ &
\includegraphics[height=21mm]{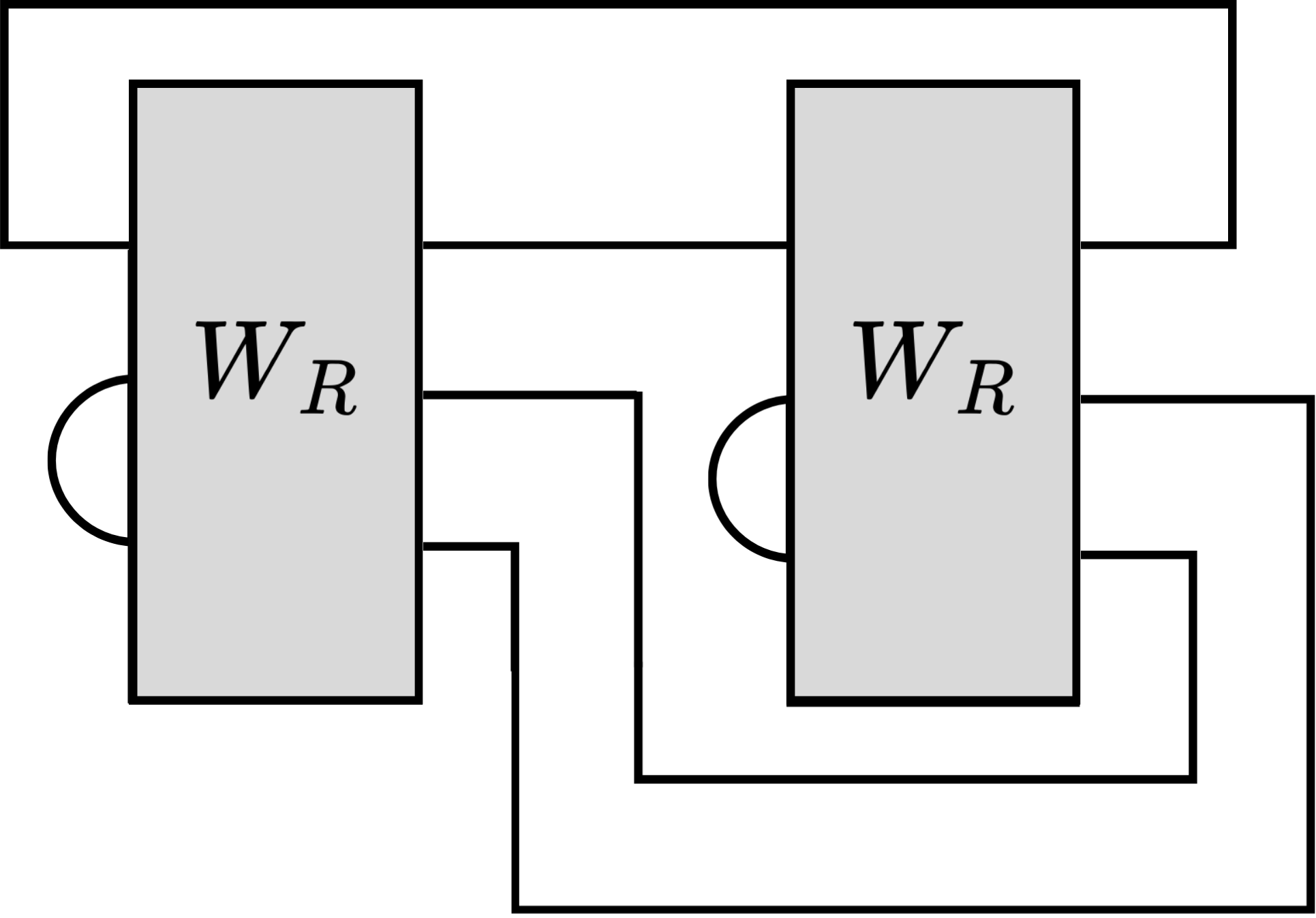} \\[6pt]

$W_5$ &
\includegraphics[height=22mm]{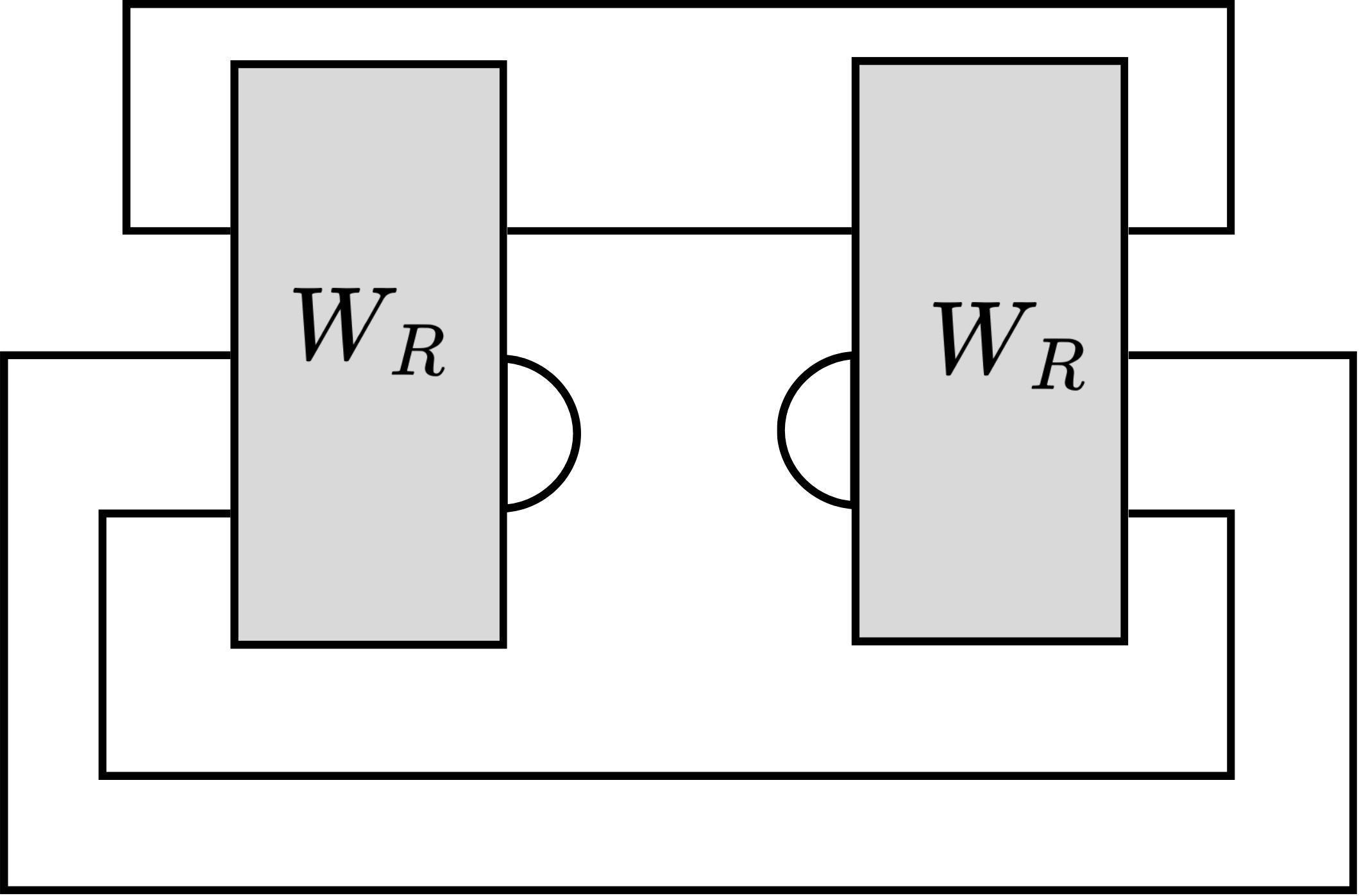} \\[6pt]

$W_6$ &
\includegraphics[height=23mm]{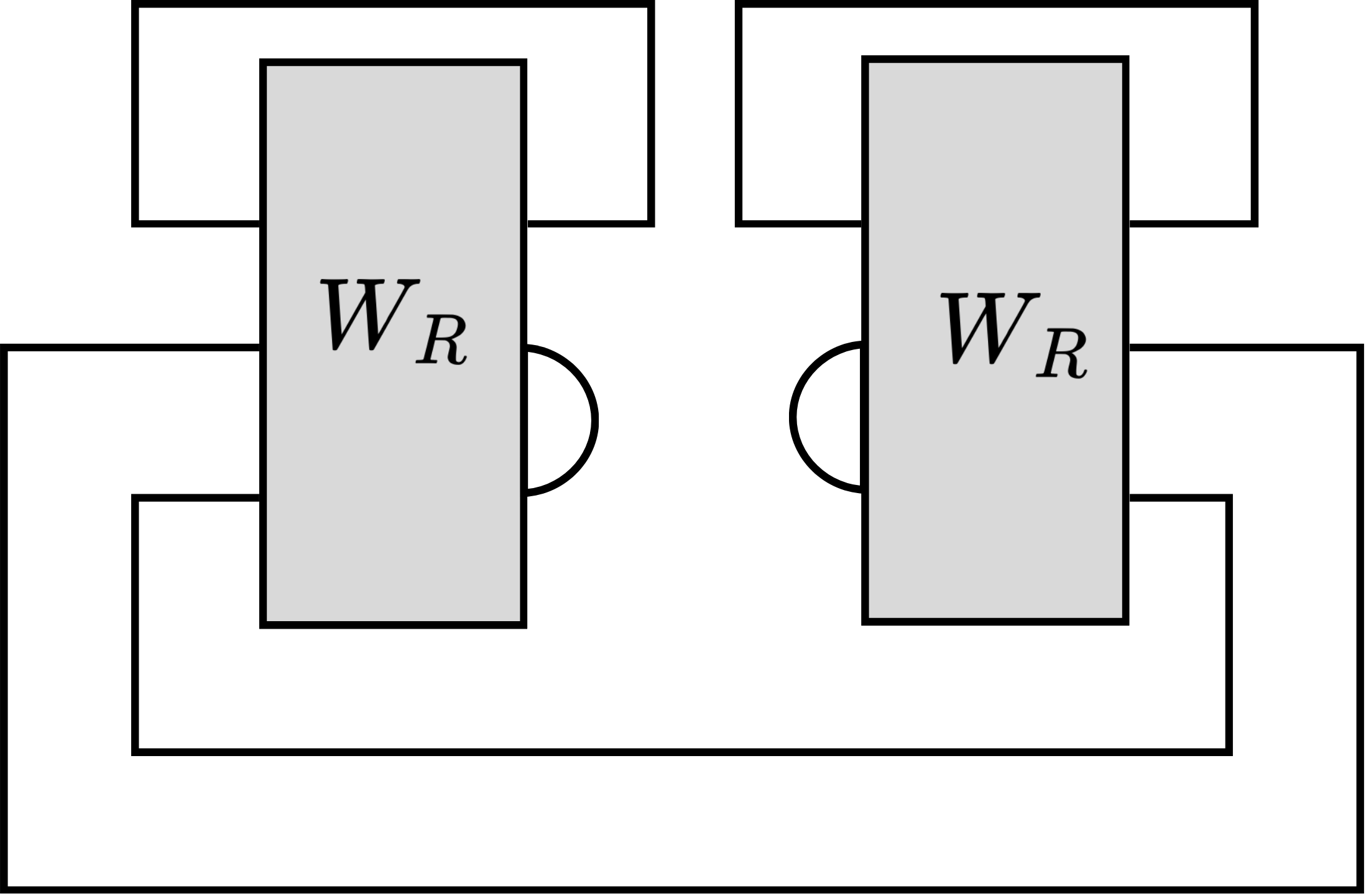} \\[6pt]

$W_7$ &
\includegraphics[height=16.5mm]{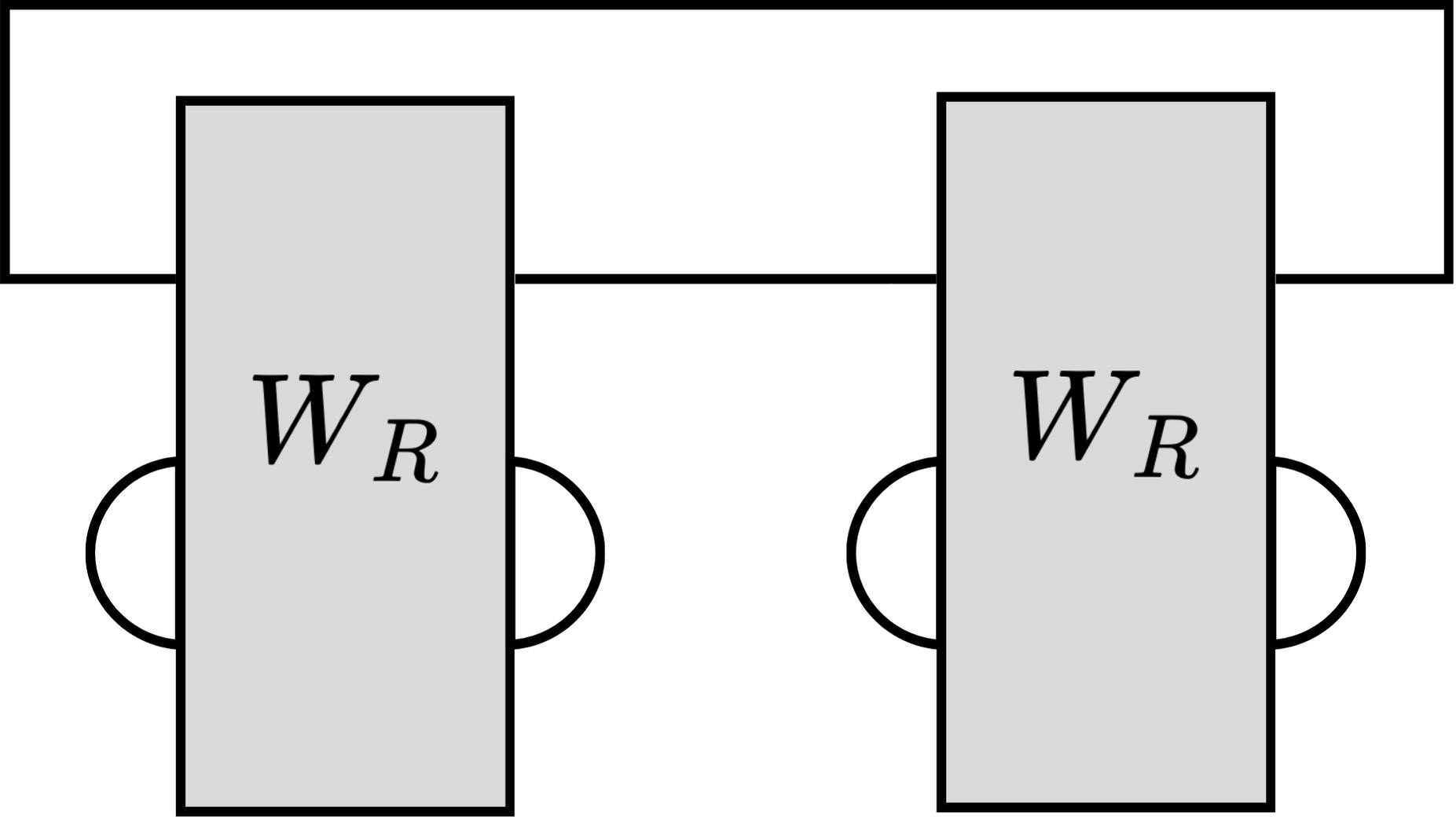} \\

$W_8$ &
\includegraphics[height=16.5mm]{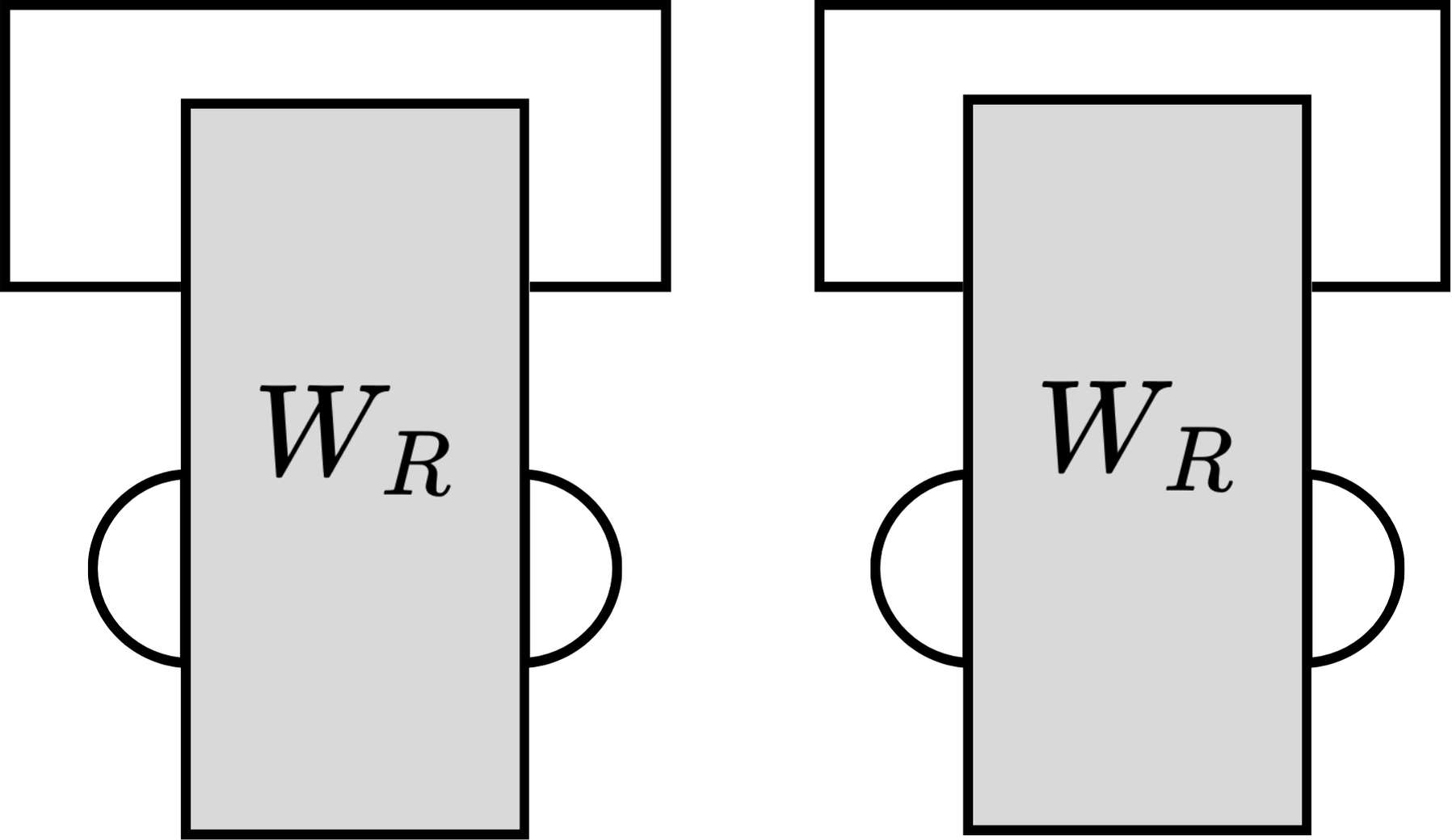} \\ \addlinespace[15pt]

$\mathscr{R}_1(s,\lambda_i)$ &
\includegraphics[height=23mm]{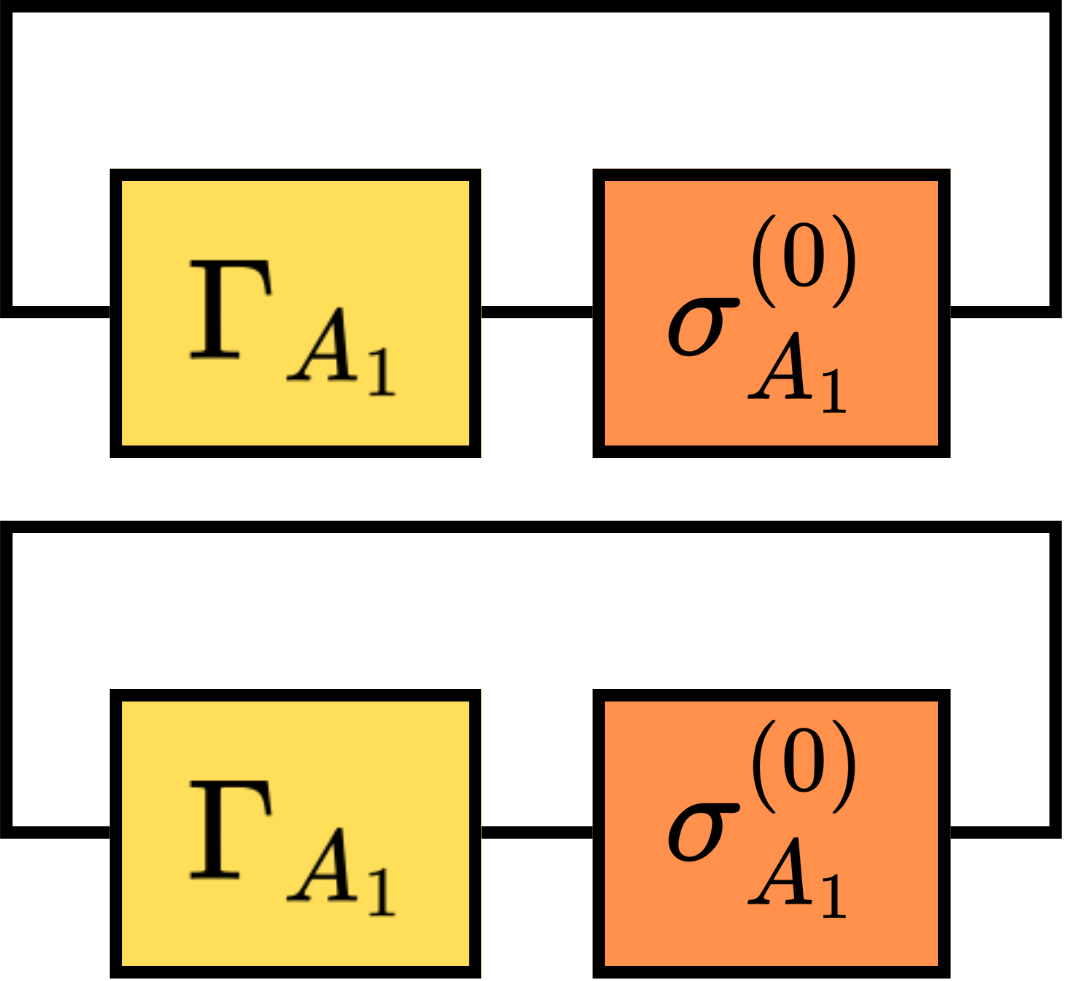}
 \\ \addlinespace[15pt]

$\mathscr{R}_2(s,\lambda_i)$ &
\includegraphics[height=16.5mm]{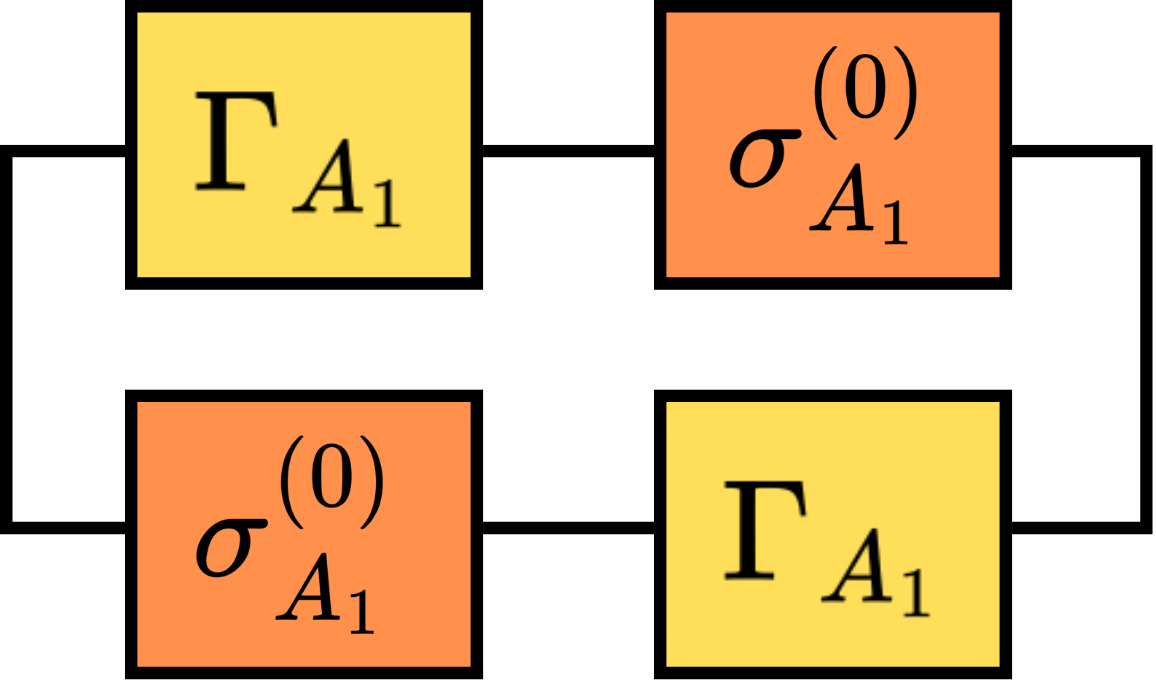} \\ \addlinespace[15pt]

$\mathscr{R}_3(s,\lambda_i)$ &
\includegraphics[height=20mm]{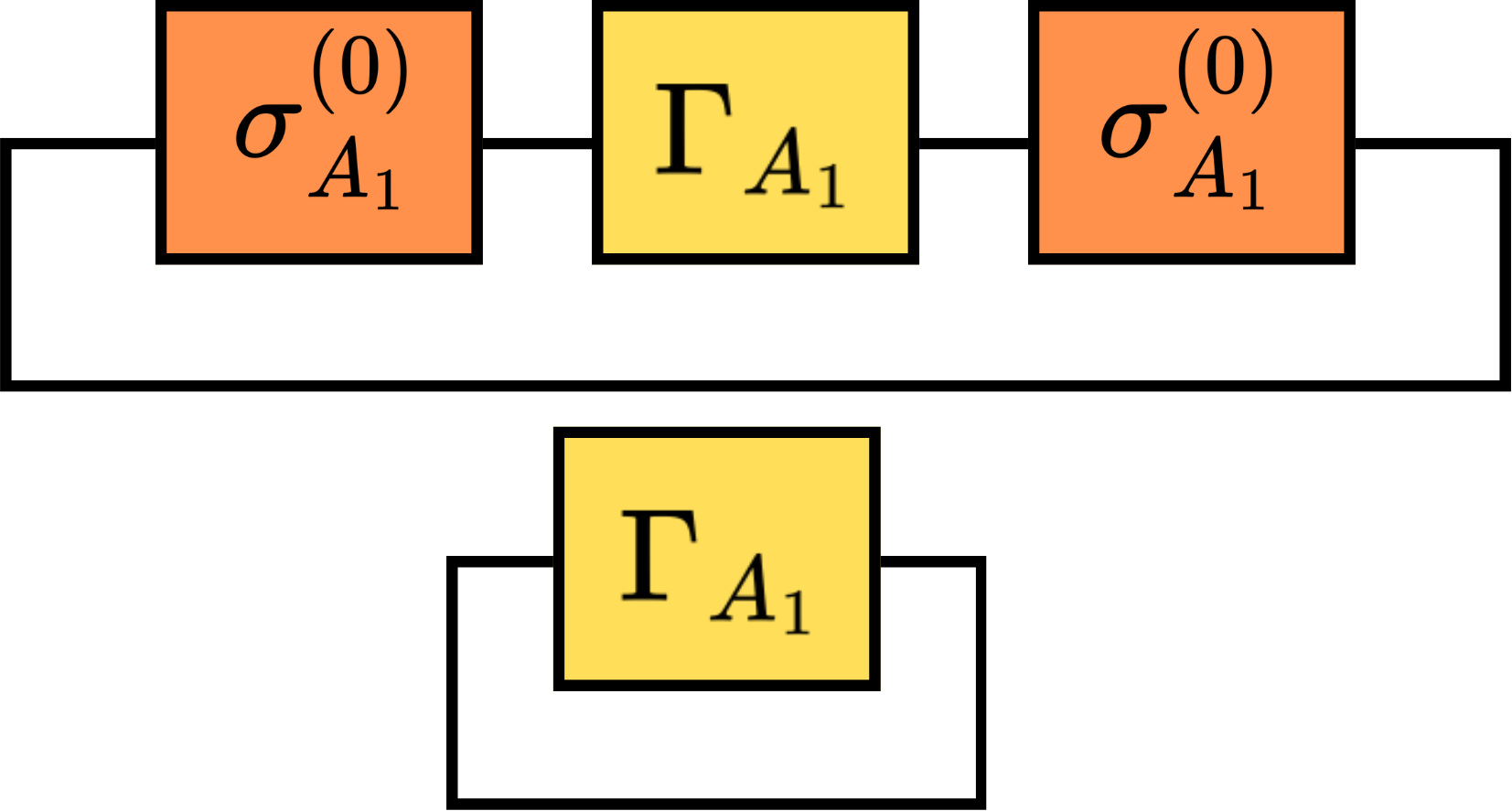} \\ \addlinespace[15pt]

$\mathscr{R}_4(\lambda_i)$ &
\includegraphics[height=10.1mm]{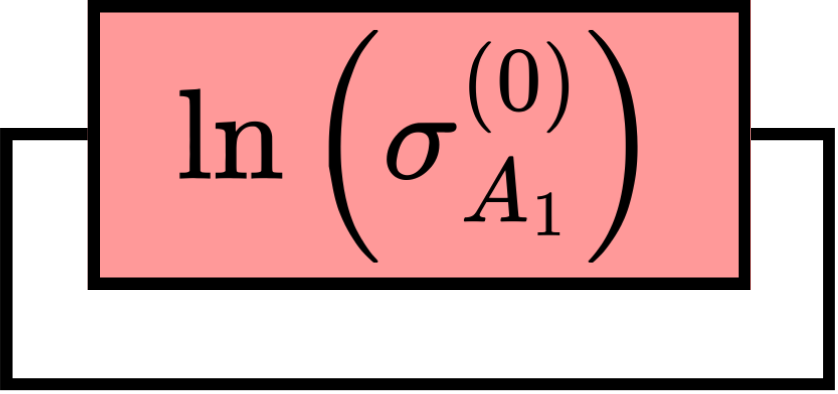} \\ \addlinespace[15pt]

$\mathscr{R}_5(\lambda_i)$ &
\includegraphics[height=10.1mm]{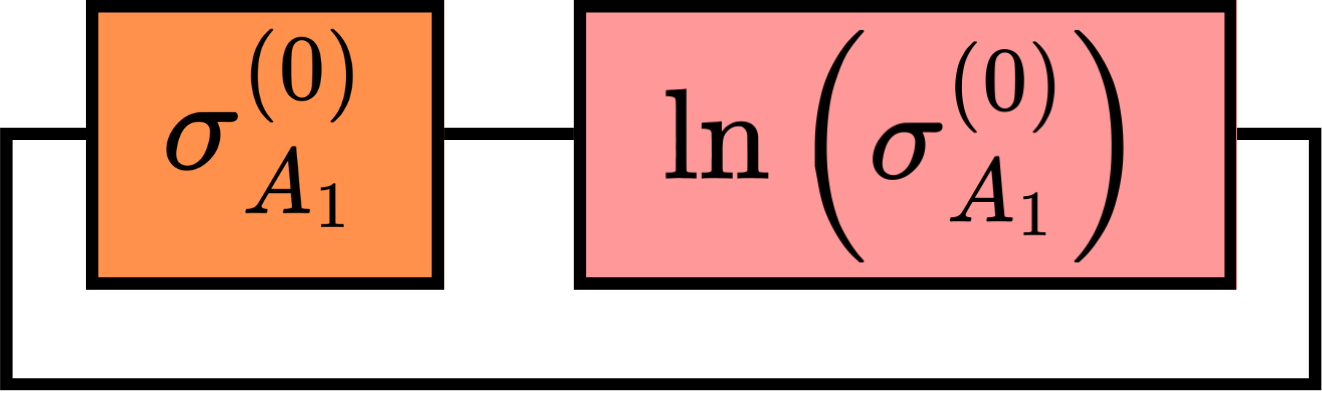}  \\ \addlinespace[15pt]

$\mathscr{R}_6(\lambda_i)$ &
\includegraphics[height=20mm]{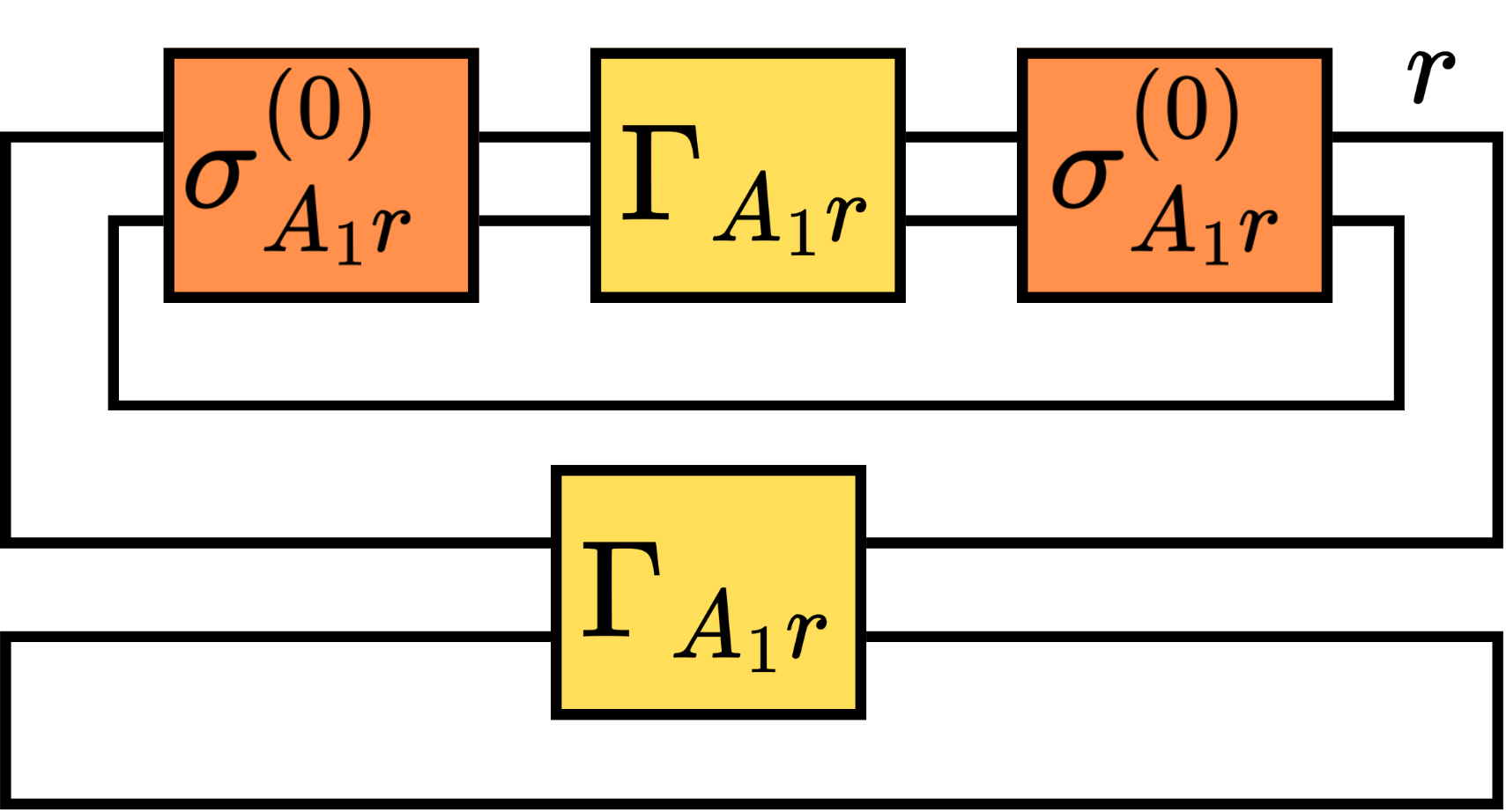} \\\addlinespace[15pt]

$\mathscr{R}_7(s,\lambda_i)$ &
\includegraphics[height=20mm]{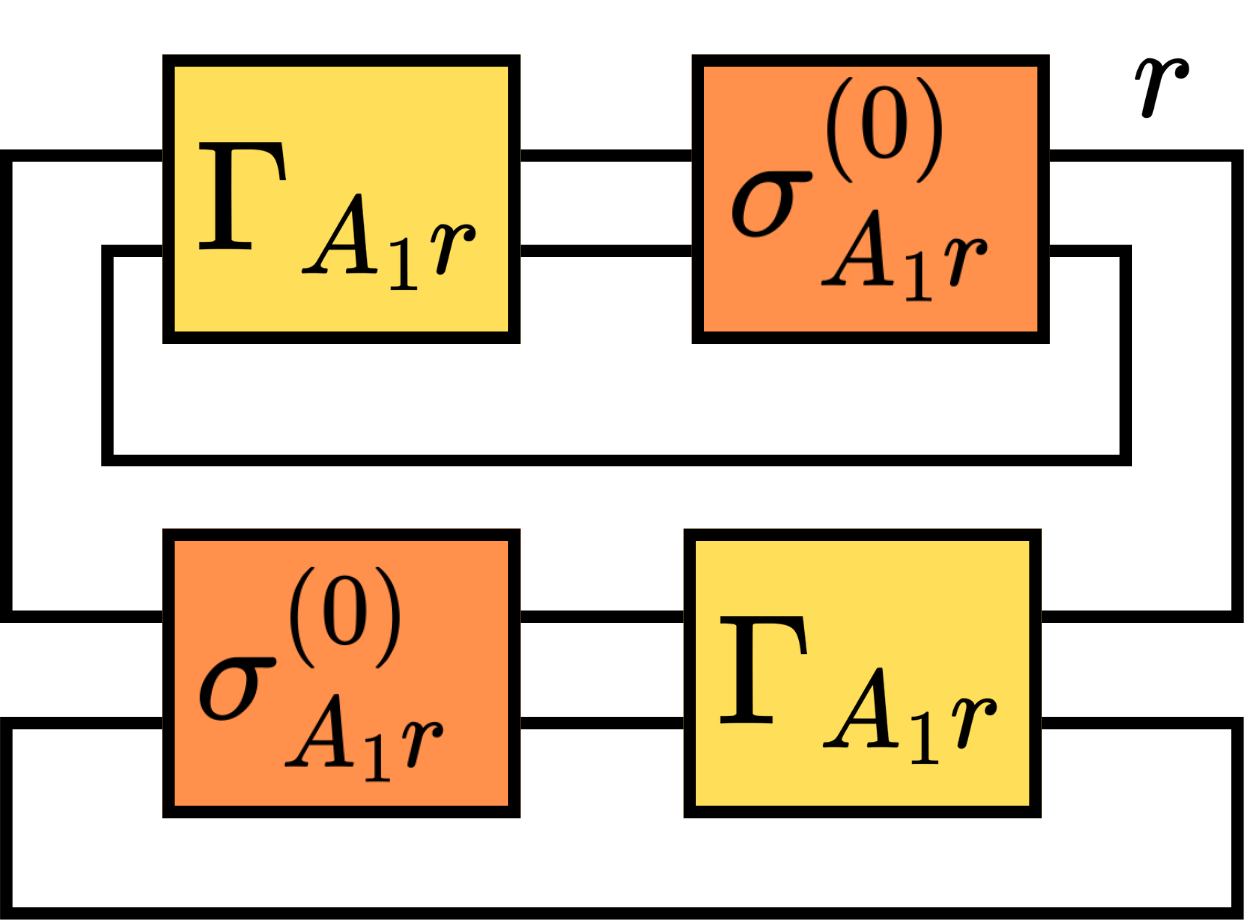} \\\addlinespace[15pt]

$\mathscr{R}_8(\lambda_i)$ &
\includegraphics[height=13mm]{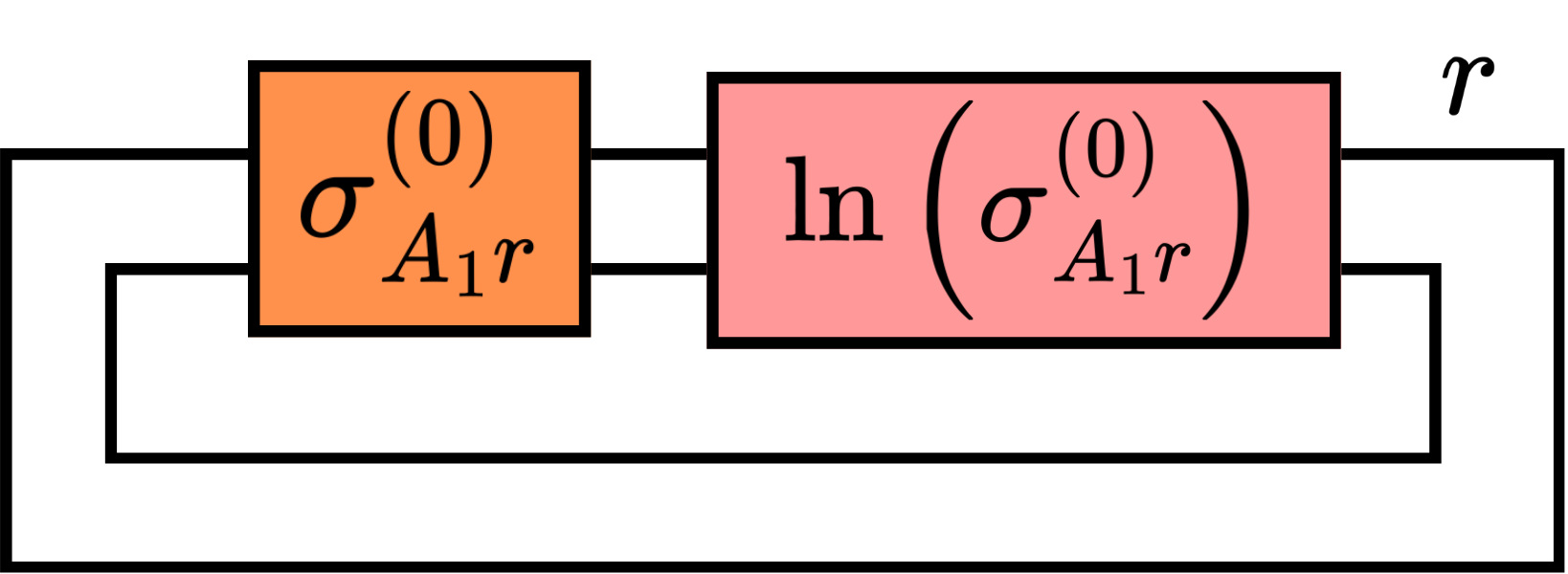}   \\\addlinespace[15pt]

$\mathscr{R}_9(\lambda_i)$ &
\includegraphics[height=13mm]{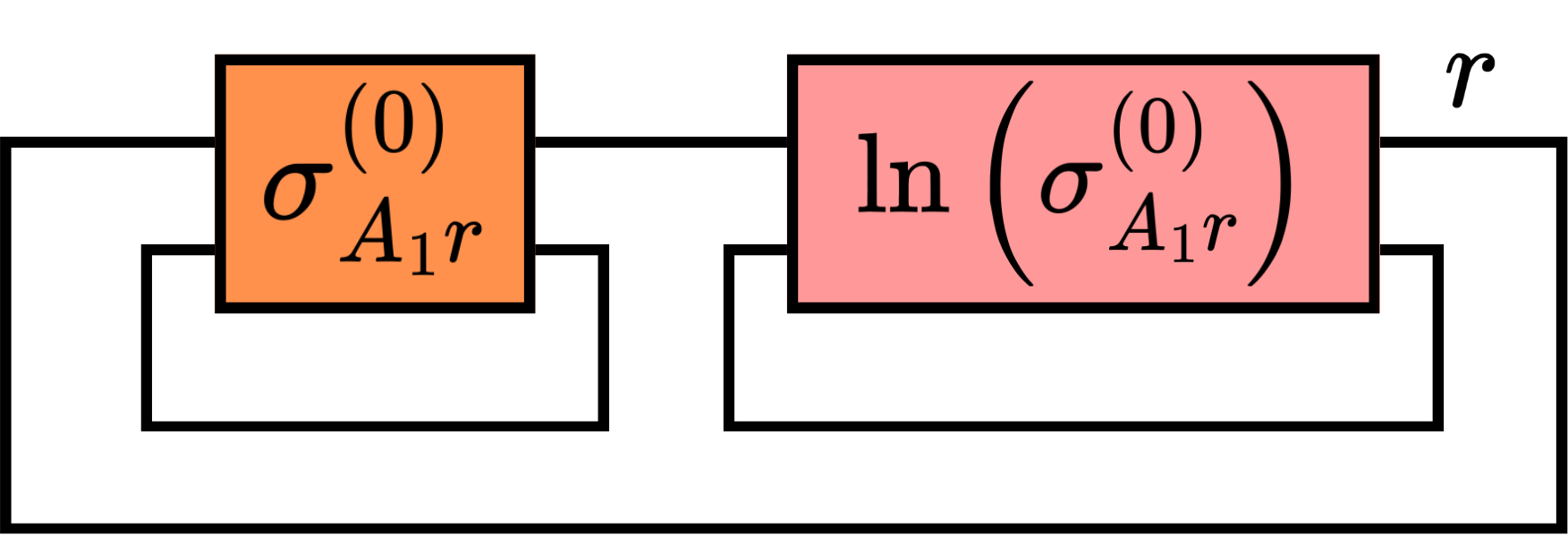} \\

\end{longtable}

\subsection{$\widetilde W$, $\widetilde{\mathscr{R}}$ diagrams for pure case}
\label{red-WRdiag-pure}

\begin{longtable}{C{2.5cm} C{12cm}}
\caption{Trace diagram key for \(\widetilde W_i\) for the pure bulk case. The first, second, third, and fourth legs (from outermost to innermost) represent the Hilbert spaces \(\mathcal{H}_{A_1}\), \(\mathcal{H}_{A_2}\), \(\mathcal{H}_{\bar A}\), and
\(\mathcal{H}_{\bar A_1}\), respectively.
}
\label{W-tilde-R-tilde-diagrams} \\
\toprule
\textbf{Label} & \textbf{Trace diagram} \\
\midrule
\endfirsthead

\toprule
\textbf{Label} & \textbf{Trace diagram} \\
\midrule
\endhead

\midrule
\endfoot

\bottomrule
\endlastfoot

$\widetilde W_1$ &
\includegraphics[height=20mm]{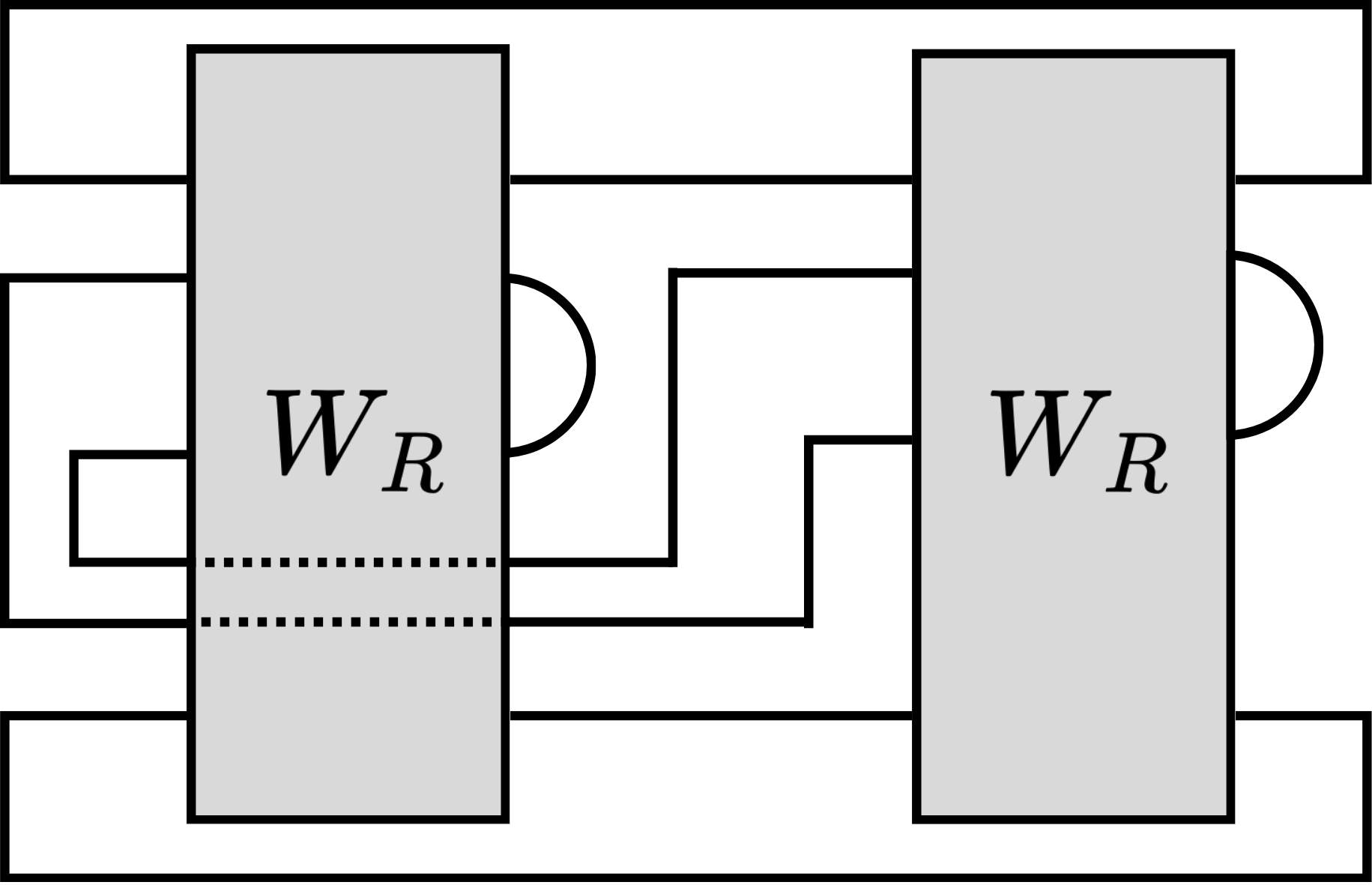} \\[6pt]

$\widetilde W_2$ &
\includegraphics[height=20mm]{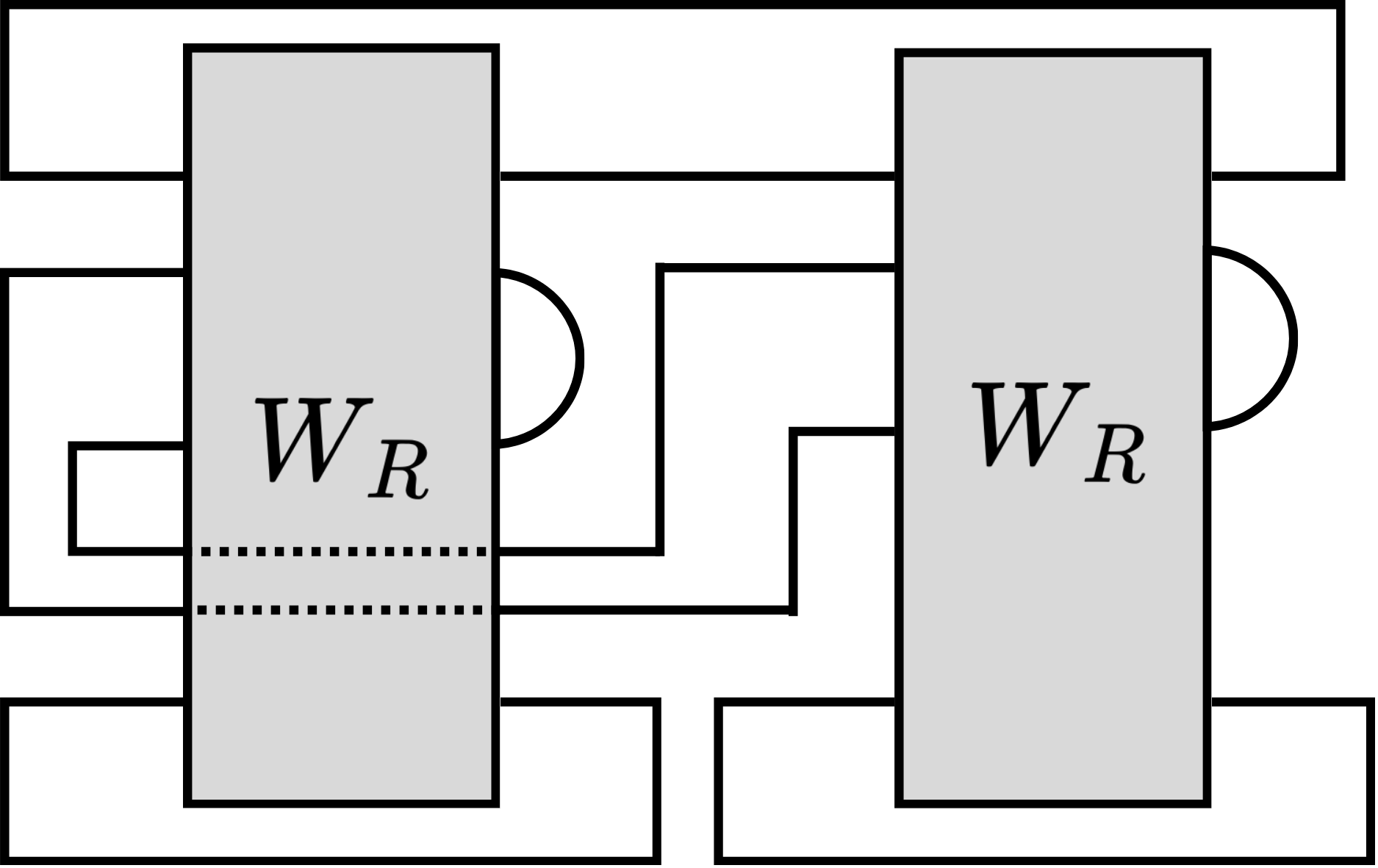} \\[6pt]

$\widetilde W_3$ &
\includegraphics[height=20mm]{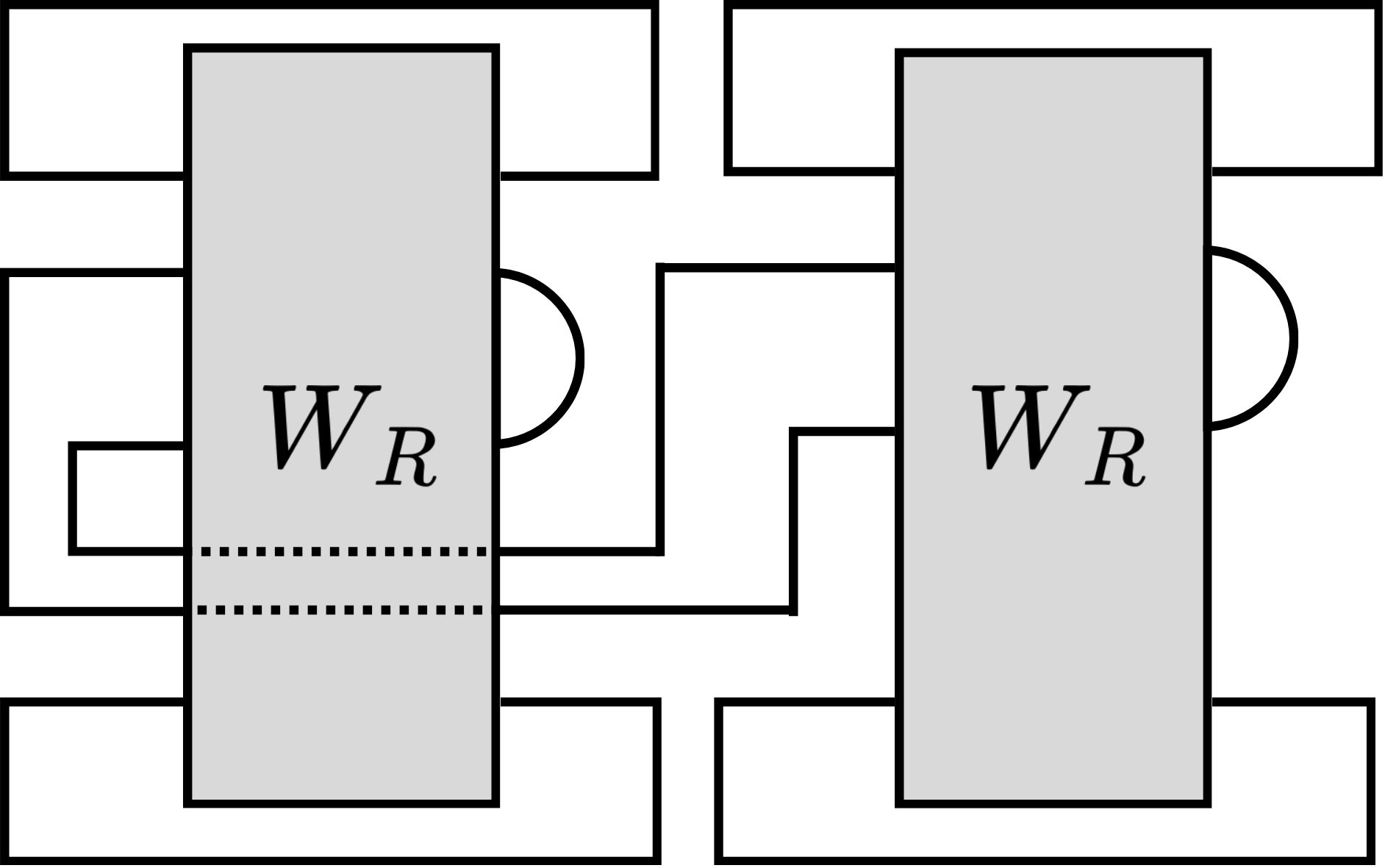} \\[6pt]

$\widetilde W_4$ &
\includegraphics[height=20mm]{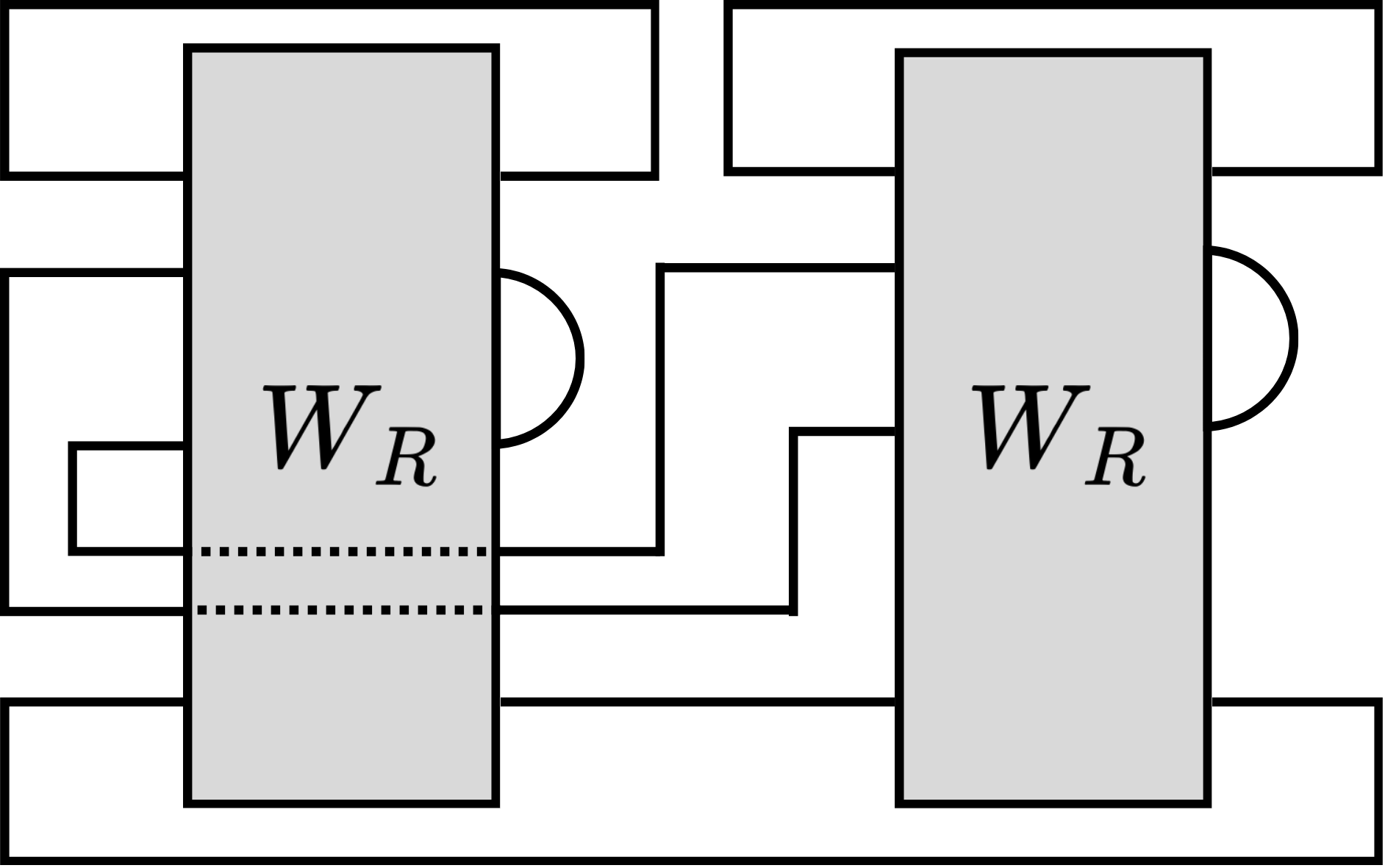} \\[6pt]

$\widetilde W_5$ &
\includegraphics[height=20mm]{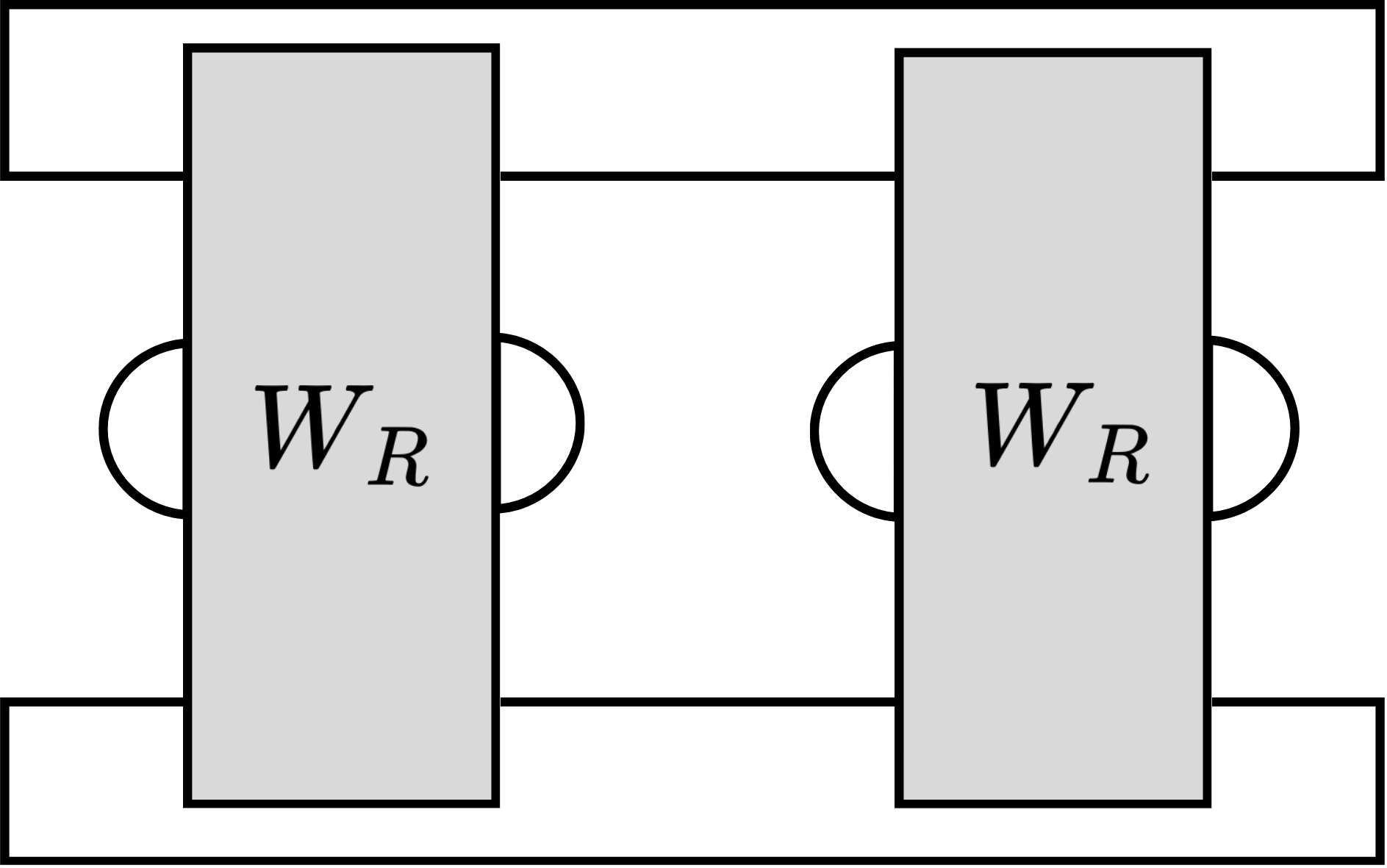} \\[6pt]

$\widetilde W_6$ &
\includegraphics[height=20mm]{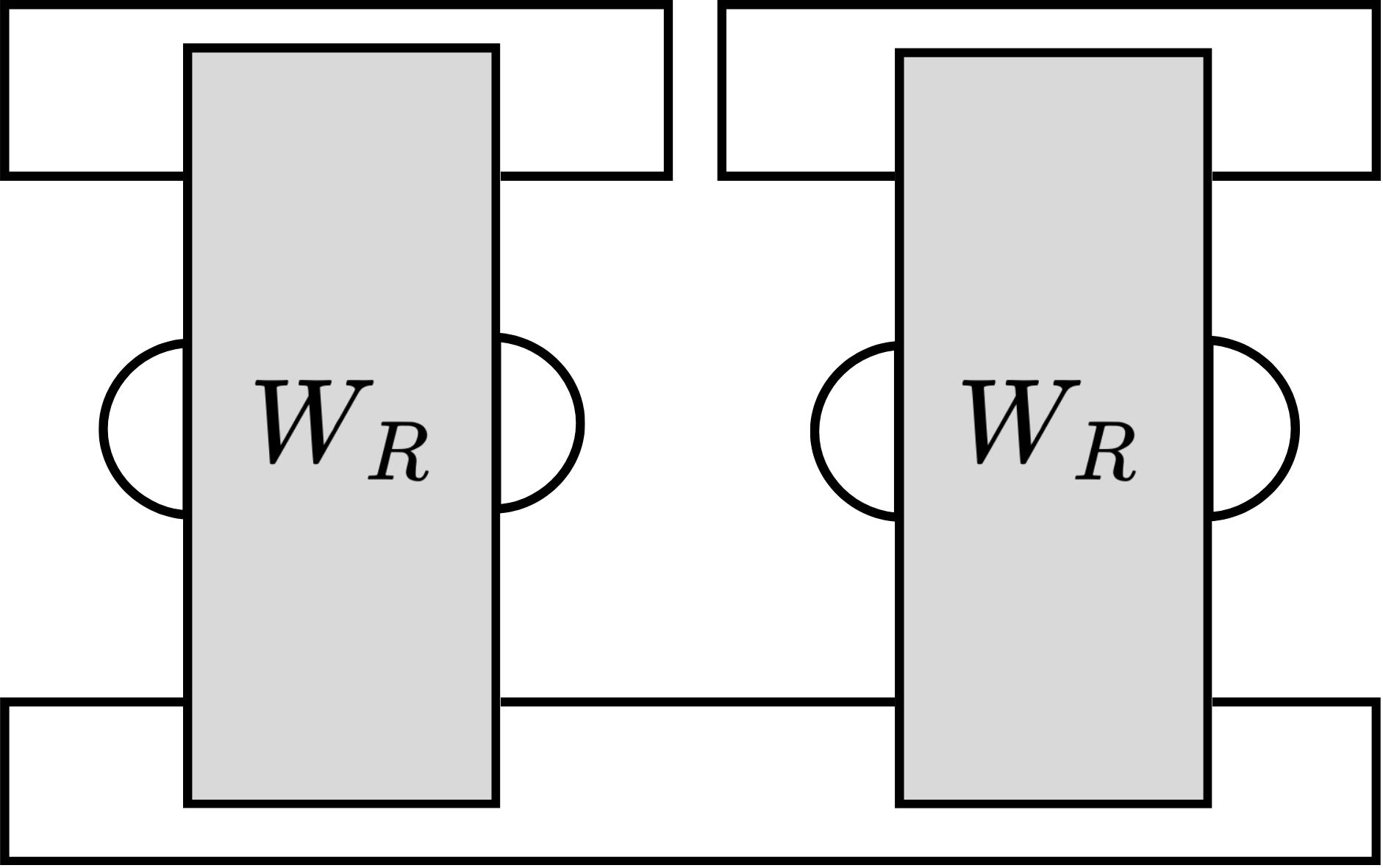} \\[6pt]

$\widetilde W_7$ &
\includegraphics[height=22mm]{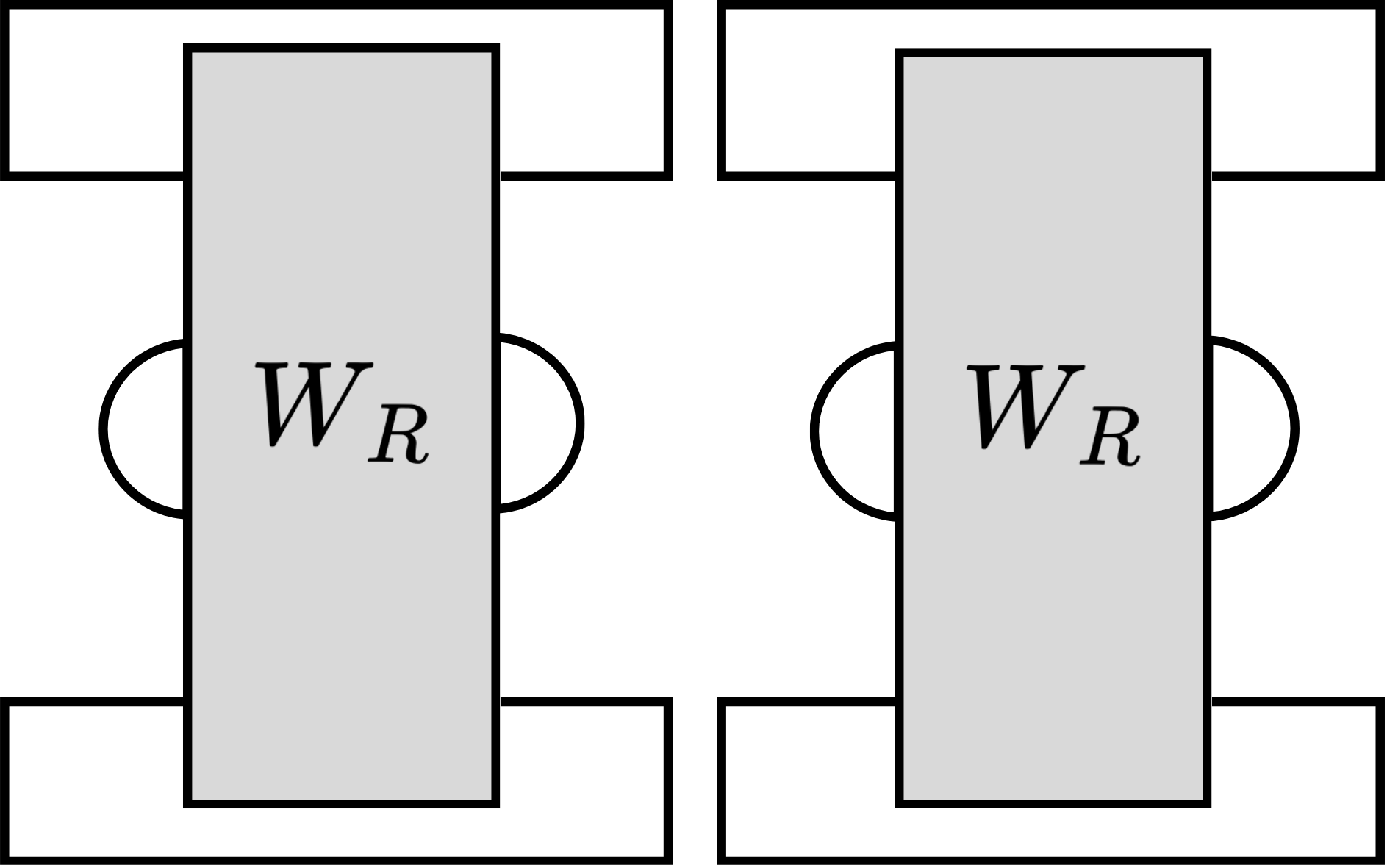} \\[6pt]

$\widetilde W_8$ &
\includegraphics[height=22mm]{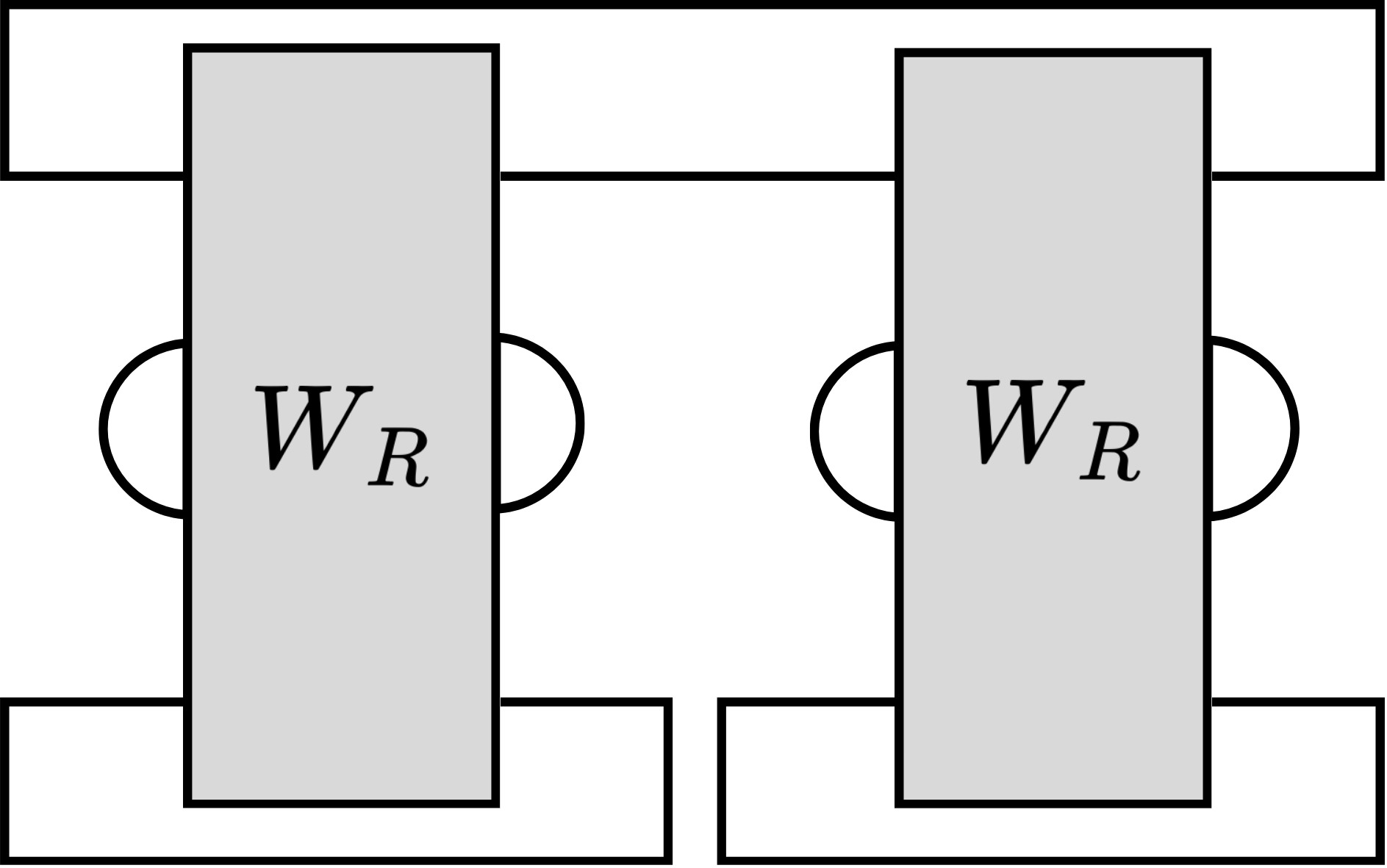} \\[6pt]

$\widetilde W_9$ &
\includegraphics[height=26mm]{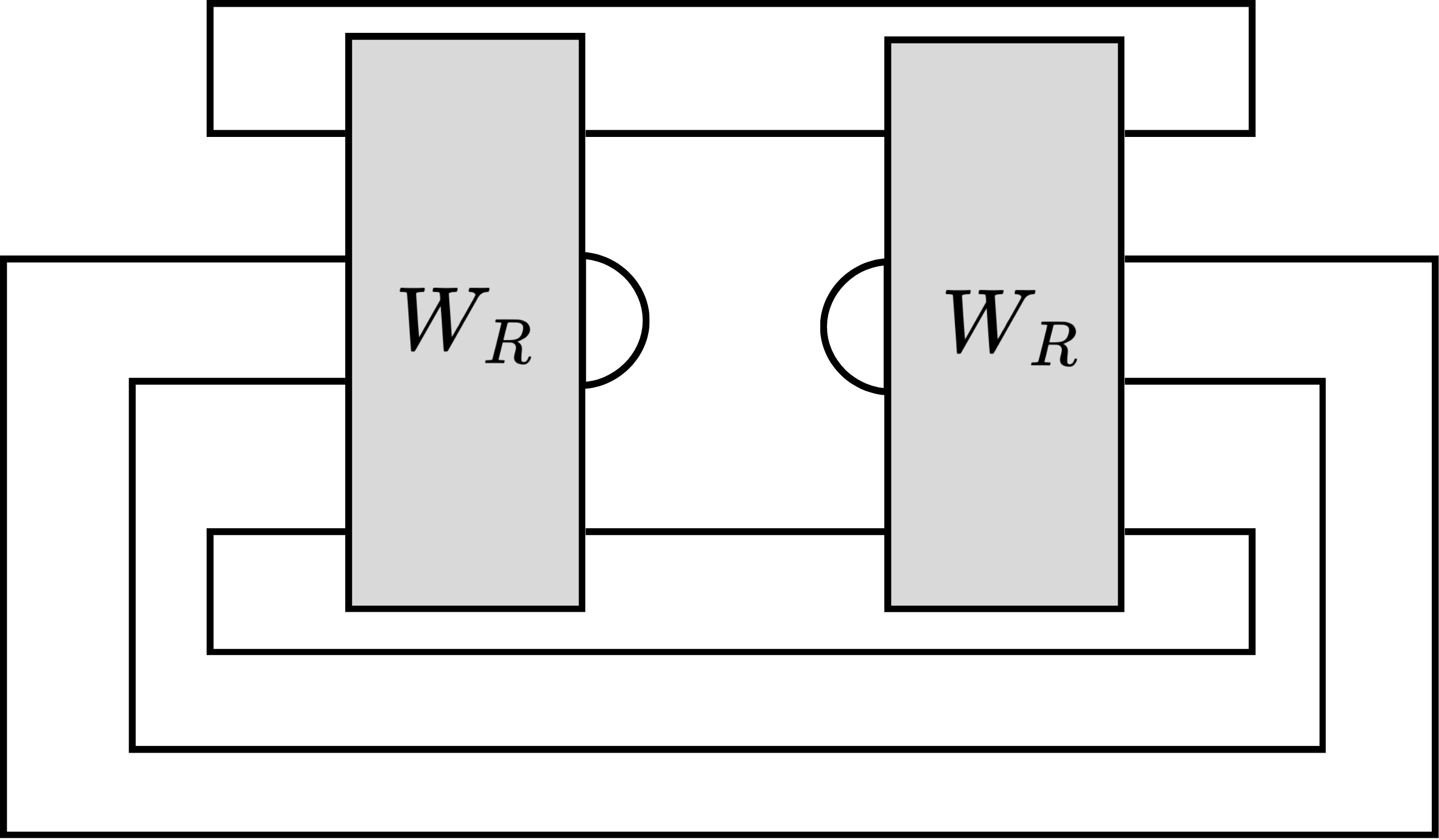} \\[6pt]

$\widetilde W_{10}$ &
\includegraphics[height=26mm]{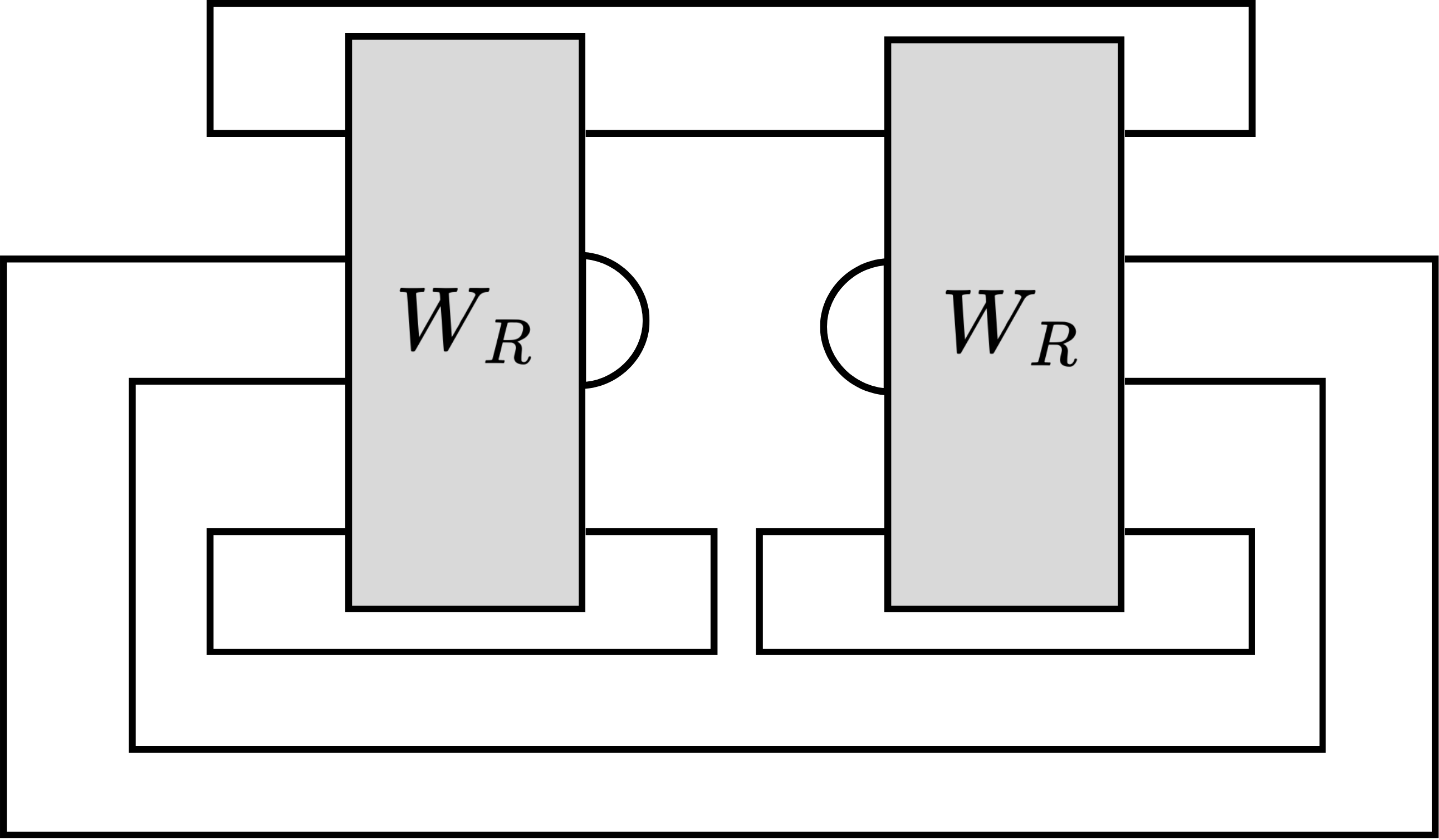} \\[6pt]

$\widetilde W_{11}$ &
\includegraphics[height=26mm]{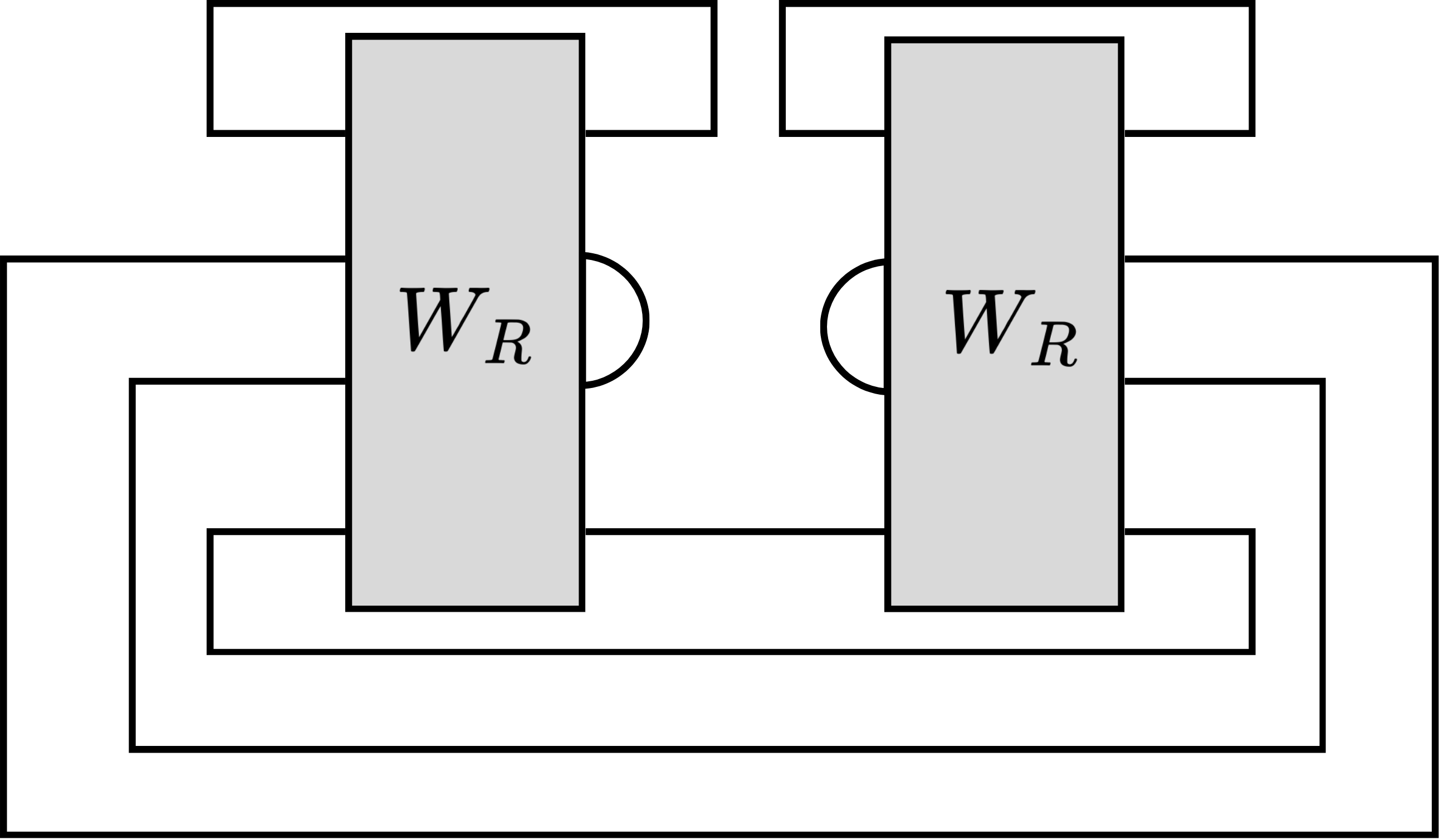} \\[6pt]

$\widetilde W_{12}$ &
\includegraphics[height=26mm]{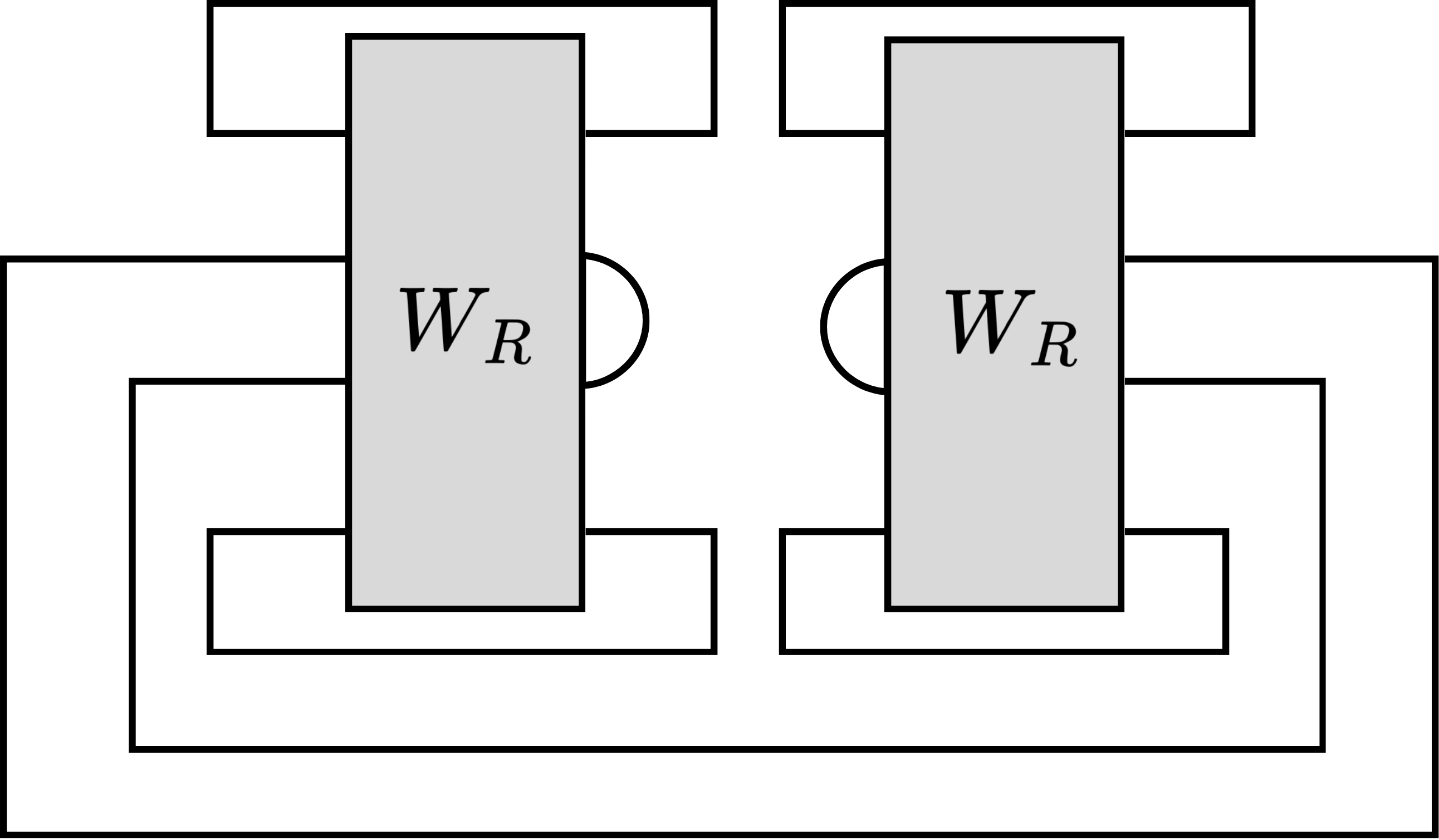} \\[6pt]

$\widetilde W_{13}$ &
\includegraphics[height=20mm]{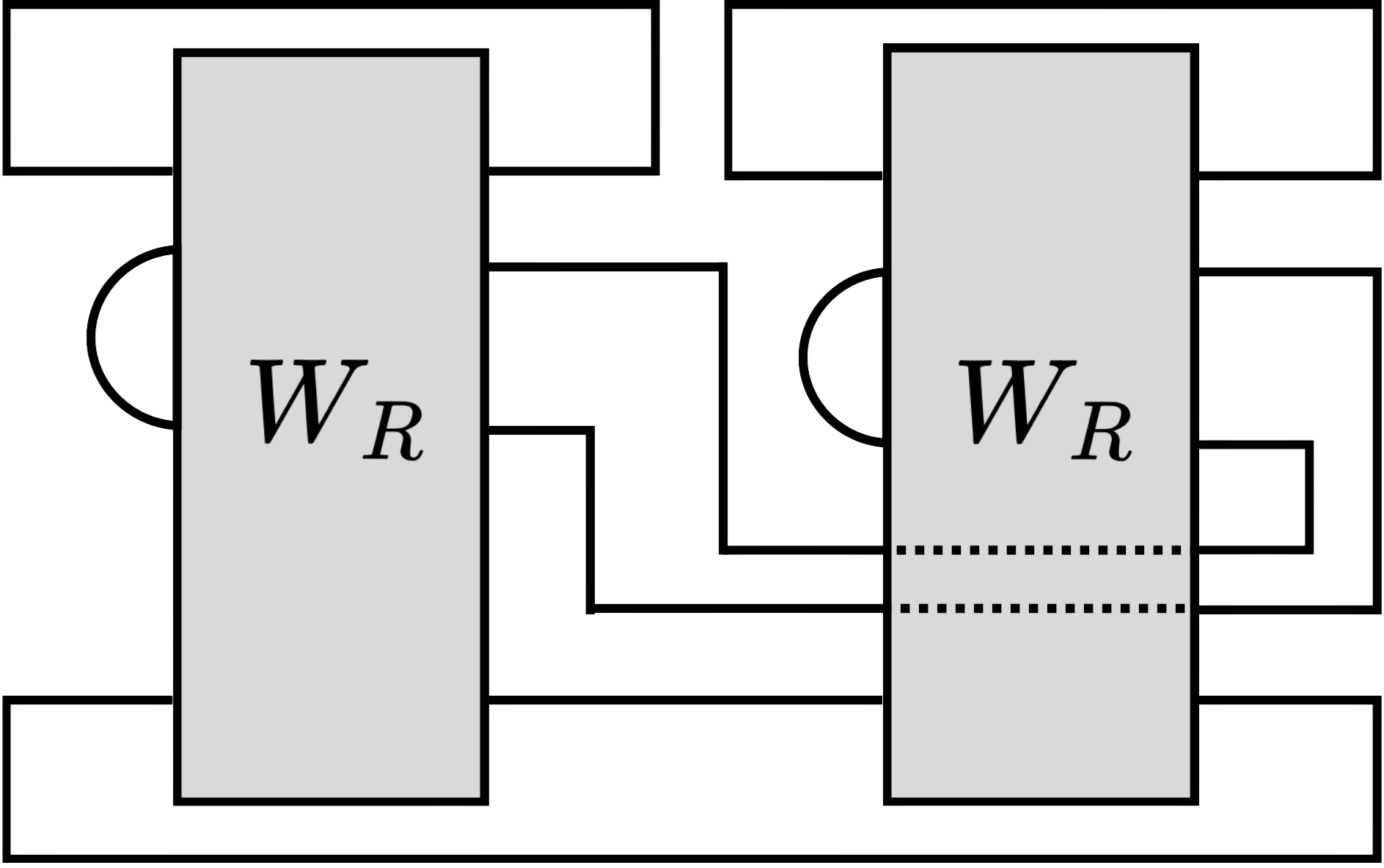} \\[6pt]

$\widetilde W_{14}$ &
\includegraphics[height=20mm]{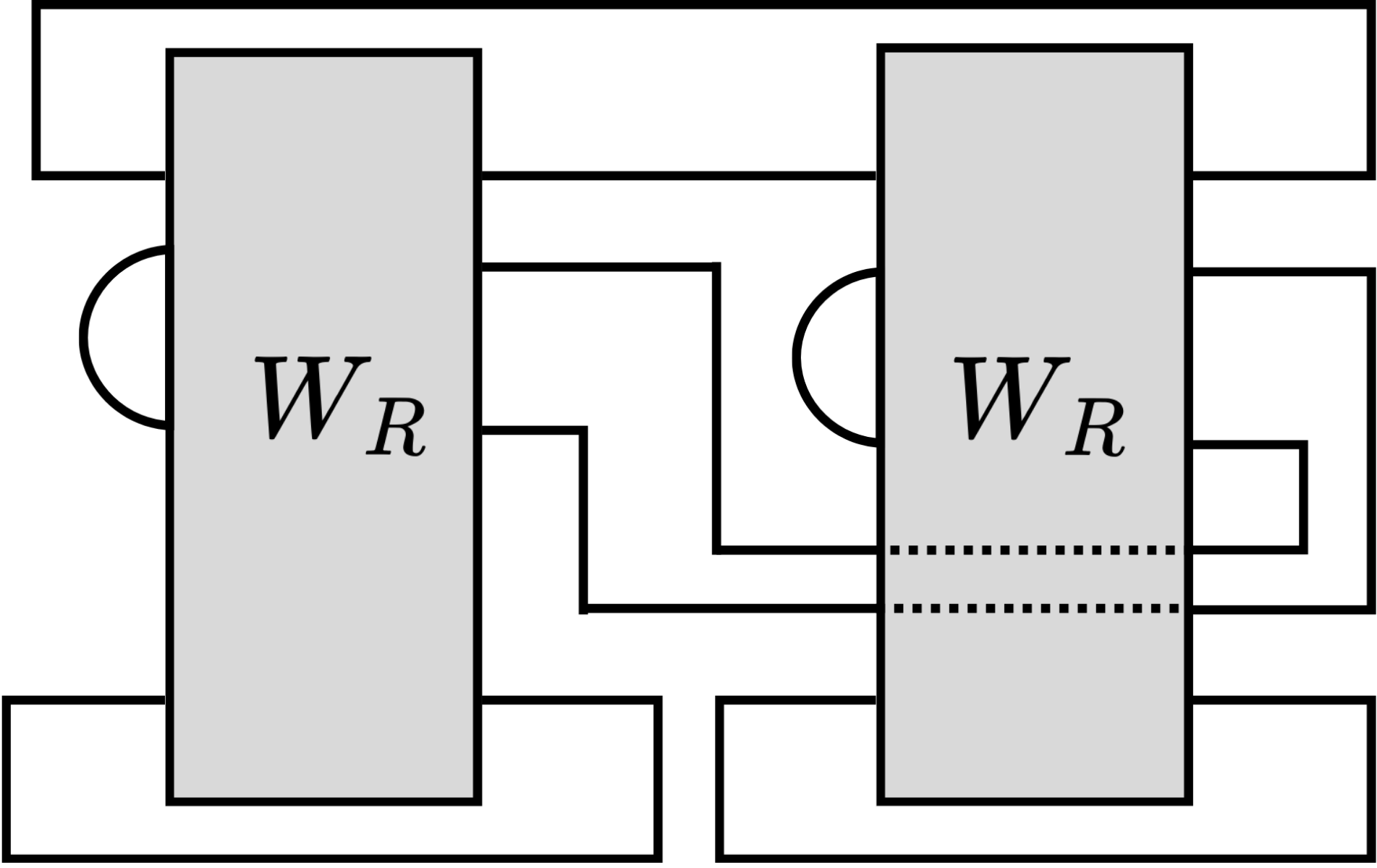} \\[6pt]

$\widetilde W_{15}$ &
\includegraphics[height=20mm]{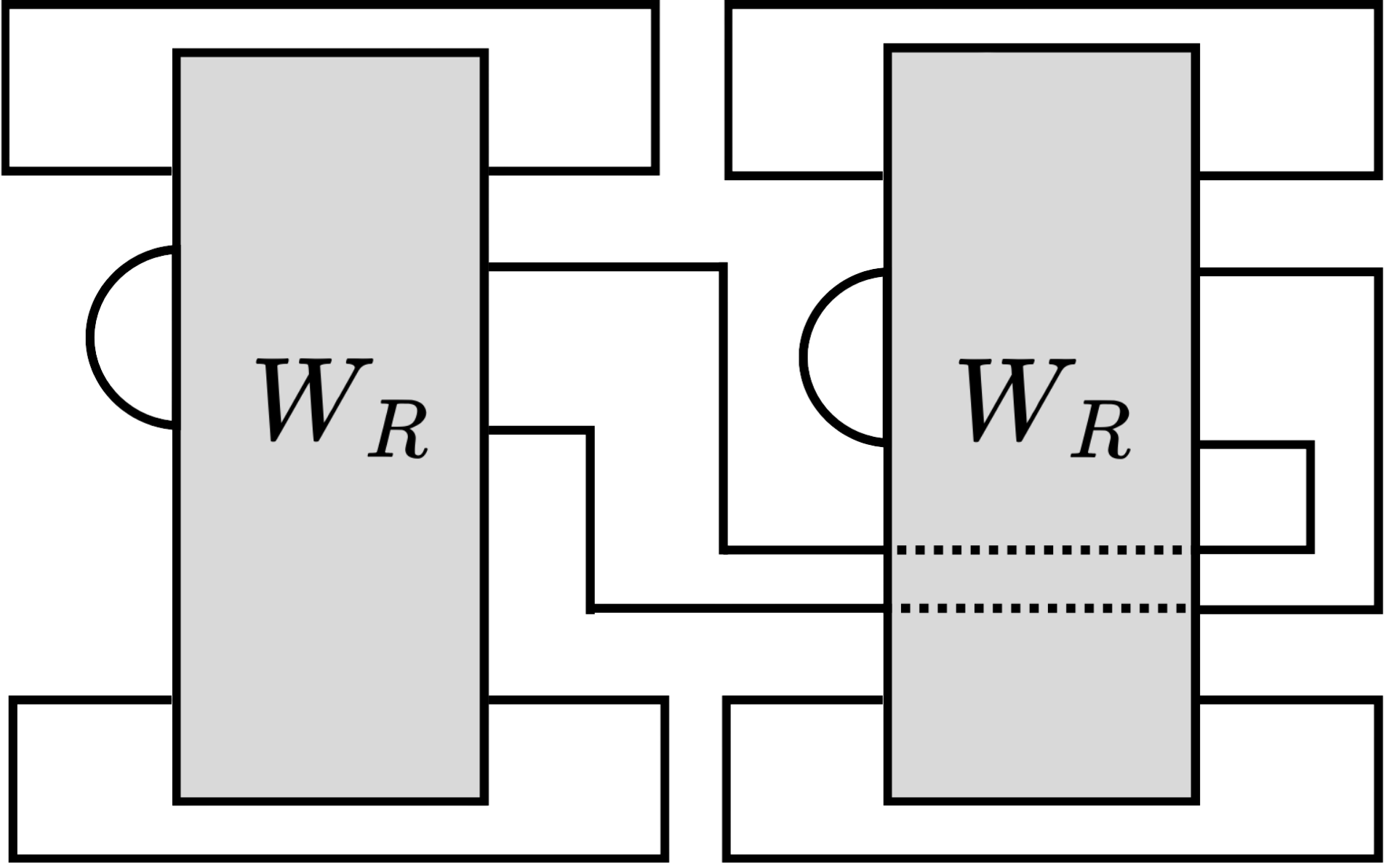} \\[6pt]

$\widetilde W_{16}$ &
\includegraphics[height=20mm]{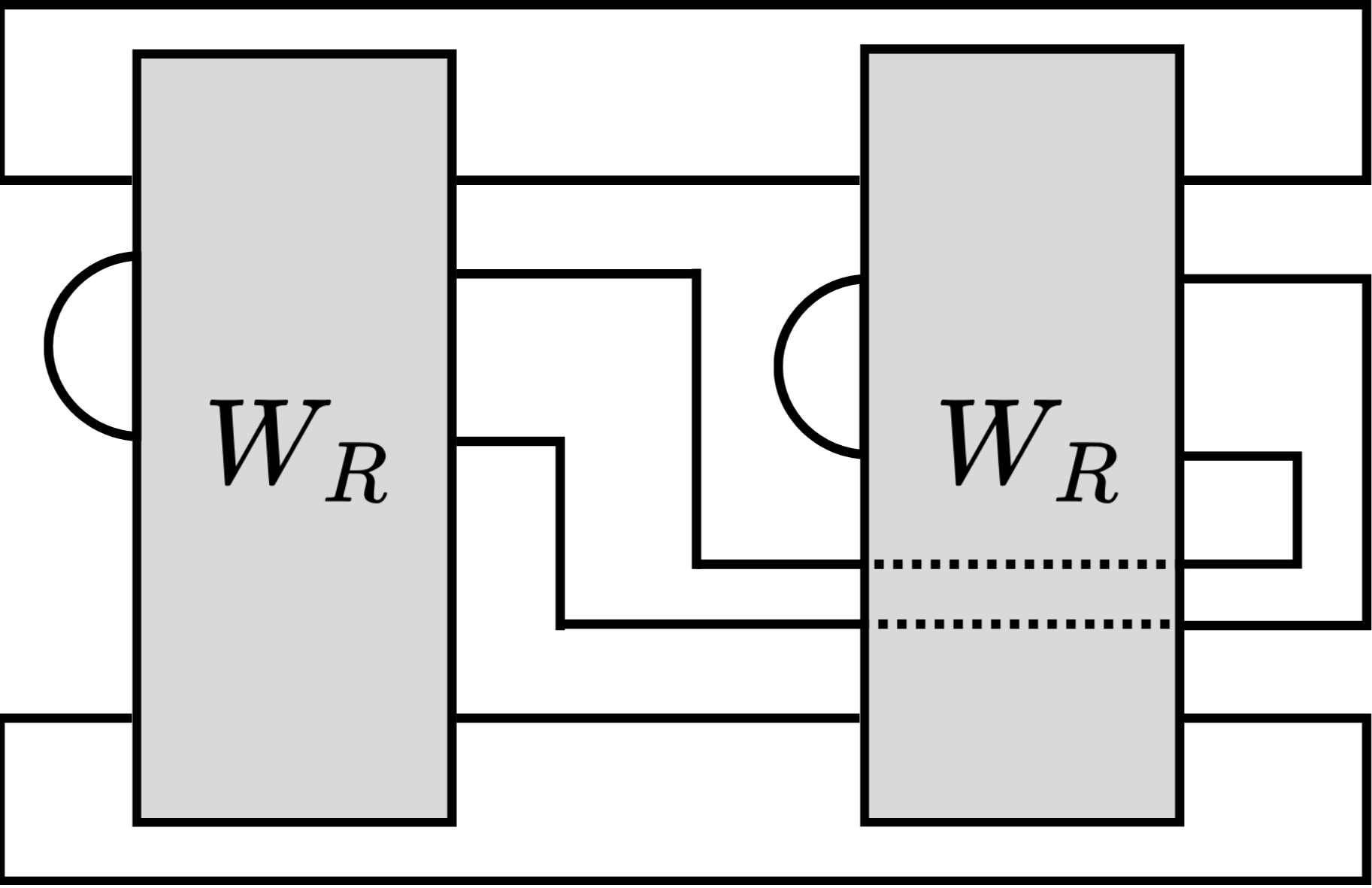} \\
$\widetilde{\mathscr{R}}_{1}$ &
\includegraphics[height=22mm]{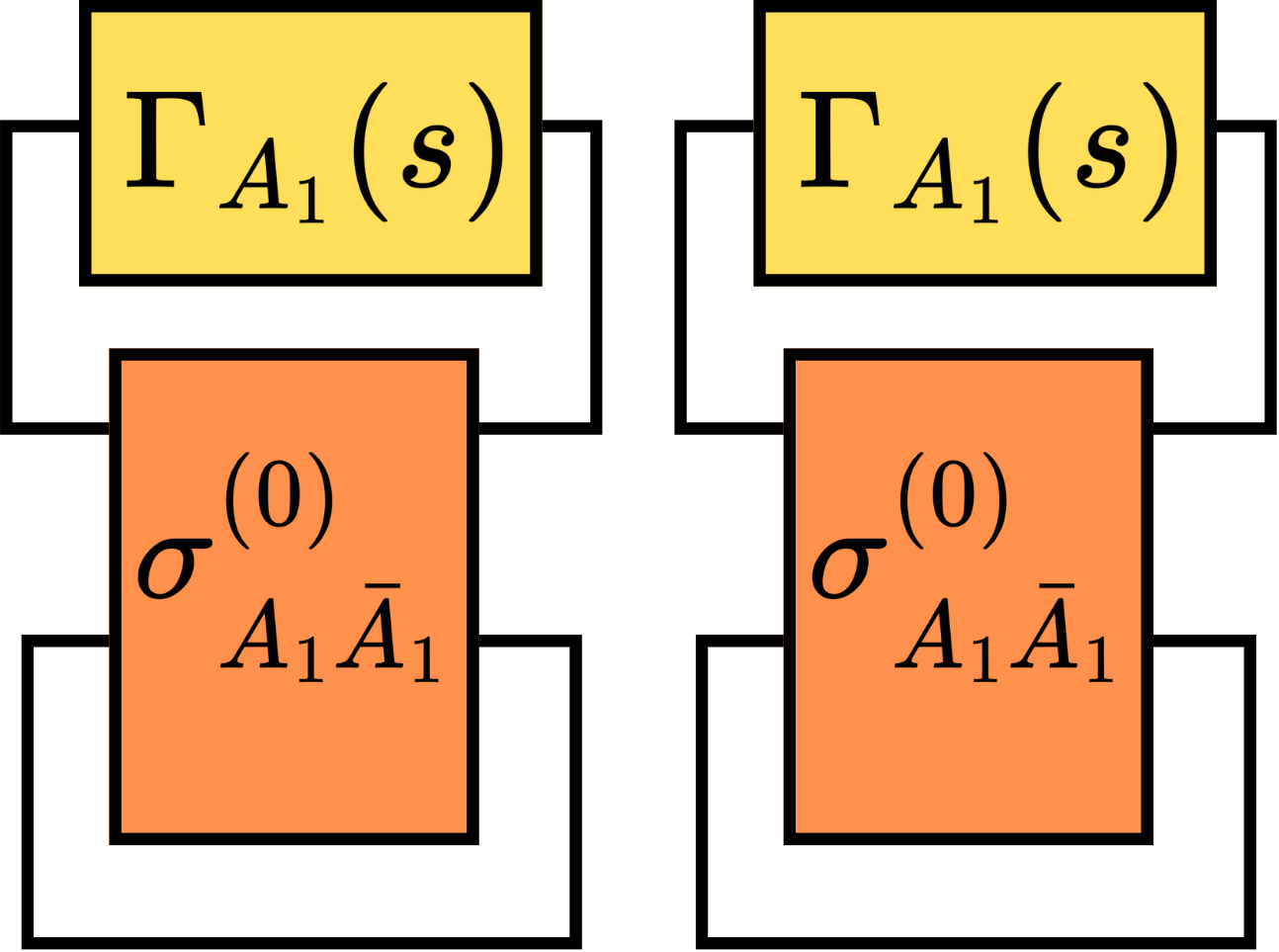} \\ 
$\widetilde{\mathscr{R}}_{2}$ &
\includegraphics[height=22mm]{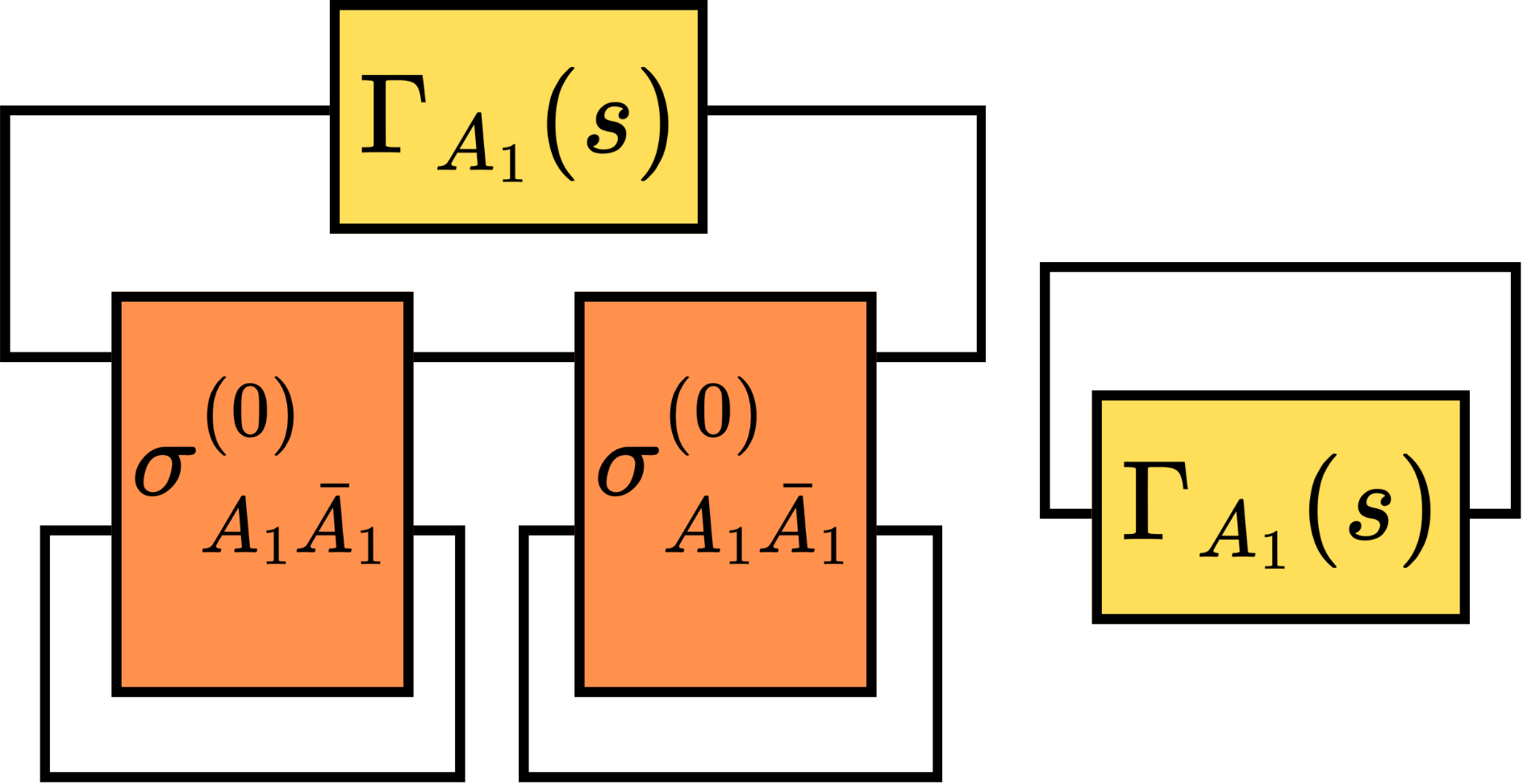} \\
$\widetilde{\mathscr{R}}_{3}$ &
\includegraphics[height=22mm]{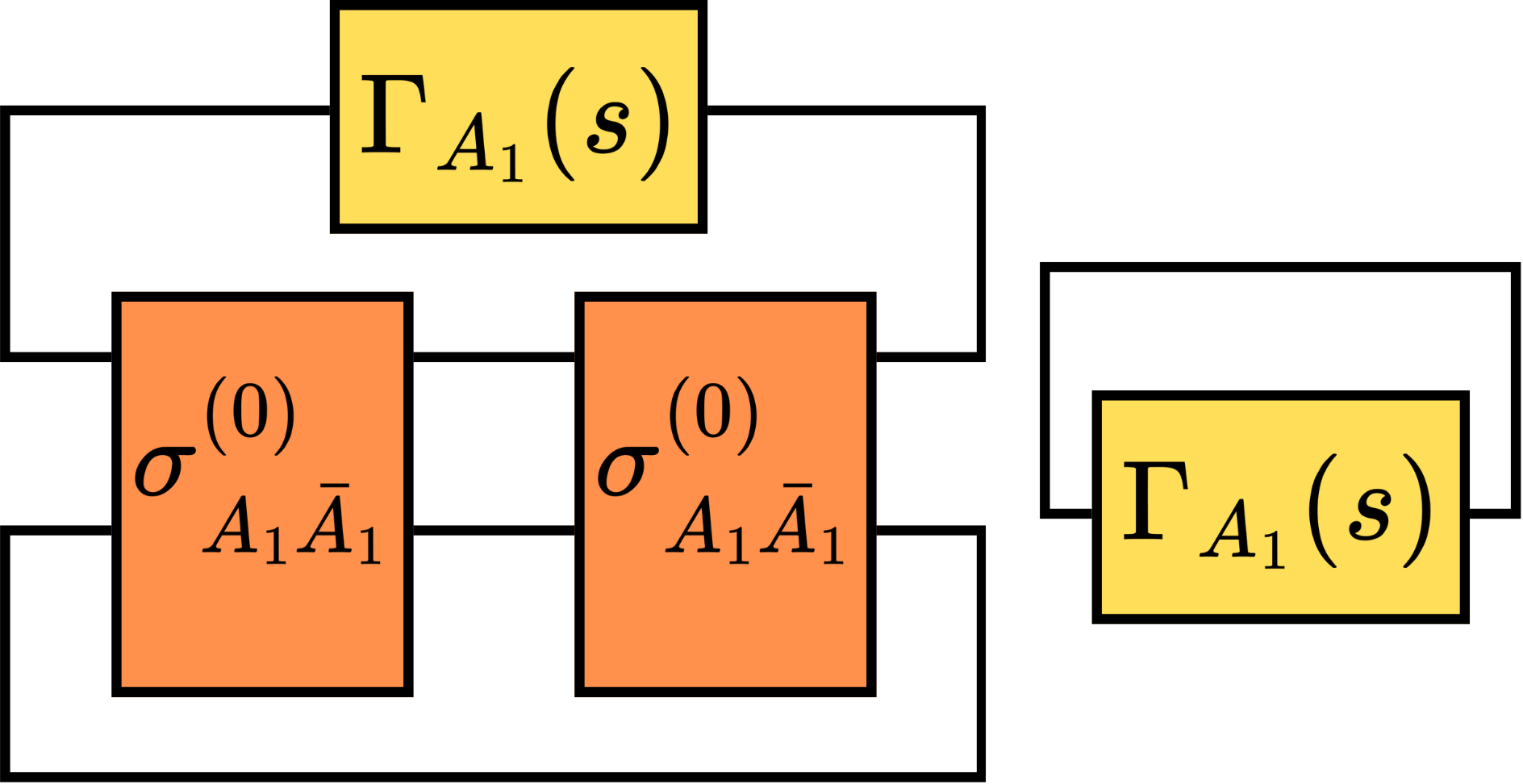} \\
$\widetilde{\mathscr{R}}_{4}$ &
\includegraphics[height=22mm]{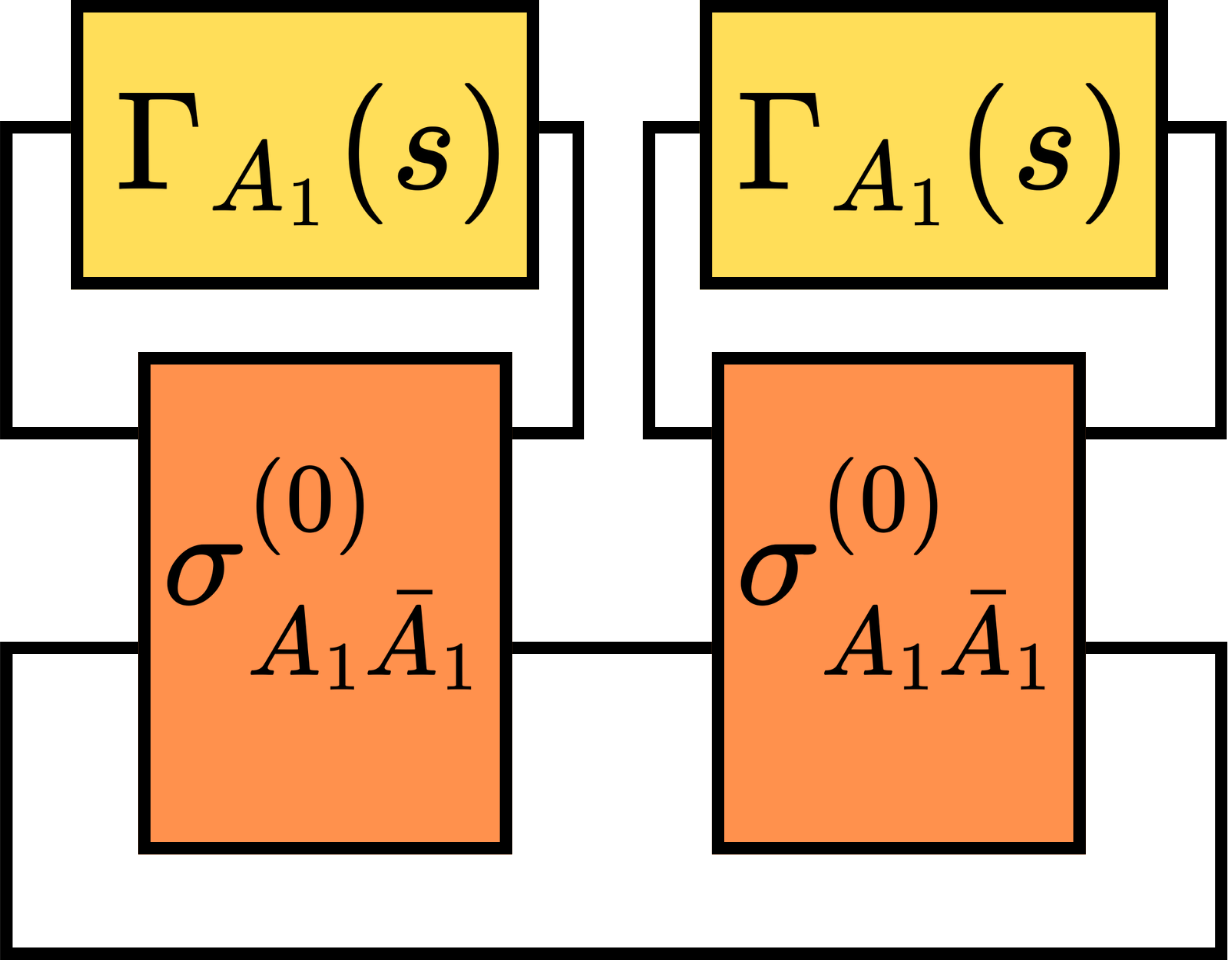} \\
$\widetilde{\mathscr{R}}_{5}$ &
\includegraphics[height=22mm]{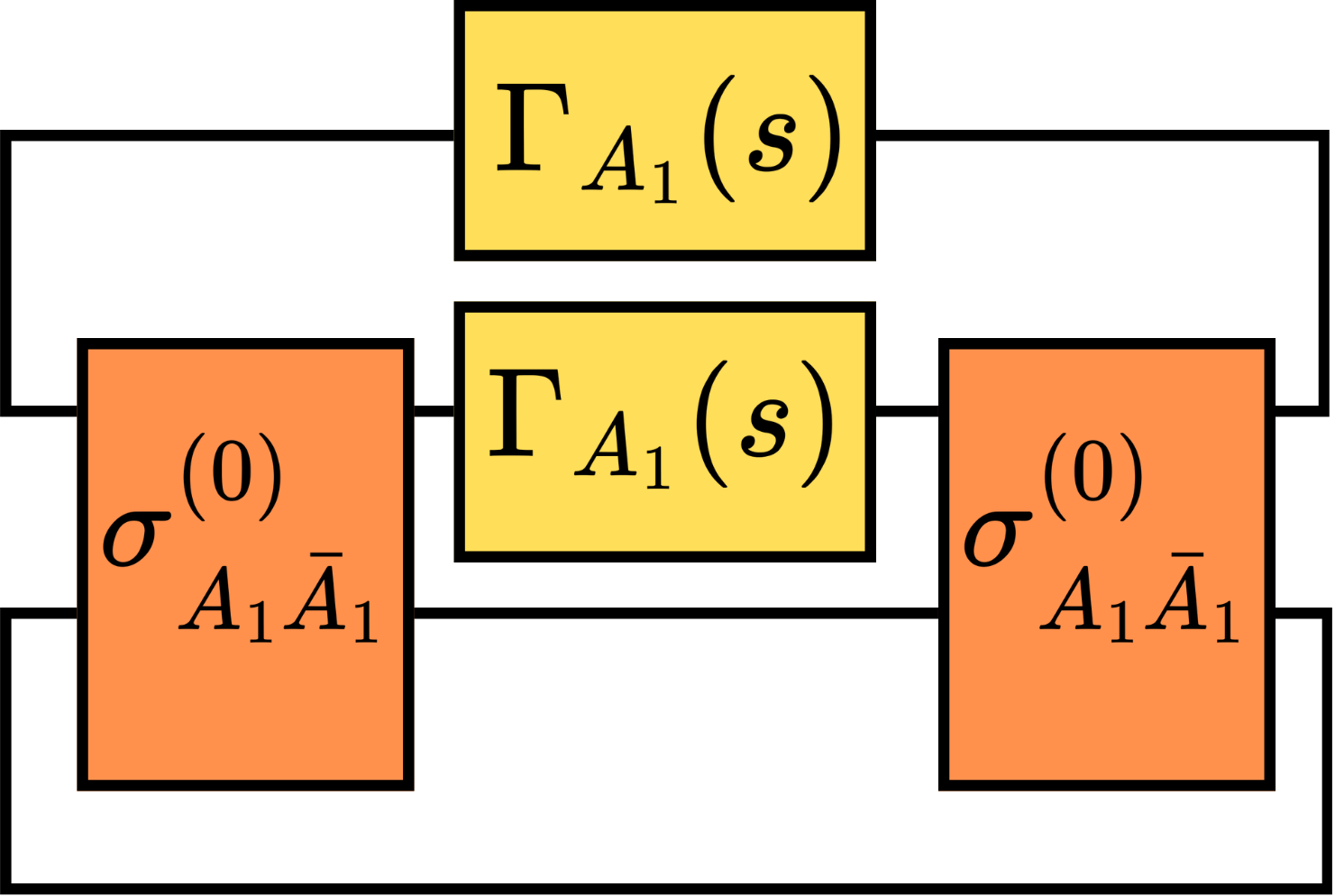} \\
$\widetilde{\mathscr{R}}_{6}$ &
\includegraphics[height=22mm]{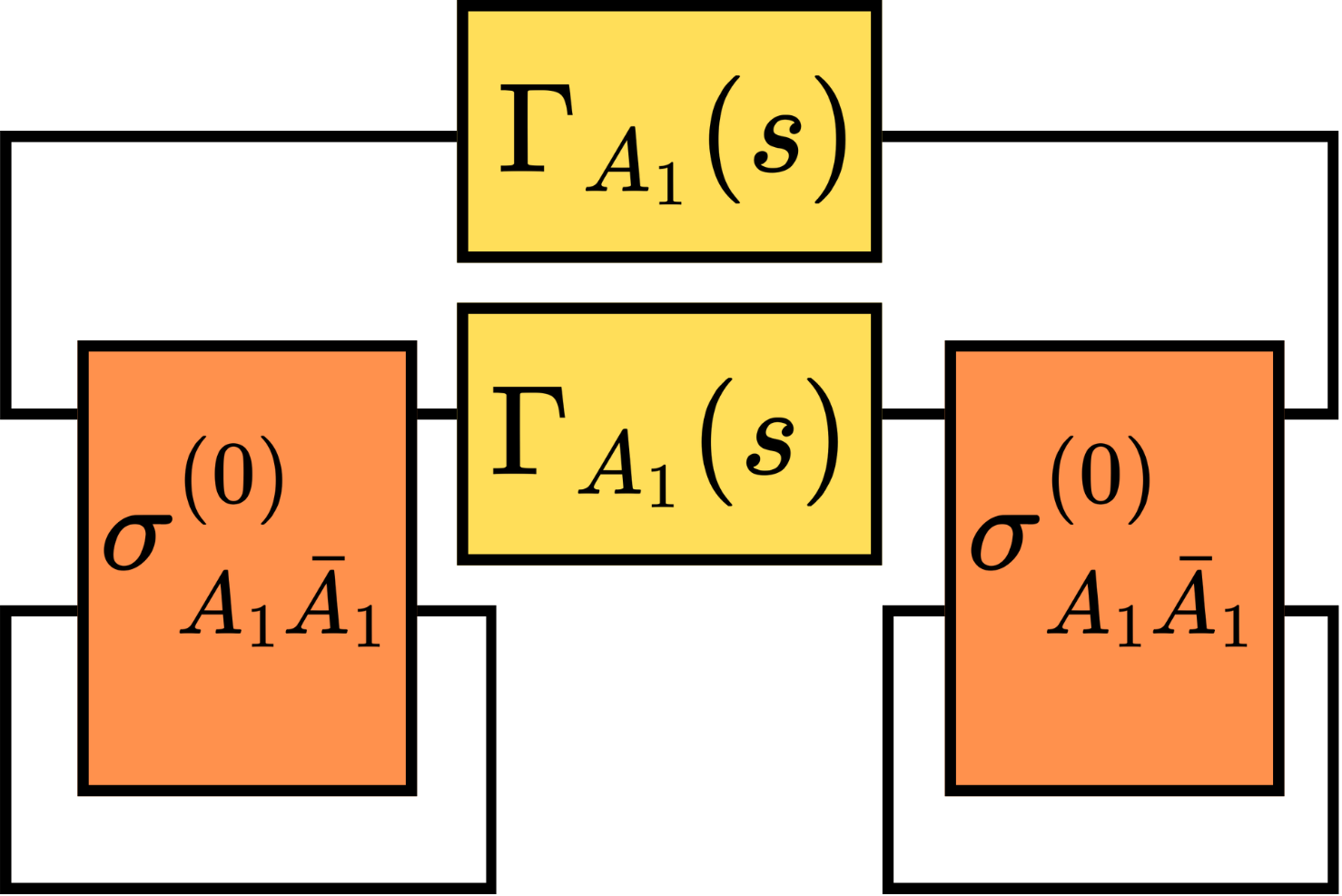} \\
\end{longtable}

\section{Technical details}
\label{logexpansion}
\subsection{Perturbative expansion of the entropy}
We are interested in the perturbative expansion of the entropy functional
\begin{equation}
S\!\left(A(\epsilon)\right)
:=
-\Tr\!\left(A(\epsilon)\log A(\epsilon)\right)
\label{eq:entropy-def-eps}
\end{equation}
about $\epsilon=0$. A key technical input is an explicit expression for the $\epsilon$-derivative of the operator logarithm in terms of resolvents. Before turning to the entropy expansion, we therefore recall (and for completeness derive) the standard resolvent representation for $\frac{d}{d\epsilon}\log A(\epsilon)$.
\begin{lemma}
For any differentiable family of positive operators $A(\epsilon)$, the derivative of the logarithm
admits the resolvent representation
\begin{equation}
\frac{d}{d\epsilon}\,\log A(\epsilon)
=
\int_{0}^{\infty} ds\,
\bigl[A(\epsilon)+s I\bigr]^{-1}
\frac{dA(\epsilon)}{d\epsilon}\,
\bigl[A(\epsilon)+s I\bigr]^{-1},
\label{eq:dlogA-resolvent}
\end{equation}
which follows from functional calculus and holds without any commutativity assumptions. 
\end{lemma}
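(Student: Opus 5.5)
The plan is to derive the resolvent formula from the integral representation of the logarithm itself, so that no commutativity between $A(\epsilon)$ and its derivative is ever assumed. For a positive-definite operator $A$ (finite-dimensional, or bounded with spectrum in $[m,\infty)$, $m>0$), I will use the scalar identity
\begin{equation}
\log x=\int_0^\infty ds\left(\frac{1}{1+s}-\frac{1}{x+s}\right),\qquad x>0.
\end{equation}
This follows from the identity \eqref{eq:lnxbyy} with $y=1$. Functional calculus then promotes it to
\begin{equation}
\log A=\int_0^\infty ds\left(\frac{I}{1+s}-(A+sI)^{-1}\right).
\end{equation}
Because the integrand is a function of $A$ alone, this step is legitimate: both sides are diagonalized simultaneously in the eigenbasis of $A$. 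The integral converges in operator norm, since the integrand equals $(A-I)(1+s)^{-1}(A+sI)^{-1}$, which is $O(s^{-2})$ as $s\to\infty$ and bounded near $s=0$.

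Next I will differentiate the resolvent, which is the only place where noncommutativity enters. Starting from $(A(\epsilon)+sI)(A(\epsilon)+sI)^{-1}=I$ and differentiating with the product rule gives
\begin{equation}
\frac{d}{d\epsilon}(A+sI)^{-1}=-(A+sI)^{-1}\,\frac{dA}{d\epsilon}\,(A+sI)^{-1}.
\end{equation}
This identity holds for arbitrary operator-valued differentiable families, because it never requires moving $dA/d\epsilon$ past $A$. It can also be obtained from the second resolvent identity $(B+s)^{-1}-(C+s)^{-1}=(B+s)^{-1}(C-B)(C+s)^{-1}$ applied with $B=A(\epsilon+h)$ and $C=A(\epsilon)$, then dividing by $h$ and letting $h\to0$.

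Finally I will interchange $d/d\epsilon$ with the $s$-integral. The term $I/(1+s)$ is independent of $\epsilon$, so inserting the resolvent derivative gives exactly \eqref{eq:dlogA-resolvent}. I expect this interchange to be the main (though mild) obstacle. To justify it, I will fix a compact neighbourhood of $\epsilon$ on which $A(\epsilon)\ge mI$ for some $m>0$; this exists by continuity of the smallest eigenvalue. On that neighbourhood $\|(A+sI)^{-1}\|\le (m+s)^{-1}$, so the differentiated integrand is bounded in norm by $\sup\|dA/d\epsilon\|\,(m+s)^{-2}$, which is integrable on $[0,\infty)$. Dominated convergence, or the mean-value form of the difference quotient built from the resolvent identity, then yields the result. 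This also confirms that the Fréchet derivative $D_{\ln}(A)[X]$ defined in \eqref{eq:log_expansion} is the directional derivative used throughout the appendices.
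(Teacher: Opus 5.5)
Your proposal is correct and follows essentially the same route as the paper, which also represents the logarithm as a resolvent integral (in the difference form $\ln(A+B)-\ln A=\int_0^\infty du\,[(A+uI)^{-1}-(A+B+uI)^{-1}]$, equivalent to your regularized representation) and then expands the resolvent to first order in the perturbation. Your version is slightly more careful. The bound $\|(A+sI)^{-1}\|\le (m+s)^{-1}$ together with dominated convergence justifies interchanging the limit with the $s$-integral, a step the paper takes without comment.
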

\begin{proof}
  The following proof follows standard arguments based on functional calculus and
resolvent identities (see, e.g., \cite{Haber2018NotesOT}). We include it here for
completeness. We start from the standard integral identity for positive operators,
\begin{equation}
\ln(A+B)-\ln A
=
\int_{0}^{\infty} du\,
\Bigl[(A+uI)^{-1}-(A+B+uI)^{-1}\Bigr].
\label{eq:log-diff-resolvent}
\end{equation}
Let $A(t)>0$ be a differentiable family. By definition of the derivative,
\begin{equation}
\frac{d}{dt}\ln A(t)
=
\lim_{h\to 0}\frac{\ln A(t+h)-\ln A(t)}{h}.
\label{eq:dlog-def}
\end{equation}
Using the first-order expansion $A(t+h)=A(t)+h\dot A(t)+O(h^{2})$ and setting
$A:=A(t)$, $B:=h\dot A(t)$ in \eqref{eq:log-diff-resolvent}, we obtain
\begin{equation}
\frac{d}{dt}\ln A(t)
=
\lim_{h\to 0}\frac{1}{h}\int_{0}^{\infty}du\,
\Bigl[(A+uI)^{-1}-(A+h\dot A(t)+uI)^{-1}\Bigr].
\label{eq:dlog-resolvent-limit}
\end{equation}
For infinitesimal $h$, we expand the resolvent as follows:
\begin{align}
(A+h\dot A+uI)^{-1}
&=
\Bigl[(A+uI)\bigl(I+h(A+uI)^{-1}\dot A\bigr)\Bigr]^{-1}
\nonumber\\
&=
\bigl(I+h(A+uI)^{-1}\dot A\bigr)^{-1}(A+uI)^{-1}
\nonumber\\
&=
\Bigl(I-h(A+uI)^{-1}\dot A+O(h^{2})\Bigr)(A+uI)^{-1}
\nonumber\\
&=
(A+uI)^{-1}-h(A+uI)^{-1}\dot A\,(A+uI)^{-1}+O(h^{2}),
\label{eq:resolvent-expansion}
\end{align}
where $\dot A:=dA/dt$. Substituting \eqref{eq:resolvent-expansion} into
\eqref{eq:dlog-resolvent-limit} and taking the limit $h\to 0$ yields the
resolvent representation for the derivative of the logarithm:
\begin{equation}
\frac{d}{dt}\ln A(t)
=
\int_{0}^{\infty}du\,
(A(t)+uI)^{-1}\,\dot A(t)\,(A(t)+uI)^{-1}.
\label{eq:dlogA-resolvent}
\end{equation}
\end{proof}
Since $A(\epsilon)>0$ for $\epsilon$ in a neighborhood of $0$, the entropy admits
the Taylor expansion
\begin{equation}
S\!\left(A(\epsilon)\right)
=
S\!\left(A(0)\right)
+\epsilon\,\left.\frac{d}{d\epsilon}S\!\left(A(\epsilon)\right)\right|_{\epsilon=0}
+\frac{\epsilon^{2}}{2}\,\left.\frac{d^{2}}{d\epsilon^{2}}S\!\left(A(\epsilon)\right)\right|_{\epsilon=0}
+O(\epsilon^{3}).
\label{eq:S-Taylor-general}
\end{equation}

We now specialize to the perturbative ansatz
\begin{equation}
A(\epsilon)=A^{(0)}+\epsilon A^{(1)}+\epsilon^{2}A^{(2)}+O(\epsilon^{3}),
\label{eq:Aeps-expansion}
\end{equation}
with $A^{(0)}>0$. Using \eqref{eq:dlogA-resolvent}, the entropy admits the expansion
\begin{eqns}
S\!\left(A(\epsilon)\right)
&=
-\Tr\!\left[A^{(0)}\log A^{(0)}\right]
-\epsilon\,\Tr\!\left(
A^{(1)}\log A^{(0)}
+
A^{(0)}\,D_{\ln}\!\bigl(A^{(0)}\bigr)\!\left[A^{(1)}\right]
\right)
\\
&\quad
-\epsilon^{2}\,\Tr\Bigl(
A^{(2)}\log A^{(0)}
+
A^{(1)}\,D_{\ln}\!\bigl(A^{(0)}\bigr)\!\left[A^{(1)}\right]
+
A^{(0)}\,D_{\ln}\!\bigl(A^{(0)}\bigr)\!\left[A^{(2)}\right]
\Bigr)\\&\qquad\qquad
+O(\epsilon^{3}).
\label{eq:S-expansion}
\end{eqns}

Here $D_{\ln}(A)$ and $D^{2}_{\ln}(A)$ denote the first and second Fr\'echet
derivatives of the matrix logarithm, which admit the standard resolvent
representations
\begin{eqns}
D_{\ln}(A)[X]
&=
\int_{0}^{\infty}
(A+s I)^{-1}\,X\,(A+s I)^{-1}\,ds,
\\
D^{2}_{\ln}(A)[X,Y]
&=
\int_{0}^{\infty}
(A+s I)^{-1}\,X\,(A+s I)^{-1}\,Y\,(A+s I)^{-1}\,ds.
\label{eq:Dln-res}
\end{eqns}

We have also used the identity in Eq.~\eqref{eq:resolvent-theorem} to obtain Eq.~\eqref{eq:S-expansion}. All traces in \eqref{eq:S-expansion} are well defined provided the integrands are
trace class, which is automatically satisfied in finite-dimensional settings.

\
\subsection{Identity relating $D_{\ln}$ and $D_{\ln}^2$}
\label{subsec:AD2-identity-proof}
\begin{lemma}
Let $A>0$ be an invertible matrix and let $B$ be any matrix of the same dimension. Then
\begin{equation}
\Tr\!\left(A\,D^{2}_{\ln}(A)[B,B]\right)
=
\frac{1}{2}\,\Tr\!\left(B\,D_{\ln}(A)[B]\right).
\label{eq:AD2-identity}
\end{equation} 
\end{lemma}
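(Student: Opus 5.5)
The plan is to work directly from the resolvent representations in Eq.~\eqref{eq:Dln-res} and reduce everything to one scalar function of $s$. Write $R(s):=(A+sI)^{-1}$ and set
\[
F(s):=\Tr\!\bigl(R(s)\,B\,R(s)\,B\bigr).
\]
By definition, the right-hand side of \eqref{eq:AD2-identity} is $\tfrac12\int_0^\infty F(s)\,ds$. The left-hand side is $\int_0^\infty \Tr\bigl(A\,R B R B R\bigr)\,ds$. The goal is to rewrite this integrand in terms of $F$ and $F'$.

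The first step is algebraic. Since $A$ commutes with $R$, cyclicity of the trace gives $\Tr(A R B R B R)=\Tr(A R^2 B R B)$. Substituting $A = (A+sI) - sI$ yields $A R^2 = R - sR^2$. Hence
\[
\Tr(A R B R B R) = F(s) - s\,\Tr(R^2 B R B).
\]

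The second step handles the remaining piece using $\tfrac{d}{ds}R=-R^2$. Differentiating gives $F'(s) = -\Tr(R^2 B R B) - \Tr(R B R^2 B)$. By cyclicity, $\Tr(R B R^2 B) = \Tr(R^2 B R B)$, so the two terms are equal and
\[
\Tr(R^2BRB) = -\tfrac12 F'(s).
\]
This symmetrization is the one genuinely nontrivial observation, and it uses only cyclicity; $B$ need not be Hermitian.

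The third step is an integration by parts:
\[
\int_0^\infty s\,\Tr(R^2BRB)\,ds = -\tfrac12\,[sF(s)]_0^\infty + \tfrac12\int_0^\infty F\,ds .
\]
The boundary terms vanish. At $s=0$ this holds because $F(0)=\Tr(A^{-1}BA^{-1}B)$ is finite, since $A>0$ is invertible. At $s\to\infty$ it holds because $\|R\|\le 1/s$ in finite dimensions, so $F=O(s^{-2})$ and $sF(s)\to 0$. The same decay bounds guarantee that every integral involved converges absolutely.

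Combining the three steps, the left-hand side equals $\int_0^\infty F\,ds - \tfrac12\int_0^\infty F\,ds = \tfrac12\int_0^\infty F\,ds$, which is exactly the right-hand side of \eqref{eq:AD2-identity}.

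The main thing to watch is the boundary term at $s=0$, which is where the invertibility of $A$ is used. If one preferred a cross-check, one could diagonalize $A$ and compare the scalar integrals
\[
\int_0^\infty \frac{a_i}{(a_i+s)^2(a_j+s)}\,ds, \quad\text{symmetrized in } (i,j),
\]
against $\tfrac12\,\dfrac{\ln a_i-\ln a_j}{a_i-a_j}$. However, the resolvent argument above is cleaner and is the route I would present.
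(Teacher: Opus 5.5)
Your proof is correct and follows the paper's argument: the paper also rewrites $A(A+sI)^{-2} = (A+sI)^{-1} - s(A+sI)^{-2}$, then integrates $\Tr\int_0^\infty s(A+sI)^{-2}B(A+sI)^{-1}B\,ds$ by parts, using cyclicity to identify the two resulting terms, so that this integral equals half of $\Tr\int_0^\infty (A+sI)^{-1}B(A+sI)^{-1}B\,ds$. Your use of $F(s)$ and $F'(s) = -2\Tr(R^2BRB)$ is a tidier way of doing the same integration by parts, and your boundary-term justification is the same as the paper's.
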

\begin{proof}

Using the resolvent representations Eq.~\eqref{eq:Dln-res}, Eq.~\eqref{eq:AD2-identity} reduces to the resolvent trace identity
\begin{equation}
\,\Tr\!\left(
A\int_{0}^{\infty}\frac{1}{A+sI}\,B\,\frac{1}{A+sI}\,B\,\frac{1}{A+sI}\,ds
\right)
=
\Tr\!\left(
\int_{0}^{\infty}\frac{1}{A+sI}\,B\,\frac{1}{A+sI}\,B\,ds
\right).
\label{eq:resolvent-theorem}
\end{equation}
Now, we also have
\begin{eqns}
\Tr\int_{0}^{\infty}\frac{s}{(A+sI)^2}\,B\,\frac{1}{A+sI}\,B\,ds
&=
-\Tr\int_{0}^{\infty}s\,\frac{d}{ds}\!\left(\frac{1}{A+sI}\right)\,B\,\frac{1}{A+sI}\,B\,ds
\\
&=
-\Tr\left[\left.s\,\frac{1}{A+sI}\,B\,\frac{1}{A+sI}\,B\right|_{s=0}^{s=\infty}\right]
\\
&\quad
-\Tr\int_{0}^{\infty}\frac{s}{(A+sI)^2}\,B\,\frac{1}{A+sI}\,B\,ds\\& \quad
+\Tr\int_{0}^{\infty}\frac{1}{A+sI}\,B\,\frac{1}{A+sI}\,B\,ds.
\label{eq:ibp-core}
\end{eqns}
The boundary term vanishes: as $s\to0$ the prefactor $s$ kills the expression since $A>0$
implies $(A+sI)^{-1}$ bounded, and as $s\to\infty$ we have $(A+sI)^{-1}=O(1/s)$ so the traced
expression is $O(1/s)\to0$. Therefore \eqref{eq:ibp-core} implies
\begin{equation}
2\,\Tr\int_{0}^{\infty}\frac{s}{(A+sI)^2}\,B\,\frac{1}{A+sI}\,B\,ds
=
\Tr\int_{0}^{\infty}\frac{1}{A+sI}\,B\,\frac{1}{A+sI}\,B\,ds.
\label{eq:ibp-final}
\end{equation}

Now,
\begin{eqns}
&\,\Tr\!\left(
A\int_{0}^{\infty}\frac{1}{A+sI}\,B\,\frac{1}{A+sI}\,B\,\frac{1}{A+sI}\,ds
\right)
=
\,\Tr\int_{0}^{\infty}\frac{A}{(A+sI)^2}\,B\,\frac{1}{A+sI}\,B\,ds
\\
&=
\,\Tr\int_{0}^{\infty}\frac{1}{A+sI}\,B\,\frac{1}{A+sI}\,B\,ds
-
\,\Tr\int_{0}^{\infty}\frac{s}{(A+sI)^2}\,B\,\frac{1}{A+sI}\,B\,ds
\\
&=
\,\Tr\int_{0}^{\infty}\frac{1}{A+sI}\,B\,\frac{1}{A+sI}\,B\,ds
-
\frac{1}{2} \Tr\int_{0}^{\infty}\frac{1}{A+sI}\,B\,\frac{1}{A+sI}\,B\,ds
\\
&=
\frac{1}{2}\Tr\int_{0}^{\infty}\frac{1}{A+sI}\,B\,\frac{1}{A+sI}\,B\,ds,
\end{eqns}
 This proves \eqref{eq:resolvent-theorem}, and hence \eqref{eq:AD2-identity}.
\end{proof}
\subsection{Diagrammatic Representation and Haar Integration}
\label{WGcalculus}
We adopt a diagrammatic convention for representing operators, states, and unitaries. 
Each object in the computation is represented by a labeled box, with lines (or ``legs'') 
denoting the Hilbert spaces on which the object acts. The legs entering or exiting a box correspond 
to input and output indices of the operator, respectively. Connecting two legs represents a 
tensor contraction (i.e., a summation over the shared index), or equivalently, a partial trace 
over the associated Hilbert space.

For example, consider the product of three operators \( A \), \( B \), and \( C \) acting on the same space. 
In the diagrammatic notation, this corresponds to placing three boxes sequentially and connecting the output leg 
of \( A \) to the input leg of \( B \), and the output of \( B \) to the input of \( C \). 
The resulting chain of connections encodes the matrix multiplication \( ABC \). 
\[\includegraphics[width=0.27\linewidth]{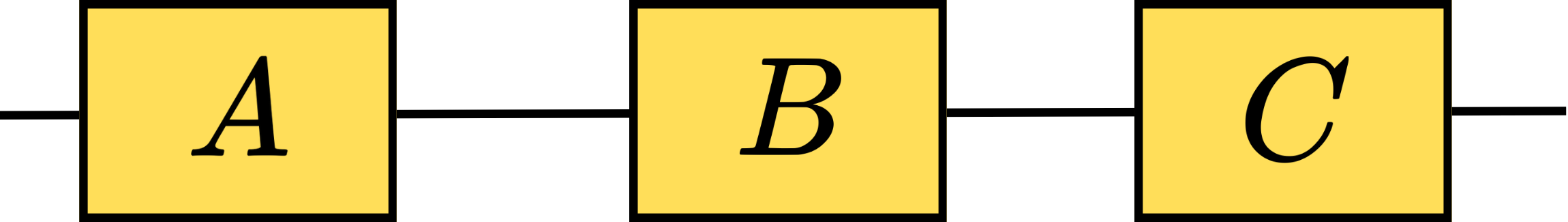}\]
If we take the trace of this product, \(\mathrm{Tr}(ABC)\), we close the remaining open legs by connecting 
the output leg of \( C \) back to the input leg of \( A \), forming a single closed loop.
\[\includegraphics[width=0.27\linewidth]{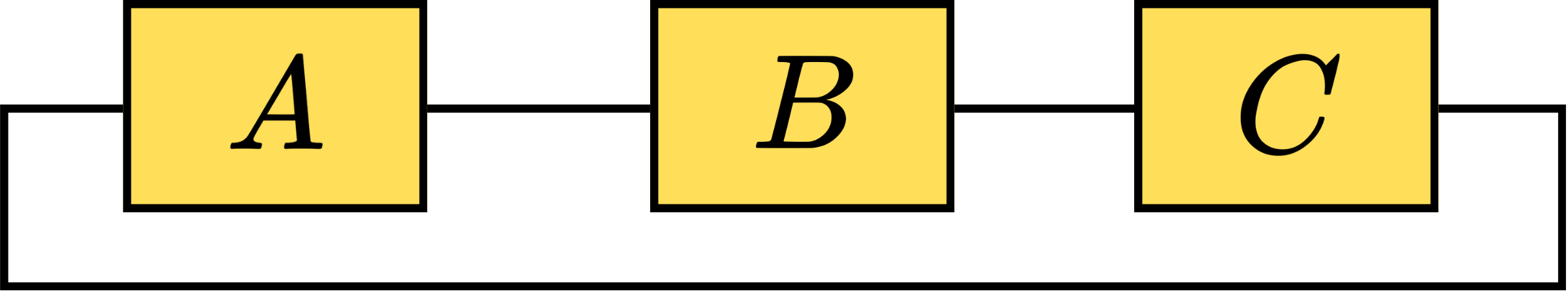}\]
This loop represents the summation over all contracted indices in the trace. 
We adopt the following diagrammatic conventions (used throughout).

\subsubsection*{Diagram Conventions}
\begin{itemize}
    \item  Every closed loop in a diagram corresponds to a trace, and every open leg represents a free index. 
    \item Each operator or state is represented by a rectangular box labeled by its symbol (e.g., $W_R$, $O$, $M$), with input and output legs corresponding to its domain and codomain.
    
    \item A unitary $U$ is represented by a black filled dot followed by a black open circle, while its adjoint $U^\dagger$ is represented by the reverse ordering in red (red open circle followed by a red filled dot). If multiple unitaries appear (e.g., $U$ and $V$), we label them explicitly by writing the corresponding symbol ($U$, $V$, etc.) above the associated dot--circle pair.
    \item The first, second, and third legs (from outermost to innermost) correspond to the Hilbert spaces $\mathcal{H}_{A_1}$, $\mathcal{H}_{A_2}$, and $\mathcal{H}_{\bar A}$, respectively. If a fourth leg is present, the ordering is taken to be $\mathcal{H}_{A_1}, \mathcal{H}_{A_2}, \mathcal{H}_{\bar A_2}, \mathcal{H}_{\bar A_1}$.
    \item Any extra legs that are not acted on by $W_R$, but only attach to external states or operators, are treated as reference systems in $\mathcal{H}_r$ and are labeled by $r$.
 \item A dotted segment is not a separate wire. It is the same wire as the adjoining solid segment, drawn dotted only over the portion where the wire crosses an operator box without attaching to it. In that region the wire simply passes by the operator, so none of its indices contract with the operator. After the box, the line becomes solid again to emphasize that the wire is continuing unchanged.
 \item When averaging over random unitaries, red dashed lines connect each unitary to its corresponding adjoint (e.g., $U$ to $U^\dagger$). These dashed connections represent the index contractions produced by the Haar integral.
\end{itemize}
Two complete example diagrams illustrating these conventions is shown below:
\paragraph{Example 1}
\[\Tr\Bigg[\Tr_{\bar A}\Big(W_R U_{A_1} \sigma_{A_1}^{(0)}\chi_{A_2\bar A}\Big)\Gamma_{A_1}U^\dagger_{A_1}Tr_{\bar A}\Big(W_R U_{A_1} \sigma_{A_1}^{(0)}\chi_{A_2\bar A}\Big)\Gamma_{A_1}U^\dagger_{A_1}\Bigg]\]
\begin{equation*}
\equiv \qquad \vcenter{\hbox{\includegraphics[width=0.65\linewidth]{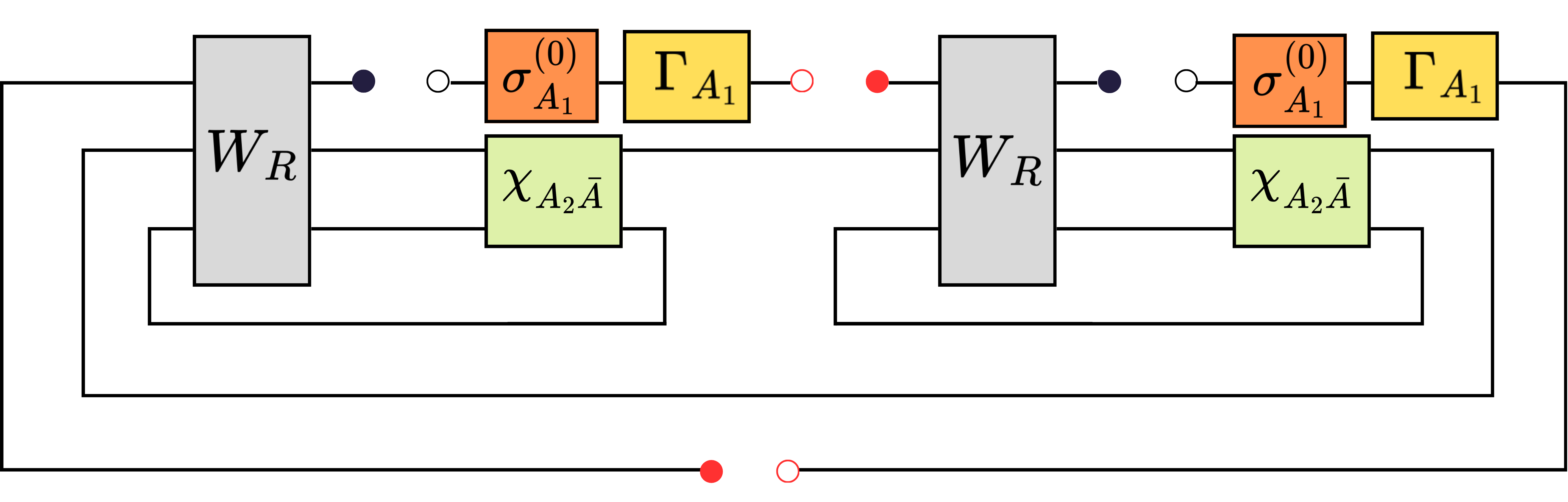}}}
\end{equation*}
\paragraph{Example 2}
\[\operatorname{Tr}\Bigg[\operatorname{Tr}_{A_1 \bar{A}_1}\left(W_R U_{A_1} V_{\bar{A}_1} \chi_{A_2 \bar{A}_2} \sigma_{A_1 \bar{A}_2}^{(0)}V^\dagger_{\bar{A}_1}\right) \Gamma_{A_1} U_{A_1}^\dagger \operatorname{Tr}_{A_1 \bar{A}_1}\left(W_R U_{A_1} V_{\bar{A}_1} \chi_{A_2 \bar{A}_2} \sigma_{A_1 \bar{A}_2}^{(0)}V^\dagger_{\bar{A}_1}\right) \Gamma_{A_1} U_{A_1}^\dagger\Bigg]\]
\begin{equation*}
\equiv \qquad \vcenter{\hbox{\includegraphics[width=0.65\linewidth]{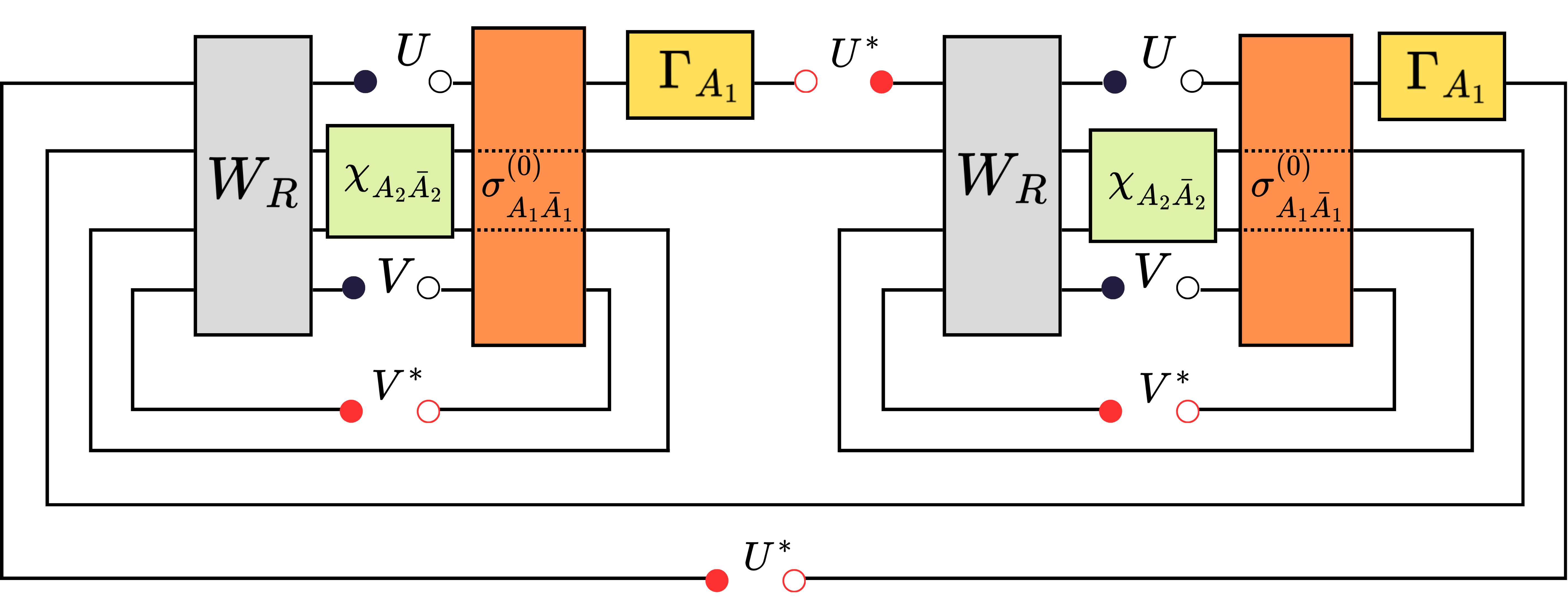}}}
\end{equation*}

\subsubsection{Weingarten Calculus}

To evaluate averages involving random unitaries, we use the Weingarten calculus, a method for computing polynomial integrals over the unitary group with respect to the Haar measure.  For a unitary matrix \( U \in \mathrm{U}(d) \), the expectation of a product of \( n \) matrix elements and their conjugates can be expressed in terms of permutations in the symmetric group \( S_n \):
\begin{equation*}
\left\langle
U_{a_1 b_1} \cdots U_{a_n b_n}
U^*_{\alpha_1 \beta_1} \cdots U^*_{\alpha_n \beta_n}
\right\rangle_U
= \sum_{\sigma, \tau \in S_n}
\delta_{a_1 \alpha_{\sigma(1)}} \cdots \delta_{a_n \alpha_{\sigma(n)}}
\delta_{b_1 \beta_{\tau(1)}} \cdots \delta_{b_n \beta_{\tau(n)}}
\, \mathrm{Wg}(\sigma^{-1}\tau, d),
\end{equation*}
where \( \mathrm{Wg}(\pi, d) \) is the Weingarten function, which depends on the permutation \( \pi \in S_n \) and the dimension \( d \).  
For \( n=2 \), corresponding to the second moment, \( S_2 \) has two elements: the identity \( e \) and the transposition \( (12) \). Their Weingarten values are
\begin{equation}
\mathrm{Wg}(e, d) = \frac{1}{d^2 - 1}, \qquad
\mathrm{Wg}((12), d) = -\frac{1}{d(d^2 - 1)}.
\end{equation}
Substituting these into the general formula gives
\begin{equation}
\begin{aligned}
\left\langle
U_{a_1 b_1} U_{a_2 b_2} U^*_{\alpha_1 \beta_1} U^*_{\alpha_2 \beta_2}
\right\rangle
= &
\ \mathrm{Wg}(e, d)
\big(
\delta_{a_1 \alpha_1}\delta_{a_2 \alpha_2}
\delta_{b_1 \beta_1}\delta_{b_2 \beta_2}
+
\delta_{a_1 \alpha_2}\delta_{a_2 \alpha_1}
\delta_{b_1 \beta_2}\delta_{b_2 \beta_1}
\big) \\
& +
\mathrm{Wg}((12), d)
\big(
\delta_{a_1 \alpha_2}\delta_{a_2 \alpha_1}
\delta_{b_1 \beta_1}\delta_{b_2 \beta_2}
+
\delta_{a_1 \alpha_1}\delta_{a_2 \alpha_2}
\delta_{b_1 \beta_2}\delta_{b_2 \beta_1}
\big).
\end{aligned}
\end{equation}

\subsubsection{Diagrammatic Haar Averaging Rules}
\label{WGcalculus-rules}
The diagrammatic procedure for performing a Haar average follows these steps \cite{Brouwer_1996}:
\begin{enumerate}
    \item Connect each black dot of a given unitary \(U^{(\ell)}\) to the corresponding red dot of its adjoint \(\big(U^{(\ell)}\big)^\dagger\) with a red dashed line, and likewise connect each black circle to the corresponding red circle. These red dashed lines represent the index contractions produced by the Haar average. Contractions only pair a unitary with its own adjoint (i.e., \(U^{(\ell)}\) contracts only with \(\big(U^{(\ell)}\big)^\dagger\), never with \(\big(U^{(m)}\big)^\dagger\) for \(\ell\neq m\)). In the present diagram there are four occurrences of the unitary/adjoint pair, yielding four admissible contraction patterns.
    \item Every closed loop composed of alternating solid lines and dotted lines is called a T-cycle. 
    Each T-cycle corresponds to a trace over the sequence of matrices encountered along that loop. 
    If a solid line is traversed in the direction opposite to its arrow, the corresponding operator appears transposed within the trace.

    \item Every closed loop composed of alternating gaps and dotted lines is called a U-cycle. 
    The length \( c_k \) of a U-cycle is defined as half the number of gap segments it contains. 
    The complete diagram gives rise to a collection of U-cycles, which together determine a coefficient 
    \( S_{c_1, \ldots, c_k} \) representing the weight of that particular diagram.

    \item The coefficient \( S_{c_1, \ldots, c_k} \) can be factorized into cumulants. 
    For the present case with four unitaries in the diagram, only two configurations arise:
    two U-cycles of length 1 each, contributing a weight
    \[
    S_{1,1} = \frac{1}{d^2 - 1},
    \]
    and a single U-cycle of length 2, contributing a weight
    \[
    S_2 = -\frac{1}{d(d^2 - 1)},
    \]
    where \( d \) is the dimension of the Hilbert space on which \( U \in \mathrm{SU}(d) \) acts. 
    In our case, \( d = 2 \). 
    The Haar average is obtained by summing all diagrams with their respective weights.

    \item Explicitly, the contraction identity used in the averaging procedure is
    \begin{equation}
    \begin{aligned}
    \left\langle U_{a_1 b_1} U_{a_2 b_2} U_{\alpha_1 \beta_1}^* U_{\alpha_2 \beta_2}^* \right\rangle
    =\;& 
    V_{1,1} \, \delta_{a_1 \alpha_1} \delta_{b_1 \beta_1} \delta_{a_2 \alpha_2} \delta_{b_2 \beta_2}
    + V_2 \, \delta_{a_1 \alpha_2} \delta_{b_1 \beta_1} \delta_{a_2 \alpha_1} \delta_{b_2 \beta_2} \\
    &+ V_2 \, \delta_{a_1 \alpha_1} \delta_{b_1 \beta_2} \delta_{a_2 \alpha_2} \delta_{b_2 \beta_1}
    + V_{1,1} \, \delta_{a_1 \alpha_2} \delta_{b_1 \beta_2} \delta_{a_2 \alpha_1} \delta_{b_2 \beta_1}.
    \end{aligned}
    \end{equation}
    Each term in this expression corresponds to one of the admissible contraction patterns between \( U \) and \( U^\dagger \) in the diagrammatic expansion.
    \item If the diagram contains several independent Haar-random unitaries \(U^{(1)},U^{(2)},\ldots\), the Haar measure factorizes,
\[
\Big\langle \,\cdots\, \Big\rangle \;=\;\int dU^{(1)}\int dU^{(2)}\cdots\,(\cdots),
\]
so the averaging can be performed separately for each unitary label. 
Equivalently, red dashed contractions only pair \(U^{(\ell)}\) with \(\big(U^{(\ell)}\big)^\dagger\); there are no admissible contractions that connect \(U^{(\ell)}\) to \(\big(U^{(m)}\big)^\dagger\) for \(\ell\neq m\). 
The weight of a full contraction pattern therefore factorizes as a product of the weights obtained from each unitary’s contraction pattern.

\end{enumerate}
\paragraph{Example:}
\begin{equation*}
 \vcenter{\hbox{\includegraphics[width=0.65\linewidth]{theory_draft_images/WGeg1.png}}}
\end{equation*}
The trace equation of the above diagram is:
\[\Tr\Bigg[\Tr_{\bar A}\Big(W_R U_{A_1} \sigma_{A_1}^{(0)}\chi_{A_2\bar A}\Big)\Gamma_{A_1}U^\dagger_{A_1}Tr_{\bar A}\Big(W_R U_{A_1} \sigma_{A_1}^{(0)}\chi_{A_2\bar A}\Big)\Gamma_{A_1}U^\dagger_{A_1}\Bigg]\]
Integration over \(U\) yields the 4 possible contractions:
\begin{align*}
  & \includegraphics[width=0.65\linewidth]{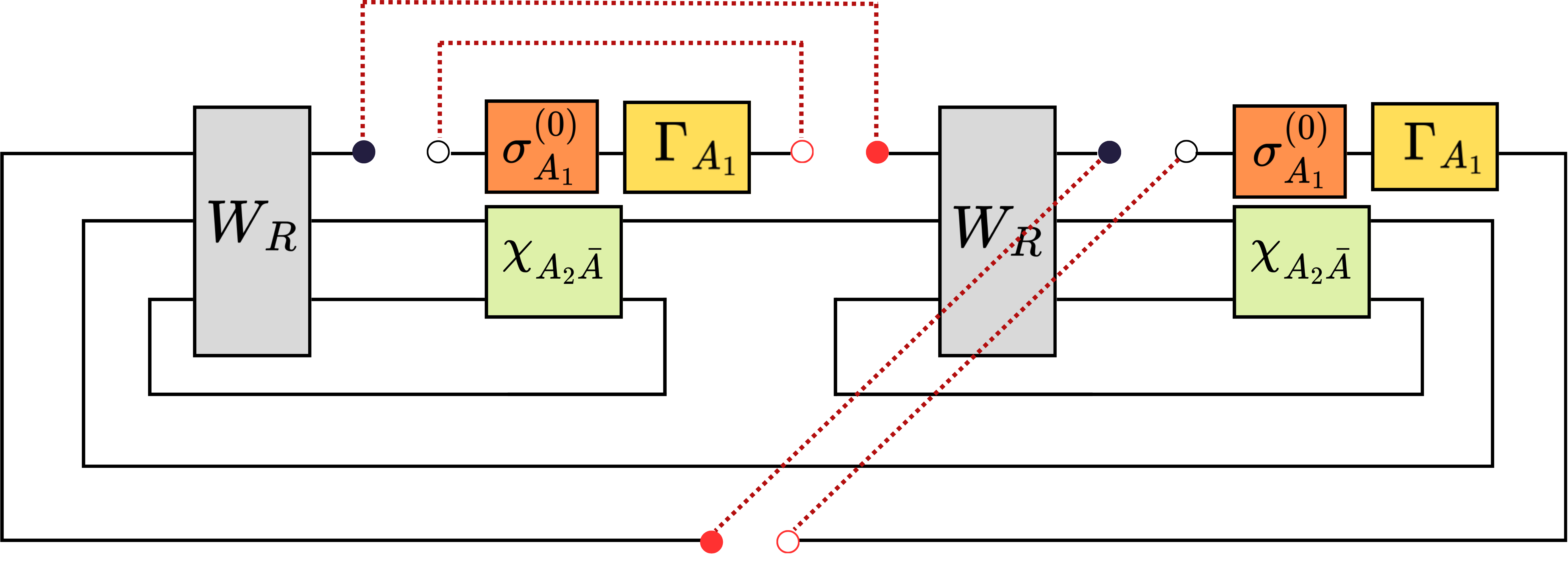} \\
   & \includegraphics[width=0.65\linewidth]{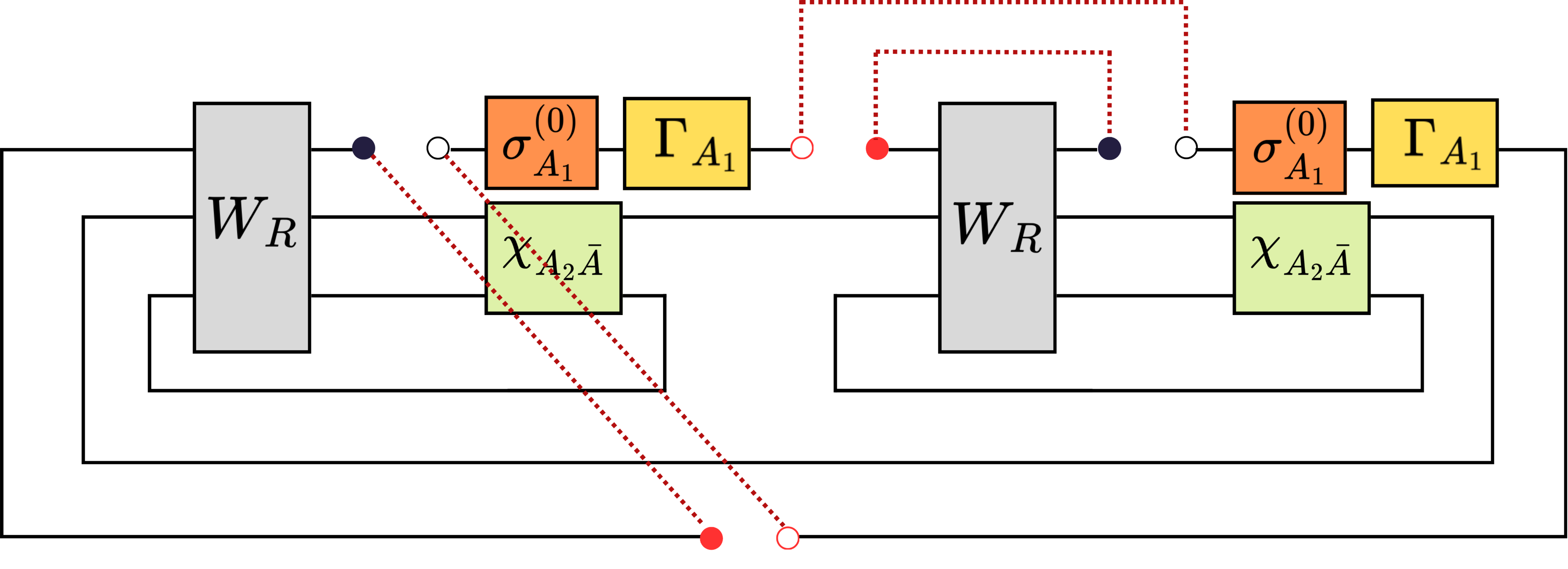} \\
      & \includegraphics[width=0.65\linewidth]{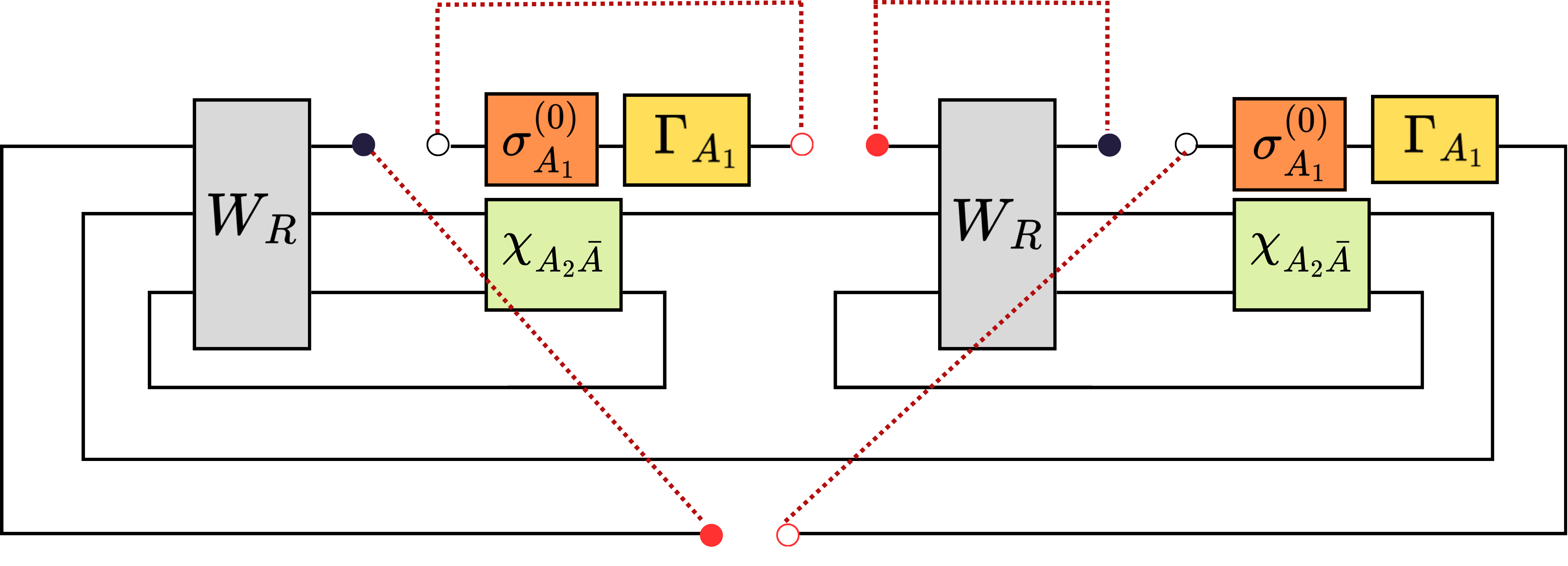} \\
         & \includegraphics[width=0.65\linewidth]{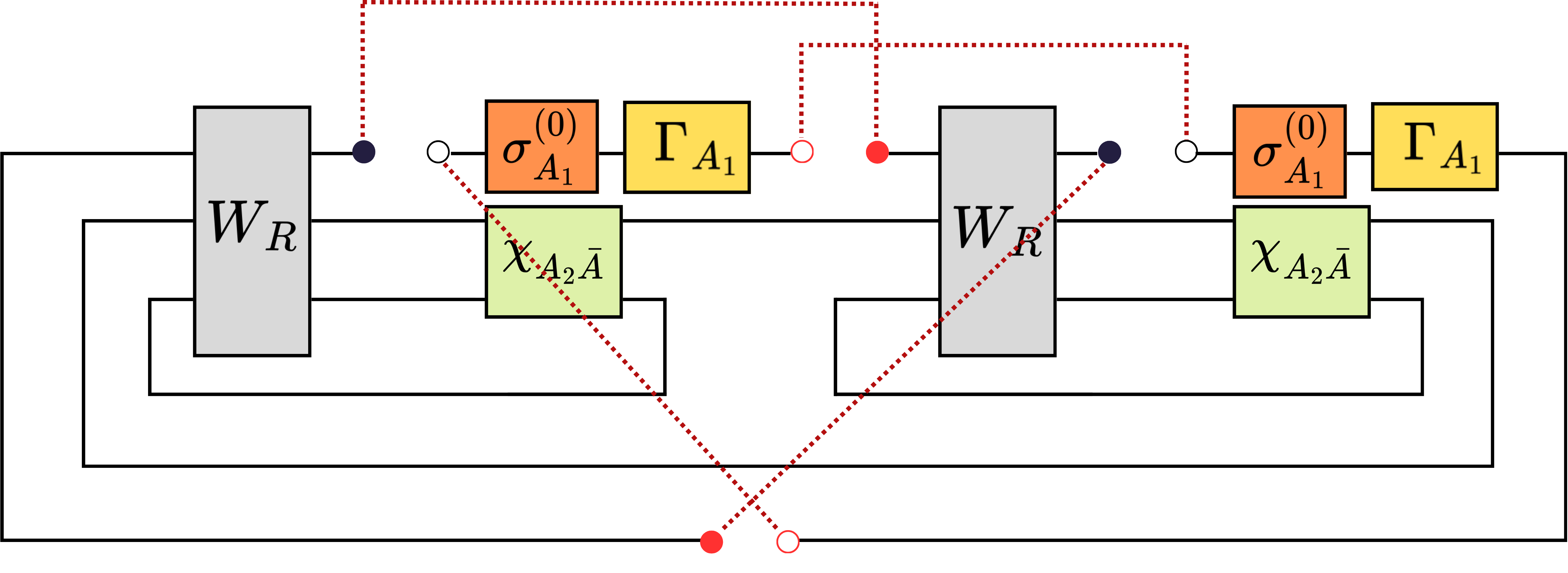} 
\end{align*}
After performing the Haar integral over \(U_{A_1}\), the expression reduces to a sum of contraction diagrams, each weighted by its corresponding Haar coefficient:
\begin{align*}
\frac{1}{d^2-1} \quad &\vcenter{\hbox{\includegraphics[width=0.5\linewidth]{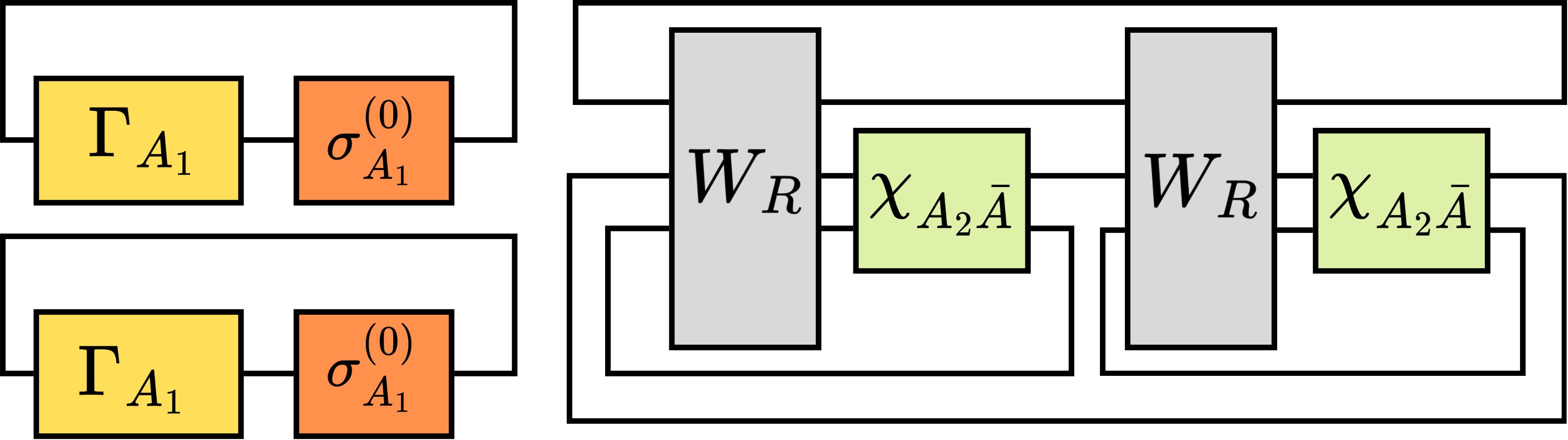}}}
\quad +\\[2mm]
\frac{1}{d^2-1}\quad&\vcenter{\hbox{\includegraphics[width=0.5\linewidth]{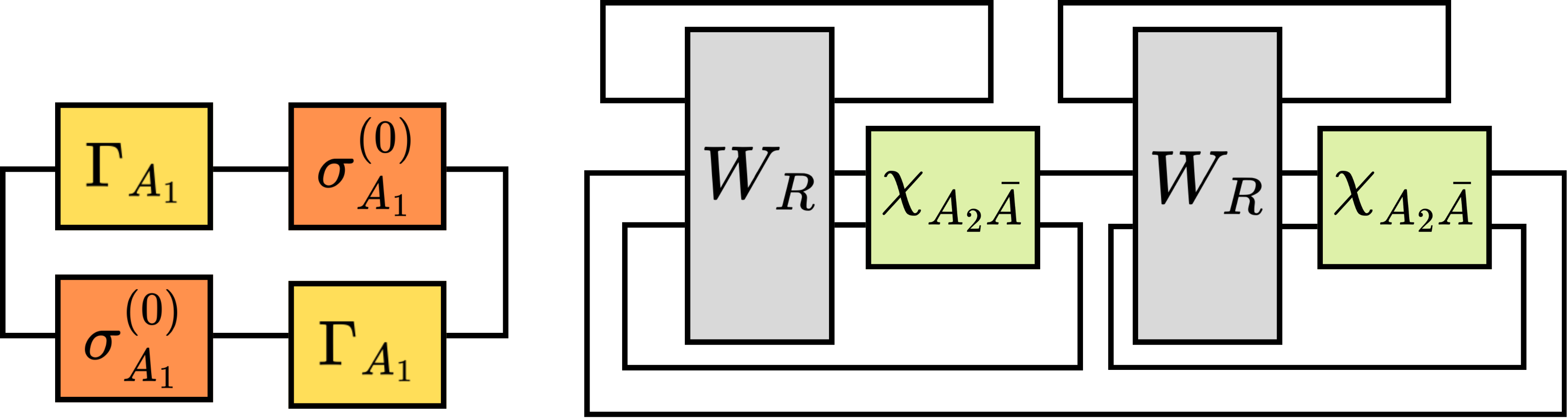}}}
\quad -\\[2mm]
\frac{1}{d(d^2-1)}\quad&\vcenter{\hbox{\includegraphics[width=0.5\linewidth]{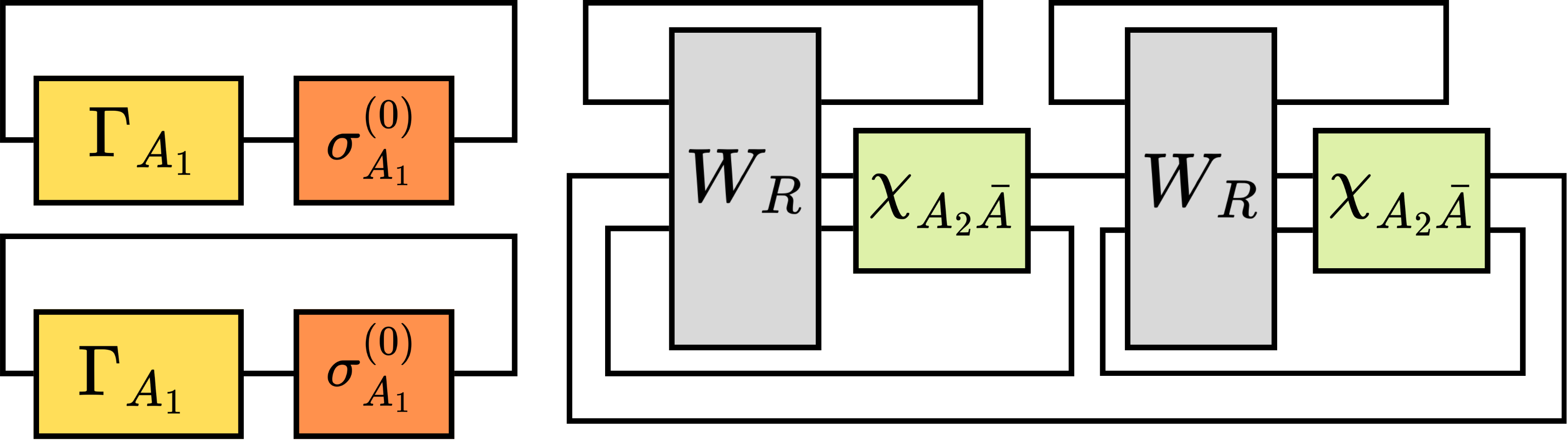}}}
\quad -\\[2mm]
\frac{1}{d(d^2-1)}\quad &\vcenter{\hbox{\includegraphics[width=0.5\linewidth]{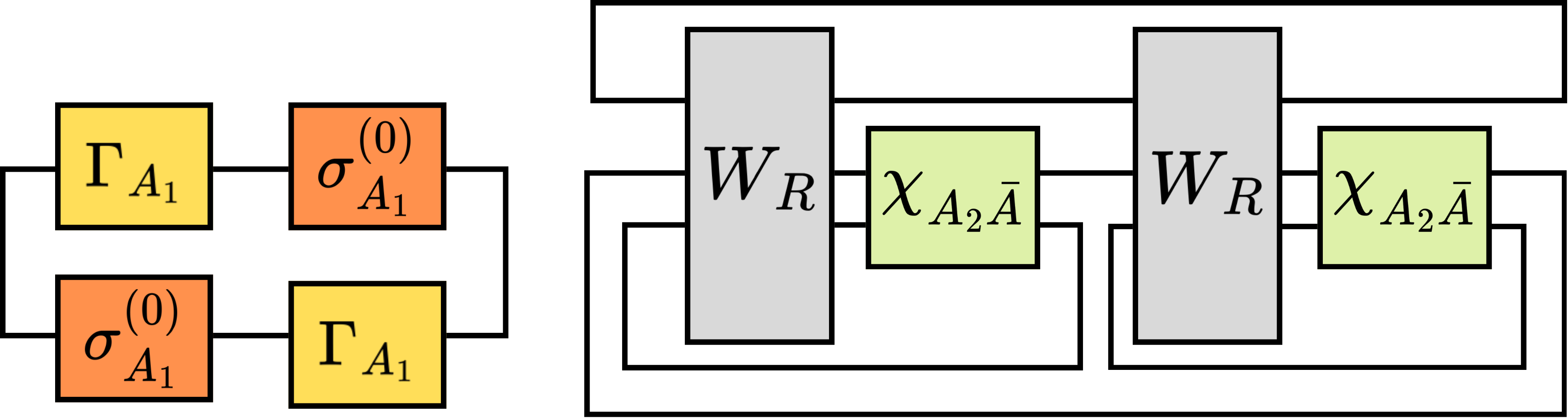}}}
\end{align*}
which can also be written as a trace equation:
\begin{align}
    &\frac{1}{d^2-1}\Bigg[\Tr\Big(\Gamma_{A_1}\sigma^{(0)}_{A_1}\Big)^2\Tr\Big[\Tr_{\bar A}\Big(W_R\chi_{A_2\bar A}\Big)^2\Big]+\Tr\Big[\Big(\Gamma_{A_1}\sigma^{(0)}_{A_1}\Big)^2\Big]\Tr\Big[\Tr_{A_1\bar A}\Big(W_R\chi_{A_2\bar A}\Big)^2\Big]\notag\\
    & - \frac{1}{d}\Tr\Big(\Gamma_{A_1}\sigma^{(0)}_{A_1}\Big)^2\Tr_{A_1\bar A}\Big(W_R\chi_{A_2\bar A}\Big)^2\Big]-\frac{1}{d}\Tr\Big[\Big(\Gamma_{A_1}\sigma^{(0)}_{A_1}\Big)^2\Big]\Tr\Big[\Tr_{\bar A}\Big(W_R\chi_{A_2\bar A}\Big)^2\Big]\Bigg]
\end{align}
For our calculations, we explicitly performed all computations using the \texttt{RTNI} package~\cite{Fukuda:2019pzs}, available at \url{https://github.com/MotohisaFukuda/RTNI}. 

\section{Matter-geometry entanglement is needed for gravity}
\label{app:semiclassical}

One can see why states in \eqref{eqn:statedep} are needed from the perspective of perturbative quantum gravity.
Suppose a matter field $\phi$ is coupled to gravitons $h_{\mu\nu}$ on a fixed background $g_{\mu\nu}$. For any coherent state of the matter, we can work in semi-classical gravity and obtain the solution for the metric perturbation, which can be written as some coherent state of gravitons. Let's denote such states as $|\phi_i\rangle |h_i\rangle$ for each solution of the linearized Einstein's equation where $|\phi_i\rangle, |h_i\rangle$ denote the states of the matter field and gravitons respectively, and $i$ is an abstract index that labels the different solutions with energy below some cutoff so that linearized gravity remains valid. These states, which are generally overcomplete, span a low energy subspace $\mathcal{H}_L=span\{|\phi_i\rangle|h_i\rangle\}$ of the effective field theory.
Suppose a background geometry $g_{\mu \nu} $ emerges from the entanglement structure of a code $\mathcal{C} \cong\mathcal{H}_L $ with subsystem complementary recovery, 
the logical degrees of freedom are now used to simulate a theory of the matter field coupled to gravitons on a fixed background $g_{\mu\nu}$. Even though the Hilbert space of a gauge theory is not factorizable due to non-trivial gauge constraints imposed by perturbative quantum gravity, it is embedable as a subspace of a factorizable Hilbert space by including the edge modes\footnote{In quantum reference frame language, we take $\mathcal{H}_L$ to be isomorphic to the kinematic Hilbert space of the gauge theory coupled to matter. When the gauge constraints are imposed, the theory lives in a gauge invariant, or physical subspace. However, to avoid confusing ``physical'' Hilbert space in QRF with that in QEC, we will work directly with the kinematic Hilbert space. More precisely, if we think of $|\phi_i\rangle|h_i\rangle$ as states spanning the physical gauge invariant subspace, then we need to first embed this space into $\mathcal{H}_L$ of a subsystem code that satisfies complementary recovery.}.

As the code supports subsystem complementary recovery on its logical qubits, we can write the encoded state by starting with the logical state, a shared entangled resource $|\chi\rangle$, and unitaries $U_A\otimes U_{\bar{A}}$ such that
\begin{equation}
    |\tilde{\psi}\rangle = U_A\otimes U_{\bar{A}} \Big(\sum_i c_i |\phi_i\rangle |h_i\rangle |\chi\rangle\Big).
\end{equation}

Instead of treating both $|\phi_i\rangle,|h_i\rangle$ as matter fields, we can now absorb $|h_i\rangle$ into the spacetime degrees of freedom, i.e., $|h_i\rangle|\chi\rangle\rightarrow |\chi_i\rangle$. Hence, we recover codewords of the form in eqn \eqref{eqn:statedep} and such a codeword then permits entanglement between matter and geometry, as it should be due to gravity. {The above re-identification can be understood as defining a new code which encode $span\{|\phi_i\rangle\}$ into a smaller isomorphic subspace $\mathcal C'$ such that $\mathcal{C}'\subset \mathcal{C}\subset \mH$. This corresponds to code concatenation where matter field $\phi$ is first encoded in the combined (kinematical) Hilbert space $\mathcal{C}$ of $\phi$ coupled to gravitons (outer code), which is then encoded in a subsystem code satisfying complementary recovery (inner code). }.

One can also relate such code concatenation with a heuristic picture of renormalization group, where the flow from UV to IR corresponds to the sequence of Hilbert spaces $\mathcal{H}\rightarrow \mathcal{C}\rightarrow \mathcal{C}'$ as we progressively ``integrate out'' the UV degrees of freedom associated with the higher energy physics from which the background spacetime geometry and the graviton degrees of freedom emerges. {For this reason, \cite{Cao_2018} also identifies entropy $S(\chi)$ with the UV/spacetime degrees of freedom whereas $S(\sigma_a)$ with that of the IR/matter field degrees of freedom.}

\section{Monotonic RT Lengths in AdS$_3$ with  Backreaction}
\label{app:perfect_fluid}
Here we provide an explicit example where the area of the (classical) extremal surface does increase with bulk entropy in the presence of gravity. We construct a static, circularly symmetric solution of the
Einstein equations in $2+1$ dimensions with negative cosmological constant
sourced by a static anisotropic fluid \cite{HerreraSantos1997, Cadogan:2024mcl, Cadoni:2020izk}, then compute the length of the
Ryu--Takayanagi (RT) geodesic anchored on a boundary interval.
We work in Schwarzschild-like coordinates $(t,r,\theta)$ and assume a static,
rotationally symmetric metric of the form
\begin{equation}
ds^2 = -A(r)^2\,dt^2 + B(r)^2\,dr^2 + r^2\,d\theta^2,
\label{eq:metric-ansatz}
\end{equation}
We now consider the two-parameter family
\begin{equation}
A(r)^2=1+\beta r^2,
\qquad
B(r)^2=\frac{1}{1+\alpha r^2},
\label{eq:AB-family-aniso}
\end{equation}
with constants \(\alpha>0\) and \(\beta\). The matter sector is taken to be an anisotropic fluid with energy density \(\mu(r)\), radial pressure \(p_r(r)\), and tangential pressure \(p_\theta(r)\). In the coordinate basis \((t,r,\theta)\), the stress-energy tensor is
\begin{equation}
T_{\mu\nu}
=
(\mu+p_\theta)\,u_\mu u_\nu
+
p_\theta\,g_{\mu\nu}
+
(p_r-p_\theta)\,\chi_\mu \chi_\nu,
\label{eq:T-aniso-fluid}
\end{equation}
where \(u^\mu\) is the fluid velocity, normalized by \(u_\mu u^\mu=-1\), and \(\chi^\mu\) is a unit spacelike vector in the radial direction, satisfying \(\chi_\mu\chi^\mu=1\) and \(u_\mu\chi^\mu=0\).
The fluid velocity and the unit spacelike vector in the radial direction are
\begin{equation*}
u^\mu=\left(\frac{1}{A(r)},\,0,\,0\right),
\quad
u_\mu=\bigl(-A(r),\,0,\,0\bigr),
\qquad
\chi^\mu=\left(0,\,\frac{1}{B(r)},\,0\right),
\quad
\chi_\mu=\bigl(0,\,B(r),\,0\bigr).
\label{eq:u-chi-aniso}
\end{equation*}
Hence, the nonzero components of the stress-energy tensor are
\begin{equation}
T_{tt}=\mu(r)\,A(r)^2,
\qquad
T_{rr}=p_r(r)\,B(r)^2,
\qquad
T_{\theta\theta}=p_\theta(r)\,r^2.
\label{eq:T-aniso-components}
\end{equation}
The Einstein equations with cosmological constant are
\begin{equation}
G_{\mu\nu}+\Lambda g_{\mu\nu}=\kappa T_{\mu\nu},
\label{eq:Einstein-aniso}
\end{equation}
where \(\kappa=8\pi G_3\) is the three-dimensional gravitational coupling.
Substituting the metric ansatz and the anisotropic stress tensor into the Einstein equations, we obtain three independent equations:
\begin{align}
&\frac{B'(r)}{r\,B(r)^3}-\Lambda = \kappa\,\mu(r),
\label{eq:aniso-eq1}\\
&\frac{A'(r)}{r\,A(r)}+\Lambda B(r)^2 = \kappa\,p_r(r)\,B(r)^2,
\label{eq:aniso-eq2}\\
&\frac{B(r)A''(r)-A'(r)B'(r)}{A(r)B(r)^3}+\Lambda = \kappa\,p_\theta(r).
\label{eq:aniso-eq3}
\end{align}
Solving these equations give:
\begin{align}
\mu(r)
&=
\mu
\equiv
\frac{-\,(\alpha+\Lambda)}{\kappa},
\label{eq:mu-solution-aniso}\\[4pt]
p_r(r)
&=
\frac{1}{\kappa}
\left[
\Lambda+\frac{\beta(1+\alpha r^2)}{1+\beta r^2}
\right],
\label{eq:pr-solution-aniso}\\[4pt]
p_\theta(r)
&=
\frac{1}{\kappa}
\left[
\Lambda+\frac{\beta\bigl(1+2\alpha r^2+\alpha\beta r^4\bigr)}{(1+\beta r^2)^2}
\right].
\label{eq:ptheta-solution-aniso}
\end{align}

Thus the metric is supported by an anisotropic fluid with constant energy density \(\mu\). The vacuum AdS\(_3\) limit is recovered by setting $ \mu=p_r=p_\theta=0$, 
which implies \( \alpha=\beta=-\Lambda \). The density is positive provided $-\Lambda>\alpha>0$. 
Moreover, the source is genuinely anisotropic unless \(\beta=\alpha\). In the special case \(\beta=\alpha\), one finds
\begin{equation}
p_r(r)=p_\theta(r)=\frac{\Lambda+\alpha}{\kappa}=-\mu,
\end{equation}
so the stress tensor becomes isotropic and behaves like a vacuum-energy source rather than a generic matter fluid. In what follows, we therefore restrict to the anisotropic branch \(\beta\neq\alpha\).

The conservation equation \(\nabla_\mu T^{\mu\nu}=0\) reduces to
\begin{equation}
p_r'(r)+(\mu+p_r)\frac{A'(r)}{A(r)}+\frac{p_r-p_\theta}{r}=0,
\label{eq:aniso-conservation-eq}
\end{equation}
and one verifies by direct substitution of \eqref{eq:mu-solution-aniso}--\eqref{eq:ptheta-solution-aniso} that it is satisfied identically.

For subsequent computations, it is convenient to rewrite the radial
coefficient in terms of
\begin{equation}
\alpha(\mu) \equiv -\bigl(\Lambda + \kappa\mu\bigr),
\end{equation}

Throughout, we restrict to this range so that $\alpha(\mu)>0$ and the geometry remains AdS-like rather than transitioning to a different asymptotic spacetime. Hence on a constant time slice the induced spatial metric is
\begin{equation}
dl^2 \equiv ds^2\big|_{dt=0}
= B(r)^2\,dr^2 + r^2\,d\theta^2
= \frac{dr^2}{1+\alpha r^2} + r^2\,d\theta^2.
\end{equation}

A spacelike curve anchored on the boundary interval can be
written as $r(\theta)$, and its length is
\begin{equation}
\mathcal{L}[r(\theta)]
= \int d\theta\,\mathcal{L}_{\text{geo}}(r,r'),
\qquad
\mathcal{L}_{\text{geo}}(r,r')
= \sqrt{\frac{{r'}^2}{1+\alpha r^2} + r^2},
\label{eq:Lgeodesic-functional}
\end{equation}
where
\begin{equation}
r' \equiv \frac{dr}{d\theta}.
\end{equation}
We view $\theta$ as the ``time'' variable for this one-dimensional
mechanical problem. The Lagrangian $\mathcal{L}_{\text{geo}}$ depends on $r$
and $r'$, but not explicitly on $\theta$, so there is a conserved Hamiltonian
(energy)
\begin{equation}
\mathcal{H}
= -\frac{r^2}{\mathcal{L}_{\text{geo}}}.
\label{eq:H-constant}
\end{equation}
To identify this constant, we use the turning point. We consider a geodesic
symmetric about $\theta=0$, with minimal radius $r=r_\ast$ at $\theta=0$.
At the turning point $r = r_\ast$ we have
\begin{eqns}
\mathcal{L}_{\text{geo}}(r_\ast,0) = r_\ast,
\qquad
\mathcal{H} = -\,\frac{r_\ast^2}{\mathcal{L}_{\text{geo}}(r_\ast,0)} = -\,r_\ast
.    
\end{eqns}
Now, Eq. \eqref{eq:H-constant} can be rewritten as the first-order constraint
\begin{equation}
\mathcal{L}_{\text{geo}}(r,r') = \frac{r^2}{r_\ast}.
\label{eq:Lgeo-rstar}
\end{equation}
This first-order Hamiltonian equation determines the geodesic profile
$r(\theta)$ and directly implies
\begin{equation}
\frac{d\theta}{dr}
= \frac{1}{r'} 
= \frac{r_\ast}{r\,\sqrt{1+\alpha r^2}\,\sqrt{r^2 - r_\ast^2}},
\end{equation}
We consider a geodesic symmetric about $\theta=0$, with endpoints at
$\theta = \pm \Delta\theta/2$ on the boundary. The half opening angle is
\begin{equation}
\frac{\Delta\theta}{2}
= \int_{r_\ast}^{\infty} \frac{d\theta}{dr}\,dr
= \int_{r_\ast}^{\infty}
\frac{r_\ast\,dr}{r\,\sqrt{1+\alpha r^2}\,\sqrt{r^2 - r_\ast^2}}.
\label{eq:theta-half-int-rstar}
\end{equation}
This integral can be evaluated exactly. One finds
\begin{equation}
\int_{r_\ast}^{\infty}
\frac{r_\ast\,dr}{r\,\sqrt{1+\alpha r^2}\,\sqrt{r^2 - r_\ast^2}}
= \tan^{-1}\!\left(\frac{1}{r_\ast\sqrt{\alpha}}\right),
\end{equation}
so that
\begin{equation}
\frac{\Delta\theta}{2}
= \tan^{-1}\!\left(\frac{1}{r_\ast\sqrt{\alpha}}\right)
\end{equation}
and hence
\begin{equation}
\tan\!\left(\frac{\Delta\theta}{2}\right)
= \frac{1}{r_\ast\sqrt{\alpha}}
\quad\Longrightarrow\quad
r_\ast(\Delta\theta,\alpha)
= \frac{1}{\sqrt{\alpha}\,\tan(\Delta\theta/2)}.
\label{eq:rstar-of-theta-alpha}
\end{equation}

The Hamiltonian relation Eq. \eqref{eq:Lgeo-rstar},
allows us to compute the RT length directly:
\begin{equation}
\mathcal{L}_{\mathrm{RT}}
= \int_{-\Delta\theta/2}^{\Delta\theta/2} \mathcal{L}_{\text{geo}}\,d\theta
= 2\int_0^{\Delta\theta/2} \frac{r(\theta)^2}{r_\ast}\,d\theta = 2\int_{r_\ast}^{r_{\max}}
\frac{r\,dr}{\sqrt{1+\alpha r^2}\,\sqrt{r^2 - r_\ast^2}}.
\end{equation}
This integral can be evaluated in closed form. The result is
\begin{equation}
\mathcal{L}_{\mathrm{RT}}\!\left(r_\ast,r_{\max},\alpha\right)
= \frac{2}{\sqrt{\alpha}}\,
\sinh^{-1}\!\left(
\frac{\sqrt{\alpha}\,\sqrt{r_{\max}^2 - r_\ast^2}}
     {\sqrt{1+\alpha r_\ast^2}}
\right).
\label{eq:LRT-final-rstar}
\end{equation}
Now, we fix the boundary interval $\Delta\theta$ and the UV cutoff $r_{\max}$ and
ask how $\mathcal{L}_{\mathrm{RT}}$ changes as we vary $\mu$.
In the UV regime $r_{\max} \gg r_\ast$, we may approximate
\begin{equation}
\sqrt{r_{\max}^2 - r_\ast^2} \simeq r_{\max},
\end{equation}
Substituting $r_\ast$, the RT length reduces to
\begin{equation}
\mathcal{L}_{\mathrm{RT}}(\alpha)
\simeq \frac{2}{\sqrt{\alpha}}\,
\sinh^{-1}\!\bigl(C\sqrt{\alpha}\bigr),
\qquad
C \equiv r_{\max}\sin(\Delta\theta/2) > 0.
\label{eq:LRT-UV}
\end{equation}
To study the $\mu$–dependence, it is enough to analyze $\partial\mathcal{L}_{\mathrm{RT}}/\partial\alpha$
for the approximate expression \eqref{eq:LRT-UV}.

Defining $\mathcal{X} \equiv C\sqrt{\alpha} \ge 0$, a short calculation gives
\begin{equation}
\frac{\partial\mathcal{L}_{\mathrm{RT}}}{\partial\alpha}
= \alpha^{-3/2}
\left[
\frac{\mathcal{X} }{\sqrt{1+\mathcal{X} ^2}} - \sinh^{-1}\mathcal{X} 
\right].
\label{eq:dL-dalpha-UV}
\end{equation}
Since $\alpha>0$, the sign is controlled by the bracket. Consider
\begin{equation}
\mathcal{F}(\mathcal{X}) \equiv \sinh^{-1}\mathcal{X} - \frac{\mathcal{X}}{\sqrt{1+\mathcal{X}^2}}
\end{equation}
We have $\mathcal{F}(0)=0$, and
\begin{equation}
\mathcal{F}'(\mathcal{X})
= \frac{1}{\sqrt{1+\mathcal{X}^2}} - \frac{1}{(1+\mathcal{X}^2)^{3/2}}
= \frac{\mathcal{X}^2}{(1+\mathcal{X}^2)^{3/2}} \ge 0,
\end{equation}
so $\mathcal{F}$ is monotonically increasing and
\begin{equation}
\sinh^{-1}\mathcal{X} \;\ge\; \frac{\mathcal{X}}{\sqrt{1+\mathcal{X}^2}}
\quad\text{for all } \mathcal{X} \ge 0.
\label{eq:arcsinh-ineq}
\end{equation}
It follows that
\begin{equation}
\frac{\partial \mathcal{L}_{\mathrm{RT}}}{\partial \alpha}
\le 0.
\end{equation}
Finally, since
\begin{equation}
\alpha(\mu) = -(\Lambda + \kappa\mu),
\qquad
\frac{d\alpha}{d\mu} = -\kappa < 0,
\end{equation}
we have
\begin{equation}
\frac{d\mathcal{L}_{\mathrm{RT}}}{d\mu}
= \frac{\partial\mathcal{L}_{\mathrm{RT}}}{\partial\alpha}\,
\frac{d\alpha}{d\mu}
= -\kappa\,
\frac{\partial\mathcal{L}_{\mathrm{RT}}}{\partial\alpha}
\;\ge\; 0.
\end{equation}
Thus, at fixed opening angle $\Delta\theta$ and in the UV limit $r_{\max}\gg r_\ast$,
the RT geodesic length $\mathcal{L}_{\mathrm{RT}}$ is a monotonically increasing
function of $\mu$.

It then follows from the first law of thermodynamics \cite{Gourgoulhon_2006,Ballesteros_2013} that we can rewrite the (comoving) entropy density in terms of the energy density and pressure contributions. For simplicity, we restrict ourselves to a family of the static solutions such that the anisotropic pressure profile is fixed when $\mu$ is varied. This then yields that $\partial s/\partial\mu =1/ T$.  Therefore, the spacelike geodesic length increases as a function of (thermal) entropy.
\bibliographystyle{JHEP}
\bibliography{ref}

\end{document}